\pgfplotsset{compat=1.18}
\newtheorem{thm}{Theorem}[section]
\newtheorem{lem}[thm]{Lemma}
\newtheorem{prop}[thm]{Proposition}
\newtheorem{ex}[thm]{Example}
\newtheorem{cor}[thm]{Corollary}
\newtheorem{notation}[thm]{Notation}
\theoremstyle{definition}
\newtheorem{definition}[thm]{Definition}
\newtheorem{obs}[thm]{Observation}
\newtheorem{rem}[thm]{Remark}
\newcommand{\blackged}{\hfill$\blacksquare$}
\newcommand{\whiteged}{\hfill$\square$}
\newcounter{proofcount}
\renewenvironment{proof}[1][\proofname.]{\par
  \ifnum \theproofcount>0 \pushQED{\whiteged} \else \pushQED{\blackged} \fi%
  \refstepcounter{proofcount}
  \normalfont 
  \trivlist
  \item[\hskip\labelsep
        \itshape
    {\bf\em #1}]\ignorespaces
}{%
  \addtocounter{proofcount}{-1}
  \popQED\endtrivlist
}
\newcommand{\defeq}{\coloneqq}
\newcommand{\N}{\mathbb{N}}
\newcommand{\Z}{\mathbb{Z}}
\newcommand{\R}{\mathbb{R}}
\providecommand{\C}{}
\renewcommand{\C}{\mathbb{C}}
\newcommand{\tens}{\otimes}
\newcommand{\cH}{\mathcal{H}}
\newcommand{\sig}[1]{[#1]}
\newcommand{\fdg}[1]{|#1|}
\newcommand{\ds}{\circ}
\newcommand{\id}{\mathrm{id}}
\newcommand{\ev}{\mathrm{ev}}
\newcommand{\coev}{\mathrm{coev}}
\newcommand{\ptr}{\mathrm{pTr}}
\newcommand{\tr}{\mathrm{Tr}}
\newcommand{\act}{\triangleright}
\newcommand{\Calg}{\mathcal{C}_{\mathrm{alg}}}
\newcommand{\Cclass}{\mathcal{C}^{\mathrm{class}}}
\newcommand{\Ccalg}{\mathcal{C}^{\mathrm{class}}_{\mathrm{alg}}}
\newcommand{\Csq}{\mathcal{C}}
\newcommand{\xd}{\mathrm{d}}
\newcommand{\one}{\mathbf{1}}
\begin{document}


\title{\textbf{Compositional Quantum Field Theory:\\ An axiomatic presentation}}
\author[1]{Robert Oeckl\footnote{email: robert@matmor.unam.mx}}
\author[1,2]{Juan Orendain Almada\footnote{email: jxo245@case.edu}}
\affil[1]{Centro de Ciencias Matemáticas,
  Universidad Nacional Autónoma de México,
  C.P.~58190, Morelia, Michoacán, Mexico}
 \affil[2]{Mathematics, Applied Mathematics and Statistics Department, Case Western Reserve University, Cleveland, Ohio, USA}
\date{UNAM-CCM-2022-1\\ 22 August 2022\\ 12 June 2023 (v2)\\ 31 January 2024 (v3)}

\maketitle

\vspace{\stretch{1}}

\begin{abstract}
We introduce Compositional Quantum Field Theory (CQFT) as an axiomatic model of Quantum Field Theory, based on the principles of locality and compositionality. Our model is a refinement of the axioms of General Boundary Quantum Field Theory, and is phrased in terms of correspondences between certain commuting diagrams of gluing identifications between manifolds and corresponding commuting diagrams of state-spaces and linear maps, thus making it amenable to formalization in terms of involutive symmetric monoidal functors and operad algebras. The underlying novel framework for gluing leads to equivariance of CQFT. We study CQFTs in dimension 2 and the algebraic structure they define on open and closed strings. We also consider, as a particular case, the compositional structure of 2-dimensional pure quantum Yang-Mills theory.

\end{abstract}

\vspace{\stretch{1}}
\newpage

\tableofcontents


\section{Introduction}

Compositionality describes how information about complex systems can be pieced together from information about their simpler components. In this paper we propose an axiomatic system describing Quantum Field Theory (QFT) following the principles of compositionality. 

A compositional model for QFT should allow for the study of states, amplitudes and observables in "small regions" of spacetime, in such a way that when two of these regions are glued into a "more complex region" the physical information of the larger region can be deduced from the physical information in the smaller ones. Pictorially, if we know the physics of two regions: 
\begin{center}

	\tikzset{every picture/.style={line width=0.75pt}} 
	
	\begin{tikzpicture}[x=0.75pt,y=0.75pt,yscale=-1,xscale=1]
		
		\draw  [draw opacity=0] (246.57,183) .. controls (230.2,182.76) and (217,169.42) .. (217,153) .. controls (217,136.48) and (230.36,123.08) .. (246.86,123) -- (247,153) -- cycle ; \draw  [color={rgb, 255:red, 0; green, 0; blue, 0 }  ,draw opacity=1 ] (246.57,183) .. controls (230.2,182.76) and (217,169.42) .. (217,153) .. controls (217,136.48) and (230.36,123.08) .. (246.86,123) ;  
		\draw  [draw opacity=0] (246.17,123.01) .. controls (246.26,123.01) and (246.34,123.01) .. (246.42,123.01) .. controls (262.99,122.69) and (276.68,135.86) .. (276.99,152.42) .. controls (277.31,168.99) and (264.14,182.68) .. (247.58,182.99) .. controls (247.27,183) and (246.97,183) .. (246.67,183) -- (247,153) -- cycle ; \draw  [color={rgb, 255:red, 0; green, 0; blue, 0 }  ,draw opacity=1 ] (246.17,123.01) .. controls (246.26,123.01) and (246.34,123.01) .. (246.42,123.01) .. controls (262.99,122.69) and (276.68,135.86) .. (276.99,152.42) .. controls (277.31,168.99) and (264.14,182.68) .. (247.58,182.99) .. controls (247.27,183) and (246.97,183) .. (246.67,183) ;  
		\draw  [draw opacity=0] (342.43,183.66) .. controls (326.06,183.43) and (312.87,170.09) .. (312.87,153.67) .. controls (312.87,137.15) and (326.22,123.74) .. (342.73,123.67) -- (342.87,153.67) -- cycle ; \draw  [color={rgb, 255:red, 0; green, 0; blue, 0 }  ,draw opacity=1 ] (342.43,183.66) .. controls (326.06,183.43) and (312.87,170.09) .. (312.87,153.67) .. controls (312.87,137.15) and (326.22,123.74) .. (342.73,123.67) ;  
		\draw  [draw opacity=0] (342.04,123.68) .. controls (342.12,123.68) and (342.21,123.67) .. (342.29,123.67) .. controls (358.86,123.35) and (372.54,136.53) .. (372.86,153.09) .. controls (373.18,169.66) and (360.01,183.34) .. (343.44,183.66) .. controls (343.14,183.67) and (342.84,183.67) .. (342.54,183.67) -- (342.87,153.67) -- cycle ; \draw  [color={rgb, 255:red, 0; green, 0; blue, 0 }  ,draw opacity=1 ] (342.04,123.68) .. controls (342.12,123.68) and (342.21,123.67) .. (342.29,123.67) .. controls (358.86,123.35) and (372.54,136.53) .. (372.86,153.09) .. controls (373.18,169.66) and (360.01,183.34) .. (343.44,183.66) .. controls (343.14,183.67) and (342.84,183.67) .. (342.54,183.67) ;  
		
		\draw (240,146) node [anchor=north west][inner sep=0.75pt]  [font=\scriptsize]  {$X_{1}$};
		\draw (335.6,146) node [anchor=north west][inner sep=0.75pt]  [font=\scriptsize]  {$X_{2}$};

	\end{tikzpicture}
	
\end{center}
we should be able to deduce, from that, the physics of the region:
\begin{center}

	\tikzset{every picture/.style={line width=0.75pt}} 
	
	\begin{tikzpicture}[x=0.75pt,y=0.75pt,yscale=-1,xscale=1]
		
		\draw  [draw opacity=0] (354.72,221.69) .. controls (353.31,221.89) and (351.87,222) .. (350.4,222) .. controls (333.83,222) and (320.4,208.57) .. (320.4,192) .. controls (320.4,175.43) and (333.83,162) .. (350.4,162) .. controls (351.96,162) and (353.5,162.12) .. (355,162.35) -- (350.4,192) -- cycle ; \draw  [color={rgb, 255:red, 0; green, 0; blue, 0 }  ,draw opacity=1 ] (354.72,221.69) .. controls (353.31,221.89) and (351.87,222) .. (350.4,222) .. controls (333.83,222) and (320.4,208.57) .. (320.4,192) .. controls (320.4,175.43) and (333.83,162) .. (350.4,162) .. controls (351.96,162) and (353.5,162.12) .. (355,162.35) ;  
		\draw  [draw opacity=0] (354.88,162.54) .. controls (356.17,162.34) and (357.49,162.23) .. (358.82,162.21) .. controls (375.39,161.89) and (389.08,175.06) .. (389.39,191.62) .. controls (389.71,208.19) and (376.54,221.88) .. (359.98,222.19) .. controls (357.98,222.23) and (356.02,222.07) .. (354.12,221.74) -- (359.4,192.2) -- cycle ; \draw  [color={rgb, 255:red, 0; green, 0; blue, 0 }  ,draw opacity=1 ] (354.88,162.54) .. controls (356.17,162.34) and (357.49,162.23) .. (358.82,162.21) .. controls (375.39,161.89) and (389.08,175.06) .. (389.39,191.62) .. controls (389.71,208.19) and (376.54,221.88) .. (359.98,222.19) .. controls (357.98,222.23) and (356.02,222.07) .. (354.12,221.74) ;  
		\draw [color={rgb, 255:red, 0; green, 0; blue, 0 }  ,draw opacity=1 ] [dash pattern={on 0.84pt off 2.51pt}]  (354.88,162.54) -- (354.72,221.69) ;
		
		\draw (330.75,184.2) node [anchor=north west][inner sep=0.75pt]  [font=\footnotesize]  {$X_{1}$};
		\draw (361.75,183.95) node [anchor=north west][inner sep=0.75pt]  [font=\footnotesize]  {$X_{2}$};

	\end{tikzpicture}
	
\end{center}
obtained by gluing the smaller regions. In this paper, by physics of a region we will mean state spaces and amplitude maps. State spaces are associated to boundaries of regions, and more generally to hypersurfaces. State spaces should compose in a way compatible with the way we glue bounding regions. In particular, we should be able to piece together the state space of the boundary of a region from any decomposition as gluing of smaller pieces. Pictorially, if we know the state spaces of two "smaller hypersurfaces":

\begin{center}
	
	\tikzset{every picture/.style={line width=0.75pt}} 
	
	\begin{tikzpicture}[x=0.75pt,y=0.75pt,yscale=-1,xscale=1]
		
		\draw  [draw opacity=0] (293.93,160.66) .. controls (277.56,160.43) and (264.36,147.09) .. (264.36,130.67) .. controls (264.36,114.15) and (277.72,100.74) .. (294.22,100.67) -- (294.36,130.67) -- cycle ; \draw  [color={rgb, 255:red, 0; green, 0; blue, 0 }  ,draw opacity=1 ] (293.93,160.66) .. controls (277.56,160.43) and (264.36,147.09) .. (264.36,130.67) .. controls (264.36,114.15) and (277.72,100.74) .. (294.22,100.67) ;  
		\draw  [draw opacity=0] (306.87,100.68) .. controls (306.95,100.68) and (307.04,100.67) .. (307.12,100.67) .. controls (323.69,100.35) and (337.37,113.53) .. (337.69,130.09) .. controls (338.01,146.66) and (324.84,160.34) .. (308.27,160.66) .. controls (307.97,160.67) and (307.67,160.67) .. (307.37,160.67) -- (307.7,130.67) -- cycle ; \draw  [color={rgb, 255:red, 0; green, 0; blue, 0 }  ,draw opacity=1 ] (306.87,100.68) .. controls (306.95,100.68) and (307.04,100.67) .. (307.12,100.67) .. controls (323.69,100.35) and (337.37,113.53) .. (337.69,130.09) .. controls (338.01,146.66) and (324.84,160.34) .. (308.27,160.66) .. controls (307.97,160.67) and (307.67,160.67) .. (307.37,160.67) ;

	\end{tikzpicture}

\end{center}
we should be able to reconstruct, from that information, the state space of the "larger hypersurface":
\begin{center}

	\tikzset{every picture/.style={line width=0.75pt}} 
	
	\begin{tikzpicture}[x=0.75pt,y=0.75pt,yscale=-1,xscale=1]
		
		\draw  [draw opacity=0] (226.41,160.73) .. controls (225,160.93) and (223.56,161.03) .. (222.1,161.03) .. controls (205.53,161.03) and (192.1,147.6) .. (192.1,131.03) .. controls (192.1,114.46) and (205.53,101.03) .. (222.1,101.03) .. controls (223.66,101.03) and (225.2,101.15) .. (226.7,101.38) -- (222.1,131.03) -- cycle ; \draw  [color={rgb, 255:red, 0; green, 0; blue, 0 }  ,draw opacity=1 ] (226.41,160.73) .. controls (225,160.93) and (223.56,161.03) .. (222.1,161.03) .. controls (205.53,161.03) and (192.1,147.6) .. (192.1,131.03) .. controls (192.1,114.46) and (205.53,101.03) .. (222.1,101.03) .. controls (223.66,101.03) and (225.2,101.15) .. (226.7,101.38) ;  
		\draw  [draw opacity=0] (226.58,101.57) .. controls (227.87,101.38) and (229.18,101.26) .. (230.52,101.24) .. controls (247.09,100.92) and (260.77,114.09) .. (261.09,130.66) .. controls (261.41,147.22) and (248.24,160.91) .. (231.67,161.23) .. controls (229.67,161.27) and (227.72,161.11) .. (225.82,160.77) -- (231.1,131.23) -- cycle ; \draw  [color={rgb, 255:red, 0; green, 0; blue, 0 }  ,draw opacity=1 ] (226.58,101.57) .. controls (227.87,101.38) and (229.18,101.26) .. (230.52,101.24) .. controls (247.09,100.92) and (260.77,114.09) .. (261.09,130.66) .. controls (261.41,147.22) and (248.24,160.91) .. (231.67,161.23) .. controls (229.67,161.27) and (227.72,161.11) .. (225.82,160.77) ;

	\end{tikzpicture}
	
\end{center}
obtained by gluing along corresponding boundary components. A compositional model of QFT should thus describe how to translate the arithmetic of gluing regions and hypersurfaces, into physically significant data reflecting these operations. Observe that any such model should automatically be required to be \emph{local}, in the sense that we should be able to isolate the physics of a region or a hypersurface from all other regions or hypersurfaces. 

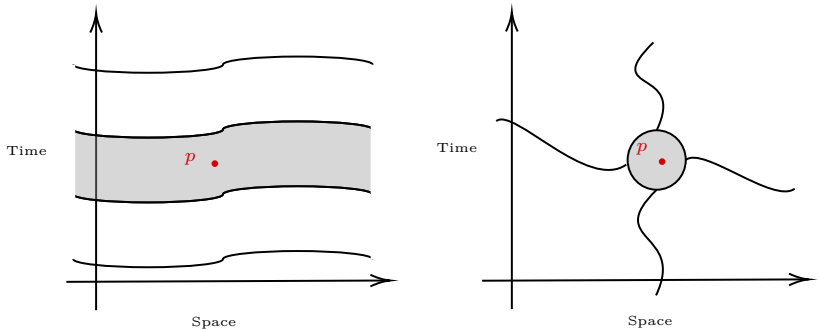
\begin{figure}
	\centering
\tikzset{every picture/.style={line width=0.75pt}} 

\begin{tikzpicture}[x=0.75pt,y=0.75pt,yscale=-1.5,xscale=1.5]
	
	\draw [color={rgb, 255:red, 0; green, 0; blue, 0 }  ,draw opacity=1 ]   (60.58,230.13) -- (60.58,132.13) ;
	\draw [shift={(60.58,130.13)}, rotate = 90] [color={rgb, 255:red, 0; green, 0; blue, 0 }  ,draw opacity=1 ][line width=0.75]    (6.56,-1.97) .. controls (4.17,-0.84) and (1.99,-0.18) .. (0,0) .. controls (1.99,0.18) and (4.17,0.84) .. (6.56,1.97)   ;
	\draw   (52.89,191.13) .. controls (52.89,192.63) and (64.16,193.85) .. (78.06,193.85) .. controls (91.95,193.85) and (103.22,192.63) .. (103.22,191.13) .. controls (103.22,189.62) and (114.49,188.4) .. (128.39,188.4) .. controls (142.29,188.4) and (153.56,189.62) .. (153.56,191.13) -- (153.56,212.94) .. controls (153.56,211.43) and (142.29,210.21) .. (128.39,210.21) .. controls (114.49,210.21) and (103.22,211.43) .. (103.22,212.94) .. controls (103.22,214.45) and (91.95,215.67) .. (78.06,215.67) .. controls (64.16,215.67) and (52.89,214.45) .. (52.89,212.94) -- cycle ;
	\draw  [color={rgb, 255:red, 0; green, 0; blue, 0 }  ,draw opacity=1 ][fill={rgb, 255:red, 155; green, 155; blue, 155 }  ,fill opacity=0.4 ] (52.89,169.31) .. controls (52.89,170.82) and (64.16,172.04) .. (78.06,172.04) .. controls (91.95,172.04) and (103.22,170.82) .. (103.22,169.31) .. controls (103.22,167.81) and (114.49,166.59) .. (128.39,166.59) .. controls (142.29,166.59) and (153.56,167.81) .. (153.56,169.31) -- (153.56,191.13) .. controls (153.56,189.62) and (142.29,188.4) .. (128.39,188.4) .. controls (114.49,188.4) and (103.22,189.62) .. (103.22,191.13) .. controls (103.22,192.63) and (91.95,193.85) .. (78.06,193.85) .. controls (64.16,193.85) and (52.89,192.63) .. (52.89,191.13) -- cycle ;
	\draw   (52.89,147.5) .. controls (52.89,149.01) and (64.16,150.23) .. (78.06,150.23) .. controls (91.95,150.23) and (103.22,149.01) .. (103.22,147.5) .. controls (103.22,145.99) and (114.49,144.77) .. (128.39,144.77) .. controls (142.29,144.77) and (153.56,145.99) .. (153.56,147.5) -- (153.56,169.31) .. controls (153.56,167.81) and (142.29,166.59) .. (128.39,166.59) .. controls (114.49,166.59) and (103.22,167.81) .. (103.22,169.31) .. controls (103.22,170.82) and (91.95,172.04) .. (78.06,172.04) .. controls (64.16,172.04) and (52.89,170.82) .. (52.89,169.31) -- cycle ;
	\draw [color={rgb, 255:red, 255; green, 255; blue, 255 }  ,draw opacity=1 ][line width=1.5]    (153.56,147.5) -- (153.56,212.94) ;
	\draw [color={rgb, 255:red, 255; green, 255; blue, 255 }  ,draw opacity=1 ][line width=1.5]    (52.89,147.5) -- (52.89,212.94) ;
	\draw [color={rgb, 255:red, 0; green, 0; blue, 0 }  ,draw opacity=1 ]   (50.58,220.58) -- (157.36,220.14) ;
	\draw [shift={(159.36,220.13)}, rotate = 179.77] [color={rgb, 255:red, 0; green, 0; blue, 0 }  ,draw opacity=1 ][line width=0.75]    (6.56,-1.97) .. controls (4.17,-0.84) and (1.99,-0.18) .. (0,0) .. controls (1.99,0.18) and (4.17,0.84) .. (6.56,1.97)   ;
	\draw  [color={rgb, 255:red, 214; green, 2; blue, 6 }  ,draw opacity=1 ][fill={rgb, 255:red, 214; green, 2; blue, 6 }  ,fill opacity=1 ] (99.73,180.8) .. controls (99.73,180.37) and (100.08,180.02) .. (100.51,180.02) .. controls (100.94,180.02) and (101.29,180.37) .. (101.29,180.8) .. controls (101.29,181.23) and (100.94,181.58) .. (100.51,181.58) .. controls (100.08,181.58) and (99.73,181.23) .. (99.73,180.8) -- cycle ;
	\draw [color={rgb, 255:red, 0; green, 0; blue, 0 }  ,draw opacity=1 ]   (200.36,229.76) -- (200.36,131.76) ;
	\draw [shift={(200.36,129.76)}, rotate = 90] [color={rgb, 255:red, 0; green, 0; blue, 0 }  ,draw opacity=1 ][line width=0.75]    (6.56,-1.97) .. controls (4.17,-0.84) and (1.99,-0.18) .. (0,0) .. controls (1.99,0.18) and (4.17,0.84) .. (6.56,1.97)   ;
	\draw [color={rgb, 255:red, 0; green, 0; blue, 0 }  ,draw opacity=1 ]   (190.36,220.2) -- (298.13,219.76) ;
	\draw [shift={(300.13,219.76)}, rotate = 179.77] [color={rgb, 255:red, 0; green, 0; blue, 0 }  ,draw opacity=1 ][line width=0.75]    (6.56,-1.97) .. controls (4.17,-0.84) and (1.99,-0.18) .. (0,0) .. controls (1.99,0.18) and (4.17,0.84) .. (6.56,1.97)   ;
	\draw  [fill={rgb, 255:red, 155; green, 155; blue, 155 }  ,fill opacity=0.4 ] (239.29,179.58) .. controls (239.29,174.05) and (243.67,169.58) .. (249.07,169.58) .. controls (254.47,169.58) and (258.84,174.05) .. (258.84,179.58) .. controls (258.84,185.1) and (254.47,189.58) .. (249.07,189.58) .. controls (243.67,189.58) and (239.29,185.1) .. (239.29,179.58) -- cycle ;
	\draw    (249.07,169.58) .. controls (259.29,148.02) and (229.73,157.8) .. (247.96,140.02) ;
	\draw    (248.84,225.13) .. controls (259.07,203.58) and (230.84,207.36) .. (249.07,189.58) ;
	\draw    (295.47,189.23) .. controls (288.36,194.57) and (264.62,174.69) .. (258.84,179.58) ;
	\draw    (238.84,181.13) .. controls (225.96,190.24) and (200.91,161.61) .. (195.13,166.5) ;
	\draw  [color={rgb, 255:red, 214; green, 2; blue, 6 }  ,draw opacity=1 ][fill={rgb, 255:red, 214; green, 2; blue, 6 }  ,fill opacity=1 ] (250.11,180.12) .. controls (250.11,179.69) and (250.46,179.35) .. (250.88,179.35) .. controls (251.31,179.35) and (251.66,179.69) .. (251.66,180.12) .. controls (251.66,180.55) and (251.31,180.9) .. (250.88,180.9) .. controls (250.46,180.9) and (250.11,180.55) .. (250.11,180.12) -- cycle ;
	
	\draw (89.36,176.29) node [anchor=north west][inner sep=0.75pt]  [font=\scriptsize,color={rgb, 255:red, 214; green, 2; blue, 6 }  ,opacity=1 ]  {$p$};
	\draw (241.26,172.88) node [anchor=north west][inner sep=0.75pt]  [font=\scriptsize,color={rgb, 255:red, 214; green, 2; blue, 6 }  ,opacity=1 ]  {$p$};
	\draw (91.67,231.33) node [anchor=north west][inner sep=0.75pt]  [font=\tiny] [align=left] {Space};
	\draw (29.33,173.67) node [anchor=north west][inner sep=0.75pt]  [font=\tiny] [align=left] {Time};
	\draw (174,172.67) node [anchor=north west][inner sep=0.75pt]  [font=\tiny] [align=left] {Time};
	\draw (238.33,231) node [anchor=north west][inner sep=0.75pt]  [font=\tiny] [align=left] {Space};
\end{tikzpicture}
	\caption{Compositional locality: With \emph{temporal locality} spacetime regions are time-slices (left-hand side), in full \emph{spacetime locality}, spacetime regions can be arbitrary (right-hand side). With the latter notion we can describe physics in an arbitrarily small spacetime neighborhood of an event $p$. This is crucial for implementing Wilsonian renormalization, as known from Lattice Gauge Theory. It would also be essential for quantum gravity.}
	\label{fig:locality}
\end{figure}

Models for or inspired by QFT satisfying some of the above observations abound in the mathematical literature. Sheaf-theoretical models such as Algebraic Quantum Field Theory (AQFT), describe observables compositionally, but usually assume a fixed space of states, or no explicit compositional descriptions of state spaces is provided. Topological Quantum Field Theory (TQFT) on the other hand provides compositional models of state spaces and amplitudes. However, due to its in-out structure, compositionality is coherently directed, and thus usually associated with time evolution only. This means that instead of full spacetime locality (Figure~\ref{fig:locality} right-hand side) only temporal locality can be implemented (Figure~\ref{fig:locality} left-hand side), somewhat similar to traditional formulations of QFT in curved spacetime in the physics literature.\footnote{This particular shortcoming of TQFT is remedied in certain versions of \emph{extended TQFT}.} Moreover, the application of TQFT to QFT faces important obstacles. One is the restriction of state spaces in TQFT to be finite-dimensional, in contrast to the requirements of QFT. Even more serious is the lack of proper rules for the prediction of measurement outcomes. This is fine (and even expected) for a purely mathematical framework, but indispensable when modeling a quantum theory is intended.\footnote{Unfortunately, the rules of the standard formulation of quantum theory do not make sense for most objects appearing in TQFT. Only for globally hyperbolic manifolds or time-slices of these can these rules be applied, precluding spacetime-local compositionality for physical quantities.} On the other hand, TQFT has very attractive mathematical features, not least its functorial formulation.

In this paper we focus on General Boundary Quantum Field Theory (GBQFT), developed by the first author in the framework of the General Boundary Formulation of Quantum Theory (GBF) \cite{Oe:boundary,Oe:gbqft,Oe:holomorphic,Oe:freefermi,Oe:feynobs,CoOe:locgenvac}. GBQFT is a model for QFT, inspired by ideas of extended TQFT but implementing a notion of locality closer to sheaf-theoretical models such as factorization homology. One can think of GBQFT as a minimal incarnation of the ideas of compositionality and locality described above, having simple versions of realistic QFTs as examples. This is the first installment of a series papers, whose main goal is to provide a precise categorical interpretation of the GBQFT axioms and related theories.

Compositional Quantum Field Theory (CQFT) is an axiomatic system refining the axioms of GBQFT, formulated as a correspondence between certain commuting diagrams, representing the algebra of spacetime pieces, and certain other commuting diagrams, representing the corresponding physics. One elementary but novel characteristic of our approach is the way we model the operation of gluing. We consider specific witnesses of gluings, which we call gluing functions, and we account for gluing ambiguities by considering isomorphisms as part of our algebraic data. Gluing on the nose and implementing up-to-isomorphism conditions as another part of the data, suggests the use of symmetric monoidal double categories, and more specifically decorated or structured cospans, see \cite{Fong, BaezCourser} to model the compositional structure of spacetime. The work presented in this paper should be thought as a step previous to this. We consider the appropriate categorical structures in the followup paper \cite{RobertJuan2}. CQFT considers key observations not considered in previous GBQFT installments, it is a first step towards categorification of the ideas of GBQFT, and its formulation makes certain pieces of structure, previously unknown to us, evident, e.g.\ in Section~\ref{sec:2dim} we provide the state space of an interval in quantum Yang-Mills theory with area with the structure of a centrally tracial Hilbert *-algebra.

\subsection{General Boundary Quantum Field Theory}

Quantum Field Theory (QFT) is a framework for constructing and presenting relativistic quantum theories. It is used to describe phenomena in different areas of physics, but most prominently in high-energy physics, where it forms the basis of the Standard Model of Elementary Particle Physics. However, in spite of its quantitative predictive success, no rigorous mathematical formulation of QFT has been found to date. Prominent approaches (both emerging in the 1960s) at axiomatizing and making rigorous QFT are the Wightman Axioms \cite{StWi:pct} and Algebraic Quantum Field Theory \cite{HaKa:aqft}.
For physicists, a problem even more important than making QFT rigorous is the incompatibility of QFT with the principles of General Relativity. Also here, one might hope that an axiomatic approach would help. 

Topological Quantum Field Theory (TQFT), which emerged in the late 1980s, was originally inspired by QFT. It was thought at the time that it might lead to a framework addressing both of the mentioned problems. There are a number of works, going into the 1990s where it is proposed that TQFT might form the basis of a theory of quantum gravity. While hugely successful in mathematics, the early expectations for TQFT to catalyze developments in physics have not been born out at the time. A new attempt, inspired by TQFT, was started in 2003 with the programme of the general boundary formulation of quantum theory (GBF) \cite{Oe:catandclock,Oe:boundary}, meant as an axiomatic approach not only to QFT, but to quantum theory more generally. The most crucial aspect of the GBF compared to previous approaches based on TQFT is, that it comprehensively establishes a relation between the mathematical formalism and physical predictions \cite{Oe:gbqft,Oe:posfound}. We shall refer to the main axiomatic system that has been proposed so far for the GBF as General Boundary Quantum Field Theory (GBQFT) \cite{Oe:gbqft}.

One important application of TQFT is to produce invariants of manifolds. For most physical applications, spacetimes have the topology of $\R^4$. Only for certain astrophysical applications involving black holes or for cosmological applications (considering the universe as a whole) might we consider other topologies, and the interesting choices would be very limited. As for pieces of spacetime, these are principally of interest to implement locality, and describe physics in a “small” region. Their typical topology would be that of a ball. This also means that we need to be able to glue along parts of boundaries, as in \emph{gluing two balls to one ball} and \emph{self-gluing a ball to another ball}. To this end not only regions need boundaries, but also hypersurfaces. This suggests an extended TQFT setting. However, it is not clear in general what data should be associated to manifolds of codimension $2$ or higher. What is more, there does not seem to be any immediate physical motivation for such data. GBQFT as considered here allows for the implicit appearance of manifolds of codimension $2$ or larger, but is minimalistic in not explicitly associating any data to them. Another “unphysical” aspect of TQFT for the present purposes is the fact that regions are \emph{cobordisms}. Thus, the morphism associated to the cobordism in the category of state spaces represents a time evolution map from initial to final states. However, as explained previously, this is not tenable if we want to implement spacetime locality in QFT. Thus, the consequential move, implemented in GBQFT, is to abolish the distinction between “in” and “out” on the boundary. This also means that \emph{orientation} plays a more important role.

Returning to the question of values associated to closed manifolds, these are not of any intrinsic physical interest, except possibly for yielding renormalization constants.\footnote{In QFT these values generally arise from a path integral. While the functional dependence of the path integral \emph{on insertions} such as sources or observables is physically important, the pure path integral is not. Physical quantities arise from quotients of the path integral with insertions by the path integral without insertions. In the present article we will not consider such insertions.} What is of interest, are the linear maps associated to manifolds with boundary. This implies more freedom in “adjusting” values associated to closed manifolds. In fact, we shall assume that the value associated to a closed manifold is always $1$, as has proven fruitful in constructive applications of GBQFT, see e.g.~\cite{Oe:holomorphic}. Related to this, we explicitly admit in GBQFT a \emph{gluing anomaly} in the sense of Turaev \cite{Tur:qinv}. We mention however, that for the purposes of the present paper this assumption could easily be relaxed.

In most applications of TQFT the manifolds are topological manifolds or essentially smooth manifolds.\footnote{By essentially smooth we also subsume those cases where the manifolds carry a (usually Riemannian) metric, but the TQFT only sees at most a finite number of degrees of freedom of the metric on top of the smooth structure.} However, in spite of the name “topological” the possibility of adding more structure to the manifolds was foreseen from the beginning \cite{Ati:tqft}. The application to QFT that we are interested in here certainly suggests that we should consider manifolds with (Lorentzian) metric structure. The bare axiomatic system of GBQFT is deliberately agnostic in this respect. Throughout the paper we shall use the term \textbf{region} to refer to an oriented manifold of maximal dimension $n$, possibly with boundary, and we shall use the term \textbf{hypersurface} to refer to an oriented manifold of dimension $n-1$, also possibly with boundary. These manifolds may carry additional structure, and we suppose notions of gluing that take this additional structure into account. For more motivation in this respect see for example \cite{Oe:2dqym,RuSz:areaqft,StolzTeichnerSusy}. 

\subsection{Axioms}

We now present the axioms for GBQFT. Quantum field theories come in two flavors: \emph{bosonic} and \emph{fermionic} ones. As has been known for a long time, the fermionic case translates to a non-trivial symmetric structure on the relevant monoidal category, while the bosonic case translates to the trivial symmetric structure provided by the trivial flip isomorphism. In the fermionic case we use a $\Z_2$-grading to specify the symmetric structure. This is straightforward enough for an implementation in TQFT. What is rather more surprising is that to physically correctly encode fermionic theories, we also need to replace the Hilbert spaces ubiquitous in quantum theory with Krein spaces. The latter are indefinite (but non-degenerate) inner product spaces. As far as we are aware this was discovered only in the work \cite{Oe:freefermi}. In the following we present the axioms of GBQFT in a version essentially as in that work. 

\begin{definition}\label{AFdefinitionClassic}
A GBQFT consists of the following data:

\begin{description}

\item[\textbf{(T1)}]\label{(T1)} A complex
separable Krein superspace $\cH_\Sigma$, called the \emph{state space} of
$\Sigma$, for each hypersurface $\Sigma$. We denote its indefinite inner product by
$\langle\cdot,\cdot\rangle_\Sigma$. We require $\cH_\emptyset=\C$.

\item[\textbf{(T1b)}]\label{(T1B)} An anti-linear graded isometric isomorphism $\iota_\Sigma:\cH_\Sigma\to\cH_{\overline{\Sigma}}$ satisfying the equation $\iota_{\overline{\Sigma}}\iota_\Sigma=id_{\cH_\Sigma}$ for each hypersurface $\Sigma$.

\item[\textbf{(T2)}]\label{(T2)} Given a decomposition of a hypersurface $\Sigma$ as a union $\Sigma=\Sigma_1\cup...\cup\Sigma_n$ of hypersurfaces, a graded epimorphism $\tau_{\Sigma_1,...,\Sigma_n;\Sigma}:\cH_{\Sigma_1}\otimes...\otimes\cH_{\Sigma_n}\to \cH_\Sigma$. We require $\tau_{\Sigma_1,...,\Sigma_n;\Sigma}$ to be associative. If the decomposition is disjoint we also require $\tau_{\Sigma_1,...,\Sigma_n;\Sigma}$ to be a graded isometric isomorphism.

\item[\textbf{(T2a)}\label{(T2a)}]
The map $\tau_{\Sigma_2,\Sigma_1;\Sigma}^{-1}\circ \tau_{\Sigma_1,\Sigma_2;\Sigma}:\cH_{\Sigma_1}\tens\cH_{\Sigma_2}\to\cH_{\Sigma_2}\tens\cH_{\Sigma_1}$ is the graded transposition,
\begin{equation}\label{eq:TensorTransformationSymmetry}
     \psi_1\tens\psi_2\mapsto (-1)^{\fdg{\psi_1}\cdot\fdg{\psi_2}}\psi_2\tens\psi_1 .
\end{equation}

\item[\textbf{(T2b)}]\label{(T2B)} Orientation change and decomposition are compatible. In particular, for a disjoint decomposition of hypersurfaces $\Sigma=\Sigma_1\sqcup\Sigma_2$ we have,
\begin{equation}
\tau_{\overline{\Sigma}_1,\overline{\Sigma}_2;\overline{\Sigma}}
\left((\iota_{\Sigma_1}\tens\iota_{\Sigma_2})(\psi_1\tens\psi_2)\right)
  =\iota_\Sigma\left(\tau_{\Sigma_2,\Sigma_1;\Sigma}(\psi_2\tens\psi_1)\right) .
\end{equation}

\item[\textbf{(T4)}]\label{(T4)} A graded map $\rho_X:\cH_{\partial X}\to \C$ for every region $X$. We call $\rho_X$ the amplitude map associated to $X$. We require $\rho_\emptyset=id_\C$, and we also require that the amplitude $\rho_{\overline{X}}$ of the region $\overline{X}$ obtained from $X$ by inverting orientation, is the conjugate composition $\overline{\rho_X\iota_{\partial \overline{X}}}$.

\item[\textbf{(T3x)}]\label{(T3x)} Let $\Sigma$ be a hypersurface. Let $\hat{\Sigma}$ be its slice region. The boundary $\partial \hat{\Sigma}$ of $\hat{\Sigma}$ decomposes as the union $\overline{\Sigma}\cup\Sigma$ of two copies of $\Sigma$ with opposite orientation. The bilinear map $(\cdot,\cdot)_\Sigma:\cH_{\overline{\Sigma}}\otimes\cH_\Sigma\to\C$ given by $\rho_{\hat{\Sigma}}\tau$ is related to the inner product of $\cH_\Sigma$ via the equation $\langle\cdot,\cdot\rangle_\Sigma=(\iota_\Sigma(\cdot),\cdot)_\Sigma$. 

\item[\textbf{(T5a)}]\label{(T5a)} Let $X_1,X_2$ be regions. Let $X=X_1\sqcup X_2$ be their disjoint union. For all $\psi_1\in\cH_{\partial X_1}$ and $\psi_2\in\cH_{\partial X_2}$ the following equation holds:
\begin{equation}
    \rho_X\tau_{\partial X_1,\partial X_2}(\psi_1\otimes\psi_2)=\rho_{X_1}(\psi_1)\rho_{X_2}(\psi_2) .
\end{equation}

\item[\textbf{(T5b)}]\label{(T5b)} Let $X$ be a region with its boundary decomposed as $\partial X=\Lambda\cup\Sigma\cup\overline{\Sigma}$. Let $X'$ denote the gluing of $X$ to itself along $\Sigma$ and $\overline{\Sigma}$. Suppose $X'$ is a region. There exists a non-zero complex number $c_{\Lambda,\Sigma,\overline{\Sigma}}$, which we call the gluing anomaly of the gluing of $X$ into $X'$, such that for any orthonormal basis $\{\zeta_k\}_{k\in I}$ of $\cH_\Sigma$ and for every $\psi\in\cH_{X'}$ the following equation holds:
\begin{equation}
    \rho_{\partial X'}\tau(\psi)=c_{\Lambda,\Sigma,\overline{\Sigma}}\sum_{k\in I}(-1)^{\sig{\zeta_k}}\rho_X\tau_{\Lambda,\Sigma,\overline{\Sigma}}(\psi\tens\zeta_k\tens \iota_\Sigma(\zeta_k)) .
    \label{eq:comprule}
\end{equation}
\end{description}
\end{definition}

We explain some notation used in the axioms. A superspace is a $\Z_2$-graded vector space. A Krein superspace is a Krein space that also carries the structure of a superspace. To make this explicit we write $\cH_{\Sigma}=\cH_{\Sigma}^0\oplus\cH_{\Sigma}^1$ and $|\psi|=0,1$ for $\psi\in\cH_{\Sigma}^{0,1}$. The inner product and grading are compatible in the sense that $\langle \psi,\psi'\rangle_{\Sigma}=0$ if $|\psi|\neq |\psi'|$. Graded linear maps (homomorphisms) between superspaces are linear maps that preserve the grading. $\C$ is canonically a Krein superspace with the odd part trivial and the inner product $\langle 1,1\rangle=1$. In particular, $\rho_X(\psi)=0$ if $|\psi|=1$. An anti-linear graded isometric isomorphism between Krein superspaces is a map $\alpha$ satisfying $\langle\alpha(\psi),\alpha(\psi')\rangle=(-1)^{|\psi| |\psi'|}\overline{\langle\psi,\psi'\rangle}$.
An orthonormal basis of a Krein superspace is a decomposition $\cH=\cH^{+}\oplus\cH^{-}$ into complete positive-definite and negative-definite parts that are separately superspaces and an orthonormal basis for each part. For basis elements we use in Axiom~\textbf{(T5b)} the notation $[\zeta]=0$ if $\zeta\in \cH^{+}$ and $[\zeta]=1$ if $\zeta\in\cH^{-}$.

We proceed to comment on the difference between the version of the axioms presented here and that of the work \cite{Oe:freefermi}. The Axiom~\textbf{(T2)} of \cite{Oe:freefermi} is split here into two axioms, a more basic Axiom~\textbf{(T2)} and Axiom~\textbf{(T2a)}. This is convenient in allowing us to specifically refer to this more basic version of Axiom~\textbf{(T2)}, which was used previously in other works. What is more, we should expect that in certain theories, such as those linked to the representation theory of quantum groups only the more basic version holds, i.e., what we call Axiom~\textbf{(T2a)} would not hold there. Also, what we call here Axiom~\textbf{(T2b)} arises from a combination of the new Axiom~\textbf{(T2a)} and what was called Axiom~\textbf{(T2b)} in \cite{Oe:freefermi}. Conversely, Axiom~\textbf{(T2b)} in \cite{Oe:freefermi} arises from combining Axioms~\textbf{(T2a)} and \textbf{(T2b)} as presented here. The advantage of our presentation here is that Axiom~\textbf{(T2b)} as presented still makes sense, even if Axiom~\textbf{(T2a)} does not hold.

The axioms for GBQFT describe ways of coherently associating algebraic data, namely Krein superspaces and their maps, to composite geometric systems, namely hypersurfaces and regions obtained from simpler manifolds through the operations of gluing and taking disjoint unions. While the GBQFT axioms are meant to be applied in contexts where the manifolds carry various types of additional structure, we shall restrict in the present paper to the simplest case of interest: \emph{oriented topological manifolds}. In fact, one characteristic of our approach, to be explored in subsequent work, is that we wish to introduce a systematic way to add additional structure.

\begin{rem}
	In Definition~\ref{QFTAFfin}, we take an agnostic point of view on our notion of gluing, i.e.\ the axioms involving gluings of hypersurfaces and regions are purposely phrased in a vague notion of gluing. We will develop the specific notion of gluing of manifolds we will use throughout the paper and subsequent work in Section~\ref{gluingfunctionssubsection}. The gluing operation of regions used to state Axiom~\textbf{(T5b)} is of a specific kind: It describes the self-gluing of a region along two copies of the same hypersurface, with opposite orientations, contained in the boundary. By combining this operation with the operation of taking disjoint unions, the notion of gluing described in Axiom~\textbf{(T5b)} subsumes the usual gluing of cobordisms. We develop a formal notion of gluing expressing gluings appearing in Axiom~\textbf{(T5b)}, which will be easily generalized to situations where more structure is involved.
\end{rem}

A \textbf{slice region} $\hat{\Sigma}$, as appearing in Axiom~\textbf{(T3x)} of Definition \ref{QFTAFfin} is a special type of region that should be thought of as an infinitesimal thickening of $\Sigma$, which, under the action of a GBQFT relates inner products to amplitudes. Slice regions were introduced originally in \cite{Oe:gbqft} under the name “empty regions”. The following is a pictorial representation of a slice region on the circle $\Sigma$ with its usual orientation:
\begin{center}

	\tikzset{every picture/.style={line width=0.75pt}} 
	
	\begin{tikzpicture}[x=0.75pt,y=0.75pt,yscale=-1,xscale=1]
		
		\draw   (135.47,185.36) .. controls (135.47,160.74) and (155.43,140.78) .. (180.05,140.78) .. controls (204.67,140.78) and (224.63,160.74) .. (224.63,185.36) .. controls (224.63,209.98) and (204.67,229.95) .. (180.05,229.95) .. controls (155.43,229.95) and (135.47,209.98) .. (135.47,185.36) -- cycle ;
		\draw   (155.05,185.36) .. controls (155.05,171.55) and (166.24,160.36) .. (180.05,160.36) .. controls (193.86,160.36) and (205.05,171.55) .. (205.05,185.36) .. controls (205.05,199.17) and (193.86,210.36) .. (180.05,210.36) .. controls (166.24,210.36) and (155.05,199.17) .. (155.05,185.36) -- cycle ;
		\draw  [color={rgb, 255:red, 208; green, 2; blue, 27 }  ,draw opacity=1 ] (145.53,185.36) .. controls (145.53,166.29) and (160.98,150.84) .. (180.05,150.84) .. controls (199.12,150.84) and (214.57,166.29) .. (214.57,185.36) .. controls (214.57,204.43) and (199.12,219.89) .. (180.05,219.89) .. controls (160.98,219.89) and (145.53,204.43) .. (145.53,185.36) -- cycle ;
		\draw [color={rgb, 255:red, 208; green, 2; blue, 27 }  ,draw opacity=1 ]   (180.05,150.84) -- (179.86,150.83) ;
		\draw [shift={(177.86,150.79)}, rotate = 1.36] [color={rgb, 255:red, 208; green, 2; blue, 27 }  ,draw opacity=1 ][line width=0.75]    (6.56,-1.97) .. controls (4.17,-0.84) and (1.99,-0.18) .. (0,0) .. controls (1.99,0.18) and (4.17,0.84) .. (6.56,1.97)   ;

	\end{tikzpicture}
	
\end{center}
where the arrowed circle (red) represents $\Sigma$ and the outer annulus (non-arrowed circles) is meant to be thought of as having infinitesimally small width.

\begin{rem}
	Similarly to the case of gluings, the notion of slice region appearing in Definition~\ref{QFTAFfin} is intentionally phrased in a vague manner. In Section~\ref{sec:compareaxioms} we present a formal definition of slice region on a hypersurface, completely stated in terms of the desired formal properties a slice region should satisfy. In the context of the present paper, i.e.\ in the context of oriented topological manifolds, the slice region $\hat{\Sigma}$ of a hypersurface $\Sigma$, can be identified simply with the cylinder over $\Sigma$, as we shall see in Section~\ref{sliceregionssubsection}. Stating the definition of a slice region in terms of its formal properties will allow us to extend our notion of slice region to contexts where more spacetime structure is involved.
\end{rem}

\subsection{Examples}

In vanilla TQFT the gluing in codimension $0$ is conveniently described in terms of the gluing together of two distinct cobordisms (although it is also permissible to glue a cobordism to itself). In the present setting, where manifolds of codimension $1$ may have boundaries, it is more convenient to describe gluing in codimension $0$ in terms of two separate cases: A disjoint gluing as in Axiom~\textbf{(T5a)}, e.g.\ gluing two $n$-balls, along a common $n-1$-ball, and a self-gluing as in Axiom~\textbf{(T5b)}, e.g.\ self-gluing an $n$-ball into an $n$-ball. We illustrate these two examples in the following for GBQFT of dimension $2$.

\begin{ex}\label{ExampleTwoDisks}
	Assume we are given a GBQFT of dimension $2$. Let $X_1$ and $X_2$ be two copies of the standard disk $\mathbb{D}^2$. Let $X$ denote the disjoint union $X_1\sqcup X_2$ of $X_1$ and $X_2$. Consider the boundary decompositions of $X_1$ and $X_2$ as two intervals $I^i_\ell$ and $I^i_r$, $i=1,2$ glued together along their endpoints. Pictorially:
	\begin{center}

		\tikzset{every picture/.style={line width=0.75pt}} 
		
		\begin{tikzpicture}[x=0.75pt,y=0.75pt,yscale=-1,xscale=1]
			
			\draw  [draw opacity=0] (279.57,189) .. controls (263.2,188.76) and (250,175.42) .. (250,159) .. controls (250,142.48) and (263.36,129.08) .. (279.86,129) -- (280,159) -- cycle ; \draw  [color={rgb, 255:red, 74; green, 144; blue, 226 }  ,draw opacity=1 ] (279.57,189) .. controls (263.2,188.76) and (250,175.42) .. (250,159) .. controls (250,142.48) and (263.36,129.08) .. (279.86,129) ;  
			\draw  [draw opacity=0] (279.17,129.01) .. controls (279.26,129.01) and (279.34,129.01) .. (279.42,129.01) .. controls (295.99,128.69) and (309.68,141.86) .. (309.99,158.42) .. controls (310.31,174.99) and (297.14,188.68) .. (280.58,188.99) .. controls (280.27,189) and (279.97,189) .. (279.67,189) -- (280,159) -- cycle ; \draw  [color={rgb, 255:red, 208; green, 2; blue, 27 }  ,draw opacity=1 ] (279.17,129.01) .. controls (279.26,129.01) and (279.34,129.01) .. (279.42,129.01) .. controls (295.99,128.69) and (309.68,141.86) .. (309.99,158.42) .. controls (310.31,174.99) and (297.14,188.68) .. (280.58,188.99) .. controls (280.27,189) and (279.97,189) .. (279.67,189) ;  
			\draw  [draw opacity=0] (430.23,189.66) .. controls (413.86,189.43) and (400.67,176.09) .. (400.67,159.67) .. controls (400.67,143.15) and (414.02,129.74) .. (430.53,129.67) -- (430.67,159.67) -- cycle ; \draw  [color={rgb, 255:red, 208; green, 2; blue, 27 }  ,draw opacity=1 ] (430.23,189.66) .. controls (413.86,189.43) and (400.67,176.09) .. (400.67,159.67) .. controls (400.67,143.15) and (414.02,129.74) .. (430.53,129.67) ;  
			\draw  [draw opacity=0] (429.84,129.68) .. controls (429.92,129.68) and (430.01,129.67) .. (430.09,129.67) .. controls (446.66,129.35) and (460.34,142.53) .. (460.66,159.09) .. controls (460.98,175.66) and (447.81,189.34) .. (431.24,189.66) .. controls (430.94,189.67) and (430.64,189.67) .. (430.34,189.67) -- (430.67,159.67) -- cycle ; \draw  [color={rgb, 255:red, 74; green, 144; blue, 226 }  ,draw opacity=1 ] (429.84,129.68) .. controls (429.92,129.68) and (430.01,129.67) .. (430.09,129.67) .. controls (446.66,129.35) and (460.34,142.53) .. (460.66,159.09) .. controls (460.98,175.66) and (447.81,189.34) .. (431.24,189.66) .. controls (430.94,189.67) and (430.64,189.67) .. (430.34,189.67) ;  
			
			\draw (271.78,151.51) node [anchor=north west][inner sep=0.75pt]  [font=\footnotesize]  {$X_{1}$};
			\draw (424.22,152.18) node [anchor=north west][inner sep=0.75pt]  [font=\footnotesize]  {$X_{2}$};
			\draw (228.44,151.84) node [anchor=north west][inner sep=0.75pt]  [font=\footnotesize]  {$I_{\ell }^{1}$};
			\draw (318.78,152.18) node [anchor=north west][inner sep=0.75pt]  [font=\footnotesize]  {$I_{r}^{1}$};
			\draw (379.11,151.51) node [anchor=north west][inner sep=0.75pt]  [font=\footnotesize]  {$I_{\ell }^{2}$};
			\draw (469.11,152.84) node [anchor=north west][inner sep=0.75pt]  [font=\footnotesize]  {$I_{r}^{2}$};

		\end{tikzpicture}
		
	\end{center}
	The above boundary decompositions of $X_1$ and $X_2$ define a boundary decomposition of $X$. Write $\Lambda$ for $I^1_\ell\sqcup I^2_r$, i.e. $\Lambda$ is the disjoint union of the outer (blue) intervals above. Write $\Sigma$ for the interval $I^1_r$ and identify $I^2_\ell$ with $\overline{\Sigma}$. Doing this we are in a situation in which we can apply Axiom \textbf{(T5b)} to obtain a formula involving the amplitude maps of the standard disk. Let $X'$ be the region obtained by gluing $X$ along $\Sigma$ and $\overline{\Sigma}$. Thus defined $X$ is a new copy $X'$ of the standard disk $\mathbb{D}^2$. We represent $X'$ pictorially as:
	\begin{center}

		\tikzset{every picture/.style={line width=0.75pt}} 
		
		\begin{tikzpicture}[x=0.75pt,y=0.75pt,yscale=-1,xscale=1]
			
			\draw  [draw opacity=0] (334.72,208.89) .. controls (333.31,209.09) and (331.87,209.2) .. (330.4,209.2) .. controls (313.83,209.2) and (300.4,195.77) .. (300.4,179.2) .. controls (300.4,162.63) and (313.83,149.2) .. (330.4,149.2) .. controls (331.96,149.2) and (333.5,149.32) .. (335,149.55) -- (330.4,179.2) -- cycle ; \draw  [color={rgb, 255:red, 74; green, 144; blue, 226 }  ,draw opacity=1 ] (334.72,208.89) .. controls (333.31,209.09) and (331.87,209.2) .. (330.4,209.2) .. controls (313.83,209.2) and (300.4,195.77) .. (300.4,179.2) .. controls (300.4,162.63) and (313.83,149.2) .. (330.4,149.2) .. controls (331.96,149.2) and (333.5,149.32) .. (335,149.55) ;  
			\draw  [draw opacity=0] (334.88,149.74) .. controls (336.17,149.54) and (337.49,149.43) .. (338.82,149.41) .. controls (355.39,149.09) and (369.08,162.26) .. (369.39,178.82) .. controls (369.71,195.39) and (356.54,209.08) .. (339.98,209.39) .. controls (337.98,209.43) and (336.02,209.27) .. (334.12,208.94) -- (339.4,179.4) -- cycle ; \draw  [color={rgb, 255:red, 74; green, 144; blue, 226 }  ,draw opacity=1 ] (334.88,149.74) .. controls (336.17,149.54) and (337.49,149.43) .. (338.82,149.41) .. controls (355.39,149.09) and (369.08,162.26) .. (369.39,178.82) .. controls (369.71,195.39) and (356.54,209.08) .. (339.98,209.39) .. controls (337.98,209.43) and (336.02,209.27) .. (334.12,208.94) ;  
			\draw [color={rgb, 255:red, 208; green, 2; blue, 27 }  ,draw opacity=1 ] [dash pattern={on 0.84pt off 2.51pt}]  (334.88,149.74) -- (334.72,208.89) ;
			
			\draw (328.5,126.09) node [anchor=north west][inner sep=0.75pt]  [font=\footnotesize]  {$X'$};
			\draw (310.75,171.4) node [anchor=north west][inner sep=0.75pt]  [font=\footnotesize]  {$X_{1}$};
			\draw (341.75,171.15) node [anchor=north west][inner sep=0.75pt]  [font=\footnotesize]  {$X_{2}$};
			\draw (278,171.4) node [anchor=north west][inner sep=0.75pt]  [font=\footnotesize]  {$I_{\ell }^{1}$};
			\draw (375.25,170.9) node [anchor=north west][inner sep=0.75pt]  [font=\footnotesize]  {$I_{r}^{2}$};

		\end{tikzpicture}
		
	\end{center}
	Every state $\psi\in\mathcal{H}_{\partial X'}$ is a linear combination of states of the form $\tau(\psi_1\otimes\psi_2)$, where $\psi_1\in\mathcal{H}_{I^1_\ell}$ and $\psi_2\in\mathcal{H}_{I^2_r}$, where $\tau$ is the isometric isomorphism in Axiom~\textbf{(T2)} associated to the boundary decomposition $\partial X'=I_\ell^1\cup I_r^2$ depicted in the picture above. Thus, in order to compute $\rho_{X'}$ on every state of $\mathcal{H}_{\partial X'}$ it is enough to compute $\rho_{X'}$ on states of the form $\tau(\psi_1\otimes\psi_2)$ as above. Let $\left\{\zeta_k:k\in I\right\}$ be an orthonormal basis for $\mathcal{H}_\Sigma$. Axiom~\textbf{(T5b)} yields the formula:
	\begin{equation}\label{GluingFormula2disks1}
		\rho_{X'}\tau(\psi_1\otimes\psi_2)=c\sum_{k\in I}(-1)^{[\zeta_k]}\rho_X\tau(\psi_1\otimes\psi_2\otimes\zeta_k\otimes\iota_\Sigma(\zeta_k))
	\end{equation}
	for a complex number $c$. Identifying $X', X_1$ and $X_2$ with $\mathbb{D}^2$, $\Sigma$ with $[0,1]^2$ and using Axiom~\textbf{(T5a)}, we obtain the formula:
	\begin{equation}\label{GluingFormula2disks2}
		\rho_{\mathbb{D}^2}\tau(\psi_1\otimes\psi_2)=c\sum_{k\in I}(-1)^{[\zeta_k]}\rho_{\mathbb{D}^2}(\psi_1\otimes\zeta_k)\rho_{\mathbb{D}^2}(\psi_2\otimes\iota_{[0,1]}(\zeta_k))
	\end{equation}
	which involves the amplitude of the disk $\mathbb{D}^2$ on the two sides of the equation. The evaluation of a GBQFT of general dimension $n$ on the gluing of two copies of the standard $n$-ball onto a third copy of the standard $n$-ball will yield a similar formula.
\end{ex}

The following example illustrates the behavior of self-gluing a disk into itself in the context of a GBQFT of dimension 2. We obtain another formula for the amplitude of the standard disk $\mathbb{D}^2$.

\begin{ex}\label{examplediskondiskclassic}
	As in Example~\ref{ExampleTwoDisks} assume we are given a GBQFT of dimension $2$. Let $X$ denote the standard disk $\mathbb{D}^2$. Consider the boundary decomposition $\partial X=\Lambda\sqcup\Sigma\sqcup\overline{\Sigma}$ where $\Sigma$ and $\Lambda$ are copies of the standard interval $[0,1]$ pictorially described as:
	\begin{center}

		\tikzset{every picture/.style={line width=0.75pt}} 
		
		\begin{tikzpicture}[x=0.75pt,y=0.75pt,yscale=-0.5,xscale=0.5]
			
			\draw  [draw opacity=0] (362.75,123.4) .. controls (378.77,133.88) and (389.45,151.86) .. (389.71,172.42) .. controls (390.13,205.26) and (363.79,232.23) .. (330.88,232.65) .. controls (297.96,233.07) and (270.94,206.78) .. (270.53,173.94) .. controls (270.23,151.15) and (282.83,131.19) .. (301.56,120.97) -- (330.12,173.18) -- cycle ; \draw  [color={rgb, 255:red, 74; green, 144; blue, 226 }  ,draw opacity=1 ] (362.75,123.4) .. controls (378.77,133.88) and (389.45,151.86) .. (389.71,172.42) .. controls (390.13,205.26) and (363.79,232.23) .. (330.88,232.65) .. controls (297.96,233.07) and (270.94,206.78) .. (270.53,173.94) .. controls (270.23,151.15) and (282.83,131.19) .. (301.56,120.97) ;
			\draw  [color={rgb, 255:red, 74; green, 144; blue, 226 }  ,draw opacity=1 ] (322.69,229.77) -- (330.69,232.64) -- (322.61,235.31) ;
			\draw [color={rgb, 255:red, 208; green, 2; blue, 27 }  ,draw opacity=1 ]   (301.56,120.97) -- (330.54,170.44) ;
			\draw [color={rgb, 255:red, 208; green, 2; blue, 27 }  ,draw opacity=1 ]   (362.75,123.4) -- (330.54,170.44) ;
			\draw  [color={rgb, 255:red, 208; green, 2; blue, 27 }  ,draw opacity=1 ] (317.01,152.85) -- (315.51,144.48) -- (321.85,150.14) ;
			\draw  [color={rgb, 255:red, 208; green, 2; blue, 27 }  ,draw opacity=1 ] (349.9,147.43) -- (343.62,151.58) -- (345.35,144.25) ;

		\end{tikzpicture}
		
	\end{center}
	where $\Sigma, \overline{\Sigma}$ are pictured as striaght lines (red) and $\Lambda$ is pictured as a circle segment (blue). The region $X'$ obtained from $X$ by gluing itself along the above decomposition is again the standard disk $\mathbb{D}^2$. We picture the gluing of $X$ into $X'$ as:
	\begin{center}
		\tikzset{every picture/.style={line width=0.75pt}} 
		\begin{tikzpicture}[x=0.75pt,y=0.75pt,yscale=-0.5,xscale=0.5]
			
			\draw  [draw opacity=0][fill={rgb, 255:red, 255; green, 255; blue, 255 }  ,fill opacity=1 ] (223.08,108.06) .. controls (239.1,118.54) and (249.79,136.53) .. (250.05,157.09) .. controls (250.47,189.93) and (224.12,216.89) .. (191.21,217.31) .. controls (158.3,217.73) and (131.28,191.45) .. (130.86,158.6) .. controls (130.57,135.82) and (143.16,115.86) .. (161.89,105.63) -- (190.45,157.84) -- cycle ; \draw  [color={rgb, 255:red, 74; green, 144; blue, 226 }  ,draw opacity=1 ] (223.08,108.06) .. controls (239.1,118.54) and (249.79,136.53) .. (250.05,157.09) .. controls (250.47,189.93) and (224.12,216.89) .. (191.21,217.31) .. controls (158.3,217.73) and (131.28,191.45) .. (130.86,158.6) .. controls (130.57,135.82) and (143.16,115.86) .. (161.89,105.63) ;
			\draw  [color={rgb, 255:red, 74; green, 144; blue, 226 }  ,draw opacity=1 ] (183.02,214.43) -- (191.02,217.31) -- (182.95,219.98) ;
			\draw [color={rgb, 255:red, 208; green, 2; blue, 27 }  ,draw opacity=1 ]   (161.89,105.63) -- (190.88,155.1) ;
			\draw [color={rgb, 255:red, 208; green, 2; blue, 27 }  ,draw opacity=1 ]   (223.08,108.06) -- (190.88,155.1) ;
			\draw  [color={rgb, 255:red, 208; green, 2; blue, 27 }  ,draw opacity=1 ] (177.35,137.52) -- (175.84,129.15) -- (182.19,134.81) ;
			\draw  [color={rgb, 255:red, 208; green, 2; blue, 27 }  ,draw opacity=1 ] (210.23,132.1) -- (203.95,136.25) -- (205.68,128.92) ;
			\draw  [draw opacity=0] (353.02,104.36) .. controls (360.68,100.63) and (369.27,98.49) .. (378.36,98.37) .. controls (411.28,97.96) and (438.3,124.24) .. (438.71,157.09) .. controls (439.13,189.93) and (412.79,216.89) .. (379.88,217.31) .. controls (346.96,217.73) and (319.94,191.45) .. (319.53,158.6) .. controls (319.22,134.94) and (332.82,114.32) .. (352.75,104.49) -- (379.12,157.84) -- cycle ; \draw  [color={rgb, 255:red, 74; green, 144; blue, 226 }  ,draw opacity=1 ] (353.02,104.36) .. controls (360.68,100.63) and (369.27,98.49) .. (378.36,98.37) .. controls (411.28,97.96) and (438.3,124.24) .. (438.71,157.09) .. controls (439.13,189.93) and (412.79,216.89) .. (379.88,217.31) .. controls (346.96,217.73) and (319.94,191.45) .. (319.53,158.6) .. controls (319.22,134.94) and (332.82,114.32) .. (352.75,104.49) ;
			\draw  [color={rgb, 255:red, 80; green, 227; blue, 194 }  ,draw opacity=1 ] (373.02,214.43) -- (381.02,217.31) -- (372.95,219.98) ;
			\draw    (261.08,160.33) -- (306.58,160.53) ;
			\draw [shift={(308.58,160.54)}, rotate = 180.25] [color={rgb, 255:red, 0; green, 0; blue, 0 }  ][line width=0.75]    (7.65,-2.3) .. controls (4.86,-0.97) and (2.31,-0.21) .. (0,0) .. controls (2.31,0.21) and (4.86,0.98) .. (7.65,2.3)   ;
			\draw [color={rgb, 255:red, 208; green, 2; blue, 27 }  ,draw opacity=1 ] [dash pattern={on 0.84pt off 2.51pt}]  (379.89,98.13) -- (380.58,152.67) ;
			
			\draw (184.88,159.82) node [anchor=north west][inner sep=0.75pt]  [font=\footnotesize,xscale=0.9,yscale=0.9]  {$X$};
			\draw (372.38,159.32) node [anchor=north west][inner sep=0.75pt]  [font=\footnotesize,xscale=0.9,yscale=0.9]  {$X'$};
		\end{tikzpicture}
	\end{center}
	Let $\{\zeta_k\}_{k\in I}$ be an orthonormal basis for $\cH_\Sigma$. Axiom \textbf{(T5b)} of Definition \ref{AFdefinitionClassic} yields the following equation:
	\begin{equation}\label{GluingFormula1disk1}
		\rho_{X'}(\psi)=c\sum_{k\in I}(-1)^{[\zeta_k]}\rho_X\tau(\psi\otimes\zeta_k\otimes\iota_\Sigma(\zeta_k))
	\end{equation}
	for every $\psi\in\mathcal{H}_{\partial X'}$. Identifying $X$ and $X'$ with the standard disk $\mathbb{D}^2$, $\partial X'$ with the standard circle $S^1$ and $\Sigma$ with the standard interval $[0,1]$ we obtain the formula:
	\begin{equation}\label{GluingFormula1disk2}
		\rho_{\mathbb{D}^2}(\psi)=c\sum_{k\in I}(-1)^{[\zeta_k]}\rho_{\mathbb{D}^2}\tau(\psi\otimes\zeta_k\otimes \iota_\Sigma(\zeta_k))
	\end{equation}
	for every $\psi\in\cH_{S^1}$.
\end{ex}

Finally, the following example illustrates the evaluation of the usual self-gluing of the standard cylinder $S^1\times [0,1]$ onto the standard torus $\mathbb{T}^2$ that occurs in vanilla TQFT, under a GBQFT of dimension 2. This will provide an explicit formula for gluing anomalies.

\begin{ex}\label{examplefinitedimensionality}
As in Examples~\ref{ExampleTwoDisks} and \ref{examplediskondiskclassic}, assume we are given a GBQFT of dimension 2. Denote the standard circle $S^1$ by $\Sigma$, and denote the standard cylinder $S^1\times [0,1]$ by $X$. In the notation of Axiom~\textbf{(T5b)} let $\Lambda$ be the empty hypersurface $\emptyset$, and let $\partial(S^1\times [0,1])=S^1\sqcup \overline{S^1}$ be the boundary decomposition $\partial X=\Sigma\sqcup\overline{\Sigma}\sqcup\Lambda$ of $X$ along which we wish to glue $X$. Pictorially: 
	\begin{center}
			
		\tikzset{every picture/.style={line width=0.75pt}} 
		
		\begin{tikzpicture}[x=0.75pt,y=0.75pt,yscale=-0.8,xscale=0.8]
			
			\draw [color={rgb, 255:red, 208; green, 2; blue, 27 }  ,draw opacity=1 ]   (288.87,162.75) .. controls (278.46,162.18) and (271.7,130.72) .. (289.26,123.01) ;
			\draw [color={rgb, 255:red, 208; green, 2; blue, 27 }  ,draw opacity=1 ]   (289.26,123.01) .. controls (306.12,126.63) and (302.68,162.39) .. (288.87,162.75) ;
			\draw [color={rgb, 255:red, 208; green, 2; blue, 27 }  ,draw opacity=1 ]   (375.79,122.75) .. controls (392.65,126.36) and (389.21,162.13) .. (375.4,162.49) ;
			\draw [color={rgb, 255:red, 208; green, 2; blue, 27 }  ,draw opacity=1 ] [dash pattern={on 0.84pt off 2.51pt}]  (375.4,162.49) .. controls (364.99,161.91) and (358.23,130.46) .. (375.79,122.75) ;
			\draw    (289.26,123.01) -- (375.79,122.75) ;
			\draw    (290.42,162.57) -- (375.4,162.49) ;
			\draw  [color={rgb, 255:red, 208; green, 2; blue, 27 }  ,draw opacity=1 ] (297.49,144.29) -- (300.44,140.07) -- (303.68,144.23) ;
			\draw  [color={rgb, 255:red, 208; green, 2; blue, 27 }  ,draw opacity=1 ] (390.05,140.48) -- (386.82,144.21) -- (384.35,140.32) ;
			
			\draw (260,136.73) node [anchor=north west][inner sep=0.75pt]  [font=\footnotesize]  {$\Sigma $};
			\draw (395,135.07) node [anchor=north west][inner sep=0.75pt]  [font=\footnotesize]  {$\overline{\Sigma }$};
			\draw (327,135.4) node [anchor=north west][inner sep=0.75pt]  [font=\footnotesize]  {$X$};

		\end{tikzpicture}
		
	\end{center}
Here, in the pictorial notation of Examples~\ref{ExampleTwoDisks} and \ref{examplediskondiskclassic} the circle segment (blue) boundary component, i.e.\ $\Lambda$ is empty. Let $X'$ be the region obtained by gluing $X$ to itself along the above boundary decomposition. We represent $X'$ pictorially as:
	\begin{center}

		\tikzset{every picture/.style={line width=0.75pt}} 
		
		\begin{tikzpicture}[x=0.75pt,y=0.75pt,yscale=-0.8,xscale=0.8]
			
			\draw   (262.8,149.17) .. controls (262.8,128.73) and (291.9,112.17) .. (327.8,112.17) .. controls (363.7,112.17) and (392.8,128.73) .. (392.8,149.17) .. controls (392.8,169.6) and (363.7,186.17) .. (327.8,186.17) .. controls (291.9,186.17) and (262.8,169.6) .. (262.8,149.17) -- cycle ;
			\draw    (300.95,142.22) .. controls (305.69,157.84) and (341.75,162.62) .. (351.35,143.02) ;
			\draw    (302.97,145.67) .. controls (320.15,130.62) and (344.95,141.02) .. (348.95,146.22) ;
			\draw [color={rgb, 255:red, 208; green, 2; blue, 27 }  ,draw opacity=1 ] [dash pattern={on 0.84pt off 2.51pt}]  (327.8,186.17) .. controls (318.34,185.87) and (315.77,161.3) .. (325.62,156.09) ;
			\draw [color={rgb, 255:red, 208; green, 2; blue, 27 }  ,draw opacity=1 ] [dash pattern={on 0.84pt off 2.51pt}]  (327.8,186.17) .. controls (334.06,184.15) and (335.2,157.01) .. (325.62,156.09) ;
			
			\draw (320.14,87.54) node [anchor=north west][inner sep=0.75pt]  [font=\footnotesize]  {$X'$};

		\end{tikzpicture}
		
	\end{center}
	We obtain a formula relating the dimension of the state space $\mathcal{H}_{S^1}$ of $S^1$ and the corresponding gluing anomaly $c$. Let $\{\zeta_k\}_{k\in I}$ be an orthonormal basis for $\cH_{\Sigma}$. A combination of Axioms~\textbf{(T4)} and \textbf{(T5b)} yields the equation:
	\begin{equation}
		1=c\sum_{k\in I}(-1)^{[\zeta_k]}\rho_{X'}\tau(\zeta_k\otimes\iota_\Sigma(\zeta_k))
	\end{equation}
	Identifying $\Sigma$ with $S^1$ and $X$ with $S^1\times [0,1]$ we obtain the formula:
	\begin{equation}
		c^{-1}=\sum_{k\in I}(-1)^{[\zeta_k]}\rho_{X}\tau(\zeta_k\otimes \iota_{S^1}(\zeta_k))
	\end{equation}
	Following the comments after Definition \ref{QFTAFfin}, the standard cylinder $S^1\times [0,1]$ is a slice region $\hat{S^1}$ for $S^1$. Therefore, using Axioms~\textbf{(T2)} and \textbf{(T3x)} we can rewrite this as,
	\begin{equation}\label{finitedimensionalS1equation}
		c^{-1}
		=\sum_{k\in I}(-1)^{[\zeta_k]+|\zeta_k|}\langle \zeta_k, \zeta_k\rangle_{S^1}
		=\sum_{k\in I}(-1)^{|\zeta_k|}
		=\dim\cH_{S^1}^0 - \dim\cH_{S^1}^1 .
	\end{equation}
\end{ex}

\begin{obs}
\label{obs:findim}
Equation~(\ref{finitedimensionalS1equation}) implies that in the situation of Example~\ref{examplefinitedimensionality}, the state space $\cH_{S^1}$ of $S^1$ is finite-dimensional. Since the cylinder over any closed hypersurface $\Sigma$ can be glued to itself in the same manner, its corresponding state space $\cH_{\Sigma}$ must be finite-dimensional. The above is a modification of the standard argument for the finite-dimensionality of state spaces in TQFT.
\end{obs}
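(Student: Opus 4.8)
The plan is to read the assertion essentially off Equation~(\ref{finitedimensionalS1equation}), the only genuinely analytic ingredient being a summability observation. First I would recall that, by Axiom~\textbf{(T5b)}, the gluing anomaly $c$ governing the self-gluing of the cylinder $\hat{S^1}=S^1\times[0,1]$ into the torus is a \emph{nonzero} complex number; consequently $c^{-1}$ is a well-defined, finite element of $\C$. The chain of identities in Example~\ref{examplefinitedimensionality} already identifies this finite number with the superdimension of $\cH_{S^1}$, so the content of the observation is to extract finiteness of $\dim\cH_{S^1}$ from this identity.

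The heart of the matter is the right-hand side of Equation~(\ref{finitedimensionalS1equation}): it is a sum indexed by an orthonormal basis $\{\zeta_k\}_{k\in I}$ of $\cH_{S^1}$, each summand being $(-1)^{|\zeta_k|}\in\{+1,-1\}$. Since Axiom~\textbf{(T5b)} demands that its defining formula hold for \emph{any} orthonormal basis, with no preferred enumeration of $I$, the family $\bigl((-1)^{|\zeta_k|}\bigr)_{k\in I}$ must be summable as an unordered sum over the (separable, hence countable) index set $I$. A family of complex numbers is summable in this unordered sense precisely when it is absolutely summable, and here $\sum_{k\in I}\bigl|(-1)^{|\zeta_k|}\bigr|=|I|$. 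Hence summability forces $|I|<\infty$, i.e.\ $\cH_{S^1}$ is finite dimensional, and $c^{-1}=\dim\cH_{S^1}^0-\dim\cH_{S^1}^1$ is then the genuinely finite superdimension of $\cH_{S^1}$.

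For the general statement I would rerun the computation of Example~\ref{examplefinitedimensionality} verbatim with $S^1$ replaced by an arbitrary closed hypersurface $\Sigma$. The cylinder $\hat\Sigma=\Sigma\times[0,1]$ is a slice region for $\Sigma$ (as established in Section~\ref{sliceregionssubsection}), and self-gluing it along $\Sigma\sqcup\overline{\Sigma}$ yields the closed region $\Sigma\times S^1$; taking $\Lambda=\emptyset$ and using $\cH_\emptyset=\C$, Axioms~\textbf{(T4)}, \textbf{(T5b)}, \textbf{(T2)} and \textbf{(T3x)} combine exactly as before to give $c_\Sigma^{-1}=\sum_{k\in I}(-1)^{|\zeta_k|}=\dim\cH_\Sigma^0-\dim\cH_\Sigma^1$ for any orthonormal basis of $\cH_\Sigma$. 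The identical summability argument then yields $\dim\cH_\Sigma<\infty$.

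The step I expect to be the main (indeed, the only) obstacle is making the summability claim airtight: one must be explicit that the basis-independent sum appearing in Axiom~\textbf{(T5b)} is an unordered sum over the index set, and invoke the equivalence of unordered summability with absolute summability in $\C$, so that a family all of whose terms have unit modulus can be summable only when finitely many of them occur. This is exactly where the argument differs from, and is a modification of, the usual TQFT finiteness proof, which instead reads finiteness off the trace of an identity; everything else here is the bookkeeping already carried out in Example~\ref{examplefinitedimensionality}.
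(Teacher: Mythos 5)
Your proposal is correct and takes essentially the same route as the paper: the paper's justification of Observation~\ref{obs:findim} is precisely the chain of identities of Example~\ref{examplefinitedimensionality} culminating in Equation~(\ref{finitedimensionalS1equation}), with finiteness read off because a sum of unit-modulus terms can only equal the finite number $c^{-1}$ if the index set is finite, and the general closed $\Sigma$ handled, as you do, by running the identical self-gluing of the cylinder $\Sigma\times[0,1]$ into $\Sigma\times S^1$. Your explicit appeal to unordered (equivalently absolute) summability of the family $\bigl((-1)^{|\zeta_k|}\bigr)_{k\in I}$ only makes airtight a step the paper leaves implicit, so there is no substantive difference in approach.
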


Observation~\ref{obs:findim} appears to indicate a serious problem. We cannot have infinite-dimensional state spaces, and so the theory must apparently be trivial. To resolve this problem, GBQFT introduces another ingredient that we have not mentioned so far. State spaces are generically infinite-dimensional in QFT. So, we need to deal with this situation, even apart from Observation~\ref{obs:findim}. To address this, GBQFT introduces a notion of \emph{admissibility} for manifolds and for gluings. Simply put, the regions and hypersurfaces mentioned in the axioms are not to be taken to be all topological manifolds of dimension $n$ and $n-1$, but only certain classes of them. What is more, not all gluings are admissible. In particular, in Example~\ref{examplefinitedimensionality} the hypersurface $S^1$ would be admissible as well as the cylinder $S^1\times [0,1]$, but not the self-gluing that yields the torus. Of course this notion of admissibility must be coherent in the sense that boundaries of admissible regions are admissible hypersurfaces, compositions of admissible gluings are admissible gluings etc. The price to pay is the diminished usefulness of GBQFT to deal with a wide variety of topologies and underlying gluings. However, as stated previously, such topologies do not actually appear in the use case of primary interest here, QFT.
It tuns out that in order to make the GBQFT axioms work successfully with infinite-dimensional state spaces in realistic QFT, further adjustments have to be made. In particular, the amplitude map $\rho_X$ of Axiom~\textbf{(T4)} will not in general be defined on the whole Krein superspace $\cH_{\partial X}$, but only on a dense subspace \cite{Oe:holomorphic}. These details will be taken into account in Section~\ref{sec:axiomatics}.

\begin{rem}
	Formulas~(\ref{GluingFormula2disks2}) and (\ref{GluingFormula1disk2}) are obtained from formulas~(\ref{GluingFormula2disks1}) and (\ref{GluingFormula1disk1}) by identifying certain homeomorphic hypersurfaces and regions. This suggests that the items in Definition~\ref{QFTAFfin} are invariant under orientation preserving homeomorphisms, and thus formulas obtained from the GBQFT axioms should be defined "up to isomorphisms". This is intentionally not explicitly stated in Definition~\ref{QFTAFfin}. The gluing formalism developed in Section 3 will allow for a finer identification of formulas "up to isomorphisms" while not strictly identifying homeomorphic hypersurfaces and regions. This is done by making spacetime symmetries native to our formalism.
\end{rem}

\subsection{Plan for the paper}

In Section~\ref{usualgluings} we present the definitions of gluing function, relative gluing diagram, and slice region. We present examples and study their formal properties. Gluing functions, relative gluing diagrams, and slice regions form the gluing formalism we use to formalize the notion of composite spacetime system in CQFT. In Section~\ref{sec:axiomatics} we present the definition of CQFT. We provide examples of computations and compare them to corresponding computations in GBQFT. In Section~\ref{sec:2dim} we specialize the definition of CQFT to dimension 2. We study relevant structures defined by 2-dimensional CQFTs, and after adapting our axioms to regions with area, we describe 2-dimensional quantum Yang-Mills theory with corners as a special case of a 2-dimensional CQFT in Section~\ref{sec:2dpqym}. We end the paper with an Outlook.


\section{Gluings: A new perspective}
\label{usualgluings}

In Atiyah's original TQFT axioms the notion of gluing of manifolds is implicit in the choice of a suitable cobordism category. The hypersurfaces are the objects and the regions, having the structure of cobordisms, correspond to morphisms. Gluing of regions, performed with the aid of a choice of collar structure, is defined "up to isomorphisms", and encodes the composition of morphisms. The TQFT set-up no longer makes sense once we forget the in-out distinction of cobordisms and allow hypersurfaces to have boundaries. A choice of collar structure can be a subtle problem in more involved cases, e.g. on Riemannian or conformal cobordisms, see \cite{StolzTeichnerSusy}, and considering regions "up to isomorphisms" suppresses natural spacetime symmetry axioms. In order to promote the categorification of notions of manifolds and their gluings in the context of the GBF, and to overcome some of the issues described above, we propose a new formalism centered around a notion of \emph{gluing functions}. This formalism is based on four principles:
\begin{enumerate}
	\item The way one glues a manifold into another manifold contains topological data we wish to record. We thus consider the identification performing a gluing as our fundamental operation, rather than the gluing itself.
	\item The boundary components along which one glues a region contains vital information in a QFT and thus our gluing formalism should allow us to naturally keep track of "boundary components along which we glue a region".
	\item Gluing of regions should be unital, i.e., our formalism should accommodate the existence of regions acting trivially by gluing. These are the \emph{slice regions} appearing in the GBQFT axioms.
	\item Symmetries of manifolds, i.e., homeomorphisms, should be seen as integral parts of the description of the gluing process. In particular, pure symmetries should appear as "trivial" gluings. This should lead to an equivariant formalism.
\end{enumerate}

The proposed formalism for treating gluings of hypersurfaces and regions in the context of the axioms of GBQFT will provide a roadmap on how to systematically treat gluings in more involved and more general settings.

\subsection{Gluing functions}
\label{gluingfunctionssubsection}

We formally introduce our notion of gluing function. Gluing functions as presented in this section describe the most basic notion of gluing we consider. We provide examples and computations and set notational conventions for the rest of the paper.

Fix a positive integer $n$. If not indicated otherwise, in the following a \emph{hypersurface} is any oriented compact $(n-1)$-dimensional topological manifold, possibly with boundary. Similarly, a \emph{region} is any oriented compact $n$-dimensional topological manifold, possibly with boundary.
We consider identifications between manifolds in which only points in the boundary are non-trivially identified. The fact that we only consider compact manifolds allows us to characterize identifications as closed, epic, continuous functions. The following definition captures our notion of gluing function.

\begin{definition}\label{gluingfunctiondefinition}
Let $\Sigma,\Sigma'$ be hypersurfaces or regions. Let $f:\Sigma\to\Sigma'$ be a continuous function. We will say that $f$ is a gluing function if $f$ is an identification, i.e., $f$ is epic and closed; if the restriction $f|_{\Sigma^\circ}$ of $f$ to the interior $\Sigma^\circ$ of $\Sigma$ is an embedding; and if $f$ preserves orientation.
\end{definition}

In the following, we sometimes omit to explicitly mention the orientation of manifolds. In these cases it is understood that a coherent choice of orientations is made.

We present examples of gluing functions. In what follows we consider first manifolds of dimension 1. Compact, oriented, connected topological manifolds of dimension 1 with boundary are usually referred to as abstract intervals, see \cite{Bartels1}, or open strings, see \cite{Oe:2dqym}. Every open string is homeomorphic to the standard unit interval $[0,1]$. Compact, oriented, connected, closed topological manifolds of dimension 1 are referred to as abstract circles, or closed strings, and every closed string is homeomorphic to the standard circle $S^1$. We will use the words interval and open string interchangeably, and we will use the words circle and closed string interchangeably. 

\begin{ex}\label{exampletwointervals}
Definition~\ref{gluingfunctiondefinition} supports the operation of composing open strings into composite open strings. Given two open strings $O_1$ and $O_2$ we can form the glued open string $O_1\vee O_2$ by considering the quotient of $O_1\sqcup O_2$ by identifying boundary components of $O_1$ and $O_2$ in such a way that $O_1\vee O_2$ inherits an orientation from $O_1$ and $O_2$, see \cite{Bartels1,Oe:2dqym}. Let $\alpha:O_1\sqcup O_2\to O_1\vee O_2$ be the identification map of $O_1\sqcup O_2$ onto $O_1\vee O_2$. Thus defined $\alpha$ is a gluing function. Pictorially, the operation of composing two open strings $O_1$ and $O_2$ into a composite open string $O_1\vee O_2$ is described as:
\begin{center}
		\tikzset{every picture/.style={line width=0.75pt}} 
		
		\begin{tikzpicture}[x=0.75pt,y=0.75pt,yscale=-1,xscale=1]
			
			\draw    (299.69,249.47) -- (346.92,249.29) ;
			\draw [shift={(348.92,249.28)}, rotate = 179.78] [color={rgb, 255:red, 0; green, 0; blue, 0 }  ][line width=0.75]    (6.56,-1.97) .. controls (4.17,-0.84) and (1.99,-0.18) .. (0,0) .. controls (1.99,0.18) and (4.17,0.84) .. (6.56,1.97)   ;
			\draw    (139.2,250.2) .. controls (179.2,220.2) and (159.8,281) .. (199.8,251) ;
			\draw [shift={(199.8,251)}, rotate = 323.13] [color={rgb, 255:red, 0; green, 0; blue, 0 }  ][fill={rgb, 255:red, 0; green, 0; blue, 0 }  ][line width=0.75]      (0, 0) circle [x radius= 1.34, y radius= 1.34]   ;
			\draw [shift={(169.65,250.83)}, rotate = 234.02] [color={rgb, 255:red, 0; green, 0; blue, 0 }  ][line width=0.75]    (4.37,-1.32) .. controls (2.78,-0.56) and (1.32,-0.12) .. (0,0) .. controls (1.32,0.12) and (2.78,0.56) .. (4.37,1.32)   ;
			\draw [shift={(139.2,250.2)}, rotate = 323.13] [color={rgb, 255:red, 0; green, 0; blue, 0 }  ][fill={rgb, 255:red, 0; green, 0; blue, 0 }  ][line width=0.75]      (0, 0) circle [x radius= 1.34, y radius= 1.34]   ;
			\draw    (220.8,250.2) .. controls (260.8,220.2) and (241.4,281) .. (281.4,251) ;
			\draw [shift={(281.4,251)}, rotate = 323.13] [color={rgb, 255:red, 0; green, 0; blue, 0 }  ][fill={rgb, 255:red, 0; green, 0; blue, 0 }  ][line width=0.75]      (0, 0) circle [x radius= 1.34, y radius= 1.34]   ;
			\draw [shift={(251.25,250.83)}, rotate = 234.02] [color={rgb, 255:red, 0; green, 0; blue, 0 }  ][line width=0.75]    (4.37,-1.32) .. controls (2.78,-0.56) and (1.32,-0.12) .. (0,0) .. controls (1.32,0.12) and (2.78,0.56) .. (4.37,1.32)   ;
			\draw [shift={(220.8,250.2)}, rotate = 323.13] [color={rgb, 255:red, 0; green, 0; blue, 0 }  ][fill={rgb, 255:red, 0; green, 0; blue, 0 }  ][line width=0.75]      (0, 0) circle [x radius= 1.34, y radius= 1.34]   ;
			\draw    (369.6,249.4) .. controls (409.6,219.4) and (390.2,280.2) .. (430.2,250.2) ;
			\draw [shift={(430.2,250.2)}, rotate = 323.13] [color={rgb, 255:red, 0; green, 0; blue, 0 }  ][fill={rgb, 255:red, 0; green, 0; blue, 0 }  ][line width=0.75]      (0, 0) circle [x radius= 1.34, y radius= 1.34]   ;
			\draw [shift={(400.05,250.03)}, rotate = 234.02] [color={rgb, 255:red, 0; green, 0; blue, 0 }  ][line width=0.75]    (4.37,-1.32) .. controls (2.78,-0.56) and (1.32,-0.12) .. (0,0) .. controls (1.32,0.12) and (2.78,0.56) .. (4.37,1.32)   ;
			\draw [shift={(369.6,249.4)}, rotate = 323.13] [color={rgb, 255:red, 0; green, 0; blue, 0 }  ][fill={rgb, 255:red, 0; green, 0; blue, 0 }  ][line width=0.75]      (0, 0) circle [x radius= 1.34, y radius= 1.34]   ;
			\draw    (430.2,250.2) .. controls (470.2,220.2) and (450.8,281) .. (490.8,251) ;
			\draw [shift={(490.8,251)}, rotate = 323.13] [color={rgb, 255:red, 0; green, 0; blue, 0 }  ][fill={rgb, 255:red, 0; green, 0; blue, 0 }  ][line width=0.75]      (0, 0) circle [x radius= 1.34, y radius= 1.34]   ;
			\draw [shift={(460.65,250.83)}, rotate = 234.02] [color={rgb, 255:red, 0; green, 0; blue, 0 }  ][line width=0.75]    (4.37,-1.32) .. controls (2.78,-0.56) and (1.32,-0.12) .. (0,0) .. controls (1.32,0.12) and (2.78,0.56) .. (4.37,1.32)   ;
			
			\draw (318.21,225.55) node [anchor=north west][inner sep=0.75pt]  [font=\footnotesize]  {$\alpha$};
			\draw (162.29,219.97) node [anchor=north west][inner sep=0.75pt]  [font=\footnotesize]  {$O_{1}$};
			\draw (241.71,220.83) node [anchor=north west][inner sep=0.75pt]  [font=\footnotesize]  {$O_{2}$};
			\draw (405.71,219.11) node [anchor=north west][inner sep=0.75pt]  [font=\footnotesize]  {$O_{1} \lor O_{2}$};

		\end{tikzpicture}
\end{center}
Observe that the gluing function $\alpha:O_1\sqcup O_2\to O_1\vee O_2$, is, "up to isomorphism", the identification of the disjoint union $[0,1/2]\sqcup [1/2,1]$ of the two halves of the unit interval, onto $[0,1]$, by identifying the two boundary components $\left\{1/2\right\}$ of $[0,1/2]$ and $[1/2,1]$, or equivalently is, "up to isomorphism", the identification $[0,1]\vee [0,1]$ of two copies of the standard interval $[0,1]$ along $0$ and $1$.
\end{ex}

\begin{ex}\label{ExampleClosedStrings}
	Definition \ref{gluingfunctiondefinition} also supports the operation of closing an open string: Given an open string $O$, with boundary $\partial O=\left\{p,q\right\}$ we can form a closed string $C$ by identifying $p$ and $q$ in $O$, making $C$ inherit the orientation of $O$. Let $\alpha$ now be the identification map $\alpha:O\to C$. Thus defined $\alpha$ is a gluing function. Pictorially, the operation of closing an open string $O$ into a closed string $C$ is described as:
	\begin{center}

		\tikzset{every picture/.style={line width=0.75pt}} 
		
		\begin{tikzpicture}[x=0.75pt,y=0.75pt,yscale=-1,xscale=1]
			
			\draw    (297.79,140.47) -- (345.02,140.29) ;
			\draw [shift={(347.02,140.28)}, rotate = 179.78] [color={rgb, 255:red, 0; green, 0; blue, 0 }  ][line width=0.75]    (6.56,-1.97) .. controls (4.17,-0.84) and (1.99,-0.18) .. (0,0) .. controls (1.99,0.18) and (4.17,0.84) .. (6.56,1.97)   ;
			\draw    (426.78,124.11) -- (428.73,128.46) ;
			\draw [shift={(429.55,130.29)}, rotate = 245.82] [color={rgb, 255:red, 0; green, 0; blue, 0 }  ][line width=0.75]    (4.37,-1.32) .. controls (2.78,-0.56) and (1.32,-0.12) .. (0,0) .. controls (1.32,0.12) and (2.78,0.56) .. (4.37,1.32)   ;
			\draw    (210.4,138.2) .. controls (250.4,108.2) and (231,169) .. (271,139) ;
			\draw [shift={(271,139)}, rotate = 323.13] [color={rgb, 255:red, 0; green, 0; blue, 0 }  ][fill={rgb, 255:red, 0; green, 0; blue, 0 }  ][line width=0.75]      (0, 0) circle [x radius= 1.34, y radius= 1.34]   ;
			\draw [shift={(240.85,138.83)}, rotate = 234.02] [color={rgb, 255:red, 0; green, 0; blue, 0 }  ][line width=0.75]    (4.37,-1.96) .. controls (2.78,-0.92) and (1.32,-0.27) .. (0,0) .. controls (1.32,0.27) and (2.78,0.92) .. (4.37,1.96)   ;
			\draw [shift={(210.4,138.2)}, rotate = 323.13] [color={rgb, 255:red, 0; green, 0; blue, 0 }  ][fill={rgb, 255:red, 0; green, 0; blue, 0 }  ][line width=0.75]      (0, 0) circle [x radius= 1.34, y radius= 1.34]   ;
			\draw    (368.99,144.79) .. controls (375.4,122.2) and (386.6,158.6) .. (390.6,118.2) .. controls (394.6,77.8) and (447.01,131.59) .. (425.4,149.8) .. controls (403.79,168.01) and (405.8,133.4) .. (398.6,150.2) .. controls (391.4,167) and (369.8,157.8) .. (368.99,144.79) -- cycle ;
			\draw [shift={(368.99,144.79)}, rotate = 266.46] [color={rgb, 255:red, 0; green, 0; blue, 0 }  ][fill={rgb, 255:red, 0; green, 0; blue, 0 }  ][line width=0.75]      (0, 0) circle [x radius= 1.34, y radius= 1.34]   ;
			\draw [shift={(368.99,144.79)}, rotate = 285.83] [color={rgb, 255:red, 0; green, 0; blue, 0 }  ][fill={rgb, 255:red, 0; green, 0; blue, 0 }  ][line width=0.75]      (0, 0) circle [x radius= 1.34, y radius= 1.34]   ;
			
			\draw (316.05,117.02) node [anchor=north west][inner sep=0.75pt]  [font=\footnotesize]  {$\alpha$};
			\draw (235.09,108.77) node [anchor=north west][inner sep=0.75pt]  [font=\footnotesize]  {$O$};
			\draw (370.46,107.57) node [anchor=north west][inner sep=0.75pt]  [font=\footnotesize]  {$C$};
			\draw (203.37,144.2) node [anchor=north west][inner sep=0.75pt]  [font=\footnotesize]  {$p$};
			\draw (268.8,144.49) node [anchor=north west][inner sep=0.75pt]  [font=\footnotesize]  {$q$};

		\end{tikzpicture}
		
	\end{center}
	The closing string operation described above is, "up to isomorphisms" the operation of closing the standard interval $[0,1]$ to the standard circle $S^1$ through the gluing function $\alpha(t)=e^{2\pi it}$.
\end{ex}

The following trivial observation follows directly from Definition~\ref{gluingfunctiondefinition}.

\begin{obs}\label{observationgluingfunctionsclosed}
	The collection of gluing functions is closed under disjoint unions, composition, and orientation change, and these operations satisfy the obvious strict associativity conditions, i.e. given gluing functions $\alpha_1,\alpha_2$ and $\alpha_3$, such that the compositions $(\alpha_1\sqcup \alpha_2)\sqcup \alpha_3$ and $\alpha_1\sqcup (\alpha_2\sqcup \alpha_3)$ are defined, the equation $(\alpha_1\sqcup \alpha_2)\sqcup \alpha_3=\alpha_1\sqcup (\alpha_2\sqcup \alpha_3)$ holds, and if the compositions $\gamma(\beta\alpha)$ and $(\gamma\beta)\alpha$ are defined, then the equation $\gamma(\beta\alpha)=(\gamma\beta)\alpha$ holds. The operation of changing orientation is involutive in the sense that the equation $\overline{\overline{\alpha}}=\alpha$ holds for every gluing function $\alpha$. The operations of disjoint union, composition, and orientation change of gluing functions distribute over each other, i.e. given gluing functions $\alpha,\beta,\gamma,\delta$, the equation $(\alpha\sqcup \beta)(\gamma\sqcup \delta)=\alpha\gamma\sqcup \beta\delta$ holds, whenever the three compositions are defined, given two gluing functions $\alpha,\beta$, the equation $\overline{\alpha\sqcup \beta}=\overline{\alpha}\sqcup\overline{\beta}$ always holds, and for any pair of gluing functions $\alpha,\beta$ such that the composition $\beta\alpha$ is defined the equation $\overline{\beta\alpha}=\overline{\beta}\overline{\alpha}$ always holds.
\end{obs}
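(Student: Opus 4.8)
The plan is to observe that each claimed closure reduces to checking that the three defining conditions of Definition~\ref{gluingfunctiondefinition} --- being epic and closed, restricting to an embedding on the interior, and preserving orientation --- survive the operations of disjoint union, composition, and orientation change; and that every asserted algebraic identity is an equality of functions that already holds at the level of the underlying continuous maps, hence holds strictly. The only place where genuine topology (as opposed to formal bookkeeping) enters is the interior-embedding condition for a composite, so I would isolate that as an auxiliary lemma first.

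First I would prove that if $\alpha\colon\Sigma\to\Sigma'$ is a gluing function then $\alpha(\Sigma^\circ)\subseteq(\Sigma')^\circ$, i.e.\ interior points are sent to interior points. This follows from invariance of domain: since $\Sigma$ and $\Sigma'$ have equal dimension and $\alpha|_{\Sigma^\circ}$ is an embedding of the boundaryless manifold $\Sigma^\circ$, a point $x\in\Sigma^\circ$ with $\alpha(x)\in\partial\Sigma'$ would give, in a boundary chart of $\Sigma'$ around $\alpha(x)$, a continuous injection from an open ball in $\R^n$ into a half-space sending the centre to a boundary point; invariance of domain forces the image to be open in $\R^n$, which is impossible for a subset of the half-space containing a boundary point. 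With this lemma, composition is handled: for gluing functions $\alpha\colon\Sigma\to\Sigma'$ and $\beta\colon\Sigma'\to\Sigma''$, the restriction $(\beta\alpha)|_{\Sigma^\circ}$ factors as $\beta|_{(\Sigma')^\circ}\circ\alpha|_{\Sigma^\circ}$ precisely because $\alpha(\Sigma^\circ)\subseteq(\Sigma')^\circ$, and a composite of embeddings is an embedding; being epic and closed passes to $\beta\alpha$ since surjections compose to surjections and closed maps compose to closed maps, and orientation preservation of the interior embedding composes as well.

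The remaining closures are routine. For $\alpha\sqcup\beta$ one uses that the interior, the closed sets, surjectivity, embeddings with images in distinct summands, and the orientation are all computed componentwise on a disjoint union, so each defining property holds separately on each summand. For orientation change $\overline{\alpha}$ the underlying map and topology are unchanged, so epic, closed, and the interior embedding are unaffected, while reversing the orientation on both domain and codomain leaves the matching of orientations intact, so $\overline{\alpha}$ again preserves orientation.

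Finally, the algebraic identities --- strict associativity of $\sqcup$ and of composition, the involution $\overline{\overline{\alpha}}=\alpha$, the interchange law $(\alpha\sqcup\beta)(\gamma\sqcup\delta)=\alpha\gamma\sqcup\beta\delta$, and the compatibilities $\overline{\alpha\sqcup\beta}=\overline{\alpha}\sqcup\overline{\beta}$ and $\overline{\beta\alpha}=\overline{\beta}\,\overline{\alpha}$ --- are all equalities of set maps between the same canonically identified underlying spaces, and each is verified by tracking the image of a point through both sides. They hold on the nose because disjoint union of sets is strictly associative, function composition is strictly associative, and double orientation reversal returns the original oriented manifold. The only real obstacle is the interior-to-interior lemma; once invariance of domain delivers it, everything else is formal verification.
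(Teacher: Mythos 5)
Your proof is correct, and it does more than the paper does: the paper states this as an Observation and offers no argument at all, declaring that it ``trivially follows directly'' from Definition~\ref{gluingfunctiondefinition}. Your write-up correctly identifies the one point where the claim is \emph{not} purely formal, namely closure under composition. Definition~\ref{gluingfunctiondefinition} only demands that $f|_{\Sigma^\circ}$ be an embedding into $\Sigma'$, not into $(\Sigma')^\circ$; without your auxiliary lemma that $\alpha(\Sigma^\circ)\subseteq(\Sigma')^\circ$, one could not factor $(\beta\alpha)|_{\Sigma^\circ}$ through $\beta|_{(\Sigma')^\circ}$, and injectivity of the composite could a priori fail if interior points of $\Sigma$ landed on $\partial\Sigma'$ where $\beta$ is allowed to identify points. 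Your invariance-of-domain argument for the lemma is sound (the restriction of $\alpha$ to a small interior chart, composed with a boundary chart of $\Sigma'$ viewed inside $\R^n$, is continuous and injective, so its image must be open in $\R^n$, contradicting containment in a closed half-space with a point on the bounding hyperplane), and the remaining verifications --- closed and epic maps composing to closed and epic maps, componentwise checks on disjoint unions, orientation reversal leaving the underlying map untouched, and the strict identities holding at the level of underlying set maps (granting the paper's convention that disjoint union is strictly associative) --- are exactly the routine bookkeeping the paper implicitly relies on. In short: same approach as the paper's implicit one, but you have isolated and supplied the single genuinely topological ingredient the paper glosses over.
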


Observation~\ref{observationgluingfunctionsclosed} says that we can form gluing functions from arrangements of simpler gluing functions through taking disjoint unions, compositions and by inverting orientations; and that any expression formed by these operations is independent on the order in which the operations are performed, as long as they are well-defined. The following observation also trivially follows from Definition~\ref{gluingfunctiondefinition}.

\begin{obs}\label{ObservationHomeomorphisms}
	Homeomorphisms satisfy the conditions of Definition~\ref{gluingfunctiondefinition} and thus are examples of gluing functions. Moreover, homeomorphisms are precisely the invertible gluing functions. 
\end{obs}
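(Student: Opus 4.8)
The plan is to verify the two assertions in turn: that every homeomorphism meets the three requirements of Definition~\ref{gluingfunctiondefinition}, and that conversely the gluing functions admitting a gluing-function inverse are exactly the homeomorphisms. Throughout, in keeping with the standing convention that orientations are chosen coherently, I read ``homeomorphism'' as ``orientation-preserving homeomorphism''. To see that a homeomorphism $f\colon\Sigma\to\Sigma'$ is a gluing function, note first that continuity holds by definition and that $f$, being bijective, is in particular epic (surjective). It is closed, since for every set $C$ we have $f(C)=(f^{-1})^{-1}(C)$ and $f^{-1}$ is continuous; hence $f$ is an identification. For the interior condition, the single genuinely topological input is invariance of the boundary: a homeomorphism of manifolds sends interior points to interior points, so $f$ restricts to a homeomorphism $\Sigma^\circ\to(\Sigma')^\circ$, which is a fortiori an embedding. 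Orientation is preserved by the convention. This establishes the first assertion.

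For the second assertion I would argue both implications. If $f$ is a homeomorphism, then by the first part $f$ is a gluing function, and its inverse $f^{-1}$ is again an orientation-preserving homeomorphism, hence also a gluing function; thus $f$ is invertible within the collection of gluing functions. Conversely, suppose $f$ is a gluing function possessing a gluing-function inverse $g$, so that $g\circ f=\id$ and $f\circ g=\id$. Then $f$ is a bijection whose two-sided inverse $g$ is continuous, being a gluing function, so $f$ is a homeomorphism. Equivalently, one may observe directly that a bijective gluing function is a continuous closed bijection, and any continuous closed bijection is automatically a homeomorphism, since its set-theoretic inverse carries closed sets to closed sets. The orientation-preserving property is built into $f$ being a gluing function, so $f$ is indeed an orientation-preserving homeomorphism.

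I do not expect any real obstacle here; as the paper itself signals, the statement is essentially bookkeeping against Definition~\ref{gluingfunctiondefinition}. The only step worth flagging explicitly is the appeal to invariance of the boundary, which is what guarantees that the interior is sent into the interior and hence that the embedding condition is transparently satisfied; the remaining content is the standard fact that a continuous closed bijection is a homeomorphism.
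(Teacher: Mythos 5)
Your proof is correct and amounts to exactly the routine verification the paper leaves implicit: Observation~\ref{ObservationHomeomorphisms} is stated without proof as trivially following from Definition~\ref{gluingfunctiondefinition}, and your checks (surjectivity, closedness via the continuous inverse, the embedding condition, and the equivalence of invertibility with being a continuous closed bijection) are the intended ones. One minor simplification: the appeal to invariance of the boundary is unnecessary, since Definition~\ref{gluingfunctiondefinition} only asks that $f|_{\Sigma^\circ}$ be an embedding (not that it land in the interior of $\Sigma'$), and the restriction of a homeomorphism to any subspace is automatically an embedding.
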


Homeomorphisms encode spacetime symmetries. Homeomorphisms are precisely the gluing functions, where the "gluing" performed is trivial. Considering symmetries as gluing functions allows us to formally consider formulas such as (\ref{GluingFormula2disks2}) and (\ref{GluingFormula1disk2}), and allows us to provide formal meaning to statements such as the "up to isomorphisms" statements made at the end of Examples~\ref{exampletwointervals} and \ref{ExampleClosedStrings}. The following example explains how to do this.

\begin{ex}\label{Exhomeomorphisms}
	Let $\alpha:\Sigma\to \Sigma'$ be a gluing function, and let $\beta:\Lambda\to \Sigma$ and $\gamma:\Sigma'\to \Lambda'$ be homeomorphisms. The composition $\gamma\alpha\beta$ is a gluing function from $\Lambda$ to $\Lambda'$. Thus, if we know how to glue a manifold $\Sigma$ to itself in order to produce a manifold $\Sigma'$, then we unambiguously know how to glue any manifold $\Lambda$ homeomorphic to $\Sigma$ to itself in order to produce any manifold $\Lambda'$ homeomorphic to $\Sigma'$. As a particular case of this consider the following arrangement $\Sigma$ of open strings:
	\begin{center}

		\tikzset{every picture/.style={line width=0.75pt}} 
		
		\begin{tikzpicture}[x=0.75pt,y=0.75pt,yscale=-0.8,xscale=0.8]
			
			\draw    (218.6,122.16) .. controls (183.43,148.14) and (204.86,188.43) .. (237.14,193.29) .. controls (269.43,198.14) and (302.38,157) .. (266.86,123) ;
			\draw [shift={(266.86,123)}, rotate = 223.75] [color={rgb, 255:red, 0; green, 0; blue, 0 }  ][fill={rgb, 255:red, 0; green, 0; blue, 0 }  ][line width=0.75]      (0, 0) circle [x radius= 1.34, y radius= 1.34]   ;
			\draw [shift={(218.6,122.16)}, rotate = 143.55] [color={rgb, 255:red, 0; green, 0; blue, 0 }  ][fill={rgb, 255:red, 0; green, 0; blue, 0 }  ][line width=0.75]      (0, 0) circle [x radius= 1.34, y radius= 1.34]   ;
			\draw    (225.57,123.14) -- (239,149.86) ;
			\draw [shift={(239,149.86)}, rotate = 63.31] [color={rgb, 255:red, 0; green, 0; blue, 0 }  ][fill={rgb, 255:red, 0; green, 0; blue, 0 }  ][line width=0.75]      (0, 0) circle [x radius= 1.34, y radius= 1.34]   ;
			\draw [shift={(225.57,123.14)}, rotate = 63.31] [color={rgb, 255:red, 0; green, 0; blue, 0 }  ][fill={rgb, 255:red, 0; green, 0; blue, 0 }  ][line width=0.75]      (0, 0) circle [x radius= 1.34, y radius= 1.34]   ;
			\draw    (244.43,150.14) -- (259.57,123.29) ;
			\draw [shift={(259.57,123.29)}, rotate = 299.42] [color={rgb, 255:red, 0; green, 0; blue, 0 }  ][fill={rgb, 255:red, 0; green, 0; blue, 0 }  ][line width=0.75]      (0, 0) circle [x radius= 1.34, y radius= 1.34]   ;
			\draw [shift={(244.43,150.14)}, rotate = 299.42] [color={rgb, 255:red, 0; green, 0; blue, 0 }  ][fill={rgb, 255:red, 0; green, 0; blue, 0 }  ][line width=0.75]      (0, 0) circle [x radius= 1.34, y radius= 1.34]   ;
			
			\draw (235.2,201.4) node [anchor=north west][inner sep=0.75pt]  [font=\footnotesize]  {$\Sigma $};

		\end{tikzpicture}

	\end{center}
	and consider the obvious identification:
	\begin{center}

		\tikzset{every picture/.style={line width=0.75pt}} 
		
		\begin{tikzpicture}[x=0.75pt,y=0.75pt,yscale=-0.8,xscale=0.8]
			
			\draw  [draw opacity=0] (415.28,102.2) .. controls (426.66,108.88) and (434.3,120.76) .. (434.48,134.37) .. controls (434.77,155.59) and (416.82,173.01) .. (394.4,173.28) .. controls (371.98,173.55) and (353.57,156.57) .. (353.28,135.35) .. controls (353.08,120.33) and (362.01,107.21) .. (375.2,100.74) -- (393.88,134.86) -- cycle ; \draw  [color={rgb, 255:red, 0; green, 0; blue, 0 }  ,draw opacity=1 ] (415.28,102.2) .. controls (426.66,108.88) and (434.3,120.76) .. (434.48,134.37) .. controls (434.77,155.59) and (416.82,173.01) .. (394.4,173.28) .. controls (371.98,173.55) and (353.57,156.57) .. (353.28,135.35) .. controls (353.08,120.33) and (362.01,107.21) .. (375.2,100.74) ;
			\draw [color={rgb, 255:red, 0; green, 0; blue, 0 }  ,draw opacity=1 ]   (374.43,101.13) -- (394.17,133.09) ;
			\draw [shift={(394.17,133.09)}, rotate = 58.29] [color={rgb, 255:red, 0; green, 0; blue, 0 }  ,draw opacity=1 ][fill={rgb, 255:red, 0; green, 0; blue, 0 }  ,fill opacity=1 ][line width=0.75]      (0, 0) circle [x radius= 1.34, y radius= 1.34]   ;
			\draw [shift={(374.43,101.13)}, rotate = 58.29] [color={rgb, 255:red, 0; green, 0; blue, 0 }  ,draw opacity=1 ][fill={rgb, 255:red, 0; green, 0; blue, 0 }  ,fill opacity=1 ][line width=0.75]      (0, 0) circle [x radius= 1.34, y radius= 1.34]   ;
			\draw [color={rgb, 255:red, 0; green, 0; blue, 0 }  ,draw opacity=1 ]   (416.11,102.7) -- (394.17,133.09) ;
			\draw [shift={(416.11,102.7)}, rotate = 125.83] [color={rgb, 255:red, 0; green, 0; blue, 0 }  ,draw opacity=1 ][fill={rgb, 255:red, 0; green, 0; blue, 0 }  ,fill opacity=1 ][line width=0.75]      (0, 0) circle [x radius= 1.34, y radius= 1.34]   ;
			\draw    (283.69,143.67) -- (330.92,143.49) ;
			\draw [shift={(332.92,143.48)}, rotate = 179.78] [color={rgb, 255:red, 0; green, 0; blue, 0 }  ][line width=0.75]    (6.56,-1.97) .. controls (4.17,-0.84) and (1.99,-0.18) .. (0,0) .. controls (1.99,0.18) and (4.17,0.84) .. (6.56,1.97)   ;
			\draw    (198.6,102.16) .. controls (163.43,128.14) and (184.86,168.43) .. (217.14,173.29) .. controls (249.43,178.14) and (282.38,137) .. (246.86,103) ;
			\draw [shift={(246.86,103)}, rotate = 223.75] [color={rgb, 255:red, 0; green, 0; blue, 0 }  ][fill={rgb, 255:red, 0; green, 0; blue, 0 }  ][line width=0.75]      (0, 0) circle [x radius= 1.34, y radius= 1.34]   ;
			\draw [shift={(198.6,102.16)}, rotate = 143.55] [color={rgb, 255:red, 0; green, 0; blue, 0 }  ][fill={rgb, 255:red, 0; green, 0; blue, 0 }  ][line width=0.75]      (0, 0) circle [x radius= 1.34, y radius= 1.34]   ;
			\draw    (205.57,103.14) -- (219,129.86) ;
			\draw [shift={(219,129.86)}, rotate = 63.31] [color={rgb, 255:red, 0; green, 0; blue, 0 }  ][fill={rgb, 255:red, 0; green, 0; blue, 0 }  ][line width=0.75]      (0, 0) circle [x radius= 1.34, y radius= 1.34]   ;
			\draw [shift={(205.57,103.14)}, rotate = 63.31] [color={rgb, 255:red, 0; green, 0; blue, 0 }  ][fill={rgb, 255:red, 0; green, 0; blue, 0 }  ][line width=0.75]      (0, 0) circle [x radius= 1.34, y radius= 1.34]   ;
			\draw    (224.43,130.14) -- (239.57,103.29) ;
			\draw [shift={(239.57,103.29)}, rotate = 299.42] [color={rgb, 255:red, 0; green, 0; blue, 0 }  ][fill={rgb, 255:red, 0; green, 0; blue, 0 }  ][line width=0.75]      (0, 0) circle [x radius= 1.34, y radius= 1.34]   ;
			\draw [shift={(224.43,130.14)}, rotate = 299.42] [color={rgb, 255:red, 0; green, 0; blue, 0 }  ][fill={rgb, 255:red, 0; green, 0; blue, 0 }  ][line width=0.75]      (0, 0) circle [x radius= 1.34, y radius= 1.34]   ;
			
			\draw (299.87,121.47) node [anchor=north west][inner sep=0.75pt]  [font=\footnotesize]  {$\alpha$};
			\draw (215.5,179.4) node [anchor=north west][inner sep=0.75pt]  [font=\footnotesize]  {$\Sigma $};
			\draw (388.5,179.4) node [anchor=north west][inner sep=0.75pt]  [font=\footnotesize]  {$\Sigma '$};

		\end{tikzpicture}
		
	\end{center}
	That is, $\Sigma$ is closed into a closed string $\Sigma'$. We can unambiguously describe $\alpha$ as a gluing function from the disjoint union of the three copies $\Lambda=[0,1]\sqcup [0,1]\sqcup [0,1]$ to the standard circle $\Sigma'=S^1$. To do this, choose a homeomorphism $\beta:\Lambda\to\Sigma$ and choose a homeomorphism $\gamma$ from $\Sigma'$ to $\Lambda'$. In that case $\gamma\alpha\beta$ is a gluing function from $\Lambda=[0,1]\sqcup [0,1]\sqcup [0,1]$ to $\Lambda'=S^1$ obtained by deforming $\alpha$. We can also think of $\alpha$ as a gluing function from $\Lambda=[0,1]\sqcup [0,1]\sqcup [0,1]$ to $\Lambda'=S^1$ deformed by $\beta^{-1}$ and $\gamma^{-1}$ into a gluing function from $\Sigma$ to $\Sigma'$.  
\end{ex}

\subsection{Relative gluing diagrams}
\label{relativegluingsubsection}

We introduce relative gluing diagrams as a further ingredient in our gluing formalism, allowing us to phrase the GBQFT Axiom~\textbf{(T5b)} of Definition~\ref{AFdefinitionClassic} in terms of gluing functions. As presented in Section~\ref{gluingfunctionssubsection}, gluing functions are only used for hypersurface gluings. However, Axiom~\textbf{(T5b)} considers the gluing of a region to itself along two copies of a given hypersurface boundary component with opposite orientations, and is phrased in terms of this data explicitly. Relative gluing diagrams allow us to encode such gluings, while keeping track of the boundary components that are involved in the gluing operation. Relative gluing diagrams are types of decorated spans \cite{Fong} defined as coequalizers of pairs of continuous functions satisfying certain diagrammatic equations.

\begin{definition}\label{relativegluingdefinition}
Let $X$ and $X'$ be regions. Let $\Sigma$ and $\Lambda$ be hypersurfaces. Let $\alpha:\Sigma\sqcup\overline{\Sigma}\sqcup \Lambda\to \partial X$ be a gluing function. A gluing diagram on $X$, relative to $\alpha$, is a pair $(f,\beta)$, where $f:X\to X'$ and $\beta:\Lambda\to \partial X'$ are gluing functions, $f$ is a coequalizer of the diagram
\[\Sigma\rightrightarrows \Sigma\sqcup\overline{\Sigma}\sqcup \Lambda\xrightarrow[]{\alpha} \partial X\hookrightarrow X\]
and $\beta$ makes the following diagram commute:
\begin{center}

\begin{tikzpicture}
  \matrix (m) [matrix of math nodes,row sep=3em,column sep=3em,minimum width=2em]
  {\Lambda&&\partial X' \\
   \partial X&X&X'\\};
  \path[-stealth]
    (m-1-1) edge node [above] {$\beta$} (m-1-3)
            edge node [left]  {$\alpha|_{\Lambda}$} (m-2-1)
            
    (m-2-1) edge node [below] {} (m-2-2)
    (m-2-2) edge node [below]  {$f$} (m-2-3)
    (m-1-3) edge node [right] {} (m-2-3)
    
   ;
\end{tikzpicture}
\end{center}
\end{definition}

\begin{rem}
	\label{remarkrelativegluing}
	The condition of $f:X\to X'$ being a coequalizer in Definition~\ref{relativegluingdefinition} says that the function $f$ glues $X$ to $X'$ along the two copies of $\Sigma$ with opposite orientation appearing in $\alpha$. The condition on $\beta$ making the second diagram in Definition~\ref{relativegluingdefinition} commute says that the restriction of $f$ to $\Lambda$ provides a gluing decomposition of $\partial X'$. We summarize the data of a relative gluing diagram $(f,\beta)$ as in Definition~\ref{relativegluingdefinition} by a commutative diagram of the form:
	\begin{center}
		\begin{tikzpicture}
			\matrix (m) [matrix of math nodes,row sep=3em,column sep=3em,minimum width=2em]
			{\Lambda&&&\partial X' \\
				\Sigma\sqcup \overline{\Sigma}\sqcup\Lambda&\partial X&X&X'\\};
			\path[-stealth]
			(m-1-1) edge node [above] {$\beta$} (m-1-4)
			edge node [left]  {} (m-2-1)
			
			(m-2-1) edge node [below] {$\alpha$} (m-2-2)
			(m-2-2) edge node [below]  {} (m-2-3)
			(m-2-3) edge node [below] {$f$} (m-2-4)
			(m-1-4) edge node [right] {} (m-2-4)
			
			;
		\end{tikzpicture}
	\end{center}
\end{rem}

Observe that the main difference between Definition~\ref{gluingfunctiondefinition} and Definition~\ref{relativegluingdefinition} is the fact that relative gluing diagrams include relative gluing data as an explicit part of the defining structure. Relative gluing diagrams should be understood as the data of a self-gluing of a manifold along two copies of the same hypersurface with opposite orientation, as in Axiom~\textbf{(T5b)}, phrased within the gluing functions formalism developed in the previous subsection. We further clarify the diagram appearing in Remark~\ref{remarkrelativegluing} and the pieces of structure involved in Definition~\ref{relativegluingdefinition} with examples. The following example encodes the information necessary to make the computation presented in Example~\ref{ExampleTwoDisks} in terms of a relative gluing diagram.

\begin{ex}
	\label{ex:twodiskgluing}
	Let $X_1,X_2,X,X',\Sigma$, and $\Lambda$ be as in Example \ref{ExampleTwoDisks}. Let $f:X\to X'$ be the gluing function pictorially described by:
	\begin{center}
		\tikzset{every picture/.style={line width=0.75pt}} 
		


\end{center}
\end{ex}

The following example shows that Definition~\ref{relativegluingdefinition} generalizes the usual gluing operation between cobordisms \cite{Ati:tqft}. Example~\ref{diskondiskrelativegluing} shows that Definition~\ref{relativegluingdefinition} captures a more general notion of gluing.

\begin{ex}\label{examplecobordismsasrelativegluings}
	Let $\Sigma_i$, $i\in\left\{1,2,3\right\}$ be hypersurfaces. Let $X_1,X_2$ be cobordisms from $\Sigma_1$ to $\Sigma_2$ and from $\Sigma_2$ to $\Sigma_3$ respectively. Let $X_2\circ X_1$ be the cobordism obtained from $X_1$ and $X_2$ by gluing along a common collar neighborhood of $\Sigma_2$. In the notation of Definition~\ref{relativegluingdefinition} we take $X$ to be $X_1\sqcup X_2$, $X'$ to be $X_2\circ X_1$, $\Sigma$ to be $\Sigma_2$, $\Lambda$ to be $\Sigma_1\sqcup \overline{\Sigma_3}$, $\alpha$ to be the homeomorphism from $\Sigma\sqcup\overline{\Sigma}\sqcup \Lambda$ to $\partial X$ defining the cobordism structures of $X_1$ and $X_2$, and $\beta$ the homeomorphism from $\Lambda$ to $\partial(X_2\circ X_1)$ providing its cobordism structure. With this notation $(f,\beta)$ is a gluing diagram with respect to $\alpha$. An example of this is the composition of cobordisms described pictorially as:
	\begin{center}

		\tikzset{every picture/.style={line width=0.75pt}} 
		
		\begin{tikzpicture}[x=0.75pt,y=0.75pt,yscale=-0.75,xscale=0.75]
			
			\draw  [color={rgb, 255:red, 208; green, 2; blue, 27 }  ,draw opacity=1 ] (265.7,139.81) .. controls (265.7,131.59) and (269.45,124.94) .. (274.08,124.94) .. controls (278.71,124.94) and (282.46,131.59) .. (282.46,139.81) .. controls (282.46,148.02) and (278.71,154.67) .. (274.08,154.67) .. controls (269.45,154.67) and (265.7,148.02) .. (265.7,139.81) -- cycle ;
			\draw  [color={rgb, 255:red, 208; green, 2; blue, 27 }  ,draw opacity=1 ] (265.93,186.06) .. controls (265.93,177.85) and (269.68,171.19) .. (274.31,171.19) .. controls (278.93,171.19) and (282.68,177.85) .. (282.68,186.06) .. controls (282.68,194.27) and (278.93,200.93) .. (274.31,200.93) .. controls (269.68,200.93) and (265.93,194.27) .. (265.93,186.06) -- cycle ;
			\draw    (275.51,171.03) .. controls (299.75,169.51) and (299.07,202.72) .. (275.5,200.94) ;
			\draw    (195.66,147.56) .. controls (216.78,147.56) and (214.32,127.99) .. (243.25,124.43) ;
			\draw    (195.66,177.29) .. controls (227.5,175.51) and (221.47,199.91) .. (242.8,199.15) ;
			\draw    (242.8,169.42) .. controls (226.38,174.24) and (227.05,147.56) .. (243.25,154.17) ;
			\draw [color={rgb, 255:red, 208; green, 2; blue, 27 }  ,draw opacity=1 ]   (243.25,124.43) .. controls (254.98,127.99) and (253.86,152.64) .. (243.25,154.17) ;
			\draw [color={rgb, 255:red, 208; green, 2; blue, 27 }  ,draw opacity=1 ]   (242.8,169.42) .. controls (254.53,172.97) and (253.42,197.63) .. (242.8,199.15) ;
			\draw [color={rgb, 255:red, 74; green, 144; blue, 226 }  ,draw opacity=1 ]   (195.66,147.56) .. controls (207.39,151.12) and (206.28,175.77) .. (195.66,177.29) ;
			\draw [color={rgb, 255:red, 74; green, 144; blue, 226 }  ,draw opacity=1 ]   (298.75,125.19) .. controls (310.48,128.75) and (309.36,153.4) .. (298.75,154.92) ;
			\draw [color={rgb, 255:red, 74; green, 144; blue, 226 }  ,draw opacity=1 ]   (195.66,177.29) .. controls (183.94,175.01) and (184.83,149.84) .. (195.66,147.56) ;
			\draw [color={rgb, 255:red, 74; green, 144; blue, 226 }  ,draw opacity=1 ] [dash pattern={on 0.84pt off 2.51pt}]  (298.88,154.94) .. controls (287.15,152.65) and (288.04,127.49) .. (298.88,125.2) ;
			\draw [color={rgb, 255:red, 208; green, 2; blue, 27 }  ,draw opacity=1 ] [dash pattern={on 0.84pt off 2.51pt}]  (242.8,199.15) .. controls (231.08,196.86) and (231.97,171.7) .. (242.8,169.42) ;
			\draw [color={rgb, 255:red, 208; green, 2; blue, 27 }  ,draw opacity=1 ] [dash pattern={on 0.84pt off 2.51pt}]  (243.25,154.17) .. controls (231.52,151.88) and (232.42,126.72) .. (243.25,124.43) ;
			\draw    (436.41,174.42) .. controls (460.65,172.89) and (458.77,206.09) .. (435.2,204.31) ;
			\draw    (389.27,152.56) .. controls (410.38,152.56) and (407.92,132.99) .. (436.85,129.43) ;
			\draw    (389.27,182.29) .. controls (421.1,180.51) and (415.07,204.91) .. (436.41,204.15) ;
			\draw    (436.41,174.42) .. controls (419.99,179.24) and (420.66,152.56) .. (436.85,159.17) ;
			\draw [color={rgb, 255:red, 208; green, 2; blue, 27 }  ,draw opacity=1 ] [dash pattern={on 0.84pt off 2.51pt}]  (436.85,129.43) .. controls (448.58,132.99) and (447.46,157.64) .. (436.85,159.17) ;
			\draw [color={rgb, 255:red, 208; green, 2; blue, 27 }  ,draw opacity=1 ] [dash pattern={on 0.84pt off 2.51pt}]  (435.2,174.58) .. controls (446.93,178.14) and (445.81,202.79) .. (435.2,204.31) ;
			\draw [color={rgb, 255:red, 74; green, 144; blue, 226 }  ,draw opacity=1 ]   (389.27,152.56) .. controls (401,156.12) and (399.88,180.77) .. (389.27,182.29) ;
			\draw [color={rgb, 255:red, 74; green, 144; blue, 226 }  ,draw opacity=1 ]   (389.27,182.29) .. controls (377.54,180.01) and (378.43,154.84) .. (389.27,152.56) ;
			\draw [color={rgb, 255:red, 208; green, 2; blue, 27 }  ,draw opacity=1 ] [dash pattern={on 0.84pt off 2.51pt}]  (436.41,204.15) .. controls (424.68,201.86) and (425.57,176.7) .. (436.41,174.42) ;
			\draw [color={rgb, 255:red, 208; green, 2; blue, 27 }  ,draw opacity=1 ] [dash pattern={on 0.84pt off 2.51pt}]  (436.85,159.17) .. controls (425.12,156.88) and (426.02,131.72) .. (436.85,129.43) ;
			\draw    (320.67,166.4) -- (365.75,166.85) ;
			\draw [shift={(367.75,166.88)}, rotate = 180.58] [color={rgb, 255:red, 0; green, 0; blue, 0 }  ][line width=0.75]    (10.93,-3.29) .. controls (6.95,-1.4) and (3.31,-0.3) .. (0,0) .. controls (3.31,0.3) and (6.95,1.4) .. (10.93,3.29)   ;
			\draw    (275.25,124.56) -- (298.75,125.19) ;
			\draw    (275.63,154.69) -- (298.88,154.94) ;
			\draw [color={rgb, 255:red, 74; green, 144; blue, 226 }  ,draw opacity=1 ]   (460.35,130.05) .. controls (472.08,133.61) and (471.36,157.73) .. (460.75,159.26) ;
			\draw [color={rgb, 255:red, 74; green, 144; blue, 226 }  ,draw opacity=1 ] [dash pattern={on 0.84pt off 2.51pt}]  (460.88,159.27) .. controls (449.15,156.98) and (450.04,131.82) .. (460.88,129.53) ;
			\draw    (436.85,129.43) -- (460.35,130.05) ;
			\draw    (437.63,159.02) -- (460.88,159.27) ;
			
			\draw (337.75,138.75) node [anchor=north west][inner sep=0.75pt]    {$f$};

		\end{tikzpicture}
	\end{center}
	where $X_1$ is the left pair of pants, $X_2$ is the union of the cylinder and the disk on the right, $\Lambda$ is the union of the two closed strings on the extreme left and right (blue), $\Sigma$ is the union of the two right-facing closed strings (red) on the left and $\overline{\Sigma}$ is the union of the two left-facing closed strings on the right (red). The relative gluing diagram describing the composition $X_2\circ X_1$ is:
	\begin{center}

		\tikzset{every picture/.style={line width=0.75pt}} 
		
		\begin{tikzpicture}[x=0.75pt,y=0.75pt,yscale=-0.75,xscale=0.75]
			
			\draw    (120.33,200.5) -- (167.33,200.82) ;
			\draw [shift={(169.33,200.83)}, rotate = 180.39] [color={rgb, 255:red, 0; green, 0; blue, 0 }  ][line width=0.75]    (10.93,-3.29) .. controls (6.95,-1.4) and (3.31,-0.3) .. (0,0) .. controls (3.31,0.3) and (6.95,1.4) .. (10.93,3.29)   ;
			\draw    (122.92,71.03) -- (508.63,71.53) ;
			\draw [shift={(510.63,71.53)}, rotate = 180.07] [color={rgb, 255:red, 0; green, 0; blue, 0 }  ][line width=0.75]    (10.93,-3.29) .. controls (6.95,-1.4) and (3.31,-0.3) .. (0,0) .. controls (3.31,0.3) and (6.95,1.4) .. (10.93,3.29)   ;
			\draw    (289.67,198.5) -- (318.67,199.12) ;
			\draw [shift={(320.67,199.17)}, rotate = 181.23] [color={rgb, 255:red, 0; green, 0; blue, 0 }  ][line width=0.75]    (10.93,-3.29) .. controls (6.95,-1.4) and (3.31,-0.3) .. (0,0) .. controls (3.31,0.3) and (6.95,1.4) .. (10.93,3.29)   ;
			\draw    (459,200.17) -- (507.79,201.2) ;
			\draw [shift={(509.79,201.24)}, rotate = 181.21] [color={rgb, 255:red, 0; green, 0; blue, 0 }  ][line width=0.75]    (10.93,-3.29) .. controls (6.95,-1.4) and (3.31,-0.3) .. (0,0) .. controls (3.31,0.3) and (6.95,1.4) .. (10.93,3.29)   ;
			\draw    (71.33,109.53) -- (71.33,157.53) ;
			\draw [shift={(71.33,159.53)}, rotate = 270] [color={rgb, 255:red, 0; green, 0; blue, 0 }  ][line width=0.75]    (6.56,-1.97) .. controls (4.17,-0.84) and (1.99,-0.18) .. (0,0) .. controls (1.99,0.18) and (4.17,0.84) .. (6.56,1.97)   ;
			\draw    (565.33,110.53) -- (565.33,158.53) ;
			\draw [shift={(565.33,160.53)}, rotate = 270] [color={rgb, 255:red, 0; green, 0; blue, 0 }  ][line width=0.75]    (6.56,-1.97) .. controls (4.17,-0.84) and (1.99,-0.18) .. (0,0) .. controls (1.99,0.18) and (4.17,0.84) .. (6.56,1.97)   ;
			\draw  [color={rgb, 255:red, 208; green, 2; blue, 27 }  ,draw opacity=1 ] (238.04,188.47) .. controls (238.04,180.26) and (241.79,173.6) .. (246.41,173.6) .. controls (251.04,173.6) and (254.79,180.26) .. (254.79,188.47) .. controls (254.79,196.68) and (251.04,203.34) .. (246.41,203.34) .. controls (241.79,203.34) and (238.04,196.68) .. (238.04,188.47) -- cycle ;
			\draw [color={rgb, 255:red, 74; green, 144; blue, 226 }  ,draw opacity=1 ]   (271.75,174.19) .. controls (283.48,177.75) and (282.36,202.4) .. (271.75,203.92) ;
			\draw [color={rgb, 255:red, 74; green, 144; blue, 226 }  ,draw opacity=1 ]   (271.88,203.94) .. controls (260.15,201.65) and (261.04,176.49) .. (271.88,174.2) ;
			\draw [color={rgb, 255:red, 74; green, 144; blue, 226 }  ,draw opacity=1 ]   (56.06,53.91) .. controls (67.79,57.47) and (66.68,82.12) .. (56.06,83.64) ;
			\draw [color={rgb, 255:red, 74; green, 144; blue, 226 }  ,draw opacity=1 ]   (56.06,83.64) .. controls (44.34,81.36) and (45.23,56.19) .. (56.06,53.91) ;
			\draw [color={rgb, 255:red, 74; green, 144; blue, 226 }  ,draw opacity=1 ]   (89.55,54.34) .. controls (101.28,57.9) and (100.16,82.55) .. (89.55,84.07) ;
			\draw [color={rgb, 255:red, 74; green, 144; blue, 226 }  ,draw opacity=1 ]   (89.68,84.09) .. controls (77.95,81.8) and (78.84,56.64) .. (89.68,54.35) ;
			\draw [color={rgb, 255:red, 74; green, 144; blue, 226 }  ,draw opacity=1 ]   (547.33,55.24) .. controls (559.06,58.8) and (557.95,83.45) .. (547.33,84.98) ;
			\draw [color={rgb, 255:red, 74; green, 144; blue, 226 }  ,draw opacity=1 ]   (547.33,84.98) .. controls (535.6,82.69) and (536.5,57.53) .. (547.33,55.24) ;
			\draw [color={rgb, 255:red, 74; green, 144; blue, 226 }  ,draw opacity=1 ]   (580.82,55.67) .. controls (592.55,59.23) and (591.43,83.88) .. (580.82,85.41) ;
			\draw [color={rgb, 255:red, 74; green, 144; blue, 226 }  ,draw opacity=1 ]   (580.94,85.42) .. controls (569.21,83.13) and (570.11,57.97) .. (580.94,55.68) ;
			\draw    (584.74,214.99) .. controls (608.98,213.46) and (607.1,246.66) .. (583.53,244.88) ;
			\draw    (537.6,193.13) .. controls (558.71,193.13) and (556.25,173.56) .. (585.19,170) ;
			\draw    (537.6,222.87) .. controls (569.44,221.09) and (563.4,245.49) .. (584.74,244.72) ;
			\draw    (584.74,214.99) .. controls (568.32,219.82) and (568.99,193.13) .. (585.19,199.74) ;
			\draw [color={rgb, 255:red, 208; green, 2; blue, 27 }  ,draw opacity=1 ] [dash pattern={on 0.84pt off 2.51pt}]  (585.19,170) .. controls (596.92,173.56) and (595.8,198.21) .. (585.19,199.74) ;
			\draw [color={rgb, 255:red, 208; green, 2; blue, 27 }  ,draw opacity=1 ] [dash pattern={on 0.84pt off 2.51pt}]  (583.53,215.15) .. controls (595.26,218.71) and (594.14,243.36) .. (583.53,244.88) ;
			\draw [color={rgb, 255:red, 74; green, 144; blue, 226 }  ,draw opacity=1 ]   (537.6,193.13) .. controls (549.33,196.69) and (548.21,221.34) .. (537.6,222.87) ;
			\draw [color={rgb, 255:red, 74; green, 144; blue, 226 }  ,draw opacity=1 ]   (537.6,222.87) .. controls (525.87,220.58) and (526.76,195.42) .. (537.6,193.13) ;
			\draw [color={rgb, 255:red, 208; green, 2; blue, 27 }  ,draw opacity=1 ] [dash pattern={on 0.84pt off 2.51pt}]  (584.74,244.72) .. controls (573.01,242.44) and (573.9,217.27) .. (584.74,214.99) ;
			\draw [color={rgb, 255:red, 208; green, 2; blue, 27 }  ,draw opacity=1 ] [dash pattern={on 0.84pt off 2.51pt}]  (585.19,199.74) .. controls (573.46,197.45) and (574.35,172.29) .. (585.19,170) ;
			\draw [color={rgb, 255:red, 74; green, 144; blue, 226 }  ,draw opacity=1 ]   (608.69,170.63) .. controls (620.42,174.18) and (619.7,198.3) .. (609.08,199.83) ;
			\draw [color={rgb, 255:red, 74; green, 144; blue, 226 }  ,draw opacity=1 ] [dash pattern={on 0.84pt off 2.51pt}]  (609.21,199.84) .. controls (597.48,197.55) and (598.37,172.39) .. (609.21,170.11) ;
			\draw    (585.19,170) -- (608.69,170.63) ;
			\draw    (585.96,199.59) -- (609.21,199.84) ;
			\draw  [color={rgb, 255:red, 208; green, 2; blue, 27 }  ,draw opacity=1 ] (408.37,180.38) .. controls (408.37,172.17) and (412.12,165.51) .. (416.75,165.51) .. controls (421.38,165.51) and (425.13,172.17) .. (425.13,180.38) .. controls (425.13,188.59) and (421.38,195.25) .. (416.75,195.25) .. controls (412.12,195.25) and (408.37,188.59) .. (408.37,180.38) -- cycle ;
			\draw  [color={rgb, 255:red, 208; green, 2; blue, 27 }  ,draw opacity=1 ] (408.59,226.63) .. controls (408.59,218.42) and (412.34,211.77) .. (416.97,211.77) .. controls (421.6,211.77) and (425.35,218.42) .. (425.35,226.63) .. controls (425.35,234.85) and (421.6,241.5) .. (416.97,241.5) .. controls (412.34,241.5) and (408.59,234.85) .. (408.59,226.63) -- cycle ;
			\draw    (418.18,211.6) .. controls (442.42,210.08) and (441.74,243.29) .. (418.17,241.51) ;
			\draw    (338.33,188.13) .. controls (359.44,188.13) and (356.99,168.56) .. (385.92,165) ;
			\draw    (338.33,217.87) .. controls (370.17,216.09) and (364.14,240.49) .. (385.47,239.72) ;
			\draw    (385.47,209.99) .. controls (369.05,214.82) and (369.72,188.13) .. (385.92,194.74) ;
			\draw [color={rgb, 255:red, 208; green, 2; blue, 27 }  ,draw opacity=1 ]   (385.92,165) .. controls (397.65,168.56) and (396.53,193.21) .. (385.92,194.74) ;
			\draw [color={rgb, 255:red, 208; green, 2; blue, 27 }  ,draw opacity=1 ]   (385.47,209.99) .. controls (397.2,213.54) and (396.08,238.2) .. (385.47,239.72) ;
			\draw [color={rgb, 255:red, 74; green, 144; blue, 226 }  ,draw opacity=1 ]   (338.33,188.13) .. controls (350.06,191.69) and (348.94,216.34) .. (338.33,217.87) ;
			\draw [color={rgb, 255:red, 74; green, 144; blue, 226 }  ,draw opacity=1 ]   (441.42,165.76) .. controls (453.15,169.32) and (452.03,193.97) .. (441.42,195.5) ;
			\draw [color={rgb, 255:red, 74; green, 144; blue, 226 }  ,draw opacity=1 ]   (338.33,217.87) .. controls (326.6,215.58) and (327.5,190.42) .. (338.33,188.13) ;
			\draw [color={rgb, 255:red, 74; green, 144; blue, 226 }  ,draw opacity=1 ] [dash pattern={on 0.84pt off 2.51pt}]  (441.54,195.51) .. controls (429.81,193.22) and (430.71,168.06) .. (441.54,165.77) ;
			\draw [color={rgb, 255:red, 208; green, 2; blue, 27 }  ,draw opacity=1 ] [dash pattern={on 0.84pt off 2.51pt}]  (385.47,239.72) .. controls (373.74,237.44) and (374.64,212.27) .. (385.47,209.99) ;
			\draw [color={rgb, 255:red, 208; green, 2; blue, 27 }  ,draw opacity=1 ] [dash pattern={on 0.84pt off 2.51pt}]  (385.92,194.74) .. controls (374.19,192.45) and (375.08,167.29) .. (385.92,165) ;
			\draw    (417.92,165.13) -- (441.42,165.76) ;
			\draw    (418.29,195.26) -- (441.54,195.51) ;
			\draw [color={rgb, 255:red, 74; green, 144; blue, 226 }  ,draw opacity=1 ]   (187.33,187.56) .. controls (199.06,191.12) and (197.94,215.77) .. (187.33,217.29) ;
			\draw [color={rgb, 255:red, 74; green, 144; blue, 226 }  ,draw opacity=1 ]   (187.33,217.29) .. controls (175.6,215.01) and (176.5,189.84) .. (187.33,187.56) ;
			\draw [color={rgb, 255:red, 208; green, 2; blue, 27 }  ,draw opacity=1 ]   (219.47,212.33) .. controls (231.2,215.89) and (230.08,240.55) .. (219.47,242.07) ;
			\draw [color={rgb, 255:red, 208; green, 2; blue, 27 }  ,draw opacity=1 ]   (219.47,242.07) .. controls (207.74,239.78) and (208.64,214.62) .. (219.47,212.33) ;
			\draw [color={rgb, 255:red, 208; green, 2; blue, 27 }  ,draw opacity=1 ]   (219.58,172) .. controls (231.31,175.56) and (230.2,200.21) .. (219.58,201.74) ;
			\draw [color={rgb, 255:red, 208; green, 2; blue, 27 }  ,draw opacity=1 ]   (219.58,201.74) .. controls (207.86,199.45) and (208.75,174.29) .. (219.58,172) ;
			\draw  [color={rgb, 255:red, 208; green, 2; blue, 27 }  ,draw opacity=1 ] (237.59,227.06) .. controls (237.59,218.85) and (241.34,212.19) .. (245.97,212.19) .. controls (250.6,212.19) and (254.35,218.85) .. (254.35,227.06) .. controls (254.35,235.27) and (250.6,241.93) .. (245.97,241.93) .. controls (241.34,241.93) and (237.59,235.27) .. (237.59,227.06) -- cycle ;
			\draw  [color={rgb, 255:red, 208; green, 2; blue, 27 }  ,draw opacity=1 ] (73.7,188.81) .. controls (73.7,180.59) and (77.45,173.94) .. (82.08,173.94) .. controls (86.71,173.94) and (90.46,180.59) .. (90.46,188.81) .. controls (90.46,197.02) and (86.71,203.67) .. (82.08,203.67) .. controls (77.45,203.67) and (73.7,197.02) .. (73.7,188.81) -- cycle ;
			\draw [color={rgb, 255:red, 74; green, 144; blue, 226 }  ,draw opacity=1 ]   (107.42,174.52) .. controls (119.15,178.08) and (118.03,202.73) .. (107.42,204.26) ;
			\draw [color={rgb, 255:red, 74; green, 144; blue, 226 }  ,draw opacity=1 ]   (107.54,204.27) .. controls (95.81,201.98) and (96.71,176.82) .. (107.54,174.53) ;
			\draw [color={rgb, 255:red, 74; green, 144; blue, 226 }  ,draw opacity=1 ]   (23,187.89) .. controls (34.73,191.45) and (33.61,216.1) .. (23,217.63) ;
			\draw [color={rgb, 255:red, 74; green, 144; blue, 226 }  ,draw opacity=1 ]   (23,217.63) .. controls (11.27,215.34) and (12.16,190.18) .. (23,187.89) ;
			\draw [color={rgb, 255:red, 208; green, 2; blue, 27 }  ,draw opacity=1 ]   (55.14,212.67) .. controls (66.87,216.22) and (65.75,240.88) .. (55.14,242.4) ;
			\draw [color={rgb, 255:red, 208; green, 2; blue, 27 }  ,draw opacity=1 ]   (55.14,242.4) .. controls (43.41,240.12) and (44.3,214.95) .. (55.14,212.67) ;
			\draw [color={rgb, 255:red, 208; green, 2; blue, 27 }  ,draw opacity=1 ]   (55.25,172.33) .. controls (66.98,175.89) and (65.86,200.55) .. (55.25,202.07) ;
			\draw [color={rgb, 255:red, 208; green, 2; blue, 27 }  ,draw opacity=1 ]   (55.25,202.07) .. controls (43.52,199.78) and (44.42,174.62) .. (55.25,172.33) ;
			\draw  [color={rgb, 255:red, 208; green, 2; blue, 27 }  ,draw opacity=1 ] (73.26,227.4) .. controls (73.26,219.18) and (77.01,212.53) .. (81.64,212.53) .. controls (86.27,212.53) and (90.02,219.18) .. (90.02,227.4) .. controls (90.02,235.61) and (86.27,242.26) .. (81.64,242.26) .. controls (77.01,242.26) and (73.26,235.61) .. (73.26,227.4) -- cycle ;
			
			\draw (143.77,213.23) node [anchor=north west][inner sep=0.75pt]  [font=\footnotesize]  {$\alpha $};
			\draw (467.33,212.77) node [anchor=north west][inner sep=0.75pt]  [font=\footnotesize]  {$f$};
			\draw (302.86,44.1) node [anchor=north west][inner sep=0.75pt]    {$\beta $};

		\end{tikzpicture}

	\end{center}
\end{ex}

The following immediate observation says that we may construct complicated relative gluing diagrams out of simpler relative gluing diagrams through the operations of taking disjoint unions and reversing orientations.

\begin{obs}\label{observationrelativegluingclosed}
The collection of relative gluing diagrams is closed under orientation change and disjoint union. Precisely, if $(f,\beta)$ is a gluing diagram relative to $\alpha$, then $(\overline{f},\overline{\beta})$ is a gluing diagram relative to $\overline{\alpha}$, and if $(f_i,\beta_i)$ are gluing diagrams relative to $\alpha_i$, with $i\in\left\{1,2\right\}$, then $(f_1\sqcup f_2,\alpha_1\sqcup \alpha_2)$ is a gluing diagram relative to $\beta_1\sqcup\beta_2$.
\end{obs}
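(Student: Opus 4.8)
The plan is to reduce both claims to the fact, recorded in Observation~\ref{observationgluingfunctionsclosed}, that orientation reversal and disjoint union act functorially on gluing functions and distribute over composition, together with the standard categorical facts that an involutive self-equivalence preserves coequalizers and that finite coproducts commute with coequalizers. In each case I would verify the three ingredients of Definition~\ref{relativegluingdefinition} in turn: that the proposed maps are gluing functions, that the first map is the required coequalizer, and that the proposed remainder map makes the defining square commute. (Note that for the disjoint-union clause the pair to be tested is $(f_1\sqcup f_2,\beta_1\sqcup\beta_2)$, taken relative to the boundary datum $\alpha_1\sqcup\alpha_2$.)

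For the orientation-reversal statement, I would first note that $\overline{f}$ and $\overline{\beta}$ are gluing functions by Observation~\ref{observationgluingfunctionsclosed}, and that, using $\overline{\overline{\Sigma}}=\Sigma$ and $\overline{A\sqcup B}=\overline{A}\sqcup\overline{B}$, the domain of $\overline{\alpha}$ is canonically $\overline{\Sigma}\sqcup\overline{\overline{\Sigma}}\sqcup\overline{\Lambda}$, so that $\overline{\alpha}$ has the correct shape for a relative gluing datum with distinguished hypersurface $\overline{\Sigma}$ and remainder $\overline{\Lambda}$. The coequalizer condition would follow by observing that orientation reversal is an involution of the ambient category which is the identity on underlying topological spaces and maps; it therefore carries the defining fork for $f$ to the analogous fork for $\overline{f}$ and preserves its colimit, so $\overline{f}$ is a coequalizer of the reversed diagram. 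Finally, commutativity of the $\overline{\beta}$-square is the image under $\overline{(-)}$ of the original commuting square, using $\overline{\beta\alpha}=\overline{\beta}\,\overline{\alpha}$ and that restriction commutes with orientation reversal, i.e. $\overline{\alpha}|_{\overline{\Lambda}}=\overline{\alpha|_\Lambda}$.

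For the disjoint-union statement, I would again check that $f_1\sqcup f_2$ and $\beta_1\sqcup\beta_2$ are gluing functions by Observation~\ref{observationgluingfunctionsclosed}, and reshuffle the domain of $\alpha_1\sqcup\alpha_2$ via the canonical symmetry of $\sqcup$ into the form $(\Sigma_1\sqcup\Sigma_2)\sqcup\overline{\Sigma_1\sqcup\Sigma_2}\sqcup(\Lambda_1\sqcup\Lambda_2)$, identifying the distinguished hypersurface as $\Sigma_1\sqcup\Sigma_2$ and the remainder as $\Lambda_1\sqcup\Lambda_2$. The coequalizer clause reduces to the fact that the disjoint union of the two defining forks is, up to this reshuffling, the defining fork for the combined datum, so that the coproduct $f_1\sqcup f_2$ of the individual coequalizers is a coequalizer of the combined fork, since coproducts commute with coequalizers. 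Commutativity of the combined $\beta_1\sqcup\beta_2$-square then follows by taking the disjoint union of the two individual squares and invoking the distributivity $(\alpha_1\sqcup\alpha_2)(\gamma_1\sqcup\gamma_2)=\alpha_1\gamma_1\sqcup\alpha_2\gamma_2$ of Observation~\ref{observationgluingfunctionsclosed}.

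The only genuine content lies in the two coequalizer clauses, and I expect that to be the main obstacle: one must make precise the sense in which the defining diagram of Definition~\ref{relativegluingdefinition} lives in a category on which orientation reversal is an involutive isomorphism and in which finite coproducts commute with coequalizers. In the present topological setting this is routine—colimits commute with colimits, and the interior-embedding and orientation-preservation clauses of Definition~\ref{gluingfunctiondefinition} are evidently stable under both operations—so the verification amounts to bookkeeping of the canonical coherence isomorphisms for $\sqcup$ and $\overline{(-)}$ rather than any substantive argument.
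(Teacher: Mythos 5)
Your proof is correct and matches the paper's intent: the paper states this as an immediate observation with no written proof, and your argument is exactly the routine verification it leaves implicit---checking the gluing-function property via Observation~\ref{observationgluingfunctionsclosed}, the coequalizer clause via the facts that orientation reversal is an involution acting as the identity on underlying spaces and that coproducts commute with coequalizers in $\mathbf{Top}$, and the commuting square by applying $\overline{(\cdot)}$ or $\sqcup$ to the original square. You also correctly spotted that the statement's disjoint-union clause swaps the roles of $\alpha_1\sqcup\alpha_2$ and $\beta_1\sqcup\beta_2$ (a typo in the paper), and fixed it before proving the intended claim.
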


\begin{obs}\label{ObsHomeomorphismsRelative}
	Every homeomorphism fits in a relative gluing diagram. Precisely, let $X$ and $X'$ be regions. Let $f:X\to X'$ be a homeomorphism. If in the notation of Definition~\ref{relativegluingdefinition} we set $\Sigma$ to be the empty hypersurface $\emptyset$, $\Lambda$ to be the boundary $\partial X$ of $X$, $\alpha:\Sigma\sqcup\overline{\Sigma}\sqcup \Lambda\to \partial X$ be the identity and $\beta:\Lambda\to \partial X'$ to be the restriction $f|_{\partial X}$ of $f$ to $\partial X$, then it is easily seen that the diagram is a relative gluing diagram:
	\begin{center}
	\begin{tikzpicture}
			\matrix (m) [matrix of math nodes,row sep=3em,column sep=3em,minimum width=2em]
			{\Lambda&&&\partial X' \\
				\Sigma\sqcup \overline{\Sigma}\sqcup\Lambda&\partial X&X&X'\\};
			\path[-stealth]
			(m-1-1) edge node [above] {$\beta$} (m-1-4)
			edge node [left]  {} (m-2-1)
			
			(m-2-1) edge node [below] {$\alpha$} (m-2-2)
			(m-2-2) edge node [below]  {} (m-2-3)
			(m-2-3) edge node [below] {$f$} (m-2-4)
			(m-1-4) edge node [right] {} (m-2-4)
			
			;
		\end{tikzpicture}
	\end{center}
\end{obs}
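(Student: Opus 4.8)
The plan is simply to check, for the data proposed in the statement, each of the three requirements imposed by Definition~\ref{relativegluingdefinition}: that $\alpha$, $f$, and $\beta$ are gluing functions, that $f$ is the asserted coequalizer, and that $\beta$ makes the relevant square commute. The observation organizing the whole argument is that with $\Sigma=\emptyset$ we have $\Sigma\sqcup\overline{\Sigma}\sqcup\Lambda=\emptyset\sqcup\emptyset\sqcup\partial X=\partial X$, so that $\alpha$ is literally $\id_{\partial X}$ and the domain of the entire gluing datum collapses onto $\partial X$.

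Next I would verify that the three maps in question are gluing functions. By Observation~\ref{ObservationHomeomorphisms} every homeomorphism is a gluing function, so $\alpha=\id_{\partial X}$ and the hypothesized homeomorphism $f\colon X\to X'$ qualify immediately. The only map that needs a word is $\beta=f|_{\partial X}$: here I would invoke the topological invariance of the boundary of a manifold to conclude that $f$ carries $\partial X$ homeomorphically onto $\partial X'$, so that $\beta$ is a well-defined homeomorphism $\partial X\to\partial X'$. Since $f$ is orientation preserving (being a gluing function) and the boundary orientation is induced from that of the ambient region, $\beta$ is orientation preserving as well, hence a gluing function by Observation~\ref{ObservationHomeomorphisms}.

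Finally I would dispatch the two diagrammatic conditions. For the coequalizer condition, the parallel pair $\Sigma\rightrightarrows\Sigma\sqcup\overline{\Sigma}\sqcup\Lambda$ has source $\Sigma=\emptyset$, so both arrows are the unique map out of the initial object; the induced pair $\emptyset\rightrightarrows X$ is therefore coequalized by every map out of $X$, whence $\id_X$ is a coequalizer. Since $f$ is an isomorphism and coequalizers are determined up to unique isomorphism, $f$ is then a coequalizer of the displayed diagram as required. For the commuting square, $\alpha|_{\Lambda}=\id_{\partial X}$, so the composite $\Lambda\xrightarrow{\alpha|_\Lambda}\partial X\hookrightarrow X\xrightarrow{f}X'$ is exactly $f|_{\partial X}$ followed by the inclusion $\partial X'\hookrightarrow X'$, which by the very definition $\beta=f|_{\partial X}$ coincides with $\Lambda\xrightarrow{\beta}\partial X'\hookrightarrow X'$. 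The main (and essentially the only) point requiring genuine justification is the boundary statement about $\beta$ in the second step; everything else is formal once $\Sigma$ is taken to be empty.
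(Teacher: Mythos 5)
Your verification is correct and is precisely the routine check the paper's ``it is easily seen'' alludes to: with $\Sigma=\emptyset$ the parallel pair is empty so $\id_X$ (and hence any homeomorphism $f$, coequalizers being unique up to isomorphism) coequalizes the diagram, while invariance of the boundary makes $\beta=f|_{\partial X}$ a well-defined orientation-preserving homeomorphism and the square commutes by definition. Nothing is missing; the boundary-invariance point you isolate is indeed the only non-formal ingredient.
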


The following observation says that every relative gluing diagram can be "deformed" by a homeomorphism on the left and on the right into another relative gluing diagram.

\begin{obs}
	\label{observationRelativeClosedCompositionHomeo}
	Let 
	\begin{center}
		
		\begin{tikzpicture}
			\matrix (m) [matrix of math nodes,row sep=3em,column sep=3em,minimum width=2em]
			{\Lambda&&&\partial X' \\
				\Sigma\sqcup \overline{\Sigma}\sqcup\Lambda&\partial X&X&X'\\};
			\path[-stealth]
			(m-1-1) edge node [above] {$\beta$} (m-1-4)
			edge node [left]  {} (m-2-1)
			
			(m-2-1) edge node [below] {$\alpha$} (m-2-2)
			(m-2-2) edge node [below]  {} (m-2-3)
			(m-2-3) edge node [below] {$f$} (m-2-4)
			(m-1-4) edge node [right] {} (m-2-4)
			
			;
		\end{tikzpicture}
	\end{center}
	be a relative gluing diagram. Let $g:Y\to X$ and $h:X'\to Y'$ be homeomorphisms. In that case it is easily seen that the diagrams
	\begin{center}
		\begin{tikzpicture}
			\matrix (m) [matrix of math nodes,row sep=3em,column sep=3em,minimum width=2em]
			{\Lambda&&&\partial Y' \\
				\Sigma\sqcup \overline{\Sigma}\sqcup\Lambda&\partial X&X&Y'\\};
			\path[-stealth]
			(m-1-1) edge node [above] {$h|_{\partial X'}\beta$} (m-1-4)
			edge node [left]  {} (m-2-1)
			
			(m-2-1) edge node [below] {$\alpha$} (m-2-2)
			(m-2-2) edge node [below]  {} (m-2-3)
			(m-2-3) edge node [below] {$hf$} (m-2-4)
			(m-1-4) edge node [right] {} (m-2-4)
			
			;
		\end{tikzpicture}
	\end{center}
	and
	\begin{center}
		\begin{tikzpicture}
			\matrix (m) [matrix of math nodes,row sep=3em,column sep=3em,minimum width=2em]
			{g^{-1}\Lambda&&&\partial X' \\
				g^{-1}\Sigma\sqcup g^{-1}\overline{\Sigma}\sqcup g^{-1}\Lambda&\partial Y&Y&X'\\};
			\path[-stealth]
			(m-1-1) edge node [above] {$\beta g|_{g^{-1}\Lambda}$} (m-1-4)
			edge node [left]  {} (m-2-1)
			
			(m-2-1) edge node [below] {$g^{-1}\alpha g$} (m-2-2)
			(m-2-2) edge node [below]  {} (m-2-3)
			(m-2-3) edge node [below] {$fg$} (m-2-4)
			(m-1-4) edge node [right] {} (m-2-4)
			
			;
		\end{tikzpicture}
	\end{center}
	are relative gluing diagrams. In the above cases we will usually write the composite boundary gluing functions $h|_{\partial X'}\beta$ and $\beta g|_{g^{-1}\Lambda}$ as $h\beta$ and $\beta g$. With this notation the "deformed" relative gluing diagrams above can be succinctly written as $(hf,h\beta)$ and $(fg,\beta g)$ respectively.
\end{obs}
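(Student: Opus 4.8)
The plan is to verify, for each of the two displayed diagrams, the three defining conditions of Definition~\ref{relativegluingdefinition}: that the maps playing the roles of $f$ and $\beta$ are gluing functions, that the map playing the role of $f$ is a coequalizer of the indicated parallel pair, and that the square commutes. Throughout I write $j_X\colon\partial X\hookrightarrow X$ (and $j_{X'}$, $j_Y$, $j_{Y'}$) for the boundary inclusions. The only inputs I need are that homeomorphisms are gluing functions (Observation~\ref{ObservationHomeomorphisms}), that gluing functions are closed under composition (Observation~\ref{observationgluingfunctionsclosed}), and the elementary categorical fact that coequalizers are stable under composition with isomorphisms.

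First I would dispose of the gluing-function conditions. Since $g$ and $h$ are homeomorphisms they are gluing functions, and they restrict to boundary homeomorphisms $g|_{\partial Y}\colon\partial Y\to\partial X$ and $h|_{\partial X'}\colon\partial X'\to\partial Y'$, because a homeomorphism of manifolds with boundary carries boundary to boundary and satisfies $g j_Y=j_X\,g|_{\partial Y}$ and $h j_{X'}=j_{Y'}\,h|_{\partial X'}$. Consequently $hf$, $fg$, $h|_{\partial X'}\beta$, $\beta g|_{g^{-1}\Lambda}$, and the conjugate $g^{-1}\alpha g$ are all composites of gluing functions, hence gluing functions by Observation~\ref{observationgluingfunctionsclosed}.

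Next I would treat the coequalizer condition. For the right-deformed diagram the parallel pair and the map $\alpha$ are untouched, so $f$ is already the coequalizer; post-composing with the isomorphism $h$ yields another coequalizer of the same pair, since any $k$ coequalizing the pair factors as $k=mf=(mh^{-1})(hf)$, with $m$, and hence $mh^{-1}$, unique. For the left-deformed diagram I would transport the whole defining diagram $\Sigma\rightrightarrows\Sigma\sqcup\overline{\Sigma}\sqcup\Lambda\xrightarrow{\alpha}\partial X\hookrightarrow X$ along $g^{-1}$; using $g j_Y=j_X\,g|_{\partial Y}$ one checks that conjugation by $g$ carries it isomorphically onto $g^{-1}\Sigma\rightrightarrows g^{-1}\Sigma\sqcup g^{-1}\overline{\Sigma}\sqcup g^{-1}\Lambda\xrightarrow{g^{-1}\alpha g}\partial Y\hookrightarrow Y$. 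Because $g$ is an isomorphism, a map $k\colon Y\to Z$ coequalizes the transported pair iff $kg^{-1}$ coequalizes the original one, and unwinding the universal property shows that $fg$ is the required coequalizer.

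Finally I would check that the two squares commute by a short diagram chase. For the right square, combining $h j_{X'}=j_{Y'}\,h|_{\partial X'}$ with the original identity $f j_X\alpha|_{\Lambda}=j_{X'}\beta$ gives $(hf)j_X\alpha|_{\Lambda}=h j_{X'}\beta=j_{Y'}(h|_{\partial X'}\beta)$. For the left square, inserting $g^{-1}g=\id$ and using $g j_Y=j_X\,g|_{\partial Y}$ reduces $(fg)j_Y(g^{-1}\alpha g)|_{g^{-1}\Lambda}$ to $f j_X\alpha|_{\Lambda}\,g|_{g^{-1}\Lambda}=j_{X'}\beta\,g|_{g^{-1}\Lambda}=j_{X'}(\beta g|_{g^{-1}\Lambda})$, as required. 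The step I expect to require the most care, and the only genuine subtlety, is the bookkeeping in the left deformation: one must confirm that conjugation by $g$ transports both legs of the parallel pair together with the inclusion $\partial X\hookrightarrow X$ coherently, so that the entire coequalizer diagram is carried onto the stated one; once this is in place everything else is formal.
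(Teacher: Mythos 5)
Your proof is correct: the paper offers no argument of its own (the statement is an Observation declared ``easily seen''), and your verification---closure of gluing functions under composition with homeomorphisms (Observations~\ref{observationgluingfunctionsclosed} and \ref{ObservationHomeomorphisms}), stability of coequalizers under pre-/post-composition with isomorphisms, and the two diagram chases using $g\,j_Y=j_X\,g|_{\partial Y}$ and $h\,j_{X'}=j_{Y'}\,h|_{\partial X'}$---is exactly the routine check the authors leave implicit. Your treatment of the left deformation by transporting the whole coequalizer diagram along $g$ also resolves the paper's loose $g^{-1}\Sigma$, $g^{-1}\alpha g$ notation in the intended way, so nothing is missing.
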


We interpret Observation~\ref{ObsHomeomorphismsRelative} as the fact that any relative gluing diagram can be composed with compatible relative diagrams associated to homeomorphisms, as in Observation~\ref{ObsHomeomorphismsRelative} on the left and on the right. It is not clear however, how relative gluing data should behave with respect to composition in general. Consider for example the gluing function $f:[0,1]^2\to S^1\times [0,1]$ pictured as:
\begin{center}

	\tikzset{every picture/.style={line width=0.75pt}} 
	
	\begin{tikzpicture}[x=0.75pt,y=0.75pt,yscale=-0.75,xscale=0.75]
		
		\draw [color={rgb, 255:red, 74; green, 144; blue, 226 }  ,draw opacity=1 ]   (216,90) -- (216.92,149) ;
		\draw [color={rgb, 255:red, 74; green, 144; blue, 226 }  ,draw opacity=1 ]   (276.01,89.31) -- (276.92,150) ;
		\draw  [color={rgb, 255:red, 74; green, 144; blue, 226 }  ,draw opacity=1 ] (271.6,123.51) -- (276.4,114.84) -- (280.59,123.82) ;
		\draw  [color={rgb, 255:red, 74; green, 144; blue, 226 }  ,draw opacity=1 ] (220.35,114.79) -- (216.64,124.08) -- (212.36,115.05) ;
		\draw [color={rgb, 255:red, 208; green, 2; blue, 27 }  ,draw opacity=1 ]   (216,90) -- (276.01,89.31) ;
		\draw [color={rgb, 255:red, 208; green, 2; blue, 27 }  ,draw opacity=1 ]   (216.92,149) -- (276.92,150) ;
		\draw  [color={rgb, 255:red, 208; green, 2; blue, 27 }  ,draw opacity=1 ] (250.68,93.5) -- (241.52,89.5) -- (250.68,85.5) ;
		\draw  [color={rgb, 255:red, 208; green, 2; blue, 27 }  ,draw opacity=1 ] (241.01,145.63) -- (250.28,149.37) -- (241.23,153.62) ;
		\draw  [color={rgb, 255:red, 74; green, 144; blue, 226 }  ,draw opacity=1 ] (432.82,87.32) .. controls (439.59,87.23) and (445.25,100.77) .. (445.46,117.56) .. controls (445.68,134.34) and (440.37,148.02) .. (433.61,148.11) .. controls (426.84,148.2) and (421.19,134.66) .. (420.97,117.87) .. controls (420.75,101.08) and (426.06,87.41) .. (432.82,87.32) -- cycle ;
		\draw    (387.32,87.32) -- (432.82,87.32) ;
		\draw    (387.72,148.11) -- (433.61,148.11) ;
		\draw [color={rgb, 255:red, 74; green, 144; blue, 226 }  ,draw opacity=1 ] [dash pattern={on 0.84pt off 2.51pt}]  (387.32,87.32) .. controls (402.91,86.53) and (403.91,145.87) .. (387.72,148.11) ;
		\draw [color={rgb, 255:red, 74; green, 144; blue, 226 }  ,draw opacity=1 ]   (387.72,87.31) .. controls (375.14,82.86) and (368.47,144.53) .. (387.72,148.11) ;
		\draw  [color={rgb, 255:red, 74; green, 144; blue, 226 }  ,draw opacity=1 ] (417.01,128.91) -- (421,119.84) -- (425.99,128.41) ;
		\draw  [color={rgb, 255:red, 74; green, 144; blue, 226 }  ,draw opacity=1 ] (379.69,106.79) -- (375.98,116.08) -- (371.69,107.05) ;
		\draw [color={rgb, 255:red, 208; green, 2; blue, 27 }  ,draw opacity=1 ] [dash pattern={on 0.84pt off 2.51pt}]  (375.67,117) -- (420.97,117.87) ;
		\draw    (296,119.5) -- (354.1,119.42) ;
		\draw [shift={(356.1,119.41)}, rotate = 539.9200000000001] [color={rgb, 255:red, 0; green, 0; blue, 0 }  ][line width=0.75]    (10.93,-3.29) .. controls (6.95,-1.4) and (3.31,-0.3) .. (0,0) .. controls (3.31,0.3) and (6.95,1.4) .. (10.93,3.29)   ;
		
		\draw (320.5,95.9) node [anchor=north west][inner sep=0.75pt]  [xscale=0.9,yscale=0.9]  {$f$};
		
	\end{tikzpicture}
\end{center}
The red and blue boundary components make $f$ fit into a relative gluing diagram. Consider now the gluing function $g$ from $S^1\times [0,1]$ to the punctured torus $\mathbb{T}^2\setminus \mathbb{D}^2$ obtained by gluing two copies of the interval $[0,1]$ in the boundary of $S^1\times [0,1]$ intersecting the common interval $\Sigma$ forming $S^1\times [0,1]$. Pictorially:
\begin{center}
	\tikzset{every picture/.style={line width=0.75pt}} 
	\begin{tikzpicture}[x=0.75pt,y=0.75pt,yscale=-0.75,xscale=0.75]
		
		\draw    (230.52,90.12) -- (276.02,90.12) ;
		\draw    (230.92,150.91) -- (276.81,150.91) ;
		\draw [color={rgb, 255:red, 80; green, 227; blue, 194 }  ,draw opacity=1 ] [dash pattern={on 0.84pt off 2.51pt}]  (230.52,90.12) .. controls (246.11,89.33) and (247.11,148.67) .. (230.92,150.91) ;
		\draw [color={rgb, 255:red, 126; green, 211; blue, 33 }  ,draw opacity=1 ]   (230.92,90.11) .. controls (218.34,85.66) and (211.67,147.33) .. (230.92,150.91) ;
		\draw  [color={rgb, 255:red, 126; green, 211; blue, 33 }  ,draw opacity=1 ] (260.27,118.87) -- (265.39,110.38) -- (269.25,119.51) ;
		\draw  [color={rgb, 255:red, 126; green, 211; blue, 33 }  ,draw opacity=1 ] (224.07,110.06) -- (218.83,118.58) -- (216.14,108.95) ;
		\draw [color={rgb, 255:red, 208; green, 2; blue, 27 }  ,draw opacity=1 ] [dash pattern={on 0.84pt off 2.51pt}]  (219.9,119.48) -- (264.17,120.47) ;
		\draw    (298.8,119.9) -- (356.9,119.82) ;
		\draw [shift={(358.9,119.81)}, rotate = 539.9200000000001] [color={rgb, 255:red, 0; green, 0; blue, 0 }  ][line width=0.75]    (10.93,-3.29) .. controls (6.95,-1.4) and (3.31,-0.3) .. (0,0) .. controls (3.31,0.3) and (6.95,1.4) .. (10.93,3.29)   ;
		\draw [color={rgb, 255:red, 126; green, 211; blue, 33 }  ,draw opacity=1 ]   (276.81,90.11) .. controls (264.23,85.66) and (257.56,147.32) .. (276.81,150.91) ;
		\draw [color={rgb, 255:red, 74; green, 144; blue, 226 }  ,draw opacity=1 ]   (276.81,90.11) .. controls (292.4,89.32) and (293.01,148.66) .. (276.81,150.91) ;
		
		\draw (323.3,96.3) node [anchor=north west][inner sep=0.75pt]  [xscale=0.9,yscale=0.9]  {$g$};
		\draw (370.37,111.47) node [anchor=north west][inner sep=0.75pt]  [xscale=0.9,yscale=0.9]  {$\mathbb{T}^{2} \setminus \mathbb{D}^{2}$};
		
	\end{tikzpicture}
\end{center}
where relative gluing is done along the front-facing half-circle intervals (green). It is unclear how to equip $gf$ with the structure of relative gluing diagram from the structure of $f$ and $g$. This can be done if we assume the green intervals above do not intersect the red interval $\Sigma$, i.e.\ if $g'$ represents the relative gluing diagram pictured as:
\begin{center}
	\tikzset{every picture/.style={line width=0.75pt}} 
	\begin{tikzpicture}[x=0.75pt,y=0.75pt,yscale=-0.75,xscale=0.75]
		
		\draw    (229.52,117.12) -- (275.02,117.12) ;
		\draw    (229.92,177.91) -- (275.81,177.91) ;
		\draw [color={rgb, 255:red, 126; green, 211; blue, 33 }  ,draw opacity=1 ] [dash pattern={on 0.84pt off 2.51pt}]  (229.52,117.12) .. controls (245.11,116.33) and (246.11,175.67) .. (229.92,177.91) ;
		\draw [color={rgb, 255:red, 74; green, 144; blue, 226 }  ,draw opacity=1 ]   (229.92,117.11) .. controls (217.34,112.66) and (210.67,174.33) .. (229.92,177.91) ;
		\draw  [color={rgb, 255:red, 74; green, 144; blue, 226 }  ,draw opacity=1 ] (259.45,144.05) -- (264.67,135.57) -- (268.63,144.7) ;
		\draw [color={rgb, 255:red, 208; green, 2; blue, 27 }  ,draw opacity=1 ] [dash pattern={on 0.84pt off 2.51pt}]  (218.9,146.48) -- (263.17,147.47) ;
		\draw    (297.8,146.9) -- (355.9,146.82) ;
		\draw [shift={(357.9,146.81)}, rotate = 539.9200000000001] [color={rgb, 255:red, 0; green, 0; blue, 0 }  ][line width=0.75]    (10.93,-3.29) .. controls (6.95,-1.4) and (3.31,-0.3) .. (0,0) .. controls (3.31,0.3) and (6.95,1.4) .. (10.93,3.29)   ;
		\draw [color={rgb, 255:red, 74; green, 144; blue, 226 }  ,draw opacity=1 ]   (275.81,117.11) .. controls (263.23,112.66) and (256.56,174.32) .. (275.81,177.91) ;
		\draw [color={rgb, 255:red, 126; green, 211; blue, 33 }  ,draw opacity=1 ]   (275.81,117.11) .. controls (291.4,116.32) and (292.01,175.66) .. (275.81,177.91) ;
		\draw  [color={rgb, 255:red, 74; green, 144; blue, 226 }  ,draw opacity=1 ] (223.92,135.38) -- (218.06,143.43) -- (214.83,134.01) ;
		
		\draw (322.3,123.3) node [anchor=north west][inner sep=0.75pt]  [xscale=0.9,yscale=0.9]  {$g'$};
		\draw (369.37,138.47) node [anchor=north west][inner sep=0.75pt]  [xscale=0.9,yscale=0.9]  {$\mathbb{T}^{2} \setminus \mathbb{D}^{2}$};

	\end{tikzpicture}
\end{center}
In that case a relative gluing structure for $g'f$ can be pictorially represented as:
\begin{center}
	\tikzset{every picture/.style={line width=0.75pt}} 
	
	\begin{tikzpicture}[x=0.75pt,y=0.75pt,yscale=-0.75,xscale=0.75]
		
		\draw [color={rgb, 255:red, 74; green, 144; blue, 226 }  ,draw opacity=1 ]   (190.45,160.21) -- (190.58,179.67) ;
		\draw [color={rgb, 255:red, 74; green, 144; blue, 226 }  ,draw opacity=1 ]   (250.45,160.43) -- (250.58,180.67) ;
		\draw  [color={rgb, 255:red, 126; green, 211; blue, 33 }  ,draw opacity=1 ] (245.24,153.06) -- (250.05,144.39) -- (254.24,153.38) ;
		\draw  [color={rgb, 255:red, 126; green, 211; blue, 33 }  ,draw opacity=1 ] (194.69,141.23) -- (190.53,150.33) -- (186.69,141.09) ;
		\draw [color={rgb, 255:red, 208; green, 2; blue, 27 }  ,draw opacity=1 ]   (189.67,120.67) -- (249.68,119.98) ;
		\draw [color={rgb, 255:red, 208; green, 2; blue, 27 }  ,draw opacity=1 ]   (190.58,179.67) -- (250.58,180.67) ;
		\draw  [color={rgb, 255:red, 208; green, 2; blue, 27 }  ,draw opacity=1 ] (224.35,124.17) -- (215.18,120.17) -- (224.35,116.17) ;
		\draw  [color={rgb, 255:red, 208; green, 2; blue, 27 }  ,draw opacity=1 ] (214.68,176.29) -- (223.95,180.04) -- (214.89,184.29) ;
		\draw    (273,145.5) -- (331.1,145.42) ;
		\draw [shift={(333.1,145.41)}, rotate = 539.9200000000001] [color={rgb, 255:red, 0; green, 0; blue, 0 }  ][line width=0.75]    (10.93,-3.29) .. controls (6.95,-1.4) and (3.31,-0.3) .. (0,0) .. controls (3.31,0.3) and (6.95,1.4) .. (10.93,3.29)   ;
		\draw [color={rgb, 255:red, 74; green, 144; blue, 226 }  ,draw opacity=1 ]   (190.19,121.3) -- (190.32,140.76) ;
		\draw [color={rgb, 255:red, 74; green, 144; blue, 226 }  ,draw opacity=1 ]   (249.68,119.98) -- (249.81,139.44) ;
		\draw [color={rgb, 255:red, 126; green, 211; blue, 33 }  ,draw opacity=1 ]   (190.32,140.76) -- (190.45,160.21) ;
		\draw [color={rgb, 255:red, 126; green, 211; blue, 33 }  ,draw opacity=1 ]   (249.81,139.44) -- (250.45,160.43) ;
		
		\draw (289.1,122.3) node [anchor=north west][inner sep=0.75pt]  [xscale=0.9,yscale=0.9]  {$g'f$};
		\draw (352.37,134.47) node [anchor=north west][inner sep=0.75pt]  [xscale=0.9,yscale=0.9]  {$\mathbb{T}^{2} \setminus \mathbb{D}^{2}$};
		
	\end{tikzpicture}
\end{center}
where we identify red and green intervals. The axioms in Definition~\ref{AFdefinitionClassic} do not directly involve composition of relative gluings. We propose to address the issues concerning compositions of relative gluing diagrams in their full generality in subsequent work.


\subsection{Slice regions}
\label{sliceregionssubsection}

We introduce slice regions as the final ingredient of our gluing formalism. Slice regions are meant to behave as units with respect to the relative gluing operation. In GBQFT they provide the key ingredient in order to relate inner products with amplitudes via Axiom~\textbf{(T3x)} of Definition~\ref{AFdefinitionClassic}. Slice regions are, in our setting, analogous to identity cobordisms in different formulations of topological quantum field theory or to unit cylinders in Walker and Morrison's near-disk categories \cite{MoWa:blobhomology}. We use the relative gluing diagrams of Section~\ref{relativegluingsubsection} to define slice regions through their desired properties. We provide specific examples and construct models for slice regions.

\begin{definition}
	\label{sliceregionsdefinition}
	Let $\Sigma$ be a hypersurface. A slice region on $\Sigma$ is a pair $(S,s)$ where $S$ is a region and $s:\overline{\Sigma}\sqcup\Sigma\to \partial S$ is a gluing function, satisfying the following gluing triviality condition: For every region $X$, hypersurface $\Gamma$, and gluing function $\beta:\Sigma\sqcup \Gamma\to\partial X$ there exists a relative gluing diagram:
	\begin{center}
		\begin{tikzpicture}
			\matrix (m) [matrix of math nodes,row sep=3em,column sep=3em,minimum width=2em]
			{\Sigma\sqcup \Gamma&&&\partial X \\
				(\Sigma\sqcup \overline{\Sigma})\sqcup(\Sigma\sqcup \Gamma)&\partial(S\sqcup X)&S\sqcup X&X\\};
			\path[-stealth]
			(m-1-1) edge node [above] {$\beta$} (m-1-4)
			edge [bend left = 45]node [left]  {} (m-2-1)
			edge [bend right = 45] node {} (m-2-1)
			
			(m-2-1) edge node [below] {$s|_{\overline{\Sigma}}\sqcup \beta|_\Sigma$} (m-2-2)
			(m-2-2) edge node [below]  {} (m-2-3)
			(m-2-3) edge node [below] {$f$} (m-2-4)
			(m-1-4) edge node [right] {} (m-2-4)
			
			;
		\end{tikzpicture}
	\end{center}
	where the bent arrows represent the disjoint union of the inclusion of $\Sigma$ in $\Sigma\sqcup\overline{\Sigma}$ and the inclusion of $\Gamma$ in $\Sigma\sqcup \Gamma$ respectively.
\end{definition}

\begin{rem}
Let $\Sigma$ be a hypersurface. Let $(S,s)$ be a slice region on $\Sigma$. The gluing function $s$ provides a decomposition of $\partial S$ as a gluing of two copies of $\Sigma$ with opposite orientation. We call $s$ the \textbf{slice boundary decomposition} of $S$. The \textbf{gluing triviality} condition of Definition~\ref{sliceregionsdefinition} says that given any region $X$ such that $\partial X$ decomposes as $\Sigma$ glued to some other hypersurface $\Gamma$, we can glue $S$ to $X$ along the common copies of $\Sigma$ in $\partial S$ and $\partial X$, with opposite orientations, in such a way that the glued region is homeomorphic to $X$ in a canonical way. We thus consider slice regions as units with respect to the operation of relative gluing.
\end{rem}

We present specific examples clarifying Definition~\ref{sliceregionsdefinition}. We do this in the case in which $\Sigma$ is closed, i.e., in the case in which $s$ is trivial, and in the case in which $\partial \Sigma\neq\emptyset$ separately.

\begin{ex}\label{exsliceregionsclosed}
In the notation of Definition~\ref{sliceregionsdefinition} let $\Sigma$ be the circle $S^1$ with its usual orientation. Let $S$ be the cylinder $\Sigma\times [0,1]$. Let $s:\Sigma\sqcup \overline{\Sigma}\to\partial S$ be the disjoint union of the inclusion of $\Sigma$ in $S$ as $\Sigma\times\left\{0\right\}$ and as $\Sigma\times \left\{1\right\}$ respectively. Pictorially:
\begin{center}
\tikzset{every picture/.style={line width=0.75pt}} 
\begin{tikzpicture}[x=0.75pt,y=0.75pt,yscale=-0.75,xscale=0.75]

			\draw  [color={rgb, 255:red, 208; green, 2; blue, 27 }  ,draw opacity=1 ] (290.8,80.2) .. controls (290.8,75.23) and (304.23,71.2) .. (320.8,71.2) .. controls (337.37,71.2) and (350.8,75.23) .. (350.8,80.2) .. controls (350.8,85.17) and (337.37,89.2) .. (320.8,89.2) .. controls (304.23,89.2) and (290.8,85.17) .. (290.8,80.2) -- cycle ;
			\draw    (290.8,80.2) -- (290.8,133.15) ;
			\draw    (350.8,80.2) -- (350.8,133.15) ;
			\draw [color={rgb, 255:red, 208; green, 2; blue, 27 }  ,draw opacity=1 ]   (290.8,133.15) .. controls (303.65,146.95) and (340.05,147.35) .. (350.8,133.15) ;
			\draw [color={rgb, 255:red, 208; green, 2; blue, 27 }  ,draw opacity=1 ] [dash pattern={on 0.84pt off 2.51pt}]  (290.8,133.15) .. controls (301.25,118.69) and (342.85,123.09) .. (350.8,133.15) ;
			\draw  [color={rgb, 255:red, 208; green, 2; blue, 27 }  ,draw opacity=1 ] (315.96,139.84) -- (323.69,143.82) -- (315.96,147.79) ;
			\draw  [color={rgb, 255:red, 208; green, 2; blue, 27 }  ,draw opacity=1 ] (325.7,93) -- (316.59,89.29) -- (325.23,84.61) ;
			\draw  [color={rgb, 255:red, 208; green, 2; blue, 27 }  ,draw opacity=1 ] (169.8,79.95) .. controls (169.8,74.98) and (183.23,70.95) .. (199.8,70.95) .. controls (216.37,70.95) and (229.8,74.98) .. (229.8,79.95) .. controls (229.8,84.92) and (216.37,88.95) .. (199.8,88.95) .. controls (183.23,88.95) and (169.8,84.92) .. (169.8,79.95) -- cycle ;
			\draw  [color={rgb, 255:red, 208; green, 2; blue, 27 }  ,draw opacity=1 ] (170.05,131.45) .. controls (170.05,126.48) and (183.48,122.45) .. (200.05,122.45) .. controls (216.62,122.45) and (230.05,126.48) .. (230.05,131.45) .. controls (230.05,136.42) and (216.62,140.45) .. (200.05,140.45) .. controls (183.48,140.45) and (170.05,136.42) .. (170.05,131.45) -- cycle ;
			\draw  [color={rgb, 255:red, 208; green, 2; blue, 27 }  ,draw opacity=1 ] (196.21,136.34) -- (203.94,140.32) -- (196.21,144.29) ;
			\draw  [color={rgb, 255:red, 208; green, 2; blue, 27 }  ,draw opacity=1 ] (205.2,92.5) -- (196.09,88.79) -- (204.73,84.11) ;
			\draw    (230.75,109.75) -- (267.69,110.22) ;
			\draw [shift={(269.69,110.25)}, rotate = 180.74] [color={rgb, 255:red, 0; green, 0; blue, 0 }  ][line width=0.75]    (10.93,-3.29) .. controls (6.95,-1.4) and (3.31,-0.3) .. (0,0) .. controls (3.31,0.3) and (6.95,1.4) .. (10.93,3.29)   ;
			
	\end{tikzpicture}
\end{center}
Thus defined, $S$ is a slice region on $\Sigma$ with slice boundary decomposition $s$. We illustrate the pieces of data involved in the gluing triviality condition for $S$ with a specific example. Let $X$ be a three-punctured torus. We orient $X$ in such a way that $X$ admits a pictorial representation as:
	\begin{center}
		\tikzset{every picture/.style={line width=0.75pt}} 
		\begin{tikzpicture}[x=0.75pt,y=0.75pt,yscale=-0.75,xscale=0.75]

			\draw  [color={rgb, 255:red, 254; green, 254; blue, 254 }  ,draw opacity=1 ][line width=3] [line join = round][line cap = round] (311.5,79.17) .. controls (312.5,79.17) and (313.5,79.17) .. (314.5,79.17) ;
			\draw  [color={rgb, 255:red, 254; green, 254; blue, 254 }  ,draw opacity=1 ][line width=3] [line join = round][line cap = round] (311.5,80.17) .. controls (312.55,80.17) and (313.75,79.91) .. (314.5,79.17) ;
			\draw  [color={rgb, 255:red, 254; green, 254; blue, 254 }  ,draw opacity=1 ][line width=3] [line join = round][line cap = round] (314.5,72.17) .. controls (314.5,72.91) and (315.5,73.42) .. (315.5,74.17) ;
			\draw  [color={rgb, 255:red, 254; green, 254; blue, 254 }  ,draw opacity=1 ][line width=3] [line join = round][line cap = round] (314.5,72.17) .. controls (314.5,73.54) and (315.5,74.79) .. (315.5,76.17) ;
			\draw  [color={rgb, 255:red, 254; green, 254; blue, 254 }  ,draw opacity=1 ][line width=3] [line join = round][line cap = round] (315.5,70.17) .. controls (315.5,72.92) and (314.69,77.17) .. (317.5,77.17) ;
			\draw  [color={rgb, 255:red, 254; green, 254; blue, 254 }  ,draw opacity=1 ][line width=3] [line join = round][line cap = round] (310.5,80.17) .. controls (315.7,80.17) and (315.72,77.17) .. (319.5,77.17) ;
			\draw  [color={rgb, 255:red, 254; green, 254; blue, 254 }  ,draw opacity=1 ][line width=3] [line join = round][line cap = round] (312.5,79.17) .. controls (320.64,79.17) and (319.34,77.17) .. (312.5,77.17) ;
			\draw  [color={rgb, 255:red, 254; green, 254; blue, 254 }  ,draw opacity=1 ][line width=3] [line join = round][line cap = round] (311.5,81.17) .. controls (312.04,81.17) and (316.5,80.17) .. (316.5,80.17) .. controls (316.5,80.17) and (314.21,80.87) .. (313.5,80.17) .. controls (308.83,75.5) and (314.5,77.17) .. (314.5,77.17) .. controls (314.5,77.17) and (313.5,76.83) .. (313.5,80.17) ;
			\draw  [color={rgb, 255:red, 254; green, 254; blue, 254 }  ,draw opacity=1 ][line width=1.5] [line join = round][line cap = round] (261.08,94.42) .. controls (257.86,94.42) and (264.08,91.42) .. (264.08,91.42) .. controls (264.08,91.42) and (261.39,93.42) .. (260.08,93.42) ;
			\draw  [color={rgb, 255:red, 254; green, 254; blue, 254 }  ,draw opacity=1 ][line width=1.5] [line join = round][line cap = round] (259.08,96.42) .. controls (260.29,96.42) and (261.23,95.27) .. (262.08,94.42) .. controls (262.61,93.89) and (264.83,93.42) .. (264.08,93.42) .. controls (252.79,93.42) and (260.58,90.42) .. (265.08,90.42) .. controls (266.57,90.42) and (262.42,93.08) .. (261.08,92.42) .. controls (260.26,92) and (264.08,89.42) .. (264.08,89.42) .. controls (264.08,89.42) and (260.02,91.42) .. (260.08,91.42) .. controls (261.46,91.42) and (263.47,89.19) .. (264.08,90.42) .. controls (265.74,93.74) and (258.66,95.96) .. (258.08,96.42) .. controls (257.5,96.88) and (259.46,95.83) .. (260.08,95.42) .. controls (261.58,94.42) and (263.29,93.42) .. (265.08,93.42) .. controls (265.55,93.42) and (263.66,92.63) .. (264.08,92.42) .. controls (265.46,91.73) and (269.76,91.09) .. (271.08,91.42) .. controls (272.91,91.87) and (268.77,94.57) .. (267.08,95.42) .. controls (260.98,98.47) and (254.57,101.14) .. (254.08,100.42) .. controls (253.53,99.58) and (253.38,98.12) .. (254.08,97.42) .. controls (255.29,96.21) and (258.55,98.03) .. (259.08,96.42) .. controls (259.74,94.44) and (261.25,91.42) .. (264.08,91.42) ;
			\draw    (325.32,89.13) .. controls (315.66,115.91) and (310.84,133.66) .. (282.15,150.8) ;
			\draw    (284.65,191.41) .. controls (337.29,166.4) and (359.33,174.35) .. (403.35,193.5) ;
			\draw    (362.79,91.78) .. controls (369.71,115.35) and (366.75,127.14) .. (402.35,152.85) ;
			\draw  [color={rgb, 255:red, 74; green, 144; blue, 226 }  ,draw opacity=1 ] (282.16,150.8) .. controls (287.07,150.66) and (291.27,159.65) .. (291.54,170.87) .. controls (291.81,182.1) and (288.05,191.31) .. (283.14,191.46) .. controls (278.23,191.6) and (274.03,182.61) .. (273.76,171.39) .. controls (273.49,160.16) and (277.25,150.95) .. (282.16,150.8) -- cycle ;
			\draw  [color={rgb, 255:red, 74; green, 144; blue, 226 }  ,draw opacity=1 ] (402.36,152.85) .. controls (407.27,152.71) and (411.47,161.69) .. (411.74,172.92) .. controls (412.01,184.14) and (408.25,193.36) .. (403.34,193.5) .. controls (398.43,193.64) and (394.23,184.65) .. (393.96,173.43) .. controls (393.69,162.2) and (397.45,152.99) .. (402.36,152.85) -- cycle ;
			\draw  [color={rgb, 255:red, 208; green, 2; blue, 27 }  ,draw opacity=1 ] (325.32,89.13) .. controls (325.64,83.82) and (334.29,80.11) .. (344.64,80.84) .. controls (354.98,81.58) and (363.11,86.48) .. (362.79,91.79) .. controls (362.47,97.1) and (353.82,100.81) .. (343.48,100.07) .. controls (333.13,99.34) and (325,94.44) .. (325.32,89.13) -- cycle ;
			\draw    (327.8,133.3) .. controls (337.69,145.08) and (342.64,147.22) .. (355,135.44) ;
			\draw    (333.74,138.12) .. controls (342.14,132.22) and (347.58,135.44) .. (348.57,140.26) ;
			\draw  [color={rgb, 255:red, 208; green, 2; blue, 27 }  ,draw opacity=1 ] (336.88,92.97) -- (349.16,99.71) -- (336.88,106.46) ;
			\draw  [color={rgb, 255:red, 74; green, 144; blue, 226 }  ,draw opacity=1 ] (401.78,164.37) -- (393.68,176.65) -- (387.13,163.33) ;
			\draw  [color={rgb, 255:red, 74; green, 144; blue, 226 }  ,draw opacity=1 ] (285.11,171.76) -- (289.82,158.33) -- (297.47,170.06) ;
		\end{tikzpicture}
	\end{center}
	We let $\Gamma$ be the blue component of $\partial X$, and we take $\eta:\Sigma\sqcup \Gamma\to \partial X$ to be the obvious inclusion. Gluing triviality on $(X,\eta)$ is provided by the function $f$ from the left-hand side to the right-hand side of the following picture:
	\begin{center}
		\tikzset{every picture/.style={line width=0.75pt}} 
		\begin{tikzpicture}[x=0.75pt,y=0.75pt,yscale=-0.75,xscale=0.75]
			
			\draw    (208.32,145.13) .. controls (198.66,171.91) and (193.84,189.66) .. (165.15,206.8) ;
			\draw    (167.65,247.41) .. controls (220.29,222.4) and (242.33,230.35) .. (286.35,249.5) ;
			\draw    (245.79,147.78) .. controls (252.71,171.35) and (249.75,183.14) .. (285.35,208.85) ;
			\draw  [color={rgb, 255:red, 74; green, 144; blue, 226 }  ,draw opacity=1 ] (165.16,206.8) .. controls (170.07,206.66) and (174.27,215.65) .. (174.54,226.87) .. controls (174.81,238.1) and (171.05,247.31) .. (166.14,247.46) .. controls (161.23,247.6) and (157.03,238.61) .. (156.76,227.39) .. controls (156.49,216.16) and (160.25,206.95) .. (165.16,206.8) -- cycle ;
			\draw  [color={rgb, 255:red, 74; green, 144; blue, 226 }  ,draw opacity=1 ] (285.36,208.85) .. controls (290.27,208.71) and (294.47,217.69) .. (294.74,228.92) .. controls (295.01,240.14) and (291.25,249.36) .. (286.34,249.5) .. controls (281.43,249.64) and (277.23,240.65) .. (276.96,229.43) .. controls (276.69,218.2) and (280.45,208.99) .. (285.36,208.85) -- cycle ;
			\draw  [color={rgb, 255:red, 208; green, 2; blue, 27 }  ,draw opacity=1 ] (208.32,145.13) .. controls (208.64,139.82) and (217.29,136.11) .. (227.64,136.84) .. controls (237.98,137.58) and (246.11,142.48) .. (245.79,147.79) .. controls (245.47,153.1) and (236.82,156.81) .. (226.48,156.07) .. controls (216.13,155.34) and (208,150.44) .. (208.32,145.13) -- cycle ;
			\draw    (210.8,189.3) .. controls (220.69,201.08) and (225.64,203.22) .. (238,191.44) ;
			\draw    (216.74,194.12) .. controls (225.14,188.22) and (230.58,191.44) .. (231.57,196.26) ;
			\draw  [color={rgb, 255:red, 208; green, 2; blue, 27 }  ,draw opacity=1 ] (223.95,152.25) -- (232.16,156.05) -- (223.95,159.85) ;
			\draw  [color={rgb, 255:red, 74; green, 144; blue, 226 }  ,draw opacity=1 ] (284.78,220.37) -- (276.68,232.65) -- (270.13,219.33) ;
			\draw  [color={rgb, 255:red, 74; green, 144; blue, 226 }  ,draw opacity=1 ] (168.11,227.76) -- (172.82,214.33) -- (180.47,226.06) ;
			\draw  [color={rgb, 255:red, 208; green, 2; blue, 27 }  ,draw opacity=1 ] (210.01,88.36) .. controls (210.01,85.17) and (218.43,82.58) .. (228.81,82.58) .. controls (239.2,82.58) and (247.61,85.17) .. (247.61,88.36) .. controls (247.61,91.56) and (239.2,94.14) .. (228.81,94.14) .. controls (218.43,94.14) and (210.01,91.56) .. (210.01,88.36) -- cycle ;
			\draw    (210.01,88.36) -- (210.01,122.37) ;
			\draw    (247.61,88.36) -- (247.61,122.37) ;
			\draw [color={rgb, 255:red, 208; green, 2; blue, 27 }  ,draw opacity=1 ]   (209.35,123.04) .. controls (217.4,131.91) and (240.21,132.16) .. (246.95,123.04) ;
			\draw [color={rgb, 255:red, 208; green, 2; blue, 27 }  ,draw opacity=1 ] [dash pattern={on 0.84pt off 2.51pt}]  (210.01,122.37) .. controls (216.56,113.09) and (242.63,115.92) .. (247.61,122.37) ;
			\draw  [color={rgb, 255:red, 208; green, 2; blue, 27 }  ,draw opacity=1 ] (226.11,127.09) -- (230.96,129.64) -- (226.11,132.19) ;
			\draw  [color={rgb, 255:red, 208; green, 2; blue, 27 }  ,draw opacity=1 ] (231.88,83.28) -- (226.17,80.9) -- (231.59,77.89) ;
			\draw    (403.32,146.13) .. controls (393.66,172.91) and (388.84,190.66) .. (360.15,207.8) ;
			\draw    (362.65,248.41) .. controls (415.29,223.4) and (437.33,231.35) .. (481.35,250.5) ;
			\draw    (440.79,148.78) .. controls (447.71,172.35) and (444.75,184.14) .. (480.35,209.85) ;
			\draw  [color={rgb, 255:red, 74; green, 144; blue, 226 }  ,draw opacity=1 ] (360.16,207.8) .. controls (365.07,207.66) and (369.27,216.65) .. (369.54,227.87) .. controls (369.81,239.1) and (366.05,248.31) .. (361.14,248.46) .. controls (356.23,248.6) and (352.03,239.61) .. (351.76,228.39) .. controls (351.49,217.16) and (355.25,207.95) .. (360.16,207.8) -- cycle ;
			\draw  [color={rgb, 255:red, 74; green, 144; blue, 226 }  ,draw opacity=1 ] (480.36,209.85) .. controls (485.27,209.71) and (489.47,218.69) .. (489.74,229.92) .. controls (490.01,241.14) and (486.25,250.36) .. (481.34,250.5) .. controls (476.43,250.64) and (472.23,241.65) .. (471.96,230.43) .. controls (471.69,219.2) and (475.45,209.99) .. (480.36,209.85) -- cycle ;
			\draw    (405.8,190.3) .. controls (415.69,202.08) and (420.64,204.22) .. (433,192.44) ;
			\draw    (411.74,195.12) .. controls (420.14,189.22) and (425.58,192.44) .. (426.57,197.26) ;
			\draw  [color={rgb, 255:red, 74; green, 144; blue, 226 }  ,draw opacity=1 ] (479.78,221.37) -- (471.68,233.65) -- (465.13,220.33) ;
			\draw  [color={rgb, 255:red, 74; green, 144; blue, 226 }  ,draw opacity=1 ] (363.11,228.76) -- (367.82,215.33) -- (375.47,227.06) ;
			\draw    (291,172) -- (351.88,173.21) ;
			\draw [shift={(353.88,173.25)}, rotate = 181.14] [color={rgb, 255:red, 0; green, 0; blue, 0 }  ][line width=0.75]    (10.93,-3.29) .. controls (6.95,-1.4) and (3.31,-0.3) .. (0,0) .. controls (3.31,0.3) and (6.95,1.4) .. (10.93,3.29)   ;
			\draw  [color={rgb, 255:red, 208; green, 2; blue, 27 }  ,draw opacity=1 ] (403.32,112.12) .. controls (403.32,108.93) and (411.74,106.34) .. (422.12,106.34) .. controls (432.51,106.34) and (440.92,108.93) .. (440.92,112.12) .. controls (440.92,115.31) and (432.51,117.9) .. (422.12,117.9) .. controls (411.74,117.9) and (403.32,115.31) .. (403.32,112.12) -- cycle ;
			\draw    (403.32,112.12) -- (403.32,146.13) ;
			\draw    (440.92,112.12) -- (440.79,148.78) ;
			\draw [color={rgb, 255:red, 208; green, 2; blue, 27 }  ,draw opacity=1 ] [dash pattern={on 0.84pt off 2.51pt}]  (403.95,148.05) .. controls (412,156.92) and (434.05,157.91) .. (440.79,148.78) ;
			\draw [color={rgb, 255:red, 208; green, 2; blue, 27 }  ,draw opacity=1 ] [dash pattern={on 0.84pt off 2.51pt}]  (403.95,148.05) .. controls (410.5,138.77) and (435.81,142.33) .. (440.79,148.78) ;
			\draw    (592.32,147.63) .. controls (582.66,174.41) and (577.84,192.16) .. (549.15,209.3) ;
			\draw    (551.65,249.91) .. controls (604.29,224.9) and (626.33,232.85) .. (670.35,252) ;
			\draw    (629.79,150.28) .. controls (636.71,173.85) and (633.75,185.64) .. (669.35,211.35) ;
			\draw  [color={rgb, 255:red, 74; green, 144; blue, 226 }  ,draw opacity=1 ] (549.16,209.3) .. controls (554.07,209.16) and (558.27,218.15) .. (558.54,229.37) .. controls (558.81,240.6) and (555.05,249.81) .. (550.14,249.96) .. controls (545.23,250.1) and (541.03,241.11) .. (540.76,229.89) .. controls (540.49,218.66) and (544.25,209.45) .. (549.16,209.3) -- cycle ;
			\draw  [color={rgb, 255:red, 74; green, 144; blue, 226 }  ,draw opacity=1 ] (669.36,211.35) .. controls (674.27,211.21) and (678.47,220.19) .. (678.74,231.42) .. controls (679.01,242.64) and (675.25,251.86) .. (670.34,252) .. controls (665.43,252.14) and (661.23,243.15) .. (660.96,231.93) .. controls (660.69,220.7) and (664.45,211.49) .. (669.36,211.35) -- cycle ;
			\draw  [color={rgb, 255:red, 208; green, 2; blue, 27 }  ,draw opacity=1 ] (592.32,147.63) .. controls (592.64,142.32) and (601.29,138.61) .. (611.64,139.34) .. controls (621.98,140.08) and (630.11,144.98) .. (629.79,150.29) .. controls (629.47,155.6) and (620.82,159.31) .. (610.48,158.57) .. controls (600.13,157.84) and (592,152.94) .. (592.32,147.63) -- cycle ;
			\draw    (594.8,191.8) .. controls (604.69,203.58) and (609.64,205.72) .. (622,193.94) ;
			\draw    (600.74,196.62) .. controls (609.14,190.72) and (614.58,193.94) .. (615.57,198.76) ;
			\draw  [color={rgb, 255:red, 208; green, 2; blue, 27 }  ,draw opacity=1 ] (607.95,154.75) -- (616.16,158.55) -- (607.95,162.35) ;
			\draw  [color={rgb, 255:red, 74; green, 144; blue, 226 }  ,draw opacity=1 ] (668.78,222.87) -- (660.68,235.15) -- (654.13,221.83) ;
			\draw  [color={rgb, 255:red, 74; green, 144; blue, 226 }  ,draw opacity=1 ] (552.11,230.26) -- (556.82,216.83) -- (564.47,228.56) ;
			\draw    (491,173) -- (551.88,174.21) ;
			\draw [shift={(553.88,174.25)}, rotate = 181.14] [color={rgb, 255:red, 0; green, 0; blue, 0 }  ][line width=0.75]    (10.93,-3.29) .. controls (6.95,-1.4) and (3.31,-0.3) .. (0,0) .. controls (3.31,0.3) and (6.95,1.4) .. (10.93,3.29)   ;
		\end{tikzpicture}
	\end{center}
	The upper left corner of the diagram in Definition~\ref{sliceregionsdefinition} in this case is equal to the disjoint union of the top closed string of $S$ and the closed strings on $\partial X$ (blue).
\end{ex}

\begin{rem}
\label{remarkclosedsliceregions}
The construction presented in Example~\ref{exsliceregionsclosed} is generic, in the sense that for every closed hypersurface $\Sigma$, the cylinder $\Sigma\times [0,1]$, with boundary decomposition $\Sigma\sqcup\overline{\Sigma}\to\partial (\Sigma\times [0,1])$ provided by the inclusions of $\Sigma$ and $\overline{\Sigma}$ in $\Sigma\times [0,1]$ as $\Sigma\times\left\{1\right\}$ and $\Sigma\times \left\{0\right\}$ is a slice region on $\Sigma$. Every closed hypersurface thus admits a slice region.
\end{rem}

We now illustrate the pieces of data involved in Definition~\ref{sliceregionsdefinition} for hypersurfaces with non-empty boundary with a specific example.

\begin{ex}
	\label{exslicenonclosed}
	In the notation of Definition~\ref{sliceregionsdefinition} let $\Sigma$ be a closed string. Let $S$ be the pinched cylinder on $\Sigma$, i.e., let $S$ be the surface obtained from $\Sigma\times [0,1]$ by identifying the boundary component $\partial \Sigma\times [0,1]$ with $\partial\Sigma$. Let $s:\Sigma\sqcup\overline{\Sigma}\to \partial S$ be the composition of the inclusion of $\Sigma$ and $\overline{\Sigma}$ in $\Sigma\times [0,1]$ analogous to Example~\ref{exsliceregionsclosed} and the projection of $\Sigma\times [0,1]$ onto $S$. We picture $(S,s)$ as:
	\begin{center}
		\tikzset{every picture/.style={line width=0.75pt}} 
		\begin{tikzpicture}[x=0.75pt,y=0.75pt,yscale=-0.75,xscale=0.75]
			
			\draw [color={rgb, 255:red, 208; green, 2; blue, 27 }  ,draw opacity=1 ]   (258,118.92) .. controls (270.67,146.58) and (328,145.58) .. (339.58,118.58) ;
			\draw [color={rgb, 255:red, 208; green, 2; blue, 27 }  ,draw opacity=1 ]   (258,118.92) .. controls (274.67,84.58) and (332.67,98.92) .. (339.58,118.58) ;
			\draw  [color={rgb, 255:red, 208; green, 2; blue, 27 }  ,draw opacity=1 ] (293.25,135.25) -- (303.33,139.25) -- (293.25,143.25) ;
			\draw  [color={rgb, 255:red, 208; green, 2; blue, 27 }  ,draw opacity=1 ] (302.48,102.06) -- (292.25,98.44) -- (302.18,94.06) ;
		\end{tikzpicture}
	\end{center}
The pair $(S,s)$ is a slice region on $\Sigma$. We illustrate the pieces of data involved in the gluing triviality condition for $S$ with a specific example. Let $X$ be the cylinder $S^1\times [0,1]$, which for our convenience we picture as:
\begin{center}

		\tikzset{every picture/.style={line width=0.75pt}} 
		
		\begin{tikzpicture}[x=0.75pt,y=0.75pt,yscale=-0.75,xscale=0.75]
			
			\draw [color={rgb, 255:red, 208; green, 2; blue, 27 }  ,draw opacity=1 ]   (375.25,104.63) .. controls (377.75,73.25) and (343.75,50.63) .. (326.75,52.75) ;
			\draw [color={rgb, 255:red, 74; green, 144; blue, 226 }  ,draw opacity=1 ]   (326.75,52.75) .. controls (304.75,54.44) and (294,64.53) .. (286.06,75.58) .. controls (278.13,86.63) and (278.38,95.38) .. (278.75,105.25) .. controls (279.13,115.13) and (288.25,148.38) .. (326.75,151.25) .. controls (365.25,154.13) and (376.75,112.13) .. (375.25,104.63) ;
			\draw  [color={rgb, 255:red, 74; green, 144; blue, 226 }  ,draw opacity=1 ] (307.35,103.5) .. controls (307.35,92.37) and (316.04,83.35) .. (326.75,83.35) .. controls (337.46,83.35) and (346.15,92.37) .. (346.15,103.5) .. controls (346.15,114.63) and (337.46,123.65) .. (326.75,123.65) .. controls (316.04,123.65) and (307.35,114.63) .. (307.35,103.5) -- cycle ;
			\draw  [color={rgb, 255:red, 208; green, 2; blue, 27 }  ,draw opacity=1 ] (364.33,79.57) -- (362.36,69.7) -- (371.19,74.54) ;
			\draw  [color={rgb, 255:red, 74; green, 144; blue, 226 }  ,draw opacity=1 ] (319.5,147.13) -- (328.63,151.38) -- (319.5,155.63) ;
			\draw  [color={rgb, 255:red, 74; green, 144; blue, 226 }  ,draw opacity=1 ] (330.63,87) -- (321.04,83.96) -- (329.54,78.57) ;
			
		\end{tikzpicture}
	\end{center}
We consider the boundary decomposition of $X$ described by red and blue components as above. In that case $\Gamma$ is the disjoint union of the inner closed string pictured in blue and the open string boundary component pictured in blue, and $\eta:\Gamma\sqcup\Sigma\to \partial X$ is described above. A gluing trivialization diagram for $S$ and $X$ is described by the picture:
\begin{center}
		\tikzset{every picture/.style={line width=0.75pt}} 
		\begin{tikzpicture}[x=0.75pt,y=0.75pt,yscale=-0.75,xscale=0.75]
			
			\draw [color={rgb, 255:red, 208; green, 2; blue, 27 }  ,draw opacity=1 ]   (348.75,106.75) .. controls (382.25,83) and (422.38,128.25) .. (397.25,158.63) ;
			\draw  [color={rgb, 255:red, 208; green, 2; blue, 27 }  ,draw opacity=1 ] (396.51,119.58) -- (391.84,109.78) -- (401.95,113.71) ;
			\draw [color={rgb, 255:red, 208; green, 2; blue, 27 }  ,draw opacity=1 ] [dash pattern={on 0.84pt off 2.51pt}]  (397.25,158.63) .. controls (399.75,127.25) and (365.75,104.63) .. (348.75,106.75) ;
			\draw [color={rgb, 255:red, 74; green, 144; blue, 226 }  ,draw opacity=1 ]   (348.75,106.75) .. controls (326.75,108.44) and (316,118.53) .. (308.06,129.58) .. controls (300.13,140.63) and (300.38,149.38) .. (300.75,159.25) .. controls (301.13,169.13) and (310.25,202.38) .. (348.75,205.25) .. controls (387.25,208.13) and (398.75,166.13) .. (397.25,158.63) ;
			\draw  [color={rgb, 255:red, 74; green, 144; blue, 226 }  ,draw opacity=1 ] (329.35,157.5) .. controls (329.35,146.37) and (338.04,137.35) .. (348.75,137.35) .. controls (359.46,137.35) and (368.15,146.37) .. (368.15,157.5) .. controls (368.15,168.63) and (359.46,177.65) .. (348.75,177.65) .. controls (338.04,177.65) and (329.35,168.63) .. (329.35,157.5) -- cycle ;
			\draw  [color={rgb, 255:red, 74; green, 144; blue, 226 }  ,draw opacity=1 ] (341.5,201.13) -- (350.63,205.38) -- (341.5,209.63) ;
			\draw  [color={rgb, 255:red, 74; green, 144; blue, 226 }  ,draw opacity=1 ] (352.63,141) -- (343.04,137.96) -- (351.54,132.57) ;
			\draw [color={rgb, 255:red, 208; green, 2; blue, 27 }  ,draw opacity=1 ]   (580.58,159.63) .. controls (583.08,128.25) and (549.08,105.63) .. (532.08,107.75) ;
			\draw [color={rgb, 255:red, 74; green, 144; blue, 226 }  ,draw opacity=1 ]   (532.08,107.75) .. controls (510.08,109.44) and (499.33,119.53) .. (491.4,130.58) .. controls (483.46,141.63) and (483.71,150.38) .. (484.08,160.25) .. controls (484.46,170.13) and (493.58,203.38) .. (532.08,206.25) .. controls (570.58,209.13) and (582.08,167.13) .. (580.58,159.63) ;
			\draw  [color={rgb, 255:red, 74; green, 144; blue, 226 }  ,draw opacity=1 ] (512.68,158.5) .. controls (512.68,147.37) and (521.37,138.35) .. (532.08,138.35) .. controls (542.8,138.35) and (551.48,147.37) .. (551.48,158.5) .. controls (551.48,169.63) and (542.8,178.65) .. (532.08,178.65) .. controls (521.37,178.65) and (512.68,169.63) .. (512.68,158.5) -- cycle ;
			\draw  [color={rgb, 255:red, 208; green, 2; blue, 27 }  ,draw opacity=1 ] (569.67,134.57) -- (567.7,124.7) -- (576.52,129.54) ;
			\draw  [color={rgb, 255:red, 74; green, 144; blue, 226 }  ,draw opacity=1 ] (524.83,202.13) -- (533.96,206.38) -- (524.83,210.63) ;
			\draw  [color={rgb, 255:red, 74; green, 144; blue, 226 }  ,draw opacity=1 ] (535.97,142) -- (526.37,138.96) -- (534.87,133.57) ;
			\draw    (422,157) -- (463.75,157.24) ;
			\draw [shift={(465.75,157.25)}, rotate = 180.33] [color={rgb, 255:red, 0; green, 0; blue, 0 }  ][line width=0.75]    (10.93,-3.29) .. controls (6.95,-1.4) and (3.31,-0.3) .. (0,0) .. controls (3.31,0.3) and (6.95,1.4) .. (10.93,3.29)   ;
			\draw [color={rgb, 255:red, 208; green, 2; blue, 27 }  ,draw opacity=1 ]   (178.27,78.41) .. controls (172.72,108.33) and (220.2,140.48) .. (245.2,125.06) ;
			\draw [color={rgb, 255:red, 208; green, 2; blue, 27 }  ,draw opacity=1 ]   (178.27,78.41) .. controls (211.65,59.91) and (250.86,104.99) .. (245.2,125.06) ;
			\draw  [color={rgb, 255:red, 208; green, 2; blue, 27 }  ,draw opacity=1 ] (197.71,112.04) -- (203.66,121.11) -- (193.11,118.59) ;
			\draw  [color={rgb, 255:red, 208; green, 2; blue, 27 }  ,draw opacity=1 ] (224.36,90.2) -- (218.07,81.36) -- (228.71,83.48) ;
			\draw [color={rgb, 255:red, 208; green, 2; blue, 27 }  ,draw opacity=1 ]   (196.25,160.63) .. controls (198.75,129.25) and (164.75,106.63) .. (147.75,108.75) ;
			\draw [color={rgb, 255:red, 74; green, 144; blue, 226 }  ,draw opacity=1 ]   (147.75,108.75) .. controls (125.75,110.44) and (115,120.53) .. (107.06,131.58) .. controls (99.13,142.63) and (99.38,151.38) .. (99.75,161.25) .. controls (100.13,171.13) and (109.25,204.38) .. (147.75,207.25) .. controls (186.25,210.13) and (197.75,168.13) .. (196.25,160.63) ;
			\draw  [color={rgb, 255:red, 74; green, 144; blue, 226 }  ,draw opacity=1 ] (128.35,159.5) .. controls (128.35,148.37) and (137.04,139.35) .. (147.75,139.35) .. controls (158.46,139.35) and (167.15,148.37) .. (167.15,159.5) .. controls (167.15,170.63) and (158.46,179.65) .. (147.75,179.65) .. controls (137.04,179.65) and (128.35,170.63) .. (128.35,159.5) -- cycle ;
			\draw  [color={rgb, 255:red, 208; green, 2; blue, 27 }  ,draw opacity=1 ] (185.33,135.57) -- (183.36,125.7) -- (192.19,130.54) ;
			\draw  [color={rgb, 255:red, 74; green, 144; blue, 226 }  ,draw opacity=1 ] (140.5,203.13) -- (149.63,207.38) -- (140.5,211.63) ;
			\draw  [color={rgb, 255:red, 74; green, 144; blue, 226 }  ,draw opacity=1 ] (151.63,143) -- (142.04,139.96) -- (150.54,134.57) ;
			\draw    (224,156) -- (272.75,157.21) ;
			\draw [shift={(274.74,157.26)}, rotate = 181.42] [color={rgb, 255:red, 0; green, 0; blue, 0 }  ][line width=0.75]    (10.93,-3.29) .. controls (6.95,-1.4) and (3.31,-0.3) .. (0,0) .. controls (3.31,0.3) and (6.95,1.4) .. (10.93,3.29)   ;
		\end{tikzpicture}
	\end{center}
	In this case the upper left corner of the diagram in Definition~\ref{sliceregionsdefinition} is provided by the disjoint union of the upper copy of $\Sigma$ in $\partial S$ and the blue components of $\partial X$.  
\end{ex}

\begin{rem}
\label{remarksliceregions}
The construction presented in Example~\ref{exslicenonclosed} is again generic. Given a hypersurface $\Sigma$ such that $\partial\Sigma\neq\emptyset$, the pinched cylinder $S$ on $\Sigma$, see \cite{Walker}, is a slice region on $\Sigma$. Every non-closed hypersurface thus admits a slice region. From this and from Remark~\ref{remarkclosedsliceregions} it follows that every hypersurface admits a slice region.
\end{rem}

The following lemma proves that the collection of slice regions on hypersurfaces is closed under taking disjoint unions and reversing orientations. 

\begin{lem}\label{ClosureofSliceRegions}
Let $\Sigma$ be a hypersurface. Let $(S,s)$ be a slice region on $\Sigma$. $(S,s)$ is also a slice region on $\overline{\Sigma}$ and $(\overline{S},\overline{s})$ is also a slice region on $\Sigma$ and $\overline{\Sigma}$. Let $\Sigma_i$, $i=1,2$ be hypersurfaces. Let $(S_i,s_i)$ be a slice region for $\Sigma_i$, $i=1,2$. The pair $(S_1\sqcup S_2,s_1\sqcup s_2)$ is a slice region on $\Sigma_1\sqcup \Sigma_2$. 
\end{lem}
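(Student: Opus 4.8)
The plan is to verify, for each pair asserted to be a slice region, the gluing triviality condition of Definition~\ref{sliceregionsdefinition}: for every test datum $(X,\Gamma,\gamma)$ I must exhibit the required relative gluing diagram. In every case I would obtain such a diagram by transporting a diagram already guaranteed by the hypothesis $(S,s)$ (or $(S_i,s_i)$), using the structural facts available: relative gluing diagrams are closed under orientation reversal and disjoint union (Observation~\ref{observationrelativegluingclosed}), gluing functions are closed under the same operations (Observation~\ref{observationgluingfunctionsclosed}), and these operations are involutive ($\overline{\overline{\Sigma}}=\Sigma$, $\overline{\overline{S}}=S$).

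First I would treat the orientation statement for $(\overline{S},\overline{s})$ on $\overline{\Sigma}$. Given a test gluing function $\gamma:\overline{\Sigma}\sqcup\Gamma\to\partial X$, I reverse orientations to obtain $\overline{\gamma}:\Sigma\sqcup\overline{\Gamma}\to\partial\overline{X}$, a gluing function by Observation~\ref{observationgluingfunctionsclosed}. Feeding $(\overline{X},\overline{\Gamma},\overline{\gamma})$ into the gluing triviality of $(S,s)$ on $\Sigma$ produces a relative gluing diagram gluing $S\sqcup\overline{X}$ onto $\overline{X}$; reversing the orientation of this entire diagram (Observation~\ref{observationrelativegluingclosed}) yields a relative gluing diagram gluing $\overline{S}\sqcup X$ onto $X$, which is exactly the diagram demanded of the $\overline{\Sigma}$-slice $(\overline{S},\overline{s})$. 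The two remaining orientation statements, ``$(S,s)$ is a slice on $\overline{\Sigma}$'' and ``$(\overline{S},\overline{s})$ is a slice on $\Sigma$'', are interchanged by this same orientation-reversal, so it suffices to prove the former.

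The statement that $(S,s)$ is itself a slice on $\overline{\Sigma}$ is the crux, and I expect it to be the main obstacle. The difficulty is that the $\overline{\Sigma}$-slice condition requires gluing $S$ along the boundary copy $s(\Sigma)$, whereas the hypothesis furnishes trivializations only for gluings along the opposite copy $s(\overline{\Sigma})$; and a naive orientation flip replaces $S$ by $\overline{S}$, so the statement cannot be obtained by formal symmetry alone. The content to be proved is that a slice trivializes gluing from \emph{either} of its two boundary copies (a two-sided unit property). I would establish this by comparing $S$ with the standard cylinder (resp.\ pinched-cylinder) slice $(C,c)$ on $\Sigma$ of Remarks~\ref{remarkclosedsliceregions} and \ref{remarksliceregions}: gluing $S$ and $C$ to one another and invoking the gluing triviality of each produces a homeomorphism $S\cong C$ carrying $s$ to $c$. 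Since $C$ manifestly admits the reflection swapping its two ends (and reversing orientation), this reflection transports to $S$ and converts any $\overline{\Sigma}$-test for $S$ into a $\Sigma$-test to which the hypothesis applies. Carrying out the comparison---producing the homeomorphism $S\cong C$ compatibly with the boundary decompositions and checking that it intertwines the two ends---is the delicate step.

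Finally, for the disjoint union $(S_1\sqcup S_2,s_1\sqcup s_2)$ on $\Sigma_1\sqcup\Sigma_2$, I would absorb the two slices one at a time. Given a test $\beta:(\Sigma_1\sqcup\Sigma_2)\sqcup\Gamma\to\partial X$, I read it as a $\Sigma_1$-test with free part $\Sigma_2\sqcup\Gamma$ and apply the slice property of $S_1$ to glue $S_1$ onto $X$ with result homeomorphic to $X$; the new boundary still carries an undisturbed copy of $\Sigma_2$, since $S_1$ was attached along $\Sigma_1$, which is disjoint from $\Sigma_2$. Applying the slice property of $S_2$ then absorbs $S_2$ as well. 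Because $\Sigma_1$ and $\Sigma_2$ are disjoint, the two gluings take place along disjoint parts of the boundary, so they commute and assemble into a single relative gluing diagram with self-gluing hypersurface $\Sigma_1\sqcup\Sigma_2$; this disjointness is precisely what lets me sidestep the general composition-of-relative-gluings subtleties flagged before this subsection. The disjoint-union parts of Observations~\ref{observationgluingfunctionsclosed} and \ref{observationrelativegluingclosed} handle the remaining bookkeeping, and a final check that the combined map is a coequalizer with boundary gluing function $\beta$ completes the argument.
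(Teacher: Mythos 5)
The paper states Lemma~\ref{ClosureofSliceRegions} without proof, so there is no official argument to measure you against; judged on its own terms, most of your proposal is sound and indeed more careful than the surrounding text. The purely formal parts are correct: reversing the orientation of an entire relative gluing diagram (Observation~\ref{observationrelativegluingclosed}) shows that $(\overline{S},\overline{s})$ is a slice region on $\overline{\Sigma}$, and the same involution interchanges the remaining two orientation claims, so everything reduces to the two-sidedness statement that $(S,s)$ is itself a slice region on $\overline{\Sigma}$. Your identification of this as the genuine content is exactly right: the triviality condition of Definition~\ref{sliceregionsdefinition} is asymmetric, gluing only the $s|_{\overline{\Sigma}}$ end, and no formal flip alone converts it. The disjoint-union argument by sequential absorption is also correct: since $\Sigma_1$ and $\Sigma_2$ are disjoint, the two self-gluings take place along disjoint closed subsets of the boundary, the composite of the two gluing functions is a gluing function by Observation~\ref{observationgluingfunctionsclosed}, the iterated quotient is the coequalizer of the simultaneous identification, and the commuting squares of the two steps chain to give the square for $\beta$; this genuinely sidesteps the composition caveat at the end of Section~\ref{relativegluingsubsection}.

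The one real soft spot is the mechanism you propose for the crux. As stated, ``gluing $S$ and $C$ to one another and invoking the gluing triviality of each'' does not produce $S\cong C$: the slice property of $(S,s)$ applied to $X=C$ exhibits $C$ as the coequalizer of the gluing $s|_{\overline{\Sigma}}\sim c|_{\Sigma}$, while the slice property of $(C,c)$ applied to $X=S$ exhibits $S$ as the coequalizer of the \emph{different} gluing $c|_{\overline{\Sigma}}\sim s|_{\Sigma}$. Uniqueness of coequalizers then gives nothing, and invoking two-sidedness of $S$ to repair this would be circular. The fix is to reverse your order of steps: first establish two-sidedness for the concrete model $(C,c)$, which you can do with tools you already have in hand --- the reflection $(x,t)\mapsto(x,1-t)$ of the (pinched) cylinder is an orientation-preserving homeomorphism $\overline{C}\to C$ interchanging the two ends over the identity of $\Sigma$ (so no orientation-reversing self-homeomorphism of $\Sigma$ is needed, in any dimension $n$), and transporting the formally obtained $\overline{\Sigma}$-slice $(\overline{C},\overline{c})$ along it via Observation~\ref{observationRelativeClosedCompositionHomeo} makes $(C,c)$ an $\overline{\Sigma}$-slice. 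Then the $\Sigma$-slice property of $(S,s)$ at $X=C$ and the $\overline{\Sigma}$-slice property of $(C,c)$ at $X=S$ coequalize the \emph{same} pair $(s|_{\overline{\Sigma}},c|_{\Sigma})$, yielding $S\cong C$ compatibly with the boundary data, after which the transport you describe finishes the argument. (Alternatively, one can identify the coequalizer of $s|_{\overline{\Sigma}}\sim c|_{\Sigma}$ with $S$ directly by the topological collaring theorem, bypassing the slice property of $C$ altogether.) With this reordering your plan is complete.
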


\begin{lem}\label{sliceregionshomeomorphic}
Let $\Sigma$ be a hypersurface. If $(S,s)$ and $(S',s')$ are slice regions on $\Sigma$ then $S$ and $S'$ are homeomorphic.
\end{lem}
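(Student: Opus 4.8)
The plan is to exploit the \emph{gluing triviality} condition of Definition~\ref{sliceregionsdefinition}, which says precisely that gluing a slice region onto a region along the relevant copy of $\Sigma$ returns that region up to homeomorphism. This is the standard "units are unique" argument: form the region obtained by gluing $S$ and $S'$ to one another, and then identify that single region with each of $S$ and $S'$ separately.

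First I would fix notation for the two faces. Since $s\colon \overline{\Sigma}\sqcup\Sigma\to\partial S$ is a gluing function, $\partial S$ splits into a $\overline{\Sigma}$-face $\partial_- S := s(\overline{\Sigma})$ and a $\Sigma$-face $\partial_+ S := s(\Sigma)$, and likewise $\partial S' = \partial_- S' \sqcup \partial_+ S'$. I would then form the region $T$ obtained by gluing the $\overline{\Sigma}$-face of $S$ to the $\Sigma$-face of $S'$, i.e.\ the coequalizer identifying $s|_{\overline{\Sigma}}$ with $s'|_{\Sigma}$ along the two oppositely oriented copies of $\Sigma$. This $T$ is exactly the glued region appearing in a relative gluing diagram in the sense of Definition~\ref{relativegluingdefinition}, and being a colimit it is determined by its gluing data up to canonical homeomorphism.

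Next I would run gluing triviality twice on this one region $T$. Applying the gluing triviality of $(S,s)$, viewed as a slice region on $\Sigma$, to the region $X = S'$ with the boundary decomposition $\beta\colon\Sigma\sqcup\overline{\Sigma}\to\partial S'$ obtained by reindexing $s'$ (so $\Gamma = \overline{\Sigma}$ and $\beta|_\Sigma = s'|_\Sigma$) produces a relative gluing diagram whose bottom-left gluing map is $s|_{\overline{\Sigma}}\sqcup\beta|_\Sigma$; its coequalizer is precisely $T$, and its target is homeomorphic to $S'$, whence $T \cong S'$. By Lemma~\ref{ClosureofSliceRegions}, $(S',s')$ is also a slice region on $\overline{\Sigma}$; applying \emph{its} gluing triviality to $X = S$ then glues the $\Sigma$-face of $S'$ to the $\overline{\Sigma}$-face of $S$ — the very same identification defining $T$ — and yields $T \cong S$. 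Composing the two homeomorphisms gives $S \cong T \cong S'$, as desired.

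The main obstacle I anticipate is purely bookkeeping: I must check that the two invocations of gluing triviality glue \emph{the same pair of faces with the same orientations}, so that they genuinely refer to one common region $T$ rather than two a priori distinct glued regions. This is exactly where the orientation-reversal clause of Lemma~\ref{ClosureofSliceRegions} is indispensable — read as a slice region on $\Sigma$ alone, the gluing triviality of $(S',s')$ would identify the $\overline{\Sigma}$-face $\partial_- S'$ rather than $\partial_+ S'$, producing a different composite and breaking the argument. Once the faces and orientations are matched so that both diagrams present the same coequalizer, the conclusion follows immediately from uniqueness of that coequalizer up to canonical homeomorphism.
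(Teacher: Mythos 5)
Your proposal is correct and is essentially the paper's own argument: the paper likewise applies gluing triviality twice, with $X=S'$ and $X=S$, and concludes that $S$ and $S'$ are both coequalizers of the single diagram $\Sigma\rightrightarrows(\Sigma\sqcup\overline{\Sigma})\sqcup(\Sigma\sqcup\overline{\Sigma})\xrightarrow{s\sqcup s'}\partial S\sqcup\partial S'\hookrightarrow S\sqcup S'$, hence homeomorphic --- your intermediate region $T$ is just this coequalizer given a name. Your explicit invocation of the orientation-reversal clause of Lemma~\ref{ClosureofSliceRegions} to ensure both applications of gluing triviality identify the \emph{same} pair of oppositely oriented faces is bookkeeping the paper's terse proof leaves implicit, and it is handled correctly.
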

\begin{proof}
	Let $\Sigma$ be a hypersurface. Let $(S,s)$ and $(S',s')$ be slice regions on $\Sigma$. If we make $X$ be equal to $S$ or $S'$ in Definition~\ref{sliceregionsdefinition} we obtain relative gluing diagrams $f:S\sqcup S'\to S$ and $f':S\sqcup S'\to S'$, by Definition~\ref{relativegluingdefinition} both $S$ and $S'$ are coequalizers of the diagram:
	\[\Sigma\rightrightarrows (\Sigma\sqcup\overline{\Sigma})\sqcup (\Sigma\sqcup \overline{\Sigma})\xrightarrow[]{s\sqcup s'} \partial S\sqcup \partial S'\hookrightarrow S\sqcup S'\]
	and are thus homeomorphic. 
\end{proof}

Remark~\ref{remarkclosedsliceregions}, Remark~\ref{remarksliceregions} and Lemma~\ref{sliceregionshomeomorphic} imply that every slice region $(S,s)$ on a hypersurface $\Sigma$ is homeomorphic to the standard cylinder $\Sigma\times [0,1]$ on $\Sigma$ in the case in which $\Sigma$ is closed and to the pinched cylinder on $\Sigma$ in the case in which $\partial\Sigma\neq\emptyset$. The reader might be wondering why we have chosen to define slice regions through the formal properties appearing in Definition~\ref{sliceregionsdefinition} and not directly as cylinders in the case in which the hypersurface $\Sigma$ is closed or as pinched cylinders in the case in which $\partial\Sigma\neq\emptyset$. We have done this in order to lay the formal foundations of slice regions in such a way that Definition~\ref{sliceregionsdefinition} may be adapted to contexts where spacetime systems carry more complicated structures, e.g., area functions \cite{RuSz:areaqft}, defects \cite{RunkelCarquevilleDefects}, metrics \cite{StolzTeichnerSusy}, etc. We urge the reader to think of slice regions as regions that "behave like topological cylinders" and that in the particular context of this paper happen to be topological cylinders. We end this section with the following observation.

\begin{obs}
	Every slice region can be interpreted as a relative gluing diagram in the sense of Definition~\ref{relativegluingdefinition}. To see this let $\Sigma$ be a hypersurface and let $(S,s)$ be a slice region on $\Sigma$. $S$ and $s$ can be identified with the following relative gluing diagram:
	\begin{center}
		\begin{tikzpicture}
			\matrix (m) [matrix of math nodes,row sep=3em,column sep=3em,minimum width=2em]
			{\Sigma\sqcup \overline{\Sigma}&&&\partial S \\
				\Sigma\sqcup \overline{\Sigma} & \partial S &S&S\\};
			\path[-stealth]
			(m-1-1) edge node [above] {$s$} (m-1-4)
			edge node [left]  {$id$} (m-2-1)
			
			(m-2-1) edge node [below] {$s$} (m-2-2)
			(m-2-2) edge node [below]  {} (m-2-3)
			(m-2-3) edge node [below] {$id$} (m-2-4)
			(m-1-4) edge node [right] {} (m-2-4)
			
			;
		\end{tikzpicture}
	\end{center}
\end{obs}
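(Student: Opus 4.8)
The plan is to exhibit the displayed square as a \emph{degenerate instance} of Definition~\ref{relativegluingdefinition}, obtained by matching the abstract data of that definition against the concrete data $(S,s)$ of the slice region. Comparing the displayed diagram with the summarizing diagram of Remark~\ref{remarkrelativegluing}, the dictionary I would set up is the following: take both regions $X$ and $X'$ of Definition~\ref{relativegluingdefinition} to be $S$; take the gluing hypersurface (the $\Sigma$ of Definition~\ref{relativegluingdefinition}) to be the empty hypersurface $\emptyset$; take the boundary hypersurface (the $\Lambda$ of Definition~\ref{relativegluingdefinition}) to be $\Sigma\sqcup\overline{\Sigma}$; take both $\alpha$ and $\beta$ to be the slice boundary decomposition $s$; and take $f$ to be $\id_S$. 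Under this dictionary the bottom-left object $\emptyset\sqcup\overline{\emptyset}\sqcup(\Sigma\sqcup\overline{\Sigma})$ is canonically $\Sigma\sqcup\overline{\Sigma}$, and the left vertical inclusion of $\Lambda$ into $\Sigma\sqcup\overline{\Sigma}\sqcup\Lambda$ collapses to the identity, matching the label $\id$ in the statement.

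Having fixed this identification, I would verify the two defining conditions of Definition~\ref{relativegluingdefinition}. First, $s$ is a gluing function by the very definition of a slice region (Definition~\ref{sliceregionsdefinition}), and $\id_S$ is trivially a gluing function, so all the arrows in the square are of the required type. Second, I must check that $f=\id_S$ is a coequalizer of the diagram
\[\emptyset\rightrightarrows \Sigma\sqcup\overline{\Sigma}\xrightarrow[]{s}\partial S\hookrightarrow S.\]
Here both parallel arrows emanate from the empty manifold, so the pair imposes no identification whatsoever; the coequalizer of such an empty parallel pair is precisely the identity $\id_S$, which is therefore the required coequalizer. Finally, the commuting-square condition on $\beta$ reduces to the equality $\id_S\circ(\partial S\hookrightarrow S)\circ s = (\partial S\hookrightarrow S)\circ s$, where on the left $\alpha|_{\Lambda}=s$ and on the right $\beta=s$; both sides are the single composite of $s$ with the inclusion $\partial S\hookrightarrow S$, so the square commutes.

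There is essentially no obstacle here, and the statement is really the recording of a degenerate configuration of the formalism; the only point requiring any care is the interpretation of the coequalizer condition when the gluing hypersurface is empty. One must observe that the "self-gluing along $\Sigma$ and $\overline{\Sigma}$" encoded by the coequalizer collapses to no gluing at all when $\Sigma=\emptyset$, so that $S$ is identified with itself via $\id_S$. This is entirely parallel to Observation~\ref{ObsHomeomorphismsRelative}, where a homeomorphism is realized as a relative gluing diagram with empty gluing hypersurface; the present statement is the special case $f=\id_S$ of that construction, with the sole difference that the boundary gluing function $\alpha$ is taken to be the nontrivial decomposition $s$ of $\partial S$ rather than an identity.
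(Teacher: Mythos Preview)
Your proof is correct and matches the paper's (implicit) approach: the paper records this as an observation without proof, simply displaying the diagram, and your verification that it is the $\Sigma=\emptyset$, $f=\id_S$ instance of Definition~\ref{relativegluingdefinition}---exactly in the spirit of Observation~\ref{ObsHomeomorphismsRelative}---is the intended reading.
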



\section{Implementing gluing functions: CQFT}
\label{sec:axiomatics}

In this section we combine the gluing formalism introduced in Section~\ref{usualgluings} with GBQFT. The axiomatic system of Definition~\ref{AFdefinitionClassic} requires an underlying notion of gluing, but is rather agnostic about how such a notion is defined. Taking the notion of gluing to be given by the formalism of Section~\ref{usualgluings} allows us to extract a tighter axiomatic system that (as we shall see) is more amenable to categorification. The new axiomatic system, to be presented in this section, cannot be equivalent to the one of Definition~\ref{AFdefinitionClassic} as it has a different scope, presupposing the formalism of Section~\ref{usualgluings}. Moreover, it integrates notions of symmetry of hypersurfaces and regions inherent in our novel gluing formalism, thus exhibiting equivariance. To emphasize the key feature of categorical compositionality (which subsumes in our mind both locality and functoriality) we shall refer to the new axiomatic system as \emph{Compositional Quantum Field Theory (CQFT)}. At the same time, in laying down the new axiomatic system we take the opportunity to be more precise as compared to GBQFT, especially concerning coherence conditions. What is more, we present two versions of the axiomatic system, differing only in including or not one specific axiom, corresponding to Axiom~\textbf{(T2a)} of Definition~\ref{AFdefinitionClassic}. We also provide missing explanations of certain notions used in Definition~\ref{AFdefinitionClassic}, but omitted in the introduction for brevity.

\subsection{Axiomatic presentation}
\label{sec:axioms}

We begin with a few technical remarks on Definition~\ref{AFdefinitionClassic}. In Axiom~\textbf{(T2)} decompositions of hypersurfaces $\Sigma$ of the form $\Sigma=\Sigma_1\cup...\cup\Sigma_n$ represent inductive decompositions as unions of hypersurfaces $\Lambda\cup\Sigma_i$ where $\Lambda$ is the union $\Sigma_1\cup...\cup\Sigma_{i-1}$ and where the symbol $\cup$ is taken to mean either the disjoint union $\Lambda\sqcup\Sigma_i$ or a gluing $\Lambda\cup_{B}\Sigma_i$ along a common closed codimension 0 sub-hypersurface $B$ of $\partial\Lambda$ and $\partial\Sigma_i$. An example of this in dimension $d=2$ is the decomposition of
\begin{center}

\tikzset{every picture/.style={line width=0.75pt}} 
\begin{tikzpicture}[x=0.75pt,y=0.75pt,yscale=-1,xscale=1]
	
	\draw [color={rgb, 255:red, 126; green, 211; blue, 33 }  ,draw opacity=1 ]   (322.73,179.53) .. controls (341.79,165.24) and (348.06,169.88) .. (353,177.17) .. controls (357.94,184.46) and (362.39,196.04) .. (383.33,180.33) ;
	\draw [color={rgb, 255:red, 74; green, 144; blue, 226 }  ,draw opacity=1 ]   (236.33,186.56) .. controls (235,171.17) and (248.67,182.17) .. (253.33,179.5) .. controls (258,176.83) and (254.67,108.83) .. (294.11,165.87) ;
	\draw [color={rgb, 255:red, 208; green, 2; blue, 27 }  ,draw opacity=1 ]   (236.33,186.56) .. controls (239.67,218.5) and (261.67,181.17) .. (269,187.17) .. controls (276.33,193.17) and (316.67,199.17) .. (294.11,165.87) ;
	
	\draw (241,159.07) node [anchor=north west][inner sep=0.75pt]  [font=\tiny,color={rgb, 255:red, 74; green, 144; blue, 226 }  ,opacity=1 ]  {$\Sigma _{1}$};
	\draw (267,196.07) node [anchor=north west][inner sep=0.75pt]  [font=\tiny,color={rgb, 255:red, 208; green, 2; blue, 27 }  ,opacity=1 ]  {$\Sigma _{2}$};
	\draw (352,165.07) node [anchor=north west][inner sep=0.75pt]  [font=\tiny,color={rgb, 255:red, 126; green, 211; blue, 33 }  ,opacity=1 ]  {$\Sigma _{3}$};

\end{tikzpicture}
\end{center}
where all the hypersurfaces $\Sigma_i$ are open strings, and the union $\Sigma_1\cup\Sigma_2$ represents the closing $\Lambda$ of the blue and red open strings $\Sigma_1,\Sigma_2$ along their common boundary. The union $\Lambda\cup\Sigma_3$ represents the disjoint union of the closed string $\Sigma_1\cup\Sigma_2$ and the green open string $\Sigma_3$. The associativity condition on the morphism $\tau$ in Axiom~\textbf{(T2)} should be taken to mean that the value of $\tau$ does not depend on the order in which the inductive union procedure described above is performed. Precisely, the decomposition $\Sigma=\Sigma_1\cup\Sigma_2\cup\Sigma_3$ above can be interpreted as $\Sigma=(\Sigma_1\cup\Sigma_2)\cup\Sigma_3$ or as $\Sigma=\Sigma_1\cup(\Sigma_2\cup\Sigma_3)$. In both cases we obtain a composition of morphisms $\tau_{\Sigma_1\cup\Sigma_2,\Sigma_3}\circ\tau_{\Sigma_1,\Sigma_2}$ and $\tau_{\Sigma_1,\Sigma_2\cup\Sigma_3}\circ\tau_{\Sigma_2\cup\Sigma_3}$. Associativity requires that the equation $\tau_{\Sigma_1\cup\Sigma_2,\Sigma_3}\circ\tau_{\Sigma_1,\Sigma_2}=\tau_{\Sigma_1,\Sigma_2\cup\Sigma_3}\circ\tau_{\Sigma_2\cup\Sigma_3}$ should always hold, i.e., that the following diagram be commutative:
\begin{center}
\begin{tikzpicture}
  \matrix (m) [matrix of math nodes,row sep=4em,column sep=4em,minimum width=2em]
  {
     \cH_{\Sigma_1}\otimes\cH_{\Sigma_2}\otimes\cH_{\Sigma_3}&\cH_{\Sigma_1\sqcup\Sigma_2}\otimes\cH_{\Sigma_3} \\
		 \cH_{\Sigma_1}\otimes\cH_{\Sigma_2\sqcup\Sigma_3}&\cH_{\Sigma_1\sqcup\Sigma_2\sqcup\Sigma_3} \\};
  \path[-stealth]
    (m-1-1) edge node [left] {$id_{\cH_{\Sigma_1}}\otimes\tau$} (m-2-1)
            edge node [above] {$\tau\otimes id_{\cH_{\Sigma3}}$} (m-1-2)
    (m-1-2) edge node [right] {$\tau$} (m-2-2)
    (m-2-1) edge node [below] {$\tau$}(m-2-2);
\end{tikzpicture}
\end{center}
We sometimes omit the subscript in this notation and simply write $\tau$ for $\tau_{\Sigma_1,...,\Sigma_n}$. The above diagram will be part of the new axiomatic system to be presented.

The two interpretations of the union symbol $\cup$ in Axiom~\textbf{(T2)} are of a different nature: While the disjoint union $\Sigma\sqcup \Lambda$ of hypersurfaces is a parallel operation depending only on the components $\Sigma$ and $\Lambda$, the gluing $\Sigma\cup_B\Lambda$ of $\Sigma$ and $\Lambda$ along a common closed sub-hypersurface of $\partial\Sigma$ and $\partial\Lambda$ is a compositional operation, depending on extra data provided by $B$. Gluing is closer to cospans, see\cite{BaezCourser}, while disjoint union is closer to tensor products. It is thus natural to consider gluings and disjoint unions as separate associative operations as in Axioms~\textbf{(T5a)} and \textbf{(T5b)} above. Doing this requires separate versions of Axioms~\textbf{(T2)} and \textbf{(T2b)} providing isometric isomorphisms $\tau$ and  epimorphisms $\gamma$ associated to disjoint union and gluing respectively, satisfying associativity diagrams as described above, together with an extra axiom requiring compatibility of $\tau$ and $\gamma$ allowing us to combine the two operations. We do this in Definition~\ref{QFTAFfin}. Finally, when interpreting the symbol $\cup$ in expressions $\Sigma=\Sigma_1\cup...\cup\Sigma_n$ appearing in Axiom~\textbf{(T2)} as disjoint union $\sqcup$, it is implicitly assumed that the isomorphism $\tau$ commutes with any possible rearrangement of the factors appearing in the tensor product $\cH_{\Sigma_1}\otimes...\otimes \cH_{\Sigma_n}$. We again include this consideration in Definition~\ref{QFTAFfin}.

We divide the axioms of Definition~\ref{QFTAFfin} into axioms providing structure data, numbered \ref{Statespacesaxiom}--\ref{AmplitudeMapsAxiom} below, and axioms providing coherence equations for this data, numbered \ref{AssociativityTensorIsometryAxiom}--\ref{AmplitudeRelativeGluingAxiom} below.
We also include the coherence Axiom~\ref{TensorIsometryandGradedSymmetryAxiom} as an optional separate axiom. The numbering scheme and general structure of the new axiomatic system differs from that of Definition~\ref{AFdefinitionClassic}. In Section~\ref{sec:compareaxioms} below we explain the differences in detail. At first, we present the axioms for the case that all superspaces involved are finite-dimensional. In this case we call the CQFT \emph{rigid}.

\begin{definition}
\label{QFTAFfin}
A \emph{rigid CQFT} consists of the following data:
\begin{enumerate}
\item \label{Statespacesaxiom}\textbf{State spaces:}
There is a complex superspace $\cH_\Sigma$ for every hypersurface $\Sigma$. We call $\cH_\Sigma$ the \textit{state space} associated to $\Sigma$. The equation $\cH_\emptyset=\C$ holds.

\item \textbf{Graded involution:} \label{ConjugationIsometryAxiom}
There is an anti-linear graded isomorphism $\iota_\Sigma:\cH_\Sigma\to\cH_{\overline{\Sigma}}$ for every hypersurface $\Sigma$. The equation $\iota_{\overline{\Sigma}}\iota_\Sigma=id_{\cH_\Sigma}$ holds for every $\Sigma$. We call $\iota_\Sigma$ the \textit{graded involution} associated to $\Sigma$.

\item \textbf{Graded tensor isomorphism:} \label{TensorIsometryAxiom}
There is a graded isomorphism $\tau$ from $\cH_{\Sigma_1}\otimes\cH_{\Sigma_2}$ to $\cH_{\Sigma_1\sqcup\Sigma_2}$ for every pair of hypersurfaces $\Sigma_1,\Sigma_2$. We call $\tau$ the \textit{graded tensor isomorphism} associated to $\Sigma_1,\Sigma_2$.

\item \textbf{Graded gluing epimorphism:} \label{GluingIsometryAxiom}
There is a graded epimorphism $\gamma_f$ from $\cH_\Sigma$ to $\cH_{\Sigma'}$ for every hypersurface gluing function $f:\Sigma\to\Sigma'$. We call $\gamma$ the \textit{graded gluing epimorphism} associated to $f$. If $f$ is a homeomorphism, then $\gamma_f$ is a graded isomorphism.  Moreover, if $f$ is the identity then so is $\gamma_f$.

\item \textbf{Graded amplitude maps:} \label{AmplitudeMapsAxiom}
There is a graded linear map $\rho_X:\cH_{\partial X}\to\C$ for every region $X$. We call $\rho_X$ the \textit{graded amplitude map} associated to $X$. We require that $\rho_\emptyset=\id_\C$.
\end{enumerate}
The data presented above satisfies the following conditions:
\begin{enumerate}[label=(\Alph*)]

 \item \textbf{Associativity of tensor isomorphism:}
 \label{AssociativityTensorIsometryAxiom}
  Let $\Sigma_1,\Sigma_2,\Sigma_3$ be hypersurfaces. Then, the following square commutes: 
 \begin{center}
 \begin{tikzpicture}
  \matrix (m) [matrix of math nodes,row sep=4em,column sep=4em,minimum width=2em]
  {
     \cH_{\Sigma_1}\otimes\cH_{\Sigma_2}\otimes\cH_{\Sigma_3}&\cH_{\Sigma_1\sqcup\Sigma_2}\otimes\cH_{\Sigma_3} \\
		 \cH_{\Sigma_1}\otimes\cH_{\Sigma_2\sqcup\Sigma_3}&\cH_{\Sigma_1\sqcup\Sigma_2\sqcup\Sigma_3} \\};
  \path[-stealth]
    (m-1-1) edge node [left] {$id_{\cH_{\Sigma_1}}\otimes\tau$} (m-2-1)
            edge node [above] {$\tau\otimes id_{\cH_{\Sigma3}}$} (m-1-2)
    (m-1-2) edge node [right] {$\tau$} (m-2-2)
    (m-2-1) edge node [below] {$\tau$}(m-2-2);
 \end{tikzpicture}
 \end{center}
 
 \item \textbf{Associativity of gluing epimorphism:}
 \label{AssociativityofGluingIsometryAxiom}
 Let $\Sigma$ be a hypersurface. Let
 \begin{center}
 \begin{tikzpicture}
  \matrix (m) [matrix of math nodes,row sep=4em,column sep=4em,minimum width=2em]
  {\Sigma&\Sigma_{ 2} \\
	\Sigma_{ 1}&\Sigma_{ 1,2}\\};
  \path[-stealth]
    (m-1-1) edge node [left] {$f_1$} (m-2-1)
            edge node [above] {$f_2$} (m-1-2)
    (m-1-2) edge node [right] {$f_{2,1}$} (m-2-2)
    (m-2-1) edge node [below] {$f_{1,2}$}(m-2-2);
 \end{tikzpicture}
 \end{center}
 be a commuting square of hypersurface gluing functions. Then, the following square commutes:
 \begin{center}
 \begin{tikzpicture}
  \matrix (m) [matrix of math nodes,row sep=4em,column sep=4em,minimum width=2em]
  {\cH_\Sigma&\cH_{\Sigma_{ 2}} \\
	\cH_{\Sigma_{ 1}}&\cH_{\Sigma_{ 1,2}}\\};
  \path[-stealth]
    (m-1-1) edge node [left] {$\gamma_{f_1}$} (m-2-1)
            edge node [above] {$\gamma_{f_2}$} (m-1-2)
    (m-1-2) edge node [right] {$\gamma_{f_{2.1}}$} (m-2-2)
    (m-2-1) edge node [below] {$\gamma_{f_{1,2}}$}(m-2-2);
 \end{tikzpicture}
 \end{center}
 
 \item \textbf{Tensor and involution isomorphisms:}
 \label{ConjugationTensorIsometryAxiom}
 Let $\Sigma_1,\Sigma_2$ be hypersurfaces. Then, the following diagram commutes:
 \begin{center}
	\begin{tikzpicture}
		\matrix (m) [matrix of math nodes,row sep=3em,column sep=3em,minimum width=2em]
		{\cH_{\Sigma_1}\times\cH_{\Sigma_2}&&\cH_{\Sigma_2}\times\cH_{\Sigma_1} \\
		 \cH_{\overline{\Sigma}_1}\times\cH_{\overline{\Sigma}_2}&&\cH_{\Sigma_2}\otimes\cH_{\Sigma_1}\\
		 \cH_{\overline{\Sigma}_1}\otimes\cH_{\overline{\Sigma}_2}&&\cH_{\Sigma_2\sqcup\Sigma_1}\\
		 &\cH_{\overline{\Sigma}_1\sqcup\overline{\Sigma}_2}&\\};
		\path[-stealth]
		(m-1-1) edge node [left] {$\iota\times\iota$} (m-2-1)
		edge node [above]  {$\sigma$} (m-1-3)
		(m-1-3) edge node [right] {} (m-2-3)
		(m-2-1) edge node [left] {} (m-3-1)
		(m-3-1) edge node [below]  {$\tau$} (m-4-2)
		(m-2-3) edge node [right] {$\tau$} (m-3-3)
		(m-3-3) edge node [below] {$\iota$} (m-4-2)
		
		;
	\end{tikzpicture}
 \end{center}
 Here, $\sigma$ is the flip map and the maps from the cartesian products to the tensor products are the usual projections.

 \item \textbf{Tensor isomorphism and gluing epimorphism:}
 \label{TensorandGluingIsometryAxiom}
 Let $f_1:\Sigma_1\to\Sigma_1'$, $f_2:\Sigma_2\to\Sigma_2'$ be hypersurface gluing functions. The following square commutes: 
 \begin{center}
 \begin{tikzpicture}
  \matrix (m) [matrix of math nodes,row sep=4em,column sep=4em,minimum width=2em]
  {\cH_{\Sigma_1}\otimes\cH_{\Sigma_2}&\cH_{\Sigma_{1}'}\otimes\cH_{\Sigma_{2}'} \\
	\cH_{\Sigma_1\sqcup\Sigma_2}&\cH_{\Sigma_1'\sqcup\Sigma_2'} \\};
  \path[-stealth]
    (m-1-1) edge node [left] {$\tau_{\Sigma_1,\Sigma_2}$} (m-2-1)
            edge node [above] {$\gamma_{f_1}\otimes\gamma_{f_2}$} (m-1-2)
    (m-1-2) edge node [right] {$\tau_{\Sigma_1',\Sigma_2'}$} (m-2-2)
    (m-2-1) edge node [below] {$\gamma_{f_1\sqcup f_2}$}(m-2-2);
 \end{tikzpicture}
 \end{center}

 \item \textbf{Gluing homomorphism and involution:}
 \label{ConjugationGluingIsometryAxiom}
 Let $f:\Sigma\to\Sigma'$ be a gluing function of hypersurfaces. The following square commutes:
 \begin{center}
 \begin{tikzpicture}
  \matrix (m) [matrix of math nodes,row sep=4em,column sep=4em,minimum width=2em]
  {
   \cH_\Sigma&\cH_{\Sigma'} \\
   \cH_{\overline{\Sigma}}&\cH_{\overline{\Sigma'}} \\};
  \path[-stealth]
    (m-1-1) edge node [left] {$\iota_{\Sigma}$} (m-2-1)
            edge node [above] {$\gamma_f$} (m-1-2)
    (m-1-2) edge node [right] {$\iota_{\Sigma'}$} (m-2-2)
    (m-2-1) edge node [below] {$\gamma_{\overline{f}}$}(m-2-2);
 \end{tikzpicture}
 \end{center}

 \item \textbf{Amplitude and tensor isomorphism:}\label{AmplitudeandTensorIsometryAxiom}
 Let $X_1,X_2$ be regions. The following diagram commutes:
 \begin{center}
	\begin{tikzpicture}
	 \matrix (m) [matrix of math nodes,row sep=4em,column sep=4em,minimum width=2em]
	 {\cH_{\partial X_1}\tens\cH_{\partial X_2}&\C\tens\C \\
	  \cH_{\partial X_1 \sqcup\partial X_2}&\C \\};
	 \path[-stealth]
	   (m-1-1) edge node [above] {$\rho_{X_1}\tens\rho_{X_2}$} (m-1-2)
	   (m-1-1) edge node [left]  {$\tau$} (m-2-1)
	   (m-2-1) edge node [below] {$\rho_{X_1\sqcup X_2}$} (m-2-2)
	   (m-1-2) edge node [right]  {$\id$} (m-2-2)
	   ;
    \end{tikzpicture}
 \end{center}
   
 \item \textbf{Amplitude and involution:}
 \label{AmplitudeConjugationIsometryAxiom}
 Let $X$ be a region. The following square commutes:
 \begin{center}
 \begin{tikzpicture}
  \matrix (m) [matrix of math nodes,row sep=4em,column sep=4em,minimum width=2em]
  {\cH_{\partial X}&\cH_{\overline{\partial X}} \\
   \C&\C\\};
  \path[-stealth]
    (m-1-1) edge node [above] {$\iota_{\partial X}$} (m-1-2)
    (m-1-1) edge node [left]  {$\rho_X$} (m-2-1)
    (m-2-1) edge node [below] {$\overline{(\cdot)}$} (m-2-2)
    (m-1-2) edge node [right]  {$\rho_{\overline{X}}$} (m-2-2)
    ;
 \end{tikzpicture}
 \end{center}
 Here, $\overline{(\cdot)}$ denotes the complex conjugation as an anti-linear automorphism of $\C$.

\item \textbf{Amplitude and homeomorphism:}
\label{AmplitudeEquivarianceAxiom}
Let $X$ and $X'$ be regions. Let $f:X\to X'$ be a homeomorphism. The following diagram commutes:
\begin{center}
	\begin{tikzpicture}
		\matrix (m) [matrix of math nodes,row sep=4em,column sep=3em,minimum width=2em]
		{\cH_{\partial X}&&\cH_{\partial X'} \\
			&\C&\\};
		\path[-stealth]
		(m-1-1) edge node [left] {$\rho_{X}$} (m-2-2)
		edge node [above] {$\gamma_{f|_{\partial X}}$} (m-1-3)
		(m-1-3) edge node [right] {$\rho_{X'}$} (m-2-2);
	\end{tikzpicture}
\end{center}

 \item \textbf{Evaluation and slice regions:}
 \label{SliceRegionsAxiom}
 Given a hypersurface $\Sigma$ and associated slice region $(S,\alpha)$, we define the following map:
 \begin{center}
	\begin{tikzpicture}
		\matrix (m) [matrix of math nodes,row sep=4em,column sep=4em,minimum width=2em]
		{
		\ev_{\Sigma}:\cH_{\overline{\Sigma}}\otimes\cH_\Sigma&\cH_{\overline{\Sigma}\sqcup\Sigma}&\cH_{\partial S}&\C \\};
		\path[-stealth]
		(m-1-1) edge node [above] {$\tau_{\overline{\Sigma},\Sigma}$} (m-1-2)
    	(m-1-2) edge node [above] {$\gamma_\alpha$}(m-1-3)
    	(m-1-3) edge node [above] {$\rho_S$}(m-1-4)
		;
	\end{tikzpicture}
 \end{center}
 (By Axiom~\ref{AmplitudeEquivarianceAxiom} combined with Lemma~\ref{examplecomputationhomeomorphisms}, this is independent of the choice of $(S,\alpha)$.) We require $\ev_{\Sigma}$ to be non-degenerate in the sense that there exists a map $\coev_{\Sigma}:\C\to\cH_{\Sigma}\tens\cH_{\overline{\Sigma}}$ such that the following diagram commutes:
 \begin{center}
	\begin{tikzpicture}
	 \matrix (m) [matrix of math nodes,row sep=4em,column sep=4em,minimum width=2em]
	 {\C\tens\cH_{\Sigma}& \cH_{\Sigma} & \\
	  \cH_{\Sigma}\tens\cH_{\overline{\Sigma}}\tens\cH_{\Sigma}&\cH_{\Sigma}\tens\C \\};
	 \path[-stealth]
	   (m-1-2) edge node [above] {$\id$} (m-1-1)
	   (m-1-1) edge node [left]  {$\coev_{\Sigma}\tens\id$} (m-2-1)
	   (m-2-1) edge node [below] {$\id\tens\ev_{\Sigma}$} (m-2-2)
	   (m-2-2) edge node [right] {$\id$} (m-1-2)
	   ;
    \end{tikzpicture}
 \end{center}

 \item \textbf{Amplitude and relative gluing diagrams:}
 \label{AmplitudeRelativeGluingAxiom}
 Let
 \begin{center}
 \begin{tikzpicture}
  \matrix (m) [matrix of math nodes,row sep=3em,column sep=3em,minimum width=2em]
  {\Lambda&&&\partial X' \\
   \Sigma\sqcup \overline{\Sigma}\sqcup\Lambda&\partial X&X&X'\\};
  \path[-stealth]
    (m-1-1) edge node [above] {$\beta$} (m-1-4)
            edge node [left]  {} (m-2-1)
    (m-2-1) edge node [below] {$\alpha$} (m-2-2)
    (m-2-2) edge node [below]  {} (m-2-3)
    (m-2-3) edge node [below] {$f$} (m-2-4)
    (m-1-4) edge node [right] {} (m-2-4)
    ;
 \end{tikzpicture}
 \end{center}
 be a relative gluing diagram. Then, the following diagram commutes:
 \begin{center}
	\begin{tikzpicture}
	 \matrix (m) [matrix of math nodes,row sep=3em,column sep=3em,minimum width=2em]
	 {\C\tens\cH_{\Lambda}&\cH_{\Lambda}&&&\cH_{\partial X'} \\
	  \cH_{\Sigma}\tens\cH_{\overline{\Sigma}}\tens\cH_{\Lambda}&\cH_{\Sigma\sqcup \overline{\Sigma}\sqcup\Lambda}&\cH_{\partial X}&\C &\C \\};
	 \path[-stealth]
	   (m-1-1) edge node [left] {$\coev_{\Sigma}\tens\id$} (m-2-1)
	   (m-1-2) edge node [above] {$\gamma_\beta$} (m-1-5)
			   edge node [above]  {$\id$} (m-1-1)
	   (m-2-1) edge node [below] {$\tau$} (m-2-2)
	   (m-2-2) edge node [below] {$\gamma_\alpha$} (m-2-3)
	   (m-2-3) edge node [below]  {$\rho_X$} (m-2-4)
	   (m-2-4) edge node [below] {$c_{f,\beta}$} (m-2-5)
	   (m-1-5) edge node [right] {$\rho_{X'}$} (m-2-5)
	   ;
	\end{tikzpicture}
 \end{center}
Here, $c_{f,\beta}\in\C\setminus\left\{0\right\}$ depends on the gluing diagram $(f,\beta)$. We call the number $c_{f,\beta}$ the gluing anomaly associated to $(f,\beta)$. The arrow labeled by $c_{f,\beta}$ represents multiplication by this number. We require $c_{f,\beta}=1$ for trivial gluing diagrams in the sense of Observation~\ref{ObsHomeomorphismsRelative}, i.e., for any homeomorphism $f:X\to X'$ the gluing anomaly $c_{f,f|_{\partial X}}$ is equal to 1.
\end{enumerate}
We call a CQFT \emph{symmetric} if in addition it satisfies the following axiom:
\begin{enumerate}[label=(\Alph*)]
\setcounter{enumi}{\numexpr18\relax}
\item \textbf{Tensor isomorphism and graded symmetry:}
\label{TensorIsometryandGradedSymmetryAxiom}
Let $\Sigma_1,\Sigma_2$ be hypersurfaces. Then, the following square commutes:
\begin{center}
\begin{tikzpicture}
 \matrix (m) [matrix of math nodes,row sep=4em,column sep=4em,minimum width=2em]
 {\cH_{\Sigma_1}\otimes\cH_{\Sigma_2}&\cH_{\Sigma_{2}}\otimes\cH_{\Sigma_{1}} \\
 \cH_{\Sigma_1\sqcup\Sigma_2}&\cH_{\Sigma_2\sqcup\Sigma_1}\\};
 \path[-stealth]
   (m-1-1) edge node [left] {$\tau_{\Sigma_1,\Sigma_2}$} (m-2-1)
		   edge node [above] {$s_{\Sigma_1,\Sigma_2}$} (m-1-2)
   (m-1-2) edge node [right] {$\tau_{\Sigma_2,\Sigma_1}$} (m-2-2)
   (m-2-1) edge node [below] {$\id$} (m-2-2);
\end{tikzpicture}
\end{center}
Here, $s_{\Sigma_1,\Sigma_2}$ is the isomorphism defined as $s_{\Sigma_1,\Sigma_2}(v_1\otimes v_2)=(-1)^{|v_1||v_2|}(v_2\otimes v_1)$ for homogeneous $v_1\in \cH_{\Sigma_1}$ and $v_2\in \cH_{\Sigma_2}$ and where the bottom-most arrow is the identity of the state space associated to $\Sigma_1\sqcup\Sigma_2=\Sigma_2\sqcup \Sigma_1$.
\end{enumerate}
\end{definition}

\subsection{First consequences and an inner product}

The first implication of the axioms that we remark on concerns the fact that Axiom~\ref{AmplitudeEquivarianceAxiom} arises as a special case of Axiom~\ref{AmplitudeRelativeGluingAxiom}. 
\begin{lem}\label{examplecomputationhomeomorphisms}
	Axiom~\ref{AmplitudeRelativeGluingAxiom} implies Axiom~\ref{AmplitudeEquivarianceAxiom}.
\end{lem}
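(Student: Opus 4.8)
The plan is to realize Axiom~\ref{AmplitudeEquivarianceAxiom} as the degenerate case of Axiom~\ref{AmplitudeRelativeGluingAxiom} in which the glued hypersurface is empty. Let $f\colon X\to X'$ be a homeomorphism. By Observation~\ref{ObsHomeomorphismsRelative} the choices $\Sigma=\emptyset$, $\Lambda=\partial X$, $\alpha=\id\colon \emptyset\sqcup\overline{\emptyset}\sqcup\partial X\to\partial X$, and $\beta=f|_{\partial X}\colon \partial X\to\partial X'$ assemble into a genuine relative gluing diagram for $f$. I would feed this diagram into Axiom~\ref{AmplitudeRelativeGluingAxiom} and then argue that, because $\Sigma$ is empty, the large hexagon of that axiom collapses precisely onto the triangle of Axiom~\ref{AmplitudeEquivarianceAxiom}.

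First I would read off the evident simplifications. Since $\alpha$ is the identity, the final clause of Axiom~\ref{GluingIsometryAxiom} gives $\gamma_\alpha=\id$; and since $\overline{\emptyset}=\emptyset$ we have $\Sigma\sqcup\overline{\Sigma}\sqcup\Lambda=\partial X$ literally, so the bottom row of the diagram reduces to $\rho_X\circ\tau\circ(\coev_{\emptyset}\tens\id_{\cH_{\partial X}})$ precomposed with the unitor $u\colon\cH_{\partial X}\xrightarrow{\cong}\C\tens\cH_{\partial X}$. Meanwhile the normalization requirement of Axiom~\ref{AmplitudeRelativeGluingAxiom} supplies $c_{f,\beta}=c_{f,f|_{\partial X}}=1$, so the right edge is exactly $\rho_{X'}\circ\gamma_{f|_{\partial X}}$. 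Thus commutativity becomes the identity $\rho_{X'}\circ\gamma_{f|_{\partial X}}=\rho_X\circ\tau\circ(\coev_{\emptyset}\tens\id)\circ u$, and everything hinges on showing the right-hand composite equals $\rho_X$.

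To finish I would compute the duality data at the empty hypersurface. Using the slice region $(\emptyset,\id)$ on $\emptyset$ provided by Remark~\ref{remarkclosedsliceregions}, the defining formula in Axiom~\ref{SliceRegionsAxiom} together with $\rho_\emptyset=\id_\C$ yields $\ev_{\emptyset}=\tau_{\emptyset,\emptyset}$. As $\cH_\emptyset\tens\cH_{\overline{\emptyset}}$ is one-dimensional, the zig-zag identity of Axiom~\ref{SliceRegionsAxiom} then forces $\ev_{\emptyset}\circ\coev_{\emptyset}=\id_\C$. Writing the iterated $\tau$ through Axiom~\ref{AssociativityTensorIsometryAxiom} as $\tau=\tau_{\emptyset,\Lambda}\circ(\tau_{\emptyset,\emptyset}\tens\id)$ and substituting $\tau_{\emptyset,\emptyset}=\ev_{\emptyset}$, the composite $\tau\circ(\coev_{\emptyset}\tens\id)$ collapses to $\tau_{\emptyset,\Lambda}\circ\big((\ev_{\emptyset}\circ\coev_{\emptyset})\tens\id\big)=\tau_{\emptyset,\Lambda}$; canceling $\tau_{\emptyset,\Lambda}$ against $u$ via the unit identification $\emptyset\sqcup\Lambda=\Lambda$ leaves the identity on $\cH_{\partial X}$. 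Hence the right-hand composite is $\rho_X$, giving $\rho_{X'}\circ\gamma_{f|_{\partial X}}=\rho_X$, which is Axiom~\ref{AmplitudeEquivarianceAxiom}.

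The only genuine friction I anticipate is the bookkeeping at the monoidal unit $\cH_\emptyset=\C$: one must check that $\ev_{\emptyset}$, $\coev_{\emptyset}$ and the unitors really conspire so that the empty-hypersurface part of the diagram acts as the identity, and in particular that $\tau_{\emptyset,\Lambda}$ is the canonical unitor rather than some unrelated graded automorphism. Everything outside this unit-coherence verification is a direct substitution into Axiom~\ref{AmplitudeRelativeGluingAxiom} combined with the normalization $c_{f,f|_{\partial X}}=1$.
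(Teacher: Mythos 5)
Your proposal is correct and follows essentially the same route as the paper's proof: instantiate Axiom~\ref{AmplitudeRelativeGluingAxiom} on the trivial relative gluing diagram of Observation~\ref{ObsHomeomorphismsRelative} with $\Sigma=\emptyset$, $\Lambda=\partial X$, $\alpha=\id$, $\beta=f|_{\partial X}$, and invoke the normalization $c_{f,f|_{\partial X}}=1$. The only difference is that you explicitly verify the unit bookkeeping ($\ev_\emptyset=\tau_{\emptyset,\emptyset}$ via the empty slice region and $\ev_\emptyset\circ\coev_\emptyset=\id_\C$ from the zig-zag identity), details the paper compresses into the statement that the diagram ``collapses'' to equation~(\ref{examplehomeomorphismsformula2}); your flagged residual assumption, that $\tau_{\emptyset,\Lambda}$ is the canonical unitor, is likewise implicit in the paper's diagram and is harmless here.
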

\begin{proof}
	Let $X$ and $X'$ be regions. Let $f:X\to X'$ be a homeomorphism. Let $(f,f|_{\partial X})$ be the trivial relative gluing diagram associated to $f$ as in Observation~\ref{ObsHomeomorphismsRelative}:
	\begin{center}
		 
		 \begin{tikzpicture}
			 \matrix (m) [matrix of math nodes,row sep=3em,column sep=3em,minimum width=2em]
			 {\partial X&&&\partial X' \\
				 \partial X&\partial X&X&X'\\};
			 \path[-stealth]
			 (m-1-1) edge node [above] {$f|_{\partial X}$} (m-1-4)
			 edge node [left]  {$id_{\partial X}$} (m-2-1)
			 
			 (m-2-1) edge node [below] {$id_{\partial X}$} (m-2-2)
			 (m-2-2) edge node [below]  {} (m-2-3)
			 (m-2-3) edge node [below] {$f$} (m-2-4)
			 (m-1-4) edge node [right] {} (m-2-4)
			 
			 ;
		 \end{tikzpicture}
	\end{center}
	In the notation of Axiom~\ref{AmplitudeRelativeGluingAxiom}, $\Lambda=\partial X$, $\Sigma=\emptyset$ (and thus $\cH_\Sigma=\C$), $\alpha=\id_{\partial X}$ and $\beta=f|_{\partial X}$. The second diagram of Axiom~\ref{AmplitudeRelativeGluingAxiom} then collapses to the equation,
	\begin{equation}\label{examplehomeomorphismsformula2}
	\rho_{X'}	\gamma_{f|_{\partial X}}(w)=c_{f,f|_{\partial X}}\rho_X(w)
	\end{equation}
	for all $w\in\cH_{\partial X}$. Since $f$ is a homeomorphism, Axiom~\ref{AmplitudeRelativeGluingAxiom} requires $c_{f,f|_{\partial X}}=1$. We conclude that the triangle in the statement of Axiom~\ref{AmplitudeEquivarianceAxiom} commutes.
\end{proof}

In spite of this result, it might appear as if Axiom~\ref{AmplitudeEquivarianceAxiom} was not in fact redundant, for the following reason. Axiom~\ref{SliceRegionsAxiom} depends on Axiom~\ref{AmplitudeEquivarianceAxiom}. In turn Axiom~\ref{AmplitudeRelativeGluingAxiom} depends on Axiom~\ref{SliceRegionsAxiom}. So replacing the dependency on Axiom~\ref{AmplitudeEquivarianceAxiom} in Axiom~\ref{SliceRegionsAxiom} with a dependency on Axiom~\ref{AmplitudeRelativeGluingAxiom} would lead to an apparent dependency loop. However, as we have seen with Lemma~\ref{examplecomputationhomeomorphisms}, Axiom~\ref{AmplitudeEquivarianceAxiom} arises as a special case of Axiom~\ref{AmplitudeRelativeGluingAxiom}. Crucially, in this special case the coevaluation map is trivial and the invariance property of the corresponding evaluation map of Axiom~\ref{SliceRegionsAxiom} is trivially satisfied, without recurrence to Axiom~\ref{AmplitudeEquivarianceAxiom}. Thus, Axiom~\ref{AmplitudeEquivarianceAxiom} may be considered redundant after all, and we will often omit it when considering the axiomatic system of Definition~\ref{examplecomputationhomeomorphisms}. The main reason we have included it, is that we sometimes wish to consider a subset of the axioms. In particular, we will be interested at some point in considering a subset that does not include Axiom~\ref{AmplitudeRelativeGluingAxiom}, but does include Axiom~\ref{AmplitudeEquivarianceAxiom}.

We continue with the observation that from the evaluation map defined in Axiom~\ref{SliceRegionsAxiom} we may derive a sesquilinear form $\langle\cdot,\cdot\rangle_\Sigma$ on $\cH_{\Sigma}$ given by,
\begin{equation}
  \langle v,w\rangle_{\Sigma}\defeq \ev_{\Sigma}(\iota_{\Sigma}(v)\tens w) .
  \label{eq:propip}
\end{equation}

\begin{thm}\label{thm:propip}
	The sesquilinear form $\langle\cdot,\cdot\rangle_\Sigma$ has the following properties:
	\begin{enumerate}
		\item It is hermitian, non-degenerate and graded and thus a graded inner product.
		\item The graded gluing epimorphisms of Axiom~\ref{GluingIsometryAxiom} that originate from gluing functions that are homeomorphisms, are isomorphic isometries.
	\end{enumerate}
	If Axiom~\ref{TensorIsometryandGradedSymmetryAxiom} is satisfied, the following additional properties hold: 
	\begin{enumerate}
		\setcounter{enumi}{2}
		\item The graded involutions of Axiom~\ref{ConjugationIsometryAxiom} are anti-linear graded isometries.
		\item The graded tensor isomorphisms of Axiom~\ref{TensorIsometryAxiom} are isometries.
	\end{enumerate}
\end{thm}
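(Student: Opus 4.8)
The plan is to derive all four properties from the defining formula \eqref{eq:propip} together with the axioms relating the structure maps to orientation reversal, disjoint union and gluing. Throughout I fix a slice region $(S,\alpha)$ on $\Sigma$ and write $\ev_\Sigma=\rho_S\circ\gamma_\alpha\circ\tau_{\overline\Sigma,\Sigma}$, using freely that by Axiom~\ref{SliceRegionsAxiom} the map $\ev_\Sigma$ is independent of this choice. For property~1, gradedness is immediate: $\tau$, $\gamma_\alpha$ and $\iota_\Sigma$ are graded and $\rho_S$ kills the odd part of $\C$, so $\ev_\Sigma(\psi\tens\psi')=0$ unless $\fdg{\psi}=\fdg{\psi'}$, and since $\iota_\Sigma$ preserves degree, $\langle v,w\rangle_\Sigma=0$ unless $\fdg v=\fdg w$. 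For hermiticity I compute $\overline{\langle v,w\rangle_\Sigma}$ by pushing complex conjugation through $\rho_S$ with Axiom~\ref{AmplitudeConjugationIsometryAxiom} (replacing $\rho_S$ by $\rho_{\overline S}\circ\iota_{\partial S}$), then pulling $\iota_{\partial S}$ back through $\gamma_\alpha$ and $\tau_{\overline\Sigma,\Sigma}$ via Axioms~\ref{ConjugationGluingIsometryAxiom} and~\ref{ConjugationTensorIsometryAxiom} and the involutivity $\iota_{\overline\Sigma}\iota_\Sigma=\id$ of Axiom~\ref{ConjugationIsometryAxiom}; the net effect is $\overline{\langle v,w\rangle_\Sigma}=\rho_{\overline S}(\gamma_{\overline\alpha}(\tau_{\overline\Sigma,\Sigma}(\iota_\Sigma(w)\tens v)))$, and since $(\overline S,\overline\alpha)$ is again a slice region on $\Sigma$ by Lemma~\ref{ClosureofSliceRegions}, this equals $\ev_\Sigma(\iota_\Sigma(w)\tens v)=\langle w,v\rangle_\Sigma$. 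For non-degeneracy I read off the snake identity of Axiom~\ref{SliceRegionsAxiom}: writing $\coev_\Sigma(1)=\sum_i e_i\tens f_i$, the identity gives $\sum_i e_i\,\ev_\Sigma(f_i\tens w)=w$, so $\ev_\Sigma$ is non-degenerate in its second argument; as $\iota_\Sigma$ is a bijection and the spaces are finite-dimensional, a dimension count forces non-degeneracy in both arguments, and precomposing with the bijection $\iota_\Sigma$ shows $\langle\cdot,\cdot\rangle_\Sigma$ is non-degenerate.

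For property~2, let $f:\Sigma\to\Sigma'$ be a homeomorphism; by Axiom~\ref{GluingIsometryAxiom} the map $\gamma_f$ is already a graded isomorphism, so only the isometry property remains. Choosing a slice region $(S',\alpha')$ on $\Sigma'$, the composite $\alpha'\circ(\overline f\sqcup f)$ makes $(S',\alpha'\circ(\overline f\sqcup f))$ a slice region on $\Sigma$ (transport of the gluing triviality condition along $f$, cf.\ Observation~\ref{observationRelativeClosedCompositionHomeo}). Using Axiom~\ref{TensorandGluingIsometryAxiom} to commute $\tau$ past $\gamma_{\overline f}\tens\gamma_f$, together with the functoriality $\gamma_{g}\circ\gamma_{h}=\gamma_{gh}$ that follows from Axiom~\ref{AssociativityofGluingIsometryAxiom} and $\gamma_{\id}=\id$, one obtains $\ev_{\Sigma'}(\gamma_{\overline f}(\psi)\tens\gamma_f(w))=\ev_\Sigma(\psi\tens w)$. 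Finally Axiom~\ref{ConjugationGluingIsometryAxiom} gives $\iota_{\Sigma'}(\gamma_f(v))=\gamma_{\overline f}(\iota_\Sigma(v))$, whence $\langle\gamma_f(v),\gamma_f(w)\rangle_{\Sigma'}=\ev_{\Sigma'}(\gamma_{\overline f}(\iota_\Sigma v)\tens\gamma_f w)=\langle v,w\rangle_\Sigma$.

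For properties~3 and~4 I invoke Axiom~\ref{TensorIsometryandGradedSymmetryAxiom}. For~3, compute $\langle\iota_\Sigma(v),\iota_\Sigma(w)\rangle_{\overline\Sigma}=\ev_{\overline\Sigma}(v\tens\iota_\Sigma(w))$ using $(S,\alpha)$ as a slice region on $\overline\Sigma$ (Lemma~\ref{ClosureofSliceRegions}); the graded symmetry identity $\tau_{\Sigma,\overline\Sigma}=\tau_{\overline\Sigma,\Sigma}\circ s_{\Sigma,\overline\Sigma}$ converts $\tau_{\Sigma,\overline\Sigma}(v\tens\iota_\Sigma w)$ into $(-1)^{\fdg v\fdg w}\tau_{\overline\Sigma,\Sigma}(\iota_\Sigma w\tens v)$, giving $\ev_{\overline\Sigma}(v\tens\iota_\Sigma w)=(-1)^{\fdg v\fdg w}\langle w,v\rangle_\Sigma=(-1)^{\fdg v\fdg w}\overline{\langle v,w\rangle_\Sigma}$, which is exactly the defining relation of an anti-linear graded isometry. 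For~4, using the disjoint slice region $(S_1\sqcup S_2,s_1\sqcup s_2)$ of Lemma~\ref{ClosureofSliceRegions}, I expand $\langle\tau(v_1\tens v_2),\tau(w_1\tens w_2)\rangle_{\Sigma_1\sqcup\Sigma_2}$ via Axiom~\ref{ConjugationTensorIsometryAxiom} (to resolve $\iota_{\Sigma_1\sqcup\Sigma_2}\circ\tau$), then regroup the four tensor factors from the order $(\overline\Sigma_2,\overline\Sigma_1,\Sigma_1,\Sigma_2)$ into $(\overline\Sigma_1,\Sigma_1,\overline\Sigma_2,\Sigma_2)$ using associativity (Axiom~\ref{AssociativityTensorIsometryAxiom}) and the graded flip of Axiom~\ref{TensorIsometryandGradedSymmetryAxiom}; the amplitude then splits, by Axioms~\ref{TensorandGluingIsometryAxiom} and~\ref{AmplitudeandTensorIsometryAxiom}, into $\ev_{\Sigma_1}(\iota v_1\tens w_1)\,\ev_{\Sigma_2}(\iota v_2\tens w_2)=\langle v_1,w_1\rangle_{\Sigma_1}\langle v_2,w_2\rangle_{\Sigma_2}$, multiplied by the Koszul sign produced by the regrouping, which is precisely the sign defining the graded inner product on $\cH_{\Sigma_1}\tens\cH_{\Sigma_2}$.

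The main obstacle is the sign and reordering bookkeeping in properties~3 and~4: one must track the Koszul signs introduced by Axiom~\ref{TensorIsometryandGradedSymmetryAxiom} when permuting tensor factors, and confirm that the sign emerging from the regrouping in property~4 matches the adopted convention for the graded tensor-product inner product (on the graded-nonzero part the two necessarily agree, since the form vanishes off the diagonal degrees). A secondary technical point, used in properties~2--4, is the verification that reversed, disjoint, and homeomorphism-transported boundary decompositions genuinely satisfy the gluing triviality condition of Definition~\ref{sliceregionsdefinition}; for reversal and disjoint union this is Lemma~\ref{ClosureofSliceRegions}, while transport along a homeomorphism follows by absorbing the homeomorphism into the relative gluing diagrams as in Observation~\ref{observationRelativeClosedCompositionHomeo}.
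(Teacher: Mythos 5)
Your proposal is correct and follows essentially the same route as the paper's proof: hermiticity by pushing conjugation through Axioms~\ref{ConjugationTensorIsometryAxiom}, \ref{ConjugationGluingIsometryAxiom}, \ref{AmplitudeConjugationIsometryAxiom} onto the reversed slice region, isometry of $\gamma_f$ by transporting the slice region along $f$ via $\alpha=\widetilde{\alpha}\circ(\overline{f}\sqcup f)$ with Axioms~\ref{AssociativityofGluingIsometryAxiom}, \ref{TensorandGluingIsometryAxiom}, \ref{ConjugationGluingIsometryAxiom}, and properties~3 and~4 from Axiom~\ref{TensorIsometryandGradedSymmetryAxiom} with exactly the paper's resolution of the Koszul sign (it is trivial on the support of the graded evaluation). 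The only departures are cosmetic: you spell out non-degeneracy from the snake identity plus a finite-dimensional count where the paper simply cites Axiom~\ref{SliceRegionsAxiom}, and you present the computations sequentially rather than as commutative diagrams.
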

\begin{proof}
	The proof is provided point by point.
	\begin{enumerate}
	\item The non-degeneracy of the sesquilinear form is mandated by Axiom~\ref{SliceRegionsAxiom}. It is graded since it is a composition of graded maps. The hermiticity follows with the Axioms~\ref{ConjugationTensorIsometryAxiom}, \ref{ConjugationGluingIsometryAxiom}, and \ref{AmplitudeConjugationIsometryAxiom}. To see this consider a slice region $(S,\alpha)$ for the hypersurface $\Sigma$ with boundary gluing map $\alpha:\overline{\Sigma}_1\sqcup\Sigma_2\to\partial S$. (For greater clarity, we distinguish the two copies of $\Sigma$ with indices.) The following diagram commutes.
	\end{enumerate}%
	\begin{center}
		\begin{tikzpicture}
			\matrix (m) [matrix of math nodes,row sep=4em,column sep=4em,minimum width=2em]
			{
				\cH_{\Sigma_1}\times\cH_{\Sigma_2}&\cH_{\overline{\Sigma}_1}\times\cH_{\Sigma_2} & \cH_{\overline{\Sigma}_1}\otimes\cH_{\Sigma_2}& \cH_{\overline{\Sigma}_1\sqcup \Sigma_2}&\cH_{\partial S}&\mathbb{C} \\
				\cH_{\Sigma_2}\times\cH_{\Sigma_1}&\cH_{\overline{\Sigma}_2}\times\cH_{\Sigma_1} & \cH_{\overline{\Sigma}_2}\otimes\cH_{\Sigma_1}& \cH_{\overline{\Sigma}_2\sqcup \Sigma_1}&\cH_{\partial \overline{S}}&\mathbb{C} \\};
			\path[-stealth]
			(m-1-1) edge node [left] {$\sigma$} (m-2-1)
			edge node [above] {$\iota \times id$} (m-1-2)
			(m-1-4) edge node [right] {$\iota$} (m-2-4)
			(m-2-1) edge node [below] {$\iota\times id$}(m-2-2)
			(m-1-2) edge node [above]{$\otimes$} (m-1-3)
			(m-1-3) edge node [above]{$\tau$} (m-1-4)
			(m-1-4) edge node [above]{$\gamma_\alpha$}(m-1-5)
			(m-1-5) edge node [above] {$\rho_S$}(m-1-6)
			
			(m-2-2) edge node [below]{$\otimes$} (m-2-3)
			(m-2-3) edge node [below]{$\tau$} (m-2-4)
			(m-2-4) edge node [below]{$\gamma_{\overline{\alpha}}$}(m-2-5)
			(m-2-5) edge node [below] {$\rho_{\overline{S}}$}(m-2-6)

			(m-1-5) edge node [right]{$\iota$}(m-2-5)
			(m-1-6) edge node [right]{$\overline{(\cdot)}$} (m-2-6)
			
			(m-1-1) edge [white] node [black] {$(C)$}(m-2-4)
			(m-1-4) edge [white] node [black] {$(E)$}(m-2-5)
			(m-1-5) edge [white] node [black] {$(G)$}(m-2-6);
		\end{tikzpicture}
	\end{center}
	\begin{enumerate}
	\setcounter{enumi}{1}
	\item Let $f:\Sigma\to\widetilde{\Sigma}$ be a homeomorphism between hypersurfaces. Let $(S,\widetilde{\alpha})$ be a slice region for $\widetilde{\Sigma}$. Then, $(S,\alpha)$ is a slice region for $\Sigma$, where $\alpha=\widetilde{\alpha} \circ \overline{f}\sqcup f$. It follows from Axioms~\ref{AssociativityofGluingIsometryAxiom}, \ref{TensorandGluingIsometryAxiom} and \ref{ConjugationGluingIsometryAxiom} that $\gamma_f$ is an isometric isomorphism. This is the following commutative diagram.
	\end{enumerate}%
	\begin{center}
		\begin{tikzpicture}
			\matrix (m) [matrix of math nodes,row sep=4em,column sep=4em,minimum width=2em]
				{
					\cH_{\Sigma_1}\times\cH_{\Sigma_2}&\cH_{\overline{\Sigma}_1}\times\cH_{\Sigma_2} & \cH_{\overline{\Sigma}_1}\otimes\cH_{\Sigma_2}& \cH_{\overline{\Sigma}_1\sqcup \Sigma_2}&\cH_{\partial S}&\C \\
					\cH_{\widetilde{\Sigma}_1}\times\cH_{\widetilde{\Sigma}_2}&\cH_{\overline{\widetilde{\Sigma}}_1}\times\cH_{\widetilde{\Sigma}_2} & \cH_{\overline{\widetilde{\Sigma}}_1}\otimes\cH_{\widetilde{\Sigma}_2}& \cH_{\overline{\widetilde{\Sigma}}_1\sqcup \widetilde{\Sigma}_2}&\cH_{\partial S}&\C \\};
				\path[-stealth]
				(m-1-1) edge node [above] {$\iota \times id$} (m-1-2)
				(m-1-2) edge node [above]{$\tens$} (m-1-3)
				(m-1-3) edge node [above]{$\tau$} (m-1-4)
				(m-1-4) edge node [above]{$\gamma_{\alpha}$}(m-1-5)
				(m-1-5) edge node [above] {$\rho_S$}(m-1-6)
				
				(m-2-1) edge node [below] {$\iota \times id$}(m-2-2)
				(m-2-2) edge node [below]{$\tens$} (m-2-3)
				(m-2-3) edge node [below]{$\tau$} (m-2-4)
				(m-2-4) edge node [below]{$\gamma_{\widetilde{\alpha}}$}(m-2-5)
				(m-2-5) edge node [below] {}(m-2-6)

				(m-1-1) edge node [left] {$\gamma_f\times \gamma_f$} (m-2-1)
				(m-1-2) edge node [left] {$\gamma_{\overline{f}}\times\gamma_f$}(m-2-2)
				(m-1-3) edge node [left] {$\gamma_{\overline{f}}\otimes \gamma_f$}(m-2-3)
				(m-1-4) edge node [right] {$\gamma_{\overline{f}\sqcup f}$} (m-2-4)
				(m-1-5) edge node [right]{$id$}(m-2-5)
				(m-1-6) edge node [right]{$id$} (m-2-6)
				
				(m-1-1) edge [white] node [black] {$(E)$}(m-2-2)
				(m-1-2) edge [white] node [black] {$\otimes$}(m-2-3)
				(m-1-3) edge [white] node [black] {$(D)$}(m-2-4)
				(m-1-4) edge [white] node [black] {$(B)$}(m-2-5);
			\end{tikzpicture}
	\end{center}
	\begin{enumerate}
	\setcounter{enumi}{2}
	\item The graded isometry property of the involutions can be stated as,
	\begin{equation}
		\langle \iota_{\Sigma}(a),\iota_{\Sigma}(b)\rangle_{\overline{\Sigma}}=(-1)^{|a| |b|}\langle b,a\rangle_{\Sigma} .
	\end{equation}
	In terms of the evaluation this is equivalent to the relation,
	\begin{equation}
		\ev_{\Sigma}(a\tens b)=(-1)^{|a| |b|}\ev_{\overline{\Sigma}}(b\tens a) .
	\end{equation}
	Consider again a slice region $(S,\alpha)$ for the hypersurface $\Sigma$ with boundary gluing map $\alpha:\overline{\Sigma}_1\sqcup\Sigma_2\to\partial S$. (Again we distinguish the two copies of $\Sigma$ with indices.) The stated relation then follows from Axiom~\ref{TensorIsometryandGradedSymmetryAxiom} as can be seen via the commutation of the following diagram.
	\begin{center}
		\begin{tikzpicture}
			\matrix (m) [matrix of math nodes,row sep=4em,column sep=4em,minimum width=2em]
			{
				\cH_{\overline{\Sigma}_1}\otimes\cH_{\Sigma_2}&\cH_{\overline{\Sigma}_1\sqcup \Sigma_2} & \cH_{\partial S}&\C \\
				\cH_{\Sigma_2}\otimes\cH_{\overline{\Sigma}_1}&\cH_{\Sigma_2\sqcup \overline{\Sigma}_1} &\cH_{\partial S}&\C \\};
			\path[-stealth]
			(m-1-1) edge node [above] {$\tau$} (m-1-2)
			(m-1-2) edge node [above] {$\gamma_{\alpha}$} (m-1-3)
			(m-1-3) edge node [above]{$\rho_S$} (m-1-4)

			(m-2-1) edge node [below] {$\tau$}(m-2-2)
			(m-2-2) edge node [below]{$\gamma_{\alpha}$} (m-2-3)
			(m-2-3) edge node [below]{$\rho_S$} (m-2-4)

			(m-1-1) edge node [left] {$s_{\overline{\Sigma}_1,\overline{\Sigma}_2}$} (m-2-1)
			(m-1-2) edge node [left] {$id$}(m-2-2)
			(m-1-3) edge node [left] {$id$}(m-2-3)
			(m-1-4) edge node [right] {$id$} (m-2-4)

			(m-1-1) edge [white] node [black] {$(S)$}(m-2-2);
		\end{tikzpicture}
	\end{center}
	\item Let $\Sigma$ and $\Sigma'$ be hypersurfaces. The stated property can be written as,
    \begin{equation}
		\langle \tau_{\Sigma_1,\Sigma_2}(a\tens b),\tau_{\Sigma_1,\Sigma_2}(a'\tens b')\rangle_{\Sigma_1\sqcup\Sigma_2}
		=\langle a,a'\rangle_{\Sigma_1} \langle b,b'\rangle_{\Sigma_2} .
	\end{equation}
	Using the definition of the inner product in terms of the evaluation map and using Axiom~\ref{ConjugationTensorIsometryAxiom}, this can be seen to be equivalent to the following relation,
	\begin{equation}
		\ev_{\Sigma_1\sqcup\Sigma_2}(\tau_{\overline{\Sigma}_2,\overline{\Sigma}_1}(b\tens a)\tens\tau_{\Sigma_1,\Sigma_2}(a'\tens b'))
		=\ev_{\Sigma_1}(a\tens a') \ev_{\Sigma_2}(b\tens b') .
	\end{equation}
	We invite the reader to verify that a similar relation can be derived using Axioms~\ref{AssociativityTensorIsometryAxiom}, \ref{TensorandGluingIsometryAxiom}, \ref{AmplitudeandTensorIsometryAxiom}, as well as \ref{TensorIsometryandGradedSymmetryAxiom}. In fact, the relation obtained is the following,
	\begin{equation}
		\ev_{\Sigma_1\sqcup\Sigma_2}(\tau_{\overline{\Sigma}_2,\overline{\Sigma}_1}(b\tens a)\tens\tau_{\Sigma_1,\Sigma_2}(a'\tens b'))
		=(-1)^{(|a|+|a'|)|b|}\ev_{\Sigma_1}(a\tens a') \ev_{\Sigma_2}(b\tens b') .
	\end{equation}
	Observe, however, that the evaluation map is graded and has image of even degree. Thus, only terms where the degree of $a\tens a'$ is even, contribute. But then, $|a|+|a'|$ is even and $(-1)^{(|a|+|a'|)|b|}=1$, so both relations are equivalent.
    \end{enumerate}
	This completes the proof.
\end{proof}

\begin{obs}
	\label{obs:coev}
	The property of the inner product being graded is equivalent to the decomposition $\cH_{\Sigma}=\cH_{\Sigma}^0 \oplus \cH_{\Sigma}^1$ being orthogonal. We thus have individually inner products on $\cH_{\Sigma}^0$ and on $\cH_{\Sigma}^1$. The inner products are in general indefinite, and we can decompose in terms of orthogonal positive- and negative-definite subspaces as $\cH_{\Sigma}^0=\cH_{\Sigma}^{0,+}\oplus \cH_{\Sigma}^{0,-}$ and $\cH_{\Sigma}^1=\cH_{\Sigma}^{1,+}\oplus \cH_{\Sigma}^{1,-}$ with the obvious notation.
	Consider now an orthonormal basis $\{\zeta_k\}_{k\in I}$ of the space $\cH_{\Sigma}$ consisting of orthonormal basis of the subspaces $\cH_{\Sigma}^{0,+}$, $\cH_{\Sigma}^{0,-}$, $\cH_{\Sigma}^{1,+}$, $\cH_{\Sigma}^{1,-}$. By an orthonormal basis for a negative-definite inner product space we understand here an orthonormal basis for the sign-reverted inner product. We use the notation $[v]=0$ if $v\in \cH_{\Sigma}^{+}$ and $[v]=1$ if $v\in\cH_{\Sigma}^{-}$. For basis elements, the evaluation map $\ev_{\Sigma}:\cH_{\overline{\Sigma}}\tens\cH_{\Sigma}$ takes the form,
	\begin{equation}
		\ev_{\Sigma}(\iota_{\Sigma}(\zeta_k),\zeta_l)=(-1)^{[\zeta_k]}\delta_{k,l} .
	\end{equation}
	The coevaluation map $\coev_{\Sigma}:\C\to\cH_{\Sigma}\tens\cH_{\overline{\Sigma}}$ may be identified with the element $\coev_{\Sigma}(1)\in \cH_{\Sigma}\tens\cH_{\overline{\Sigma}}$. This takes the form,
	\begin{equation}
		\label{eq:coevsum}
		\coev_{\Sigma}(1)=\sum_{k\in I} (-1)^{[\zeta_k]} \zeta_k\tens \iota_{\Sigma}(\zeta_k) .
	\end{equation}
	We may then write the second diagram of Axiom~\ref{AmplitudeRelativeGluingAxiom} as the following identity, to be satisfied for any $v\in \cH_{\Lambda}$,
	\begin{equation}\label{EquationAmplitudeRelativeGluingData}
		\rho_{X'}(\gamma_\beta(v))=c_{f,\beta}\sum_{k\in I}(-1)^{[\zeta_k]}\rho_X\gamma_\alpha\tau( \zeta_k\otimes\iota_\Sigma(\zeta_k)\otimes v) .
	\end{equation}
\end{obs}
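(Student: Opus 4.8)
The plan is to establish the four assertions of Observation~\ref{obs:coev} in order, treating the statement as an explicit unpacking, in a chosen basis, of the structure already produced by Theorem~\ref{thm:propip}. First I would dispatch the equivalence between gradedness of $\langle\cdot,\cdot\rangle_\Sigma$ and orthogonality of the decomposition $\cH_\Sigma=\cH_\Sigma^0\oplus\cH_\Sigma^1$: by convention a graded sesquilinear form is one vanishing on pairs of homogeneous vectors of unequal degree, which is precisely the statement $\cH_\Sigma^0\perp\cH_\Sigma^1$. Since by Theorem~\ref{thm:propip} the form is hermitian and non-degenerate and the two summands are mutually orthogonal, its restriction to each summand is again hermitian and non-degenerate (a vector in $\cH_\Sigma^0$ orthogonal to all of $\cH_\Sigma^0$ is automatically orthogonal to $\cH_\Sigma^1$, hence to all of $\cH_\Sigma$, so it vanishes). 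Invoking the finite-dimensionality assumed in Definition~\ref{QFTAFfin} together with Sylvester's law of inertia for Hermitian forms then yields the orthogonal splittings $\cH_\Sigma^0=\cH_\Sigma^{0,+}\oplus\cH_\Sigma^{0,-}$ and $\cH_\Sigma^1=\cH_\Sigma^{1,+}\oplus\cH_\Sigma^{1,-}$ into definite pieces, and fixes the normalized basis $\{\zeta_k\}_{k\in I}$.

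Second, I would compute the evaluation map on this basis. Since $\iota_\Sigma$ is an anti-linear isomorphism, $\{\iota_\Sigma(\zeta_k)\}$ is a basis of $\cH_{\overline\Sigma}$, so $\ev_\Sigma$ is determined by the values $\ev_\Sigma(\iota_\Sigma(\zeta_k)\tens\zeta_l)$. By the definition~(\ref{eq:propip}) these equal $\langle\zeta_k,\zeta_l\rangle_\Sigma$, which vanish for $k\neq l$ (across distinct basis vectors and, via gradedness, across distinct degrees) and, by the chosen normalization—orthonormal for the genuine inner product on the positive parts and for the sign-reverted inner product on the negative parts—equal $+1$ on $\cH_\Sigma^+$ and $-1$ on $\cH_\Sigma^-$. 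This is exactly $(-1)^{[\zeta_k]}\delta_{k,l}$.

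Third, for the coevaluation I would exploit that Axiom~\ref{SliceRegionsAxiom} already guarantees the existence of a $\coev_\Sigma$ satisfying the snake identity, and that such a map is unique: the relation $(\id\tens\ev_\Sigma)\circ(\coev_\Sigma\tens\id)=\id_{\cH_\Sigma}$ says precisely that, under the isomorphism $\cH_\Sigma\tens\cH_{\overline\Sigma}\cong\cH_\Sigma\tens\cH_\Sigma^{*}\cong\mathrm{End}(\cH_\Sigma)$ induced by the non-degenerate pairing $\ev_\Sigma$, the element $\coev_\Sigma(1)$ corresponds to $\id_{\cH_\Sigma}$, hence is determined uniquely. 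It therefore suffices to verify that the candidate~(\ref{eq:coevsum}) satisfies the snake relation: feeding a test vector $w=\sum_l w_l\zeta_l$ through $\coev_\Sigma\tens\id$ and then $\id\tens\ev_\Sigma$, the evaluation formula of the previous step collapses the inner sum to the diagonal and produces the factor $(-1)^{[\zeta_k]}(-1)^{[\zeta_k]}=1$, returning $\sum_k w_k\zeta_k=w$.

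Finally, I would obtain identity~(\ref{EquationAmplitudeRelativeGluingData}) by tracing the second diagram of Axiom~\ref{AmplitudeRelativeGluingAxiom} on a vector $v\in\cH_\Lambda$: its commutativity reads $\rho_{X'}(\gamma_\beta(v))=c_{f,\beta}\,\rho_X\gamma_\alpha\tau\big((\coev_\Sigma\tens\id)(1\tens v)\big)$, and substituting $\coev_\Sigma(1)$ from~(\ref{eq:coevsum}) and using linearity of $\tau$, $\gamma_\alpha$ and $\rho_X$ gives the stated sum. The individual computations are routine; the only genuine subtlety, and the step I would treat most carefully, is the sign bookkeeping in the indefinite (Krein) setting—ensuring the factor $(-1)^{[\zeta_k]}$ coming from negative-definite basis vectors appears consistently in both $\ev_\Sigma$ and $\coev_\Sigma(1)$ so that the two occurrences cancel in the snake identity—together with the uniqueness argument that licenses verifying the coevaluation formula rather than deriving it from scratch.
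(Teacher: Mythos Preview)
Your proposal is correct and matches the paper's (implicit) reasoning: the paper presents this as an Observation without a formal proof, simply recording how the inner product of Theorem~\ref{thm:propip} looks in an adapted orthonormal basis and then reading off the coevaluation from the snake identity of Axiom~\ref{SliceRegionsAxiom}. Your write-up supplies exactly the routine verifications the paper leaves to the reader, including the uniqueness-of-coevaluation argument and the sign bookkeeping, so there is nothing to add.
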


We exhibit here one further consequence of the axioms: The invariance of the gluing anomaly factor under deformation by homeomorphisms. 
\begin{lem}\label{LemmaGluingAnomalyInvariant}
Let $(f,\beta)$ be a relative gluing diagram of the form:
\begin{center}
	\begin{tikzpicture}
		\matrix (m) [matrix of math nodes,row sep=3em,column sep=3em,minimum width=2em]
		{\Lambda&&&\partial X' \\
			\Sigma\sqcup \overline{\Sigma}\sqcup\Lambda&\partial X&X&X'\\};
		\path[-stealth]
		(m-1-1) edge node [above] {$\beta$} (m-1-4)
		edge node [left]  {} (m-2-1)
		
		(m-2-1) edge node [below] {$\alpha$} (m-2-2)
		(m-2-2) edge node [below]  {} (m-2-3)
		(m-2-3) edge node [below] {$f$} (m-2-4)
		(m-1-4) edge node [right] {} (m-2-4)
		
		;
	\end{tikzpicture}
\end{center}
Let $g:Y\to X$ and $h:X'\to Y'$ be homeomorphisms. Let $(hf,h\beta)$ and $(fg,\beta g)$ be the relative gluing diagrams associated to $(f,\beta)$, $h$ and $g$ in Observation~\ref{observationRelativeClosedCompositionHomeo}. The gluing anomalies $c_{hf,h\beta}$ and $c_{fg,\beta g}$ associated to $(hf,h\beta)$ and $(fg,\beta g)$ respectively, are both equal to the gluing anomaly $c_{f,\beta}$ associated to $(f,\beta)$.
\end{lem}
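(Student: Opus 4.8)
The plan is to reduce both claims to the explicit form of the gluing anomaly recorded in equation~(\ref{EquationAmplitudeRelativeGluingData}) of Observation~\ref{obs:coev}, and then to absorb the deforming homeomorphisms into the amplitude maps using the equivariance already available. Throughout I will use that $\gamma$ is functorial, i.e.\ $\gamma_{fg}=\gamma_f\gamma_g$; this follows from Axiom~\ref{AssociativityofGluingIsometryAxiom} applied to the commuting square with edges $g$, $fg$, $f$, $\id$, together with $\gamma_{\id}=\id$ from Axiom~\ref{GluingIsometryAxiom}. Fixing an orthonormal basis $\{\zeta_k\}_{k\in I}$ of $\cH_\Sigma$ of the type described in Observation~\ref{obs:coev}, the defining equation for $c_{f,\beta}$ reads
\begin{equation}
\rho_{X'}(\gamma_\beta(v)) = c_{f,\beta}\sum_{k\in I}(-1)^{\sig{\zeta_k}}\rho_X\gamma_\alpha\tau(\zeta_k\tens\iota_\Sigma(\zeta_k)\tens v)
\end{equation}
for all $v\in\cH_\Lambda$, and similarly for the deformed diagrams.

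\emph{The case $(hf,h\beta)$.} Here $\Sigma$, $\Lambda$ and $\alpha$ are unchanged, so the right-hand side of the anomaly equation for $(hf,h\beta)$ is \emph{literally} the same sum. On the left, functoriality gives $\gamma_{h\beta}=\gamma_{h|_{\partial X'}}\gamma_\beta$, and since $h\colon X'\to Y'$ is a homeomorphism, Axiom~\ref{AmplitudeEquivarianceAxiom} (equivalently Lemma~\ref{examplecomputationhomeomorphisms}) yields $\rho_{Y'}\circ\gamma_{h|_{\partial X'}}=\rho_{X'}$. Hence $\rho_{Y'}(\gamma_{h\beta}(v))=\rho_{X'}(\gamma_\beta(v))$, so the defining equations for $c_{hf,h\beta}$ and for $c_{f,\beta}$ share both sides, and the two anomalies coincide.

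\emph{The case $(fg,\beta g)$.} This requires more bookkeeping. The homeomorphism $g\colon Y\to X$ restricts to homeomorphisms of the relevant boundary pieces, inducing graded gluing isomorphisms $\gamma_{g|}$, which by Theorem~\ref{thm:propip}(2) are isometric. Thus $\gamma_{g|_\Sigma}$ carries an orthonormal basis $\{\xi_j\}$ of $\cH_{g^{-1}\Sigma}$ to the orthonormal basis $\zeta_k:=\gamma_{g|_\Sigma}(\xi_j)$ of $\cH_\Sigma$, preserving both the grading and the sign $\sig{\cdot}$. Writing $g^{-1}\alpha g$ as the composite of a restriction of $g$ identifying $g^{-1}\Sigma\sqcup g^{-1}\overline{\Sigma}\sqcup g^{-1}\Lambda$ with $\Sigma\sqcup\overline{\Sigma}\sqcup\Lambda$, then $\alpha$, then $(g^{-1})|_{\partial X}$, I expand $\gamma_{g^{-1}\alpha g}$ by functoriality and push $\gamma_{g|}$ through $\tau$ and $\iota$ using Axiom~\ref{TensorandGluingIsometryAxiom} and Axiom~\ref{ConjugationGluingIsometryAxiom}. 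Collapsing $\rho_Y\circ\gamma_{(g^{-1})|_{\partial X}}$ to $\rho_X$ via Axiom~\ref{AmplitudeEquivarianceAxiom} applied to $g^{-1}$, each summand becomes
\[
\rho_Y\,\gamma_{g^{-1}\alpha g}\,\tau(\xi_j\tens\iota(\xi_j)\tens v')
=\rho_X\,\gamma_\alpha\,\tau(\zeta_k\tens\iota_\Sigma(\zeta_k)\tens\gamma_{g|_\Lambda}(v')) .
\]
On the left of the $(fg,\beta g)$ equation, $\gamma_{\beta g}=\gamma_\beta\gamma_{g|_\Lambda}$ gives $\rho_{X'}(\gamma_\beta(\gamma_{g|_\Lambda}(v')))$. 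Setting $v=\gamma_{g|_\Lambda}(v')$ and summing over the basis, the sum is unchanged because $j\mapsto k$ is a sign-preserving bijection, so the defining equation for $c_{fg,\beta g}$ becomes identical to that for $c_{f,\beta}$ in the variable $v$; since $\gamma_{g|_\Lambda}$ is surjective, $v$ ranges over all of $\cH_\Lambda$, forcing $c_{fg,\beta g}=c_{f,\beta}$.

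The first case is essentially immediate; the real work—and the main obstacle—is in the second, where the chief difficulty is making precise the induced action of $g$ on the abstract pieces $g^{-1}\Sigma$, $g^{-1}\Lambda$ and verifying that the isometric isomorphism $\gamma_{g|_\Sigma}$ sends the chosen admissible orthonormal basis to an admissible orthonormal basis on the other side without disturbing the grading or the signs $\sig{\cdot}$ entering the coevaluation~(\ref{eq:coevsum}). Equivalently, and perhaps more cleanly, one can avoid bases altogether by checking directly that $\gamma_{g|_\Sigma}\tens\gamma_{g|_{\overline{\Sigma}}}$ maps $\coev_{g^{-1}\Sigma}(1)$ to $\coev_\Sigma(1)$, which follows from the compatibility of the evaluation maps with homeomorphisms and the zig-zag identity of Axiom~\ref{SliceRegionsAxiom}; I would adopt whichever formulation minimizes the index juggling.
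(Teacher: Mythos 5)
Your proof is correct and follows essentially the same route as the paper's: the $(hf,h\beta)$ case via $\gamma_{h\beta}=\gamma_h\gamma_\beta$ and amplitude equivariance (Lemma~\ref{examplecomputationhomeomorphisms}), and the $(fg,\beta g)$ case by transporting the orthonormal basis through the isometric isomorphism $\gamma_{g}$ guaranteed by Theorem~\ref{thm:propip}, pushing it through $\tau$ and $\iota$ with Axioms~\ref{AssociativityofGluingIsometryAxiom}, \ref{TensorandGluingIsometryAxiom} and \ref{ConjugationGluingIsometryAxiom}, and collapsing $\rho_Y\circ\gamma_{(g^{-1})|_{\partial X}}$ to $\rho_X$, exactly as in the paper's argument. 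Your concluding basis-free variant (showing $\gamma_{g|_\Sigma}\tens\gamma_{g|_{\overline{\Sigma}}}$ preserves the coevaluation) is a tidy repackaging the paper does not spell out, but the substance of the proof is the same.
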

\begin{proof}
 Let $(f,\beta)$ be a relative gluing diagram as in the statement of the Lemma. Let $g:Y\to X$ and $h:X'\to Y'$ be homeomorphisms. We first compute the graded amplitude of the "deformed" gluing diagram $(hf,h\beta)$. Let $u\in\cH_{\Lambda}$. Applying Axiom~\ref{AmplitudeRelativeGluingAxiom} to $(hf,h\beta)$ and evaluating in $u$ we obtain the formula:
	\begin{equation}\label{EquationAmplitudeActionHomeosLeft}
		\rho_{Y'}(\gamma_{h\beta}(u))=c_{hf,h\beta}\sum_{k\in I}(-1)^{[\zeta_k]}\rho_X\gamma_\alpha(\zeta_k\otimes\iota_\Sigma(\zeta_k)\otimes u)
	\end{equation}
	Here, $\{\zeta_k\}_{k\in I}$ is any orthonormal basis of $\cH_{\Sigma}$. By Axiom~\ref{AssociativityofGluingIsometryAxiom} we have $\gamma_{h\beta}=\gamma_h \gamma_\beta$ and with Lemma~\ref{examplecomputationhomeomorphisms} we then obtain the equation $\rho_{Y'}(\gamma_{h\beta}(u))=\rho_{X'}(\gamma_{\beta}(u))$. Equation~(\ref{EquationAmplitudeActionHomeosLeft}) transforms into the following equation:
	\begin{equation}\label{EquationAmplitudeActionHomeosLeft2}
		\rho_{X'}(\gamma_\beta(u))=c_{hf,h\beta}\sum_{k\in I}(-1)^{[\zeta_k]}\rho_X\gamma_\alpha(\zeta_k\otimes \iota_\Sigma(\zeta_k)\otimes u) .
	\end{equation}
	Comparing this equation to the equation obtained by applying Axiom~\ref{AmplitudeRelativeGluingAxiom} to the original diagram $(f,\beta)$ we find that $c_{hf,h\beta}=c_{f,\beta}$.

	We now compute the graded amplitude map associated to $(fg,\beta g)$. Let $v\in\cH_{g^{-1}\Sigma}$. Applying Axiom~\ref{AmplitudeRelativeGluingAxiom} to $(fg,\beta g)$ and evaluating on $v$ we obtain the equation:
		\begin{equation}\label{EquationAmplitudeActionHomeosRight1}
		\rho_{X'}(\gamma_{\beta g} (v))=c_{fg,\beta g}\sum_{k\in I}(-1)^{[\xi_k]}\rho_Y\gamma_{g^{-1}\alpha g}\tau(\xi_k\otimes\iota_{g^{-1}\Sigma}(\xi_k)\otimes v)
	\end{equation}
	Here, $\{\xi_k\}_{k\in I}$ is any orthonormal basis of $\cH_{g^{-1}\Sigma}$. Let $u\in \cH_{\Sigma}$ be given by $u=\gamma_g(v)$ and $\zeta_k=\gamma_g(\xi_k)$. By Theorem~\ref{thm:propip}, $\{\zeta_k\}_{k\in I}$ is an orthonormal basis of $\cH_{\Sigma}$ and $[\zeta_k]=[\xi_k]$. Combining Lemma~\ref{examplecomputationhomeomorphisms} with Axioms~\ref{AssociativityofGluingIsometryAxiom}, \ref{TensorandGluingIsometryAxiom} and \ref{ConjugationGluingIsometryAxiom} we obtain the equality,
	\begin{equation}
		(-1)^{[\xi_k]}\rho_Y\gamma_{g^{-1}\alpha g}\tau(\xi_k\otimes\iota_{g^{-1}\Sigma}(\xi_k)\otimes v)
		= (-1)^{[\zeta_k]}\rho_X\gamma_\alpha\tau(\zeta_k\otimes \iota_\Sigma(\zeta_k)\otimes u) .
	\end{equation}
	We also have $\gamma_{\beta g} (v)=\gamma_{\beta}(u)$. This yields,
	\begin{equation}
		\rho_{X'}(\gamma_{\beta}(u))=c_{fg,\beta g}\sum_{k\in I}(-1)^{[\zeta_k]}\rho_X\gamma_\alpha\tau(\zeta_k\otimes \iota_\Sigma(\zeta_k)\otimes u)
	\end{equation}
	Comparing this with the equation obtained by applying Axiom~\ref{AmplitudeRelativeGluingAxiom} to the original diagram $(f,\beta)$ we obtain the equation $c_{f,\beta}=c_{fg,\beta g}$. This concludes the proof.
\end{proof}

\subsection{Infinite dimensions}

An obvious problem when transitioning from the finite-dimensional to an infinite-dimensional setting is the sum appearing in the coevaluation (\ref{eq:coevsum}), see Observation~\ref{obs:coev}. This sum would be infinite and thus not exist in the category of superspaces. We may work around this by replacing the second diagram in Axiom~\ref{AmplitudeRelativeGluingAxiom} with the corresponding explicit equation (\ref{eq:propip}). Of course, we need to require convergence of the sum. As laid out in the introduction, we shall guarantee this by explicitly restricting to admissible manifolds, gluing functions and gluing diagrams. Another aspect of the infinite-dimensional setting is that to make sense of orthonormal basis we need to equip the spaces $\cH_{\Sigma}$ with topologies which are precisely determined by the inner products. In other words, the spaces $\cH_{\Sigma}$ have to be \emph{complete} with respect to their inner product. The suitable notion here is that of \emph{Krein space}. That is, instead of dealing with the category of (finite-dimensional) superpaces we deal with the category of Krein superspaces. This also implies that the tensor product is not that of vector spaces, but its completion. This has wide-ranging consequences. Since the tensor product plays a role in many axioms, it has to be determined before an inner product would emerge from Axiom~\ref{SliceRegionsAxiom}. The most straightforward way to deal with this, is to equip the superspaces $\cH_{\Sigma}$ with the structure of a Krein space from the outset, that is in Axiom~\ref{Statespacesaxiom}. Axiom~\ref{SliceRegionsAxiom} then merely requires that the given inner product coincides with that derived from the evaluation map. This setup is reflected in the GBQFT axioms of Definition~\ref{AFdefinitionClassic}.

There is another issue that has to be taken into account in the infinite-dimensional setting. It turns out that the amplitude map of Axiom~\ref{AmplitudeMapsAxiom} is generically unbounded for realistic quantum field theories \cite{Oe:gbqft}. In particular, given a region $X$ it can be defined only on a dense subspace $\cH_{\partial X}^\ds\subseteq \cH_{\partial X}$ as $\rho_X:\cH_{\partial X}^\ds\to\C$. This has repercussions not only for Axiom~\ref{AmplitudeandTensorIsometryAxiom}, but for all axioms involving the amplitude map. Apart from the mere restriction of the amplitude map, there are also certain compatibility conditions that arise. We enumerate them in the following:
\begin{itemize}
	\item in Axiom~\ref{AmplitudeandTensorIsometryAxiom}: $\tau\pi(\cH_{\partial X_1}^\ds \times \cH_{\partial X_2}^\ds)\subseteq \cH_{\partial X_1\sqcup \partial X_2}^\ds$
	\item in Axiom~\ref{AmplitudeConjugationIsometryAxiom}: $\iota_{\Sigma}(\cH_{\Sigma}^\ds)\subseteq \cH_{\overline{\Sigma}}^\ds$
	\item in Axiom~\ref{SliceRegionsAxiom}: $\gamma_{\alpha}\tau_{\overline{\Sigma},\Sigma}\pi(\cH_{\overline{\Sigma}}\times\cH_{\Sigma})\subseteq \cH_{\partial S}^\ds$
	\item in Axiom~\ref{AmplitudeRelativeGluingAxiom}: $\gamma_{\alpha}\tau(\zeta\tens\iota_{\Sigma}(\zeta)\tens v)\in \cH_{\partial X}^\ds$ for $\zeta\in\cH_{\Sigma}$ and $v\in\cH_{\Lambda}^\ds$.
\end{itemize}

\begin{definition}
	\label{QFTAFinfin}
	 A \emph{CQFT} consists of the axioms of Definition~\ref{QFTAFfin}, modified as laid out in this section.
\end{definition}

\subsection{Comparison of axiomatic systems}
\label{sec:compareaxioms}

In this subsection we explain in more detail the relationship between the axiomatic system of Definition~\ref{QFTAFfin} (or Definition~\ref{QFTAFinfin}) and that of Definition~\ref{AFdefinitionClassic}. In doing so we refer to the axioms in Definition~\ref{AFdefinitionClassic} by the numbering presented there, which differs from the numbering in Definition~\ref{QFTAFinfin} by the combination of numbers and letters, e.g.\ \textbf{(T1)}, \textbf{(T2)}, etc.\ in Definition \ref{AFdefinitionClassic} versus \ref{Statespacesaxiom}, \ref{AssociativityTensorIsometryAxiom}, etc.\ in Definition~\ref{QFTAFinfin}. There exists an approximate correspondence between the axiomatic systems, summarized in Table~\ref{tab:axiomcorr} and explained in detail in the following.

\begin{table}
	\caption{Approximate correspondence between CQFT and GBQFT.}
	\label{tab:axiomcorr}
	\begin{center}
			\begin{tabular}{ |p{6cm}|p{6cm}|  }
				\hline
				GBQFT (Definition~\ref{AFdefinitionClassic}) & CQFT (Definition~\ref{QFTAFfin}) \\
				\hline
				Axiom~\textbf{(T1)} & Axiom~\ref{Statespacesaxiom}\\
				Axiom~\textbf{(T1b)}& Axiom~\ref{ConjugationIsometryAxiom}\\
				Axiom~\textbf{(T2)} & Axioms~\ref{TensorIsometryAxiom}, \ref{GluingIsometryAxiom}, \ref{AssociativityTensorIsometryAxiom}, \ref{AssociativityofGluingIsometryAxiom}, \ref{TensorandGluingIsometryAxiom} \\
				Axiom~\textbf{(T2a)} & Axiom~\ref{TensorIsometryandGradedSymmetryAxiom}\\
				Axiom~\textbf{(T2b)} & Axioms~\ref{ConjugationTensorIsometryAxiom}, \ref{ConjugationGluingIsometryAxiom}\\
				Axiom~\textbf{(T4)} & Axioms~\ref{AmplitudeMapsAxiom} and \ref{AmplitudeConjugationIsometryAxiom}  \\
				Axiom~\textbf{(T5a)} & Axiom~\ref{AmplitudeandTensorIsometryAxiom} \\
				Axiom~\textbf{(T3x)} & Axiom~\ref{SliceRegionsAxiom}\\
				Axiom~\textbf{(T5b)} & Axiom~\ref{AmplitudeRelativeGluingAxiom} \\				
				\hline
			\end{tabular}
	\end{center}
\end{table}

Axioms~\ref{Statespacesaxiom} and \ref{ConjugationIsometryAxiom} are Axioms~\textbf{(T1)} and \textbf{(T1b)}, Axioms~\ref{AmplitudeMapsAxiom} and \ref{AmplitudeConjugationIsometryAxiom} are Axiom~\textbf{(T4)}, Axioms~\ref{TensorIsometryAxiom}, \ref{GluingIsometryAxiom}, \ref{AssociativityTensorIsometryAxiom}, \ref{AssociativityofGluingIsometryAxiom} together form a refined version of Axiom~\textbf{(T2)}. Precisely, given hypersurfaces $\Sigma$, $\Sigma_1$ and $\Sigma_2$, a decomposition of $\Sigma$ as $\Sigma_1\cup\Sigma_2$ as in \textbf{(T2)} means that there exists a gluing function $\alpha:\Sigma_1\sqcup\Sigma_2\to \Sigma$. The composition $\gamma_\alpha\tau_{\Sigma_1,\Sigma_2}$ of the graded epimorphisms provided by Axioms~\ref{TensorIsometryAxiom} and \ref{GluingIsometryAxiom} of Definition~\ref{QFTAFinfin} is a graded epimorphism from $\cH_{\Sigma_1}\otimes\cH_{\Sigma_2}$ to $\cH_\Sigma$, which by Axioms~\ref{AssociativityTensorIsometryAxiom} and \ref{AssociativityofGluingIsometryAxiom} is associative, i.e., the graded epimorphism $\gamma_\alpha\tau_{\Sigma_1,\Sigma_2}$ satisfies the conditions in \textbf{(T2)}. If the graded epimorphism is also compatible with graded symmetry (i.e., if Axiom~\ref{TensorIsometryandGradedSymmetryAxiom} holds) then this means it also satisfies \textbf{(T2a)}.
Axiom~\ref{GluingIsometryAxiom} is more powerful than the self-gluing case of axiom~\textbf{(T2)} as it also allows considering homeomorphisms of hypersurfaces for example. This also means that the new Axiom~\ref{TensorandGluingIsometryAxiom} is required, which states that tensor isomorphisms and gluing epimorphisms should combine in a coherent way. The condition $\gamma_{id_\Sigma}=id_{\cH_\Sigma}$ in Axiom~\ref{GluingIsometryAxiom} is new, and it states that the graded gluing epimorphism of the "trivial" gluing functions, i.e.\ identities, are precisely the "trivial" isomorphisms of state spaces. Similarly, we require homeomorphic gluing functions to induce isomorphisms.
Axiom~\ref{ConjugationTensorIsometryAxiom} is equivalent to Axiom~\textbf{(T2b)}.
Axiom~\ref{SliceRegionsAxiom} is Axiom~\textbf{(T3x)}, where the notion of slice region has been substituted for that introduced in Section~\ref{sliceregionssubsection}. By Theorem~\ref{thm:propip} the isometric properties of maps in Axioms~\textbf{(T1b)} and \textbf{(T2)} follow (supposing Axiom~\ref{TensorIsometryandGradedSymmetryAxiom}).

Axiom~\ref{AmplitudeandTensorIsometryAxiom} is Axiom~\textbf{(T5a)}. Axiom~\ref{AmplitudeRelativeGluingAxiom} is Axiom~\textbf{(T5b)} phrased in the language of relative gluing diagrams, where $\tau$ appearing in equation (\ref{EquationAmplitudeRelativeGluingData}) is the graded tensor isometry associated to $\Sigma\sqcup \overline{\Sigma}\sqcup \Lambda$, i.e., is any of the two corners of the diagram appearing in Axiom~\ref{AssociativityTensorIsometryAxiom} in the case in which $\Sigma_1=\Sigma$, $\Sigma_2=\overline{\Sigma}$ and $\Sigma_3=\Lambda$. The condition $c_{f,f|_{\partial X}}=1$ is new, and it states that the gluing anomaly of "trivial" relative gluing diagrams, i.e., of relative gluing diagrams associated to homeomorphisms should be trivial. This arises due to the novel possibility of encoding homeomorphisms in terms of gluing diagrams.

The paragraphs above are intended to make it clear that the information contained in the axiomatic systems in Definition~\ref{AFdefinitionClassic} and \ref{QFTAFinfin} is largely the same, apart from formal considerations in the concept of gluing and organizational differences. A novelty of the new axiomatic system is that it naturally also encodes spacetime symmetries in the form of homeomorphic gluing functions. We consider Definition~\ref{QFTAFinfin} an ad hoc formalization of Definition~\ref{AFdefinitionClassic}, but shall mostly use the more elegant rigid version, Definition~\ref{QFTAFfin}, when referring to CQFT in the following. In circumstances where the difference between the finite and infinite-dimensional state spaces is important, we shall comment on this explicitly.

\subsection{Admissibility}

As outlined in the introduction, an important ingredient in ensuring that certain models of interest with infinite-dimensional state spaces can be treated in GBQFT is \emph{admissibility}. That is, instead of considering all possible manifolds of dimension $n$ as regions, all possible manifolds of dimension $n-1$ as hypersurfaces etc., we may restrict to certain classes that we shall call \emph{admissible}. Similarly, we may restrict slice regions, gluing functions and gluing diagrams to admissible ones. The axioms thus only apply to admissible structures.

We shall apply precisely the same principle to CQFT. It is rather straightforward to enumerate the conditions that the notion of admissibility has to satisfy to be consistent.
\begin{enumerate}
	\item The empty hypersurface is admissible.
	\item The empty region is admissible.
	\item The boundary of an admissible region is admissible.
	\item The domain and codomain of an admissible gluing function are admissible.
	\item The components of an admissible gluing diagram are admissible.
	\item The components of an admissible slice region are admissible.
	\item For every admissible hypersurface, there is an admissible slice region on it.
	\item The operation of disjoint union preserves admissibility.
	\item The operation of orientation-reversal preserves admissibility.
\end{enumerate}

\subsection{Area-dependent theory}
\label{sec:areaaxioms}

As explained in the introduction, the setting of topological manifolds for the axiomatic framework of CQFT in Section~\ref{sec:axioms} should be considered a bare minimum. For many models of interest we require additional structure that is reflected in a corresponding augmentation of the axioms. For the example to be considered in Section~\ref{sec:2dpqym} we need to augment regions with one additional datum: A non-negative real number, that has the interpretation of an $n$-volume. However, since the main case of interest is in dimension $n=2$, we refer to this as \emph{area}. Thus, a region is now given by a pair $(X,a)$, where $X$ is an $n$-manifold and $a\ge 0$.

We proceed to describe the behavior of this additional datum with respect to all relevant operations:
\begin{description}
    \item[Empty region:] For $(\emptyset,a)$ we have $a=0$.
    \item[Orientation-reversal:] Given a region $(X,a)$ we have $\overline{(X,a)}=(\overline{X},a)$.
    \item[Disjoint union:] Given regions $(X,a)$ and $(X',a')$ we have $(X,a)\sqcup (X',a')=(X\sqcup X',a+a')$.
    \item[Gluing functions:] Given $n$-manifolds $X,X'$ and $a\ge 0$, a gluing function $f:(X,a)\to (X',a)$ is a gluing function $f:X\to X'$ in the sense of Definition~\ref{gluingfunctiondefinition}. There are no gluing functions between regions with different areas.
    \item[Relative gluing diagrams:] No further restrictions are imposed.
    \item[Slice regions:] $((S,a),s)$ is a slice region iff $(S,s)$ is a slice region in the sense of Definition~\ref{sliceregionsdefinition} and if $a=0$. That is, slice regions always have vanishing area.
\end{description}
No further non-trivial changes need to be applied to the axioms of a CQFT.

\begin{definition}
	\label{QFTAFarea}
	 An \emph{area-dependent rigid CQFT} consists of the axioms of Definition~\ref{QFTAFfin}, modified as laid out in this section.
	 An \emph{area-dependent CQFT} consists of the axioms of Definition~\ref{QFTAFinfin}, modified as laid out in this section.
\end{definition}


\section{2-dimensional CQFT}
\label{sec:2dim}

In this section we explore symmetric CQFT in the special case of dimension $n=2$.

\subsection{Elementary objects}

We shall take a generators-and-relations approach. First we identify in the collections of hypersurfaces, regions, and gluing functions a small and finite number of objects such that any other object in these collections can be obtained, up to homeomorphism, through iteration of the operations of orientation-reversal, disjoint union, composition and gluing as laid out in Section~\ref{usualgluings}. We call these \emph{elementary} objects. Axioms~\ref{Statespacesaxiom}--\ref{AmplitudeMapsAxiom} of Definition~\ref{QFTAFfin} assign to these corresponding objects in the category of superspaces. We call these assignments \emph{generators}, because they determine, through the axioms, the assignment for any object in the source category, up to homeomorphism. Finally, Axioms~\ref{AssociativityTensorIsometryAxiom}--\ref{AmplitudeRelativeGluingAxiom} impose constraints on these assignments that we shall refer to as \emph{relations}. At first, we shall not take into account non-trivial automorphisms of manifolds. We treat these separately in Section~\ref{sec:specialhom}.

Any hypersurface can be described as a finite disjoint union of connected hypersurfaces, here, compact connected oriented topological manifolds of dimension 1, possibly with boundary. There are \emph{two} homeomorphism classes of these which we referred to in Section~\ref{gluingfunctionssubsection} as \emph{open} and \emph{closed strings}. In the following we shall prefer to refer to them as \emph{intervals} and \emph{circles}.

\begin{definition}\label{def:standardintervalcircle}
	We fix a standard choice of interval and circle, and we will denote these choices by $I$ and $C$ respectively. For each of the above choices we have an operation of \emph{orientation reversal}, $I\to \overline{I}$ and $C\to\overline{C}$.
\end{definition}

The choice of standard objects brings with it some notational challenges. When we contemplate operations on manifolds all the manifolds involved are to be considered as distinguishable objects. This is no different when some of these manifolds are copies of standard manifolds. In order to distinguish different copies it is sometimes convenient to equip these with extra labels. We shall choose in such cases a numerical subscript. For example, for the disjoint union of two standard intervals $I$ the notation $I\sqcup I$ suggests itself. However, when it is useful to explicitly distinguish the two copies involved we shall write $I_1 \sqcup I_2$ or $I_2\sqcup I_1$. Of course  $I_1 \sqcup I_2$ and $I_2\sqcup I_1$ are identical, but when we want to write this identity explicitly it is $(x,y)\mapsto (y,x)$. On the other hand, the map $(x,y)\mapsto (x,y)$ would represent in this case a non-trivial homeomorphism. When orientation is reversed, it turns out to be convenient to reverse the order in our notation. For example, we set $\overline{I_1\sqcup I_2}=\overline{I_2}\sqcup\overline{I_1}$.

Any gluing function for hypersurfaces can be obtained, up to homeomorphism, by composing just two types of gluing functions: The gluing of two intervals into one as in Example~\ref{exampletwointervals} and the gluing of one interval into a circle as in Example~\ref{ExampleClosedStrings}. We choose a fixed representative for each, acting on the elementary hypersurfaces.

\begin{definition}\label{def:standardproductclosure}
	We fix a standard choice of "product" gluing function as in Example~\ref{exampletwointervals}, and we denote it by $f_{I\sqcup I,I}:I\sqcup I\to I$. We further fix a standard choice of "closure" gluing function as in Example~\ref{ExampleClosedStrings}, and we denote it by $f_{I,C}:I\to C$. We denote the orientation reversals of $f_{I\sqcup I,I}$ and $f_{I,C}$ as $f_{\overline{I\sqcup I},\overline{I}}$ and $f_{\overline{I},\overline{C}}$ respectively. In ordered notation, the 2-inteval gluing is $f_{I_1\sqcup I_2}:I_1\sqcup I_2\to I$. For its orientation reversal we have, $f_{\overline{I_1\sqcup I_2},\overline{I}}: \overline{I_1\sqcup I_2}=\overline{I_2}\sqcup\overline{I_1}\to\overline{I}$.
\end{definition}

We can glue more than two copies of the standard interval $I$ into a single standard interval $I$ by iterating the gluing function $f_{I\sqcup I,I}$ combined with identity maps. In general, there are different ways of doing this that do not coincide. In the simplest case of gluing three intervals together we can relate the two ways of doing so through a homeomorphism $\delta:I\to I$ as follows.

\begin{definition}\label{def:delta}
We fix a standard choice of a homeomorphism $\delta:I\to I$ making the following equation hold,
	\begin{equation}
		f_{I\sqcup I,I}\circ(\id_I\sqcup f_{I\sqcup I,I})=\delta\circ f_{I\sqcup I,I}\circ(f_{I\sqcup I,I}\sqcup \id_I) .
		\label{eq:intglueorder}
	\end{equation}
	We represent $\delta$ diagramatically as filling the square:
	\begin{center}
		\begin{tikzpicture}
			\matrix (m) [matrix of math nodes,row sep=4em,column sep=4em,minimum width=2em]
			{
				I\sqcup I\sqcup I & I\sqcup I \\
				I\sqcup I & I \\};
			\path[-stealth]
			(m-1-1) edge node [left] {$id_I\sqcup f_{I\sqcup I,I}$} (m-2-1)
			edge node [above] {$f_{I\sqcup I,I}\sqcup \id_I$} (m-1-2)
			(m-1-2) edge node [right] {$f_{I\sqcup I,I}$} (m-2-2)
			(m-2-1) edge node [below] {$f_{I\sqcup I,I}$}(m-2-2)
			(m-1-1) edge [color = white] node [color = black] {$\delta$} (m-2-2)
			;
		\end{tikzpicture}
	\end{center}
	The above diagram is not strictly commutative, but is commutative "up to the action of $\delta$". The above notation is suggestive of the fact that $\delta$ should be regarded as the associator isomorphism of a weak monoid in an appropriate monoidal category. In order to handle the gluing of multiple intervals in a canonical way we introduce the following notation that implements a gluing where in each step only the two right-most intervals are glued together,
	\begin{equation}
		f_{I\sqcup \cdots \sqcup I,I}\defeq f_{I\sqcup I,I}\circ (\id_{I}\sqcup f_{I\sqcup I})\circ\cdots\circ (\id_I\sqcup\cdots\sqcup\id_I\sqcup f_{I\sqcup I, I}) .
	\end{equation}
\end{definition}

Any region is an oriented, topological Riemann surface with holes. We pick a standard Riemann surface for each choice of genus $g\in \N_0$ and number of holes $k\in\N_0$.

\begin{definition}\label{def:standardRiemannSurface}
	Let $g\in \N_0$ and $k\in\N_0$. We fix a standard choice of Riemann surface of genus $g$ and with $k$ holes, and denote it by $S_{g,k}$. In particular $S_{0,1}$ will denote the standard disk. We will also write $D$ for $S_{0,1}$.
\end{definition}

We can produce any Riemann surface with holes by gluing a single disk. We thus take the disk $D$ as an elementary region. Moreover, we arrange things in such a way that the boundary of $D$ is precisely the standard circle $C$.

We proceed to consider the assignment of objects in the category of superspaces to the elementary objects via Axioms~\ref{Statespacesaxiom}--\ref{AmplitudeMapsAxiom} of Definition~\ref{QFTAFfin}. By Axiom~\ref{Statespacesaxiom}, we assign to the interval $I$ and the circle $C$ corresponding superspaces $\cH_I$ and $\cH_C$. Since we envisage an infinite-dimensional setting, we consider these as super Krein spaces and require also the dense subspaces $\cH_I^\ds\subseteq\cH_I$ and $\cH_C^\ds\subseteq\cH_C$. Axiom~\ref{ConjugationIsometryAxiom} requires graded involutions $\iota_I:\cH_I\to \cH_{\overline{I}}$ and $\iota_C:\cH_C\to\cH_{\overline{C}}$ associated to orientation-reversal. These must restrict to the corresponding dense subspaces. That, is $\iota_I(\cH_I^\ds)\subseteq \cH_{\overline{I}}^\ds$ and $\iota_C(\cH_C^\ds)\subseteq \cH_{\overline{C}}^\ds$.
On the other hand, we shall consider the graded tensor isomorphisms (Axiom~\ref{TensorIsometryAxiom}) as essentially trivial, implicit structures.
According to Axiom~\ref{GluingIsometryAxiom} we assign graded epimorphisms $\gamma_{I\sqcup I,I}:\cH_I\tens \cH_I \to \cH_I$ and $\gamma_{I,C}:\cH_I\to \cH_C$ to the gluing maps $f_{I\sqcup I,I}:I\sqcup I\to I$ and $f_{I,C}:I\to C$ respectively. (We have simplified the notation in the obvious way and taken the relevant graded tensor isomorphism $\tau$ to be implicit.) Again, restriction to the relevant dense subspaces in the domains must land in the dense subspaces of the images. That is, $\gamma_{I\sqcup I,I}(\cH_I^\ds\tens\cH_I^{\ds})\subseteq\cH_{I}^{\ds}$ and $\gamma_{I,C}(\cH_I^{\ds})\subseteq\cH_C^{\ds}$.
Finally, according to Axiom~\ref{AmplitudeMapsAxiom}, we assign an amplitude map to the standard disk, $\rho_{D}:\cH_C^{\ds}\to\C$. The generators obtained so far are listed in Table~\ref{tab:elemgen}.

\begin{table}
	\caption{Elementary objects and generating assignments}
	\label{tab:elemgen}
	\begin{center}
			\begin{tabular}{|l|l|}
				\hline
				Elementary object & Generating assignment \\
				\hline
                standard interval $I$ & $\cH_I$ \\
				standard circle $C$ & $\cH_C$ \\
                interval orientation reversal & $\iota_I:\cH_I\to\cH_{\overline{I}}$ \\
                circle orientation reversal & $\iota_C:\cH_C\to\cH_{\overline{C}}$ \\
				2-interval gluing $f_{I\sqcup I,I}$ & $\gamma_{I\sqcup I,I}:\cH_I\tens\cH_I\to\cH_I$ \\
				interval self-gluing $f_{I,C}$ & $\gamma_{I,C}:\cH_I\to\cH_C$ \\
				standard disk $D$ & $\rho_{D}:\cH_C^{\ds}\to\C$ \\
				\hline
			\end{tabular}
	\end{center}
\end{table}

\subsection{Special homeomorphisms}
\label{sec:specialhom}

In this section we consider additional homeomorphic gluing functions that we shall add to the elementary objects as they bring out special properties of already considered elementary hypersurfaces and regions.
Concerning the standard interval $I$ we note that there are orientation preserving homeomorphisms that map $I$ to $\overline{I}$. For example, if we identify $I$ with the unit interval $[0,1]$, then the homeomorphism $t\mapsto (1-t)$ has this property.

\begin{definition}\label{def:twistinghomeomorphism}
	We fix a standard choice of orientation-preserving "twisting homeomorphism" $f_{I,\overline{I}}:I \to \overline{I}$ of the standard interval $I$. We denote the orientation reversal $\overline{f_{I,\overline{I}}}$ of $f_{I,\overline{I}}$ by $f_{\overline{I},I}$. We choose $f_{\overline{I},I}$ in such a way that $f_{\overline{I},I}\circ f_{\overline{I},I}=\id_I$. 
Additionally, we impose as a consistency condition on $f_{I\sqcup I,I}$ the commutativity of the following diagram:
\begin{equation}
	\begin{tikzpicture}
		\matrix (m) [matrix of math nodes,row sep=3em,column sep=3em,minimum width=2em]
		{ & I_1 \sqcup I_2 & \\
		 \overline{I_1}\sqcup\overline{I_2} && I \\
		 & \overline{I} &\\};
		\path[-stealth]
		(m-1-2) edge node [left] {$f_{I_1,\overline{I_1}}\sqcup f_{I_2,\overline{I_2}}$} (m-2-1)
		edge node [right]  {$f_{I_1\sqcup I_2,I}$} (m-2-3)
		(m-2-1) edge node [below]  {$f_{\overline{I_2\sqcup I_1},\overline{I}}$} (m-3-2)
		(m-2-3) edge node [below] {$f_{I,\overline{I}}$} (m-3-2);
	\end{tikzpicture}
	\label{diag:doubleinttwist}
\end{equation}
\end{definition}

We proceed similarly for the circle. If we identify the standard circle $C$ with $\{z\in \C: |z|=1\}$ then, $z\mapsto \overline{z}$ is an example of an oreintation-preserving homeomorphism $C\to\overline{C}$.
\begin{definition}\label{def:twistcircle}
	We fix a standard choice of orientation-preserving "twisting homeomorphism" on $C$, $f_{C,\overline{C}}:C\to\overline{C}$. We denote the orientation reversed version of $f_{C,\overline{C}}:C\to\overline{C}$, which we will assume is also its inverse, by $f_{\overline{C},C}$. Moreover, we require $f_{I,C}$ to satisfy
	\begin{equation}
		f_{\overline{I},\overline{C}}\circ f_{I,\overline{I}}=f_{C,\overline{C}}\circ f_{I,C} .
		\label{eq:itoctwistcompat}
	\end{equation}
\end{definition}

We also consider a half rotation on the standard circle $C$, e.g.\ via the homeomorphism $z\mapsto -z$.
\begin{definition}\label{def:halfrotation}
	We write $f_\pi:C\to C$ for a standard choice of a half-rotation on $C$. This is a homeomorphism of $C$ satisfying $f_\pi^2=\id_C$. Moreover, we require $f_\pi$ to make the following diagram commute:
	\begin{equation}
		\begin{tikzpicture}
			\matrix (m) [matrix of math nodes,row sep=3em,column sep=4em,minimum width=2em]
			{I_1 \sqcup I_2 & I_2 \sqcup I_1 \\
				I & I\\
				C & C\\};
			\path[-stealth]
			(m-1-1) edge node [above]  {$\sigma$} (m-1-2)
			edge node [left] {$f_{I_1\sqcup I_2,I}$} (m-2-1)
			(m-1-2) edge node [right] {$f_{I_2\sqcup I_1,I}$} (m-2-2)
			(m-2-1) edge node [left] {$f_{I,C}$} (m-3-1)
			(m-2-2) edge node [right] {$f_{I,C}$} (m-3-2)
			(m-3-1) edge node [below]  {$f_{\pi}$} (m-3-2);
		\end{tikzpicture}
		\label{diag:doubleintcirc}
	\end{equation}
	Here, $I_1$ and $I_2$ are two copies of the standard interval $I$ and $\sigma$ denotes the identity map that interchanges the components of the disjoint union.
	We also require the orientation reversed version of $f_{\pi}$ to be compatible with the circle twist map. That is,
	\begin{equation}
		f_{\overline{\pi}}\circ f_{C,\overline{C}}=f_{C,\overline{C}}\circ f_{\pi} .
		\label{eq:rottwist}
	\end{equation}
\end{definition} 

There is an orientation preserving homeomorphism that maps the disk to its orientation reversed version. Again, we can choose such a homeomorphism for the standard disk $D$ such that it is inverse to its orientation reversed version. For example, identifying $D$ with $\{z\in\C: |z|\le |\}$ we may consider $z\mapsto \overline{z}$.

\begin{definition}\label{def:twistdisk}
	We fix a standard choice of "twisting homeomorphism" of the standard disk $D$, and we denote it by $f_{D,\overline{D}}:D\to \overline{D}$. We write $f_{\overline{D},D}:\overline{D}\to D$ for the orientation reversal of $f_{D,\overline{D}}$. We require compatibility with $f_{C,\overline{C}}$ in the sense that the latter is the restriction of $f_{\overline{D},D}:\overline{D}\to D$ to the boundary. 
\end{definition}

According to Axiom~\ref{GluingIsometryAxiom} we assign graded epimorphisms $\gamma_{I,\overline{I}}:\cH_I\to\cH_{\overline{I}}$, $\gamma_{C,\overline{C}}:\cH_C\to\cH_{\overline{C}}$ and $\gamma_{\pi}:\cH_C \to \cH_C$ to the gluing maps $f_{I,\overline{I}}$, $f_{C,\overline{C}}$ and $f_{\pi}$ respectively.
Restriction to the relevant dense subspaces in the domains must land in the dense subspaces of the images. That is, $\gamma_{I,\overline{I}}(\cH_I^{\ds})\subseteq \cH_{\overline{I}}^{\ds}$, $\gamma_{C,\overline{C}}(\cH_C^{\ds})\subseteq \cH_{\overline{C}}^{\ds}$ and $\gamma_{\pi}(\cH_C^\ds)\subseteq\cH_{C}^{\ds}$. The additional special homeomorphisms of the present section and their generating assignments are summarized in Table~\ref{tab:elemhomgen}.

\begin{table}
	\caption{Special homeomorphisms and generating assignments}
	\label{tab:elemhomgen}
	\begin{center}
		\begin{tabular}{|l|l|}
			\hline
			Special homeomorphism & Generating assignment \\
			\hline
			interval twist $f_{I,\overline{I}}$ & $\gamma_{I,\overline{I}}:\cH_{I}\to\cH_{\overline{I}}$ \\
			circle twist $f_{C,\overline{C}}$ & $\gamma_{C,\overline{C}}:\cH_{C}\to\cH_{\overline{C}}$ \\
			half rotation $f_{\pi}$ & $\gamma_{\pi}:\cH_{C}\to\cH_{C}$ \\
			disk twist $f_{D,\overline{D}}$ &  \\
			\hline
		\end{tabular}
	\end{center}
\end{table}

\subsection{Relations}
\label{sec:2rel}

We proceed to consider the most important relations that Axioms~\ref{AssociativityTensorIsometryAxiom}--\ref{AmplitudeRelativeGluingAxiom} impose on the generating assignments. Note that we shall use Axiom~\ref{AssociativityTensorIsometryAxiom} implicitly and throughout in accordance with our convention in this section of absorbing the tensor isomorphisms $\tau$ into the gluing epimorphisms $\gamma$. The first piece of structure we consider is the gluing transformation $\gamma_{I\sqcup I, I}$ associated to $f_{I\sqcup I,I}$. 

\begin{lem}\label{lem:algebra}
	The map $\gamma_{I\sqcup I,I}:\cH_I\tens\cH_I\to\cH_I$ is a binary operation, associative up to the isomorphism $\gamma_\delta$. Precisely, the gluing transformations $\gamma_{I\sqcup I,I}$ and $\gamma_\delta$ satisfy the equation:
	\begin{equation}
		\gamma_{I\sqcup I,I}\circ (\id_I\tens\gamma_{I\sqcup I,I})=\gamma_\delta
		\circ \gamma_{I\sqcup I,I}\circ (\gamma_{I\sqcup I,I}\tens\id_I) .
		\label{eq:associator}
	\end{equation}
        Moreover, the gluing transformations $\gamma_{\overline{I_1 \sqcup I_2},\overline{I}}$ associated to two copies $I_1$ and $I_2$ of the interval $I$, $\gamma_{I_1\sqcup I_2,I}$, and the isomorphism $\iota_{I}$ satisfy the equation:
	\begin{equation}
		\gamma_{\overline{I_1\sqcup I_2},\overline{I}}(\iota_I(\psi)\tens\iota_I(\eta))
		=\iota_I(\gamma_{I_1\sqcup I_2,I}(\eta\tens\psi))
		\label{eq:doubleintcopcompat}
	\end{equation}
	for every $\eta\in\cH_{I_1}$ and $\psi\in\cH_{I_2}$.
\end{lem}
\begin{proof}
	Equation~(\ref{eq:associator}) follows from Axiom~\ref{AssociativityofGluingIsometryAxiom} and equation~(\ref{eq:intglueorder}), and equation (\ref{eq:doubleintcopcompat}) follows from Axioms~\ref{ConjugationTensorIsometryAxiom} and \ref{ConjugationGluingIsometryAxiom}.
\end{proof}

\begin{obs}\label{obs:algebra}
	We represent equation~(\ref{eq:associator}) pictorially by the commutativity of the square up to a filler isomorphism $\gamma_\delta$,
	\begin{center}
		\begin{tikzpicture}
			\matrix (m) [matrix of math nodes,row sep=4em,column sep=4em,minimum width=2em]
			{
				\cH_I^{\otimes 3} & \cH_I^{\otimes 2} \\
				\cH_I^{\otimes 2} & \cH_I \\};
			\path[-stealth]
			(m-1-1) edge node [left] {$1\otimes \gamma_{I\sqcup I,I}$} (m-2-1)
			edge node [above] {$\gamma_{I\sqcup I,I}\otimes 1$} (m-1-2)
			(m-1-2) edge node [right] {$\gamma_{I\sqcup I,I}$} (m-2-2)
			(m-2-1) edge node [below] {$\gamma_{I\sqcup I,I}$}(m-2-2)
			(m-1-1) edge [color = white] node [color = black] {$\gamma_\delta$} (m-2-2)
				;
		\end{tikzpicture}
	\end{center}
	The convention on the above square "filled" by $\gamma_\delta$ is the same as in Definition~\ref{def:delta}. Equation~(\ref{eq:doubleintcopcompat}) is equivalent to commutativity of the diagram:
	\begin{center}
		\begin{tikzpicture}
			\matrix (m) [matrix of math nodes,row sep=4em,column sep=4em,minimum width=2em]
			{
				\cH_I^{\otimes 2} & \cH_I \\
				\cH_{\overline{I}}^{\otimes 2} & \cH_{\overline{I}} \\};
			\path[-stealth]
			(m-1-1) edge node [left] {$\iota_{I}\otimes\iota_{I}$} (m-2-1)
			edge node [above] {$\gamma_{I\sqcup I,I}$} (m-1-2)
			(m-1-2) edge node [right] {$\iota_I$} (m-2-2)
			(m-2-1) edge node [below] {$\gamma_{\overline{I_1\sqcup I_2},\overline{I}}$}(m-2-2)
			;
		\end{tikzpicture}
	\end{center}
\end{obs}

We now consider the gluing transformation $\gamma_{I,C}$ associated to our standard choice of "closing an interval" gluing function $f_{I,C}:I\to C$. We also consider the graded isomorphism $\gamma_\pi$ associated to our choice of half-rotation homeomorphism $f_\pi$. 
\begin{lem}\label{obs:productandrotationandiota}
	The transformations $\gamma_{\overline{I},\overline{C}}$, $\gamma_{I,C}$, $\iota_I$, and $\iota_C$ satisfy the equation:
	\begin{equation}
		\gamma_{\overline{I},\overline{C}}\circ \iota_I=\iota_C\circ \gamma_{I,C} .
		\label{eq:itocopcompat}
	\end{equation}
	The transformations $\iota_C$ and $\gamma_\pi$ satisfy the equation:
	\begin{equation}
		\gamma_{\overline{\pi}}\circ\iota_C=\iota_{\overline{C}}\circ \gamma_{\pi}.
		\label{eq:rotinv}
	\end{equation}
\end{lem}
\begin{proof}
	Equations~(\ref{eq:itocopcompat}) and (\ref{eq:rotinv}) follow with Axiom~\ref{ConjugationGluingIsometryAxiom}.
\end{proof}  

\begin{obs}\label{obs:productsandtorationsiota2}
	Equation~(\ref{eq:itocopcompat}) is equivalent to the commutativity of the square:
	\begin{center}
		\begin{tikzpicture}
			\matrix (m) [matrix of math nodes,row sep=4em,column sep=4em,minimum width=2em]
			{
				\cH_I & \cH_C \\
				\cH_{\overline{I}} & \cH_{\overline{C}} \\};
			\path[-stealth]
			(m-1-1) edge node [left] {$\iota_{I}$} (m-2-1)
			edge node [above] {$\gamma_{I,C}$} (m-1-2)
			(m-1-2) edge node [right] {$\iota_C$} (m-2-2)
			(m-2-1) edge node [below] {$\gamma_{\overline{I},\overline{C}}$}(m-2-2)
			;
		\end{tikzpicture}
	\end{center}
	Equation~(\ref{eq:rotinv}) is equivalent to the commutativity of the square:
	\begin{center}
		\begin{center}
			\begin{tikzpicture}
				\matrix (m) [matrix of math nodes,row sep=4em,column sep=4em,minimum width=2em]
				{
					\cH_C & \cH_C \\
					\cH_{\overline{C}} & \cH_{\overline{C}} \\};
				\path[-stealth]
				(m-1-1) edge node [left] {$\iota_{C}$} (m-2-1)
				edge node [above] {$\gamma_\pi$} (m-1-2)
				(m-1-2) edge node [right] {$\iota_{\overline{C}}$} (m-2-2)
				(m-2-1) edge node [below] {$\gamma_{\overline{\pi}}$}(m-2-2)
				;
			\end{tikzpicture}
		\end{center}
	\end{center}
\end{obs}

Next, it turns out to be useful to combine the twist maps with orientation reversal, both for the interval and for the circle.

\begin{notation}
	We denote the anti-linear composite maps $\iota_{\overline{I}}\circ \gamma_{I,\overline{I}}:\cH_I\to\cH_I$ and $\iota_{\overline{C}}\circ\gamma_{C,\overline{C}}:\cH_C\to\cH_C$ by $\alpha_I$ and $\alpha_C$ respectively.   
\end{notation}

\begin{lem}\label{lem:RealStructures}
	$\alpha_I$ and $\alpha_C$ are anti-linear involutions, and thus define \emph{real structures} on $\cH_I$ and $\cH_C$ respectively. Moreover, $\alpha_{\overline{I}}=\iota_I\circ\alpha_I\circ\iota_{\overline{I}}$ and $\alpha_{\overline{C}}=\iota_C\circ\alpha_C\circ\iota_{\overline{C}}$, define real structures on $\cH_{\overline{I}}$ and $\cH_{\overline{C}}$.
\end{lem}
\begin{proof}
	The claim for $\alpha_I$ follows from the fact that the inverse of $\alpha_I$ is the anti-linear isomorphism $\gamma_{\overline{I},I}\circ \iota_{I}$, which by Axiom~\ref{ConjugationGluingIsometryAxiom}, is equal to $\alpha_I$. The claim for $\alpha_C$ follows from an analogous argument. The rest of the statement follows from Lemma \ref{obs:productandrotationandiota}.
\end{proof}

\begin{obs}\label{obs:realstructures}
	The real structures of Lemma \ref{lem:RealStructures} preserve dense subspaces.
\end{obs} 

We now consider relations between the real structures $\alpha_I$ and $\alpha_C$, and the structure maps considered above.
\begin{lem}\label{lem:compatibilityalphagammas}
	$\alpha_I$ and $\gamma_{I\sqcup I,I}$ are compatible, in the sense of the following equation,
	\begin{equation}
		\gamma_{I\sqcup I,I}(\alpha_I(\psi)\tens\alpha_I(\eta))=\alpha_I(\gamma_{I\sqcup I,I}(\eta\tens\psi)) ,
		\label{eq:realanticom}
	\end{equation}
	for every $\psi,\eta\in\cH_I$. Moreover,
	$\alpha_I$, $\alpha_C$ and $\gamma_{I,C}$ are compatible in the sense of the following equation:
	\begin{equation}
		\alpha_C\circ \gamma_{I,C}=\gamma_{I,C}\circ \alpha_I .
		\label{eq:realitoc}
	\end{equation}
	Finally, $\gamma_\pi$ and $\alpha_C$ commute, i.e., the following equation holds:
	\begin{equation}\label{eq:compatibilitygammapi}
		\gamma_\pi\circ\alpha_C=\alpha_C\circ\gamma_{\pi} .
	\end{equation}
\end{lem}
\begin{proof}
	Equation (\ref{eq:realanticom}) follows from Axiom~\ref{AssociativityofGluingIsometryAxiom}, applied to (\ref{diag:doubleinttwist}), together with (\ref{eq:doubleintcopcompat}). Equation~(\ref{eq:realitoc}) follows by applying Axiom~\ref{AssociativityofGluingIsometryAxiom} to (\ref{eq:itoctwistcompat}), together with (\ref{eq:itocopcompat}). Finally, combining Axiom~\ref{AssociativityofGluingIsometryAxiom} applied to (\ref{eq:rottwist}) with (\ref{eq:rotinv}) we obtain Equation~(\ref{eq:compatibilitygammapi}).
\end{proof}

\begin{obs}\label{obs:compatibilityalphasgammas}
	Equation~(\ref{eq:realanticom}) is equivalent to the commutativity of the square:
	\begin{center}
		\begin{tikzpicture}
			\matrix (m) [matrix of math nodes,row sep=4em,column sep=4em,minimum width=2em]
			{
				\cH_I^{\otimes 2} & \cH_I \\
				\cH_{I}^{\otimes 2} & \cH_{I} \\};
			\path[-stealth]
			(m-1-1) edge node [left] {$\alpha_{I}^{\otimes 2}$} (m-2-1)
			edge node [above] {$\gamma_{I\sqcup I,I}$} (m-1-2)
			(m-1-2) edge node [right] {$\alpha_{I}$} (m-2-2)
			(m-2-1) edge node [below] {$\gamma_{I\sqcup I,I}$}(m-2-2)
			;
		\end{tikzpicture}
	\end{center}
	Equation~(\ref{eq:realitoc}) is equivalent to the commutativity of the square:
	\begin{center}
		\begin{tikzpicture}
			\matrix (m) [matrix of math nodes,row sep=4em,column sep=4em,minimum width=2em]
			{
				\cH_I & \cH_C \\
				\cH_{I} & \cH_{C} \\};
			\path[-stealth]
			(m-1-1) edge node [left] {$\alpha_{I}$} (m-2-1)
			edge node [above] {$\gamma_{I,C}$} (m-1-2)
			(m-1-2) edge node [right] {$\alpha_{C}$} (m-2-2)
			(m-2-1) edge node [below] {$\gamma_{I,C}$}(m-2-2)
			;
		\end{tikzpicture}
	\end{center}
	Equation~(\ref{eq:compatibilitygammapi}) is equivalent to the commutativity of the following square:
	\begin{center}
		\begin{tikzpicture}
			\matrix (m) [matrix of math nodes,row sep=4em,column sep=4em,minimum width=2em]
			{
				\cH_C & \cH_C \\
				\cH_C & \cH_C \\};
			\path[-stealth]
			(m-1-1) edge node [left] {$\alpha_{C}$} (m-2-1)
			edge node [above] {$\gamma_\pi$} (m-1-2)
			(m-1-2) edge node [right] {$\alpha_{C}$} (m-2-2)
			(m-2-1) edge node [below] {$\gamma_{\pi}$}(m-2-2)
			;
		\end{tikzpicture}
	\end{center}
\end{obs}

We now consider relations between the binary product operation $\gamma_{I\sqcup I,I}$, the "closing an interval" transformation $\gamma_{I,C}$, the "half rotation" transformation $\gamma_\pi$, and the graded symmetry operation of Krein spaces.
\begin{lem}\label{lem:symmetryproduct}
	The transformations $\gamma_{I\sqcup I,I}$, $\gamma_{I,C}$, and $\gamma_\pi$ satisfy the equation:
	\begin{equation}
		\gamma_\pi(\gamma_{I,C}(\gamma_{I\sqcup I,I}(\psi\tens\eta)))
		=(-1)^{|\psi| |\eta|}\gamma_{I,C}(\gamma_{I\sqcup I,I}(\eta\tens\psi)) .
		\label{eq:picomc}
	\end{equation}
	
	\noindent for every $\psi,\eta\in\cH_I$. 
\end{lem}
\begin{proof}
	The lemma follows directly from diagram~(\ref{diag:doubleintcirc}), together with Axiom~\ref{AssociativityofGluingIsometryAxiom} and Axiom~\ref{TensorIsometryandGradedSymmetryAxiom}.
\end{proof}

\begin{obs}\label{obs:symmetryproduct}
	Equation (\ref{eq:picomc}) says that the "closure" transformation $\gamma_{I,C}$ is a trace for the binary operation $\gamma_{I\sqcup I,I}$, up to the "half rotation" isomorphism $\gamma_\pi$. More precisely, for two copies $I_1$ and $I_2$ of the standard interval $I$, and $s:\cH_{I_1} \otimes \cH_{I_2}\to \cH_{I_2} \otimes \cH_{I_1}$ the graded symmetry isomorphism of Krein spaces, equation (\ref{eq:picomc}) is equivalent to the commutativity of the diagram:
	\begin{center}
		\begin{tikzpicture}
			\matrix (m) [matrix of math nodes,row sep=3em,column sep=4em,minimum width=2em]
			{\cH_{I_1} \otimes \cH_{I_2} & \cH_{I_2} \otimes \cH_{I_1} \\
				\cH_I & \cH_I\\
				\cH_C & \cH_C\\};
			\path[-stealth]
			(m-1-1) edge node [above]  {$s$} (m-1-2)
			edge node [left] {$\gamma_{I_1\sqcup I_2,I}$} (m-2-1)
			(m-1-2) edge node [right] {$\gamma_{I_2\sqcup I_1,I}$} (m-2-2)
			(m-2-1) edge node [left] {$\gamma_{I,C}$} (m-3-1)
			(m-2-2) edge node [right] {$\gamma_{I,C}$} (m-3-2)
			(m-3-1) edge node [below]  {$\gamma_{\pi}$} (m-3-2);
		\end{tikzpicture}
	\end{center}
\end{obs}

We now consider relations of the structure transformations described above and amplitude maps on our choice of standard disk.
\begin{lem}\label{lem:damplinvol}
	The following equations hold for every $\psi\in\cH_C$:
	\begin{align}
		\rho_{\overline{D}}(\iota_C(\psi)) & =\overline{\rho_{D}(\psi)},
		\label{eq:amplconj}\\
		\text{and}\qquad
		\rho_{\overline{D}}(\gamma_{C,\overline{C}}(\psi)) & =\rho_{D}(\psi) .
		\label{eq:amplflip}
	\end{align}
	\noindent Moreover, from this, we obtain the following equation for every $\psi\in \cH_I$:
	\begin{equation}
		\rho_{D}(\alpha_C(\psi))=\overline{\rho_{D}(\psi)} .
		\label{eq:amplcc}
	\end{equation}
\end{lem}
\begin{proof}
	Equations~(\ref{eq:amplconj}) and (\ref{eq:amplflip}) follow directly from Axioms~\ref{AmplitudeandTensorIsometryAxiom} and \ref{AmplitudeConjugationIsometryAxiom}.
\end{proof}

\begin{lem}\label{lem:damplcomp}
	The following equation holds:
	\begin{equation}
		\rho_{D}\circ\gamma_\pi=\rho_{D} .
		\label{eq:amplrot}
	\end{equation}
	Further, $\rho_D$ satisfies the following equation for every $\psi,\eta\in\cH_I$.:
	\begin{equation}
		\rho_{D}(\gamma_{I,C}(\gamma_{I\sqcup I,I}(\psi\tens\eta)))
		=(-1)^{|\psi| |\eta|}\rho_{D}(\gamma_{I,C}(\gamma_{I\sqcup I,I}(\eta\tens\psi))) .
		\label{eq:damplgradcom}
    \end{equation}
\end{lem}
\begin{proof}
	There exists a homeomorphism of the standard disk $D$ that restricts to $f_\pi$ on the boundary of $D$. By Axiom~\ref{AmplitudeConjugationIsometryAxiom} this implies equation~(\ref{eq:amplrot}). This, applied to identity~(\ref{eq:picomc}) yields equation~(\ref{eq:damplgradcom}).
\end{proof}

\begin{prop}\label{prop:trip}
	The inner product of $\cH_I$ is given by,
	\begin{equation}
		\langle\psi,\eta\rangle_I=\rho_{D}(\gamma_{I,C}(\gamma_{I\sqcup I,I}(\alpha_I(\psi)\tens\eta))) .
		\label{eq:intip}
	\end{equation}
	This satisfies,
	\begin{equation}
		\langle\alpha_I(\psi),\alpha_I(\eta)\rangle_I =(-1)^{|\psi| |\eta|} \langle \eta,\psi\rangle_I = (-1)^{|\psi| |\eta|} \overline{\langle \psi,\eta\rangle_I} .
		\label{eq:realipint}
	\end{equation}
	In particular, with respect to the decomposition of $\cH_I$ of Observation~\ref{obs:coev} we have, $\alpha_I(\cH_I^{0,\pm})=\cH_I^{0,\pm}$ and $\alpha_I(\cH_I^{1,\pm})=\cH_I^{1,\mp}$. As a consequence, if $\{\zeta_k\}_{k\in I}$ is an orthonormal basis of $\cH_I$, then $\{\alpha_I(\zeta_k)\}_{k\in I}$ is an orthonormal basis of $\cH_I$ but with $|\alpha_I(\zeta_k)|=|\zeta_k|$ and $[\alpha_I(\zeta_k)]=[\zeta_k] + |\zeta_k|$.
\end{prop}
\begin{proof}	
We start with the notion of evaluation as given by Axiom~\ref{SliceRegionsAxiom}. Consider first a slice region associated to the standard interval hypersurface $I$. Topologically, this slice region is a disk, which we identify with the standard disk $D$. The boundary of the disk $D$ is the circle $C$, and we can decompose this circle into two intervals. These two intervals are a copy of the original interval $I$ and its orientation reversed version $\overline{I}$. In order to construct the necessary gluing function $\overline{I}\sqcup I\to C$ from elementary objects, we compose the 2-interval gluing $f_{I\sqcup I,I}$ with the interval twist $f_{\overline{I},I}$. That is, the required gluing function is $f_{I\sqcup I,I}\circ(f_{\overline{I},I}\sqcup\id_I)$. In terms of assignments, we get $\ev_I:\cH_{\overline{I}}\tens\cH_I\to\C$ as,
\begin{equation}
	\ev_I(\psi\tens\eta)=\rho_{D}(\gamma_{I,C}(\gamma_{I\sqcup I,I}(\gamma_{\overline{I},I}(\psi)\tens\eta))) .
\end{equation}
The inner product on $\cH_I$ is then obtained from (\ref{eq:propip}), yielding (\ref{eq:intip}). Relation (\ref{eq:realipint}) follows with relation (\ref{eq:damplgradcom}).
\end{proof}

\begin{prop}
	The inner product of $\cH_C$ admits the following expression,
	\begin{equation}
		\langle \gamma_{I,C}(v), \gamma_{I,C}(v')\rangle_C=c_{\mathrm{Cyl}}\sum_{k\in I}(-1)^{[\zeta_k]+|\zeta_k| |v|}\rho_{D}\circ\gamma_{I,C}\circ\gamma_{I\sqcup I\sqcup I\sqcup I,I}(\zeta_k\tens \alpha_I(v)\tens \alpha_I(\zeta_k)\tens v') .
		\label{eq:ipcircle}
	\end{equation}
    Here $v,v'\in\cH_I$ and $c_{\mathrm{Cyl}}$ is the gluing anomaly of the relative gluing diagram of Example~\ref{mainexamplerelativegluing}.
	
	In addition, the real structure $\alpha_C$ is a conjugate linear graded isometry. That is,
	\begin{equation}
		\langle\alpha_C(\psi),\alpha_C(\eta)\rangle_C =(-1)^{|\psi| |\eta|} \langle \eta,\psi\rangle_C = (-1)^{|\psi| |\eta|} \overline{\langle \psi,\eta\rangle_C} .
		\label{eq:realipcircle}
	\end{equation}
\end{prop}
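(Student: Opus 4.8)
The plan is to compute the circle inner product directly from its definition through the slice-region evaluation map, and then to express the relevant cylinder amplitude in terms of the disk amplitude by means of the relative gluing diagram of Example~\ref{mainexamplerelativegluing}. Concretely, by \eqref{eq:propip} together with Axiom~\ref{SliceRegionsAxiom}, the form $\langle \gamma_{I,C}(v),\gamma_{I,C}(v')\rangle_C$ equals $\ev_C(\iota_C(\gamma_{I,C}(v))\tens\gamma_{I,C}(v'))$, where $\ev_C$ is built from the amplitude $\rho_{\mathrm{Cyl}}$ of the cylinder slice region on $C$. Using \eqref{eq:itocopcompat} I rewrite the first argument as $\gamma_{\overline{I},\overline{C}}(\iota_I(v))$, so that both arguments are closings of interval states. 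The decisive point is that the cylinder is realized, via Example~\ref{mainexamplerelativegluing}, as the self-gluing of the square $D=[0,1]^2$ along its top and bottom edges, with cut interval $\Sigma\cong I$, free boundary $\Lambda\cong I\sqcup I$ (the left and right edges), $X=D$ and $X'=\mathrm{Cyl}$, whose boundary gluing $\beta$ is precisely the slice boundary decomposition closing the two $\Lambda$-intervals into the two oppositely oriented boundary circles.

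Next I apply Axiom~\ref{AmplitudeRelativeGluingAxiom} in the explicit summed form \eqref{EquationAmplitudeRelativeGluingData} (Observation~\ref{obs:coev}), with $\{\zeta_k\}$ an orthonormal basis of $\cH_\Sigma=\cH_I$ and $c_{f,\beta}=c_{\mathrm{Cyl}}$. Choosing the $\Lambda$-input to be $\iota_I(v)\tens v'$ produces, via functoriality of the gluing epimorphisms (Axiom~\ref{AssociativityofGluingIsometryAxiom}) and \eqref{eq:itocopcompat}, exactly $\iota_C(\gamma_{I,C}(v))\tens\gamma_{I,C}(v')$ on $\partial\,\mathrm{Cyl}$, matching the two evaluation arguments. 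On the disk side, $\gamma_\alpha\tau$ is the gluing of the four boundary intervals of the square into its boundary circle, i.e.\ $\gamma_{I,C}\circ\gamma_{I\sqcup I\sqcup I\sqcup I,I}$ precomposed with the interval twists $f_{\overline{I},I}$ on the orientation-reversed factors. Since $\gamma_{\overline{I},I}\circ\iota_I=\alpha_I$ (the real structure, by Axiom~\ref{ConjugationGluingIsometryAxiom}), these twists convert $\iota_\Sigma(\zeta_k)=\iota_I(\zeta_k)$ into $\alpha_I(\zeta_k)$ and $\iota_I(v)$ into $\alpha_I(v)$. Finally, reconciling the decomposition order $(\Sigma,\overline{\Sigma},\Lambda)$ of \eqref{EquationAmplitudeRelativeGluingData} with the cyclic order of the four edges around $\partial D$ requires a single graded transposition of the $\alpha_I(\zeta_k)$ and $\alpha_I(v)$ factors; by Axiom~\ref{TensorIsometryandGradedSymmetryAxiom} this contributes the Koszul sign $(-1)^{|\zeta_k||v|}$, which together with the $(-1)^{[\zeta_k]}$ of the coevaluation sum yields the stated prefactor $(-1)^{[\zeta_k]+|\zeta_k||v|}$ and the ordering $\zeta_k\tens\alpha_I(v)\tens\alpha_I(\zeta_k)\tens v'$, establishing \eqref{eq:ipcircle}.

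For the second claim, the cleanest route avoids \eqref{eq:ipcircle} and instead decomposes $\alpha_C=\iota_{\overline{C}}\circ\gamma_{C,\overline{C}}$. Since $f_{C,\overline{C}}$ is a homeomorphism, $\gamma_{C,\overline{C}}$ is an isometry by Theorem~\ref{thm:propip}(2); since $\iota_{\overline{C}}$ is an anti-linear graded isometry by Theorem~\ref{thm:propip}(3) (available because we work with symmetric CQFT, so Axiom~\ref{TensorIsometryandGradedSymmetryAxiom} holds); and since the form is hermitian by Theorem~\ref{thm:propip}(1). Composing these gives $\langle\alpha_C(\psi),\alpha_C(\eta)\rangle_C=(-1)^{|\psi||\eta|}\overline{\langle\gamma_{C,\overline{C}}(\psi),\gamma_{C,\overline{C}}(\eta)\rangle_{\overline{C}}}=(-1)^{|\psi||\eta|}\overline{\langle\psi,\eta\rangle_C}=(-1)^{|\psi||\eta|}\langle\eta,\psi\rangle_C$, which is \eqref{eq:realipcircle}. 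Alternatively one may deduce it from \eqref{eq:ipcircle} and the interval identity \eqref{eq:realipint}, using $\alpha_C\circ\gamma_{I,C}=\gamma_{I,C}\circ\alpha_I$ from \eqref{eq:realitoc} together with the surjectivity of $\gamma_{I,C}$.

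The main obstacle is the sign and ordering bookkeeping in the first part: correctly identifying which of the four square edges carry reversed orientation, verifying that the induced twists are exactly the real structures $\alpha_I$, and confirming that the cyclic reordering of boundary factors is responsible for precisely the single transposition producing $(-1)^{|\zeta_k||v|}$. Once the geometry of Example~\ref{mainexamplerelativegluing} is fixed, everything else is a direct substitution into \eqref{EquationAmplitudeRelativeGluingData}.
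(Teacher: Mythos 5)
Your derivation of \eqref{eq:ipcircle} is correct and coincides with the paper's own proof: the paper likewise realizes the cylinder slice region as the self-gluing of the square of Example~\ref{mainexamplerelativegluing}, builds $\alpha$ from two interval twists followed by the transposition $\sigma_{23}$ (whence the single Koszul sign $(-1)^{|\zeta_k||v|}$ via Axiom~\ref{TensorIsometryandGradedSymmetryAxiom}), feeds $\iota_I(v)\tens v'$ into the summed form \eqref{EquationAmplitudeRelativeGluingData} of Axiom~\ref{AmplitudeRelativeGluingAxiom}, and reads the left-hand side as the evaluation map via Axiom~\ref{SliceRegionsAxiom} and \eqref{eq:propip}. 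Your identifications $\gamma_{\overline{I},I}\circ\iota_I=\alpha_I$ and $\gamma_\beta(\iota_I(v)\tens v')=\iota_C(\gamma_{I,C}(v))\tens\gamma_{I,C}(v')$ via \eqref{eq:itocopcompat} are exactly the steps taken there.

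For \eqref{eq:realipcircle} you take a genuinely different route. The paper derives the graded isometry property from the explicit sum \eqref{eq:ipcircle}, using iterations of the associator relation \eqref{eq:associator} together with Axioms~\ref{AssociativityofGluingIsometryAxiom} and \ref{AmplitudeEquivarianceAxiom} and the graded cyclicity \eqref{eq:damplgradcom} — a computation internal to the 2-dimensional calculus. You instead factor $\alpha_C=\iota_{\overline{C}}\circ\gamma_{C,\overline{C}}$ (which is precisely the paper's definition of $\alpha_C$) and invoke the general structural results of Theorem~\ref{thm:propip}: hermiticity of the form (part 1), the isometry of gluing isomorphisms coming from homeomorphisms applied to $f_{C,\overline{C}}$ (part 2), and the anti-linear graded isometry property of the involutions (part 3, legitimately available here because Section~\ref{sec:2dim} treats \emph{symmetric} CQFT, so Axiom~\ref{TensorIsometryandGradedSymmetryAxiom} holds). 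Your argument is shorter and strictly more general — it establishes the same property of the real structure on the state space of \emph{any} hypersurface, with no recourse to the explicit circle formula — while the paper's version has the virtue of staying within the concrete generators-and-relations toolkit already deployed. One harmless imprecision: $\beta$ in Example~\ref{mainexamplerelativegluing} is $f_{\overline{I},\overline{C}}\sqcup f_{I,C}$, not the slice boundary decomposition itself; the slice structure enters only when reading $\rho_{\mathrm{Cyl}}$ of the glued boundary state as $\ev_C$. Since in the actual computation you use the correct relation \eqref{eq:itocopcompat}, nothing breaks.
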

\begin{proof}
	We first work out the amplitude of the cylinder, which is the slice region for the circle. To this end we produce the cylinder by gluing a disk to itself following Example~\ref{mainexamplerelativegluing}. The challenge is here to construct the maps $\alpha: I_1\sqcup \overline{I_2} \sqcup \overline{I_3}\sqcup I_4\to\ C$ and $\beta: \overline{I_3}\sqcup I_4\to \overline{C_3} \sqcup C_4$ in terms of elementary standard maps. For $\beta$ this is straightforward, it is simply $\beta=f_{\overline{I_3},\overline{C_3}} \sqcup f_{I_4,C_4}$. The map $\alpha$ can be constructed as follows,
\begin{equation}
	\alpha=f_{I,C}\circ f_{I_1\sqcup I_3\sqcup I_2\sqcup I_4,I}
	 \circ (\id_1\sqcup \sigma_{2 3}\sqcup \id_4)
	 \circ (\id_1\sqcup f_{\overline{I_2},I_2} \sqcup f_{\overline{I_3},I_3} \sqcup \id_4) .
\end{equation}
Applying Axiom~\ref{AmplitudeRelativeGluingAxiom} in the form of identity (\ref{EquationAmplitudeRelativeGluingData}) to $(f,\beta)$ and $\iota_{\overline{C}}(v)\tens v'$ we obtain for the cylinder amplitude,
\begin{equation}
	\rho_{\mathrm{Cyl}}(\iota_C(\gamma_{I,C}(v))\tens \gamma_{I,C}(v'))=c_{\mathrm{Cyl}}\sum_{k\in I}(-1)^{[\zeta_k]+|\zeta_k| |v|}\rho_{D}\circ\gamma_{I,C}\circ\gamma_{I\sqcup I\sqcup I\sqcup I,I}(\zeta_k\tens \alpha_I(v)\tens \alpha_I(\zeta_k)\tens v') .
\end{equation}
By Axiom~\ref{SliceRegionsAxiom} the left-hand side can be read as an evaluation map, which by equation (\ref{eq:propip}) in turn yields the inner product given precisely by equation (\ref{eq:ipcircle}).
Finally, using (iterations of) relation (\ref{eq:associator}) with Axioms~\ref{AssociativityofGluingIsometryAxiom} and \ref{AmplitudeEquivarianceAxiom} as well as relation (\ref{eq:damplgradcom}) we obtain the graded isometry property (\ref{eq:realipcircle}).
\end{proof}

\begin{prop}\label{prop:diskamplautoglue}
The following relation is satisfied by the disk amplitude,
\begin{equation}
	\rho_{D}(\gamma_{I,C}(v))=c \sum_{k\in I}(-1)^{[\zeta_k]}\rho_{D}\circ\gamma_{I,C}\circ\gamma_{I\sqcup I\sqcup I,I}(\zeta_k\tens\alpha_I(\zeta_k)\tens v) .
	\label{eq:diskamplautoglue}
\end{equation}
Here, $v\in\cH_I$ and $c$ is the gluing anomaly of the relative gluing diagram of Example~\ref{diskondiskrelativegluing}.
\end{prop}
\begin{proof}
	Consider gluing the standard disk to itself as in Example~\ref{diskondiskrelativegluing}. Take the gluing map $\beta:I\to C$ to be $f_{I,C}$ and construct the gluing map $\alpha$ out of standard gluing maps as follows,
\begin{equation}
	\alpha= f_{I,C}\circ f_{I_1\sqcup I_2 \sqcup I_3}\circ (\id_{I_1}\sqcup f_{\overline{I_2},I_2}\sqcup \id_{I_3}).
\end{equation}
Applying Axiom~\ref{AmplitudeRelativeGluingAxiom} in the form of identity (\ref{EquationAmplitudeRelativeGluingData}) to $(f,\beta)$ we obtain relation (\ref{eq:diskamplautoglue}).
\end{proof}

\subsection{Associative algebra}
\label{sec:assocalg}

In this section we shall make two additional assumptions on CQFTs, \emph{strictness} and \emph{anomaly-freedom}. As a consequence of these we obtain general structural results on the state spaces associated to the open and closed strings $I$ and $C$. 

\begin{definition}
	A CQFT is called \emph{strict}, if for every homeomorphism $f:\Sigma\to\Sigma$ of a hypersurface $\Sigma$ to itself, the gluing isomorphism $\gamma_f$ is the identity $\id_{\cH_{\Sigma}}$.
\end{definition}

Under the assumption of strictness, the automorphism $\gamma_\delta:\cH_I\to\cH_I$ associated to the self-homeomorphism $\delta:I\to I$ of the standard interval (recall equation (\ref{eq:intglueorder})) is the identity. This has the notable consequence that the algebra structure on $\cH_I$ induced by the epimorphism $\gamma_{I\sqcup I,I}$ is \emph{associative}, compare the associator equation (\ref{eq:associator}). Another affected homeomorphism is the half-rotation of the circle $f_{\pi}:C\to C$ with the associated isomorphism $\gamma_{\pi}:\cH_C\to\cH_C$ becoming the identity.

\begin{definition}
	A CQFT is called \emph{anomaly-free}, if the anomaly factor associated to every gluing diagram is equal to $1$. 
\end{definition}

We assume all CQFTs in this subsection are 2-dimensional, strict and anomaly-free. We adopt the following notation, which clarifies the algebra structure defined in Section~\ref{sec:2rel}.

\begin{notation}\label{notation2dim}
	Let $\Sigma$ be a hypersurface. We write,
	\begin{enumerate}
		\item $\psi\bullet\eta\defeq\gamma_{I\sqcup I, I}(\psi\tens\eta)$ for every  $\psi,\eta\in\cH_I$.
		\item $\psi^\star\defeq\alpha_I(\psi)$ for every $\psi\in\cH_I$.
		\item $\ptr(\psi)\defeq\gamma_{I,C}(\psi)$ for every $\psi\in\cH_I$.
		\item $\psi^\star\defeq\alpha_C(\psi)$ for every $\psi\in\cH_C$.
		\item $\tr(\psi)\defeq\rho_D\gamma_{I,C}(\psi)$ for every $\psi\in\cH_I$ 
	\end{enumerate}
\end{notation}
The product $\bullet$ is thus an \emph{associative}, and a priori, non-unital binary operation on $\cH_I$, and the  $\star$-operation is a conjugate linear involution on $\cH_I$. Lemma~\ref{lem:compatibilityalphagammas} relates these two operations in a clean way and takes the following form in the present setting. (Note also that the last statement of the Lemma, equation~(\ref{eq:compatibilitygammapi}), becomes trivial due to strictness.)
\begin{obs}\label{obs:ProductStarStrict}
	The following equation holds for every $\psi,\eta\in\cH_I$:
	\begin{equation}\label{eq:starprod}
		(\psi\bullet\eta)^\star=\eta^\star\bullet\psi^\star.
	\end{equation}
Compare equation (\ref{eq:realanticom}). The operations $\bullet$ and $\star$ thus provide $\cH_I$ with the structure of an associative $\star$-algebra. Moreover, the $\star$-operation is compatible with the map $\ptr$,
\begin{equation}
   \ptr(\psi^\star)=\ptr(\psi)^\star \quad\forall\psi\in\cH_I .
   \label{eq:ptrstar}
\end{equation}
Compare equation (\ref{eq:realitoc}).
\end{obs}

Lemma~\ref{lem:symmetryproduct} simplifies to the following observation.
\begin{obs}\label{obs:sproduct}
	The following equation holds for every $\psi,\eta\in\cH_I$: 
	\begin{equation}\label{eq:ptrcyclic}
		\ptr(\psi\bullet\eta)=(-1)^{|\psi||\psi|}\ptr(\eta\bullet\psi).	\end{equation}
\end{obs}

The map $\ptr:\cH_I\to\cH_C$ thus makes $\cH_I$ into a vector-valued tracial $\star$-algebra, with graded trace valued in $\cH_C$. The map $\tr:\cH_I\to\C$ inherits the cyclic commutativity property (\ref{eq:ptrcyclic}) from the partial trace, see equation~(\ref{eq:damplgradcom}) of Lemma~\ref{lem:damplcomp}. Also, with equation~(\ref{eq:amplcc}) of Lemma~\ref{lem:damplinvol} it satisfies compatibility with the $\star$-structure. We summarize this as follows.
\begin{obs}
	\begin{equation}
		\tr(\psi\bullet\eta)=(-1)^{|\psi||\psi|}\tr(\eta\bullet\psi) \quad\forall \psi,\eta\in\cH_I .
	\end{equation}
\begin{equation}
	\tr(\psi^\star)=\overline{\tr(\psi)} \quad\forall\psi\in\cH_I .
\end{equation}
\end{obs}

Proposition~\ref{prop:trip}, equation~(\ref{eq:intip}) allows us to recover the inner product on $\cH_I$ as a Hilbert-Schmidt inner product.
\begin{cor}\label{cor:InnerProdKreinAlg}
\begin{equation}
	\langle \psi,\eta\rangle_I=\tr(\psi^\star\bullet \eta) \qquad\forall\psi,\eta\in\cH_I .
	\label{eq:hsip}
\end{equation}
Note also that we have from this,
\begin{equation}
	\langle \psi,\xi\bullet\eta\rangle_I =\langle \xi^\star\bullet\psi,\eta\rangle_I \qquad\forall\psi,\eta,\xi\in\cH_I .
	\label{eq:staradj}
\end{equation}
\end{cor}

A Krein algebra is the Krein-space analog of a Hilbert algebra, i.e.\ it is simultaneously a Krein space, and a $\star$-algebra, and the $\star$-structure coincides with the adjoint of its regular representation. The following theorem follows with the previous observations.

\begin{thm}\label{thm:KreinAlgebra}
The state space $\cH_I$ of the open string $I$ is a Krein algebra, the trace $\tr$ is precisely the usual (graded) trace of the operator algebra on a (graded) Krein space, and the inner product (\ref{eq:hsip}) is precisely the Hilbert-Schmidt inner product.
\end{thm}

The following proposition says that the algebra $\cH_I$ is approximately unital.
\begin{prop}\label{prop:approxid}
	Let $\left\{\zeta_k:k\in \mathbb{N}\right\}$ be an orthonormal basis for $\cH_I$. The sequence
	\begin{equation}
		\one_{I,n}=\sum_{k=1}^n (-1)^{[\zeta_k]}\zeta_k\bullet\zeta_k^\star 
		\label{eq:assocsid}
	\end{equation}
	forms a \emph{left-right approximate identity} for $\cH_I$. If the sum
	\begin{equation}
		\one_I=\sum_{k=1}^\infty (-1)^{[\zeta_k]}\zeta_k\bullet\zeta_k^\star
                \label{eq:associd}
	\end{equation}
	converges, then $\one_I$ is a \emph{unit} for $\cH_I$.
\end{prop}
\begin{proof}
	Let $\psi\in\cH_I$. Rewrite equation (\ref{eq:diskamplautoglue}) as,
	\begin{equation}
		\tr(\psi)=\sum_{k\in I}(-1)^{[\zeta_k]}\tr(\zeta_k\bullet\zeta_k^\star\bullet \psi) .
		\label{eq:assocaglue}
	\end{equation}
	Combining this and equation~(\ref{eq:hsip}) we obtain,
	\begin{equation}
		\langle\psi,\eta\rangle_I=\sum_{k\in I}(-1)^{[\zeta_k]}\tr(\zeta_k\bullet\zeta_k^\star\bullet \psi^\star\bullet\eta) .
	\end{equation}
	Due to the non-degeneracy of the inner product (Axiom~\ref{SliceRegionsAxiom}), this implies
	\begin{equation}
		\psi=\sum_{k\in I}(-1)^{[\zeta_k]}\zeta_k\bullet\zeta_k^\star\bullet \psi .
	\end{equation}
	This, together with (graded) commutativity of the trace provides us with the identity,
	\begin{equation}
		\psi=\sum_{k\in I}(-1)^{[\zeta_k]}\psi\bullet\zeta_k\bullet\zeta_k^\star .
	\end{equation}
	This concludes the proof.
\end{proof}

We also find that we can express the trace as the usual (graded) operator algebra trace.

\begin{cor}
\begin{equation}
	\tr(\psi)=\sum_{k\in I} (-1)^{[\zeta_k]+|\zeta_k|} \langle \zeta_k, \psi\bullet \zeta_k\rangle_I .
	\label{eq:optrace}
\end{equation}
\end{cor}
\begin{proof}
This may be obtained by rewriting equation~(\ref{eq:assocaglue}) using equation~(\ref{eq:hsip}) on the right-hand side.
\end{proof}

The following proposition allows us to identify the state space $\cH_C$ with a subspace of $\cH_I$ and, moreover, the partial trace map $\ptr$ with the orthogonal projection of $\cH_I$ onto $\cH_C$.
\begin{prop}\label{prop:subalg}
	There exists a $\star$-structure- and grade-preserving linear isometry $j:\cH_C\to\cH_I$ such that the composition $P\defeq j\circ$pTr is the orthogonal projection of $\cH_I$ onto $\mathrm{Im}(j)$.
\end{prop}
\begin{proof}
	Rewrite formula (\ref{eq:ipcircle}) for the inner product on $\cH_C$ as follows,
	\begin{equation}
		\langle \ptr(v), \ptr(v')\rangle_C=\sum_{k\in I}(-1)^{[\zeta_k]+|\zeta_k| |v|}\tr(\zeta_k\bullet v^\star\bullet \zeta_k^\star\bullet v') .
		\label{eq:associpc}
	\end{equation}
	for any $v,v'\in\cH_C$. This, together with formula (\ref{eq:hsip}) implies the identity,
	\begin{equation}
		\langle \ptr(v), \ptr(v')\rangle_C=\sum_{k\in I}(-1)^{[\zeta_k]+|\zeta_k| |v|}\langle\zeta_k\bullet v\bullet \zeta_k^\star, v'\rangle_I .
		\label{eq:ipid}
	\end{equation}
	From this and from non-degeneracy of the inner products on $\cH_I$ and $\cH_C$ it follows that if $P:\cH_I\to\cH_I$ denotes the map
	\begin{equation}
		v\mapsto \sum_{k\in I}(-1)^{[\zeta_k]+|\zeta_k| |v|}\zeta_k\bullet v\bullet \zeta_k^\star ,
		\label{eq:assocproj}
	\end{equation}
	then the kernel of $\ptr$ and $P$ coincide. In particular, the orthogonal complement of $\ker P$ is isometrically isomorphic to $\cH_C$ and this provides an isometry $j:\cH_C\to\cH_I$ such that $P=j\circ\ptr$. $P$ is an orthogonal projection, and moreover, $P$ and $j$ are $\star$-maps and respect the grading.
\end{proof}

We identify $\cH_C$ with $\mathrm{Im}(j)$ from now on. With this convention the partial trace map $\ptr$ is the orthogonal projection $P$ of $\cH_I$ onto $\cH_C$.
\begin{thm}\label{thm:center}
	With the above convention $\cH_C$ is the \emph{center} of $\cH_I$ and in particular $\cH_C$ is a $\star$-subalgebra of $\cH_I$.
\end{thm}
\begin{proof}
	Under the above conventions the identity (\ref{eq:ipid}) becomes the trivial identity,
	\begin{equation}
		\langle P v,P v'\rangle_I=\langle P v,v'\rangle_I .
	\end{equation}
	The condition (\ref{eq:ptrcyclic}) now acquires the meaning that projection onto the subspace $\cH_C$ after multiplication is graded commutative. What is more, combining this with the defining identity (\ref{eq:hsip}) of the inner product yields the equality,
	\begin{align}
		\langle u,v\bullet P w\rangle_I & =\tr(u^\star\bullet v\bullet P w)
		=\langle v^\star \bullet u, P w\rangle_I
		=\langle P(v^\star \bullet u), P w\rangle_I \nonumber \\
		& = (-1)^{|v| |u|} \langle P(u \bullet v^\star), P w\rangle_I
		= (-1)^{|v| |u|} \langle u \bullet v^\star, P w\rangle_I \nonumber \\
		& = (-1)^{|v| |u|} \tr(v\bullet u^\star \bullet P w)
		= (-1)^{|v| |w|} \tr(u^\star \bullet P w\bullet v) \nonumber \\
		& = (-1)^{|v| |w|} \langle u, P w\bullet v\rangle_I .
	\end{align}
	This means that elements of $\cH_C$ (graded) commute with all elements of $\cH_I$. In other words, $\cH_C$ is contained in the (graded) center of $\cH_I$. On the other hand, suppose that $v$ is in the center of $\cH_I$. Then we have,
	\begin{equation}
		P v=\sum_{k\in I}(-1)^{[\zeta_k]+|\zeta_k| |v|}\zeta_k\bullet v\bullet \zeta_k^\star= \sum_{k\in I}(-1)^{[\zeta_k]}\zeta_k\bullet \zeta_k^\star\bullet v= v .
	\end{equation}
	That is, $v\in\cH_C$. This concludes the proof.
\end{proof}

\begin{lem}
	Let $\cH_C^\perp$ denote the orthogonal complement of $\cH_C$ in $\cH_I$. Then, $\cH_C\bullet \cH_C^\perp\subseteq \cH_C^\perp$ and $\cH_C^\perp\bullet \cH_C\subseteq \cH_C^\perp$.
\end{lem}
\begin{proof}
Let $\xi,\eta\in\cH_C$ and $\psi\in\cH_C^\perp$. By the subalgebra property, we have, $\langle \psi,\xi\bullet\eta\rangle_I=0$. But with relation (\ref{eq:staradj}) we also have $\langle \xi^\star\bullet\psi,\eta\rangle_I=0$. Since $\psi,\xi,\eta$ were arbitrary, this implies $\cH_C\bullet \cH_C^\perp\subseteq \cH_C^\perp$. Due to the center property we equally have $\cH_C^\perp\bullet \cH_C\subseteq \cH_C^\perp$.
\end{proof}

\begin{figure}
	\begin{center}
		\tikzset{every picture/.style={line width=0.75pt}} 
		\begin{tikzpicture}[x=0.75pt,y=0.75pt,yscale=-1.5,xscale=1.5]
			
			\draw  [draw opacity=0] (266.57,253.4) .. controls (250.2,253.16) and (237,239.82) .. (237,223.4) .. controls (237,206.88) and (250.36,193.48) .. (266.86,193.4) -- (267,223.4) -- cycle ; \draw  [color={rgb, 255:red, 0; green, 0; blue, 0 }  ,draw opacity=1 ] (266.57,253.4) .. controls (250.2,253.16) and (237,239.82) .. (237,223.4) .. controls (237,206.88) and (250.36,193.48) .. (266.86,193.4) ;  
			\draw  [draw opacity=0] (266.17,193.41) .. controls (266.26,193.41) and (266.34,193.41) .. (266.42,193.41) .. controls (282.99,193.09) and (296.68,206.26) .. (296.99,222.82) .. controls (297.31,239.39) and (284.14,253.08) .. (267.58,253.39) .. controls (267.27,253.4) and (266.97,253.4) .. (266.67,253.4) -- (267,223.4) -- cycle ; \draw  [color={rgb, 255:red, 0; green, 0; blue, 0 }  ,draw opacity=1 ] (266.17,193.41) .. controls (266.26,193.41) and (266.34,193.41) .. (266.42,193.41) .. controls (282.99,193.09) and (296.68,206.26) .. (296.99,222.82) .. controls (297.31,239.39) and (284.14,253.08) .. (267.58,253.39) .. controls (267.27,253.4) and (266.97,253.4) .. (266.67,253.4) ;  
			\draw  [draw opacity=0] (258.81,194.53) .. controls (261.21,193.85) and (263.72,193.47) .. (266.32,193.41) .. controls (268.81,193.35) and (271.23,193.6) .. (273.55,194.12) -- (267,223.4) -- cycle ; \draw  [color={rgb, 255:red, 208; green, 2; blue, 27 }  ,draw opacity=1 ] (258.81,194.53) .. controls (261.21,193.85) and (263.72,193.47) .. (266.32,193.41) .. controls (268.81,193.35) and (271.23,193.6) .. (273.55,194.12) ;  
			\draw  [draw opacity=0] (246.78,201.24) .. controls (250.19,198.13) and (254.32,195.79) .. (258.9,194.51) -- (267,223.4) -- cycle ; \draw  [color={rgb, 255:red, 74; green, 144; blue, 226 }  ,draw opacity=1 ] (246.78,201.24) .. controls (250.19,198.13) and (254.32,195.79) .. (258.9,194.51) ;  
			\draw  [draw opacity=0] (239.81,210.71) .. controls (241.52,207.04) and (243.96,203.77) .. (246.94,201.09) -- (267,223.4) -- cycle ; \draw  [color={rgb, 255:red, 208; green, 2; blue, 27 }  ,draw opacity=1 ] (239.81,210.71) .. controls (241.52,207.04) and (243.96,203.77) .. (246.94,201.09) ;  
			\draw  [draw opacity=0] (237,223.19) .. controls (237.03,218.7) and (238.05,214.44) .. (239.85,210.63) -- (267,223.4) -- cycle ; \draw  [color={rgb, 255:red, 74; green, 144; blue, 226 }  ,draw opacity=1 ] (237,223.19) .. controls (237.03,218.7) and (238.05,214.44) .. (239.85,210.63) ;  
			\draw  [draw opacity=0] (273.5,194.1) .. controls (278.67,195.25) and (283.37,197.74) .. (287.18,201.2) -- (267,223.4) -- cycle ; \draw  [color={rgb, 255:red, 74; green, 144; blue, 226 }  ,draw opacity=1 ] (273.5,194.1) .. controls (278.67,195.25) and (283.37,197.74) .. (287.18,201.2) ;  
			\draw  [draw opacity=0] (287.07,201.07) .. controls (291.02,204.61) and (294.04,209.19) .. (295.67,214.41) -- (267.05,223.41) -- cycle ; \draw  [color={rgb, 255:red, 208; green, 2; blue, 27 }  ,draw opacity=1 ] (287.07,201.07) .. controls (291.02,204.61) and (294.04,209.19) .. (295.67,214.41) ;  
			\draw  [draw opacity=0] (294.89,212.32) .. controls (296.07,215.3) and (296.8,218.53) .. (296.96,221.92) .. controls (297.02,223.13) and (297.01,224.33) .. (296.93,225.51) -- (267,223.4) -- cycle ; \draw  [color={rgb, 255:red, 74; green, 144; blue, 226 }  ,draw opacity=1 ] (294.89,212.32) .. controls (296.07,215.3) and (296.8,218.53) .. (296.96,221.92) .. controls (297.02,223.13) and (297.01,224.33) .. (296.93,225.51) ;  
			\draw  [draw opacity=0] (274.06,252.59) .. controls (271.64,253.17) and (269.11,253.45) .. (266.51,253.41) .. controls (264.02,253.36) and (261.62,253.02) .. (259.32,252.41) -- (267.05,223.41) -- cycle ; \draw  [color={rgb, 255:red, 208; green, 2; blue, 27 }  ,draw opacity=1 ] (274.06,252.59) .. controls (271.64,253.17) and (269.11,253.45) .. (266.51,253.41) .. controls (264.02,253.36) and (261.62,253.02) .. (259.32,252.41) ;  
			\draw  [draw opacity=0][dash pattern={on 0.84pt off 2.51pt}] (281.88,198.29) .. controls (277.69,200.67) and (272.86,202.07) .. (267.71,202.18) .. controls (262.13,202.29) and (256.87,200.87) .. (252.34,198.31) -- (267.09,172.18) -- cycle ; \draw  [color={rgb, 255:red, 74; green, 144; blue, 226 }  ,draw opacity=1 ][dash pattern={on 0.84pt off 2.51pt}] (281.88,198.29) .. controls (277.69,200.67) and (272.86,202.07) .. (267.71,202.18) .. controls (262.13,202.29) and (256.87,200.87) .. (252.34,198.31) ;  
			\draw  [draw opacity=0][dash pattern={on 0.84pt off 2.51pt}] (295.98,220.66) .. controls (289.78,225.33) and (279.43,228.47) .. (267.69,228.62) .. controls (255.01,228.79) and (243.87,225.44) .. (237.77,220.31) -- (267.45,210.85) -- cycle ; \draw  [color={rgb, 255:red, 74; green, 144; blue, 226 }  ,draw opacity=1 ][dash pattern={on 0.84pt off 2.51pt}] (295.98,220.66) .. controls (289.78,225.33) and (279.43,228.47) .. (267.69,228.62) .. controls (255.01,228.79) and (243.87,225.44) .. (237.77,220.31) ;  
			\draw [color={rgb, 255:red, 208; green, 2; blue, 27 }  ,draw opacity=1 ]   (266.86,193.4) -- (265.35,193.48) ;
			\draw [shift={(263.35,193.59)}, rotate = 356.93] [color={rgb, 255:red, 208; green, 2; blue, 27 }  ,draw opacity=1 ][line width=0.75]    (4.37,-1.32) .. controls (2.78,-0.56) and (1.32,-0.12) .. (0,0) .. controls (1.32,0.12) and (2.78,0.56) .. (4.37,1.32)   ;
			\draw [color={rgb, 255:red, 208; green, 2; blue, 27 }  ,draw opacity=1 ]   (242.41,206.29) -- (242.2,206.58) ;
			\draw [shift={(241,208.18)}, rotate = 306.87] [color={rgb, 255:red, 208; green, 2; blue, 27 }  ,draw opacity=1 ][line width=0.75]    (4.37,-1.32) .. controls (2.78,-0.56) and (1.32,-0.12) .. (0,0) .. controls (1.32,0.12) and (2.78,0.56) .. (4.37,1.32)   ;
			\draw [color={rgb, 255:red, 208; green, 2; blue, 27 }  ,draw opacity=1 ]   (268.53,253.24) ;
			\draw [shift={(269.24,253.24)}, rotate = 180] [color={rgb, 255:red, 208; green, 2; blue, 27 }  ,draw opacity=1 ][line width=0.75]    (4.37,-1.32) .. controls (2.78,-0.56) and (1.32,-0.12) .. (0,0) .. controls (1.32,0.12) and (2.78,0.56) .. (4.37,1.32)   ;
			\draw [color={rgb, 255:red, 208; green, 2; blue, 27 }  ,draw opacity=1 ]   (292.29,207) -- (291.73,206.25) ;
			\draw [shift={(290.53,204.65)}, rotate = 53.13] [color={rgb, 255:red, 208; green, 2; blue, 27 }  ,draw opacity=1 ][line width=0.75]    (4.37,-1.32) .. controls (2.78,-0.56) and (1.32,-0.12) .. (0,0) .. controls (1.32,0.12) and (2.78,0.56) .. (4.37,1.32)   ;
			\draw [color={rgb, 255:red, 74; green, 144; blue, 226 }  ,draw opacity=1 ]   (252.29,197.35) -- (251.92,197.56) ;
			\draw [shift={(250.18,198.53)}, rotate = 330.95] [color={rgb, 255:red, 74; green, 144; blue, 226 }  ,draw opacity=1 ][line width=0.75]    (4.37,-1.32) .. controls (2.78,-0.56) and (1.32,-0.12) .. (0,0) .. controls (1.32,0.12) and (2.78,0.56) .. (4.37,1.32)   ;
			\draw [color={rgb, 255:red, 74; green, 144; blue, 226 }  ,draw opacity=1 ]   (237.94,215.94) -- (237.86,216.24) ;
			\draw [shift={(237.35,218.18)}, rotate = 284.74] [color={rgb, 255:red, 74; green, 144; blue, 226 }  ,draw opacity=1 ][line width=0.75]    (4.37,-1.32) .. controls (2.78,-0.56) and (1.32,-0.12) .. (0,0) .. controls (1.32,0.12) and (2.78,0.56) .. (4.37,1.32)   ;
			\draw [color={rgb, 255:red, 74; green, 144; blue, 226 }  ,draw opacity=1 ]   (279.82,196.29) -- (279.16,195.98) ;
			\draw [shift={(277.35,195.12)}, rotate = 25.46] [color={rgb, 255:red, 74; green, 144; blue, 226 }  ,draw opacity=1 ][line width=0.75]    (4.37,-1.32) .. controls (2.78,-0.56) and (1.32,-0.12) .. (0,0) .. controls (1.32,0.12) and (2.78,0.56) .. (4.37,1.32)   ;
			\draw [color={rgb, 255:red, 74; green, 144; blue, 226 }  ,draw opacity=1 ]   (296.53,217.47) -- (296.43,217.14) ;
			\draw [shift={(295.82,215.24)}, rotate = 72.47] [color={rgb, 255:red, 74; green, 144; blue, 226 }  ,draw opacity=1 ][line width=0.75]    (4.37,-1.32) .. controls (2.78,-0.56) and (1.32,-0.12) .. (0,0) .. controls (1.32,0.12) and (2.78,0.56) .. (4.37,1.32)   ;
			
			\draw (262.91,183.22) node [anchor=north west][inner sep=0.75pt]  [font=\tiny,color={rgb, 255:red, 208; green, 2; blue, 27 }  ,opacity=1 ]  {$1$};
			\draw (233.82,197.85) node [anchor=north west][inner sep=0.75pt]  [font=\tiny,color={rgb, 255:red, 208; green, 2; blue, 27 }  ,opacity=1 ]  {$2$};
			\draw (293.64,199.31) node [anchor=north west][inner sep=0.75pt]  [font=\tiny,color={rgb, 255:red, 208; green, 2; blue, 27 }  ,opacity=1 ]  {$2$};
			\draw (264.36,256.76) node [anchor=north west][inner sep=0.75pt]  [font=\tiny,color={rgb, 255:red, 208; green, 2; blue, 27 }  ,opacity=1 ]  {$k$};

		\end{tikzpicture}
	\end{center}
	\caption{Gluing of the disk to itself to obtain a sphere with $k$ holes. The blue intervals are glued pairwise as indicated by the dotted lines. The red intervals form the boundaries of the holes after gluing. The intervals marked $1$ and $k$ form the boundaries of holes $1$ and $k$ respectively. The intervals marked $2$ to $k-1$ appear in pairs and form the boundaries of the holes $2$ to $k-1$ with each interval contributing a half-circle.}
	\label{fig:disktoholes}
\end{figure}
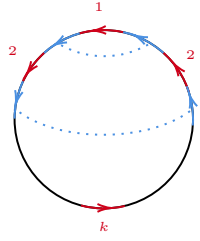

\begin{prop}\label{prop:genus0}
Consider the Riemann surface $S_{0,k}$ of genus $0$ and with $k$ holes, where $k\ge 1$. Assign states $\eta_j\in \cH_{C}$ to the holes, for $1\le j\le k$. Then, the amplitude for $S_{0,k}$ is given by the following formula,
\begin{equation}
		\rho_{0,k}\left(\eta_1\tens \eta_{2}\tens\cdots\tens \eta_{k-1}\tens  \eta_k\right)
		= \rho_D(\eta_1 \bullet \eta_{2}\bullet\cdots\bullet \eta_{k-1}	
		\bullet\eta_k) .
		\label{eq:assocgen0ampl}
\end{equation}
\end{prop}
\begin{proof}
	By Axiom~\ref{AmplitudeEquivarianceAxiom}, the amplitude is invariant under homeomorphisms. Thus, it does not matter how exactly we construct the Riemann surface. We construct $S_{0,k}$ here by gluing a single disk to itself as illustrated in Figure~\ref{fig:disktoholes}. Let $\psi_{j,L}$ and $\psi_{j,R}$ be vectors in the state spaces associated to the left and right intervals labeled by $j$, $1<j<k$, $\psi_1$ and $\psi_k$ be vectors in the state spaces associated to the intervals marked with $1$ and $k$. With Axiom~\ref{AmplitudeRelativeGluingAxiom} this leads to the formula,
	\begin{align}
		& \rho_{0,k}\left(P \psi_k\tens P(\psi_{k-1,L}\bullet\psi_{k-1,R})\tens\cdots\tens P(\psi_{2,L}\bullet\psi_{2,R})\tens P \psi_1\right) \nonumber \\
		= & \sum_{n_1,\ldots,n_{k-1}} s\, (-1)^{[\zeta_{n_1}]+\cdots +[\zeta_{n_{k-1}}]} \nonumber \\
		& \tr(\zeta_{n_1}^\star\bullet \psi_{2,L}\bullet \zeta_{n_2}^\star\bullet \cdots\bullet\psi_{k-1,L}\bullet\zeta_{n_{k-1}}^\star\bullet\psi_k\bullet\zeta_{n_{k-1}}\bullet\psi_{k-1,R}\bullet\cdots\bullet\zeta_{n_2}\bullet\psi_{2,R}\bullet\zeta_{n_1}\bullet\psi_1) .
	\end{align}
	Here, $s$ is a factor of $1$ or $-1$ depending on the fermionic degrees of the vectors appearing on the right-hand side. $s$ arises from the permutations required to change the order as appearing on the left-hand side to that appearing on the right-hand side, taking also into account the basis vectors summed over. We can evaluate the right-hand side as follows. We first notice that the terms around $\psi_k$ imply the application of the projector $P$, then we may move $P\psi_k$ to the extreme left inside the trace using graded commutativity. Next we notice that $\psi_{k-1,L}\bullet\psi_{k-1,R}$ is similarly subject to $P$ and move $P(\psi_{k-1,L}\bullet\psi_{k-1,R})$ to the second position on the left inside the trace. Iterating this procedure yields the identity,
	\begin{align}
		& \rho_{0,k}\left(P \psi_k\tens P(\psi_{k-1,L}\bullet\psi_{k-1,R})\tens\cdots\tens P(\psi_{2,L}\bullet\psi_{2,R})\tens P \psi_1\right) \nonumber \\
		& = \tr(P\psi_k \bullet P(\psi_{k-1,L}\bullet\psi_{k-1,R})\bullet\cdots\bullet P(\psi_{2,L}\bullet\psi_{2,R})	
		\bullet\psi_1) .
	\end{align}
	We note that we may replace $\psi$ with $P \psi$ because the trace implies the projector $P$, and we have the commutativity property of $P$ with left multiplication by elements in $\cH_C$. It is now convenient to work directly with elements in $\cH_C$. We take $\eta_1=P\psi_1$, $\eta_2=P(\psi_{2,L}\bullet \psi_{2,R})$, etc. Thus,
	\begin{equation}
		\rho_{0,k}\left(\eta_k\tens \eta_{k-1}\tens\cdots\tens \eta_{2}\tens  \eta_1\right)
		= \rho_D(\eta_k \bullet \eta_{k-1}\bullet\cdots\bullet \eta_{2}	
		\bullet\eta_1) .
	\end{equation}
\end{proof}

\begin{prop}\label{prop:sphere}
	If $\dim\cH_I<\infty$, then the amplitude of the sphere $S_{0,0}$ is given by the super-dimension of $\cH_I$,
	\begin{equation}
		\rho_{0,0}=\dim\cH_{I}^0-\dim\cH_{I}^1 .
		\label{eq:assocsphereampl}
	\end{equation}
\end{prop}
\begin{proof}
We obtain the sphere by gluing the disk to itself. This yields,
\begin{align}
	\rho_{0,0} & =\sum_{k\in I} (-1)^{[\zeta_k]+|\zeta_k|}\tr(\zeta_k^\star \bullet\zeta_k)
	=\sum_{k\in I} (-1)^{[\zeta_k]+|\zeta_k|}\langle\zeta_k,\zeta_k\rangle_I
	=\sum_{k\in I} (-1)^{|\zeta_k|} \nonumber \\
	& =\dim\cH_{I}^0-\dim\cH_{I}^1 .
\end{align}
\end{proof}

\begin{obs}\label{obs:excludesphere}
If $\dim\cH_I=\infty$, then we exclude the sphere from the admissible regions and exclude gluings that yield the sphere from the admissible gluings.
\end{obs}

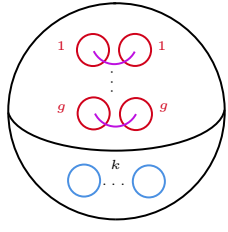
\begin{figure}
\begin{center}
	\tikzset{every picture/.style={line width=0.75pt}} 
	\begin{tikzpicture}[x=0.75pt,y=0.75pt,yscale=-1.5,xscale=1.5]
	
	\draw  [draw opacity=0] (292.44,200.56) .. controls (292.54,200.55) and (292.65,200.55) .. (292.75,200.55) .. controls (312.87,200.16) and (329.5,216.13) .. (329.89,236.21) .. controls (330.27,256.29) and (314.27,272.88) .. (294.14,273.27) .. controls (293.78,273.28) and (293.41,273.28) .. (293.05,273.28) -- (293.45,236.91) -- cycle ; \draw  [color={rgb, 255:red, 0; green, 0; blue, 0 }  ,draw opacity=1 ] (292.44,200.56) .. controls (292.54,200.55) and (292.65,200.55) .. (292.75,200.55) .. controls (312.87,200.16) and (329.5,216.13) .. (329.89,236.21) .. controls (330.27,256.29) and (314.27,272.88) .. (294.14,273.27) .. controls (293.78,273.28) and (293.41,273.28) .. (293.05,273.28) ;  
	\draw  [draw opacity=0] (293.05,273.28) .. controls (273.16,272.99) and (257.12,256.82) .. (257.12,236.91) .. controls (257.12,216.89) and (273.35,200.64) .. (293.4,200.55) -- (293.57,236.91) -- cycle ; \draw  [color={rgb, 255:red, 0; green, 0; blue, 0 }  ,draw opacity=1 ] (293.05,273.28) .. controls (273.16,272.99) and (257.12,256.82) .. (257.12,236.91) .. controls (257.12,216.89) and (273.35,200.64) .. (293.4,200.55) ;  
	\draw  [draw opacity=0] (329.84,236.41) .. controls (328.59,244.04) and (312.77,250.1) .. (293.42,250.14) .. controls (274.7,250.18) and (259.26,244.57) .. (257.1,237.31) -- (293.39,235.5) -- cycle ; \draw   (329.84,236.41) .. controls (328.59,244.04) and (312.77,250.1) .. (293.42,250.14) .. controls (274.7,250.18) and (259.26,244.57) .. (257.1,237.31) ;  
	\draw  [color={rgb, 255:red, 208; green, 2; blue, 27 }  ,draw opacity=1 ] (280.22,216.28) .. controls (280.22,213.3) and (282.63,210.89) .. (285.61,210.89) .. controls (288.59,210.89) and (291,213.3) .. (291,216.28) .. controls (291,219.25) and (288.59,221.67) .. (285.61,221.67) .. controls (282.63,221.67) and (280.22,219.25) .. (280.22,216.28) -- cycle ;
	\draw  [color={rgb, 255:red, 208; green, 2; blue, 27 }  ,draw opacity=1 ] (294.36,216.25) .. controls (294.36,213.27) and (296.77,210.86) .. (299.75,210.86) .. controls (302.73,210.86) and (305.14,213.27) .. (305.14,216.25) .. controls (305.14,219.23) and (302.73,221.64) .. (299.75,221.64) .. controls (296.77,221.64) and (294.36,219.23) .. (294.36,216.25) -- cycle ;
	\draw  [color={rgb, 255:red, 208; green, 2; blue, 27 }  ,draw opacity=1 ] (280.44,237.61) .. controls (280.44,234.63) and (282.86,232.22) .. (285.83,232.22) .. controls (288.81,232.22) and (291.22,234.63) .. (291.22,237.61) .. controls (291.22,240.59) and (288.81,243) .. (285.83,243) .. controls (282.86,243) and (280.44,240.59) .. (280.44,237.61) -- cycle ;
	\draw  [color={rgb, 255:red, 208; green, 2; blue, 27 }  ,draw opacity=1 ] (294.67,237.83) .. controls (294.67,234.86) and (297.08,232.44) .. (300.06,232.44) .. controls (303.03,232.44) and (305.44,234.86) .. (305.44,237.83) .. controls (305.44,240.81) and (303.03,243.22) .. (300.06,243.22) .. controls (297.08,243.22) and (294.67,240.81) .. (294.67,237.83) -- cycle ;
	\draw  [draw opacity=0] (299.64,216.82) .. controls (298.32,219.36) and (295.68,221.07) .. (292.68,221.03) .. controls (289.7,220.98) and (287.13,219.21) .. (285.87,216.66) -- (292.8,213.09) -- cycle ; \draw  [color={rgb, 255:red, 189; green, 16; blue, 224 }  ,draw opacity=1 ] (299.64,216.82) .. controls (298.32,219.36) and (295.68,221.07) .. (292.68,221.03) .. controls (289.7,220.98) and (287.13,219.21) .. (285.87,216.66) ;  
	\draw  [draw opacity=0] (300.06,237.83) .. controls (298.73,240.37) and (296.1,242.08) .. (293.1,242.04) .. controls (290.11,241.99) and (287.54,240.22) .. (286.29,237.67) -- (293.22,234.1) -- cycle ; \draw  [color={rgb, 255:red, 189; green, 16; blue, 224 }  ,draw opacity=1 ] (300.06,237.83) .. controls (298.73,240.37) and (296.1,242.08) .. (293.1,242.04) .. controls (290.11,241.99) and (287.54,240.22) .. (286.29,237.67) ;  
	\draw  [color={rgb, 255:red, 74; green, 144; blue, 226 }  ,draw opacity=1 ] (282.49,265.59) .. controls (279.51,265.52) and (277.16,263.05) .. (277.23,260.08) .. controls (277.29,257.1) and (279.76,254.75) .. (282.74,254.82) .. controls (285.71,254.89) and (288.07,257.35) .. (288,260.33) .. controls (287.93,263.3) and (285.46,265.66) .. (282.49,265.59) -- cycle ;
	\draw  [color={rgb, 255:red, 74; green, 144; blue, 226 }  ,draw opacity=1 ] (304.26,265.88) .. controls (301.29,265.81) and (298.93,263.34) .. (299,260.36) .. controls (299.07,257.39) and (301.54,255.03) .. (304.52,255.1) .. controls (307.49,255.17) and (309.85,257.64) .. (309.78,260.62) .. controls (309.71,263.59) and (307.24,265.95) .. (304.26,265.88) -- cycle ;
	
	\draw (293.29,221.76) node [anchor=north west][inner sep=0.75pt]  [font=\tiny,rotate=-90]  {$\dotsc $};
	\draw (272.47,212.36) node [anchor=north west][inner sep=0.75pt]  [font=\tiny,color={rgb, 255:red, 208; green, 2; blue, 27 }  ,opacity=1 ]  {$1$};
	\draw (306.29,212.45) node [anchor=north west][inner sep=0.75pt]  [font=\tiny,color={rgb, 255:red, 208; green, 2; blue, 27 }  ,opacity=1 ]  {$1$};
	\draw (272.38,233.54) node [anchor=north west][inner sep=0.75pt]  [font=\tiny,color={rgb, 255:red, 208; green, 2; blue, 27 }  ,opacity=1 ]  {$g$};
	\draw (306.75,233.09) node [anchor=north west][inner sep=0.75pt]  [font=\tiny,color={rgb, 255:red, 208; green, 2; blue, 27 }  ,opacity=1 ]  {$g$};
	\draw (287.46,260.06) node [anchor=north west][inner sep=0.75pt]  [font=\tiny,rotate=-1.34]  {$\dotsc $};
	\draw (290.53,252.26) node [anchor=north west][inner sep=0.75pt]  [font=\tiny]  {$k$};
	\end{tikzpicture}
\end{center}
\caption{Gluing of a sphere with $2g+k$ holes into a Riemann surface of genus $g$ with $k$ holes. Red ($2g$) and blue ($k$) circles are holes in the sphere. The $g$ pairs of red circles are glued together as indicated by the purple lines.}
\label{fig:genus0tog}
\end{figure}

\begin{thm}\label{thm:ArbitraryGenus}
Consider a Riemann surface $S_{g,k}$ of genus $g$ with $k$ holes. Assign states $\eta_1,\ldots,\eta_k\in\cH_C$ to the holes. If $k\ge 1$, then its amplitude is given by,
\begin{equation}
	\rho_{g,k}(\eta_1\tens\cdots\tens\eta_k)
	=\sum_{j_1,\ldots,j_g\in I'} (-1)^{[\xi_{j_1}]+\cdots+[\xi_{j_g}]}
	\rho_D(\xi_{j_1}\bullet\xi_{j_1}^\star\bullet\cdots\bullet\xi_{j_g}\bullet\xi_{j_g}^\star\bullet\eta_1\bullet\cdots\bullet\eta_k) .
	\label{eq:assocampl}
\end{equation}
Here $\{\xi_j\}_{j\in I'}$ is an orthonormal basis of the subspace $\cH_C$.
This formula is also valid if $k=0$ and $g\ge 1$ and the sum converges. Moreover, if the element
\begin{equation}
	x\defeq \sum_{j\in I'} (-1)^{[\xi_j]} \xi_j\bullet\xi_j^\star
	\label{eq:assoccbase}
\end{equation}
exists in $\cH_C$, then we may rewrite the amplitude as,
\begin{equation}
	\rho_{g,k}(\eta_1\tens\cdots\tens\eta_k)
	=\rho_D(x^{\bullet g}\bullet\eta_1\bullet\cdots\bullet\eta_k) .
	\label{eq:assocxampl}
\end{equation}
In this case it exists also for $k=0$ and $g\ge 1$.
\end{thm}
\begin{proof}
We obtain $S_{g,k}$ from the Riemann surface of genus $0$ with $2g+k$ holes, $S_{0,2g+k}$, by gluing $g$ pairs of holes together, see Figure~\ref{fig:genus0tog}. We consider the holes as circles with standard orientations, ordered from left to right, and we glue the first $g$ pairs of holes together. To this end we need $g$ pairs of circles $C\sqcup \overline{C}$ (standard circle plus oppositely oriented pair) to glue along, together with $k$ standard circles $C$. In terms of Axiom~\ref{AmplitudeRelativeGluingAxiom}, the gluing function $\beta$ is simply the identity on the disjoint union of the $k$ standard circles. As for the gluing function $\alpha$, we need to twist the oppositely oriented circles so that we can map to the Riemann surface $S_{0,2g+k}$. That is, $\alpha$ is given by,
\begin{equation}
	\alpha=\id_{C_1}\sqcup f_{\overline{C}_2,C_2}\sqcup\cdots\sqcup \id_{C_{2g-1}}\sqcup f_{\overline{C}_{2g-},C_{2g}}\sqcup\id_{C_{2g+1}}\cdots\sqcup\id_{C_{2g+k}} .
\end{equation}
Consequently, the amplitude for $S_{g,k}$ is given by equation~(\ref{eq:assocampl}) as stated.
\end{proof}

\begin{obs}
The Riemann surfaces of genus $g$ without holes, for which the corresponding sum (\ref{eq:assocampl}) does not converge, have to be declared non-admissible. Note that if the sum converges for genus $g$, it also converges for any higher genus.
\end{obs}

\begin{obs}\label{obs:rec2tqft}
	Consider the setting of $2$-dimensional TQFT. The state spaces $\cH_C$ are then finite dimensional. Moreover, they form a commutative Frobenius algebra \cite{Abr:2TQFTfrob}. We sketch how this is recovered from our setting. The algebra structure $A:\cH_C\tens\cH_C\to\cH_C$ arises there from a pair-of-pants cobordism which has two incoming circles and one outgoing circle. This map is obtained here by dualizing one tensor factor in the amplitude for the sphere with three holes. With Proposition~\ref{prop:genus0} and Corollary~\ref{cor:InnerProdKreinAlg} we have,
	\begin{equation}
		A(\eta_1\tens\eta_2)
		=\sum_{j\in I'}\rho_{0,3}(\xi_j^\star\tens\eta_1\tens\eta_2)\, \xi_j
		=\sum_{j\in I'}\rho_D(\xi_j^\star\bullet\eta_1\bullet\eta_2)\, \xi_j
		=\sum_{j\in I'}\langle\xi_j,\eta_1\bullet\eta_2\rangle_I\, \xi_j
		=\eta_1\bullet\eta_2 .
	\end{equation}
	That is, the TQFT product $A$ is precisely the product we have obtained here for $\cH_C$. Similarly, we obtain the coproduct and the Frobenius property is automatic from Proposition~\ref{prop:genus0}.
\end{obs}

\subsection{Area-dependent theory}
\label{sec:2area}

In this section we consider the modified axioms for the area-dependent theory in dimension $n=2$, according to Section~\ref{sec:areaaxioms}. In GBQFT, an area-dependent theory was considered in the example of pure 2-dimensional Yang-Mills theory in \cite{Oe:2dqym}. We consider this example in Section~\ref{sec:2dpqym}. In TQFT, area-dependent theories were classified by Runkel and Szegedy \cite{RuSz:areaqft}.

As for the elementary objects given by hypersurfaces and their gluing functions, there is no change at all. The same applies to the generators and their relations as long as no regions are involved.
As before, any region is an oriented, topological Riemann surface with holes. However, regions now also carry an area $a\ge 0$. We pick a standard Riemann surface for each choice of genus $g\in\N_0$, number of holes $k\in\N_0$ and area $a\in\R^+_0$.

\begin{definition}
	Let $k\in\N_0$ and $a\in\R^+_0$. We make a choice of standard Riemann surface with genus $g$, $k$ holes, and area $a$, and we denote it by $S_{a;g,k}$. We identify the previously introduced standard Riemann surfaces with those of vanishing area, i.e., $S_{g,k}=S_{0;g,k}$.
\end{definition}

As we did in \ref{sec:2rel} we use a separate notation for our choice of standard disk with area.

\begin{definition}
	Let $a\in\R^+_0$. We denote our choice of standard disk with area $a$, $S_{a;0,0}$, by $D_a$. We make it so that the boundary of each disk $D_a$ is the standard circle $C$.
\end{definition} 

As for the generators, according to Axiom~\ref{AmplitudeMapsAxiom}, we have an assignment of an amplitude to the standard disk $D_a$ for value $a\ge 0$ of the area now, instead of a single assignment. That is, we have the $1$-parameter family of maps,
\begin{equation}
\rho_{D_a}:\cH_C^{\ds}\to\C .
\end{equation}
Except for this family of maps, the list of generators of Table~\ref{tab:elemgen} remains unchanged.

We proceed to consider what changes with respect to special homeomorphisms in the area-dependent theory. Since we now have one standard disk $D_a$ for each value of the area $a\ge 0$, we need a standard choice of "orientation twist" homeomorphism that identifies $D_a$ with its orientation-reversed $\overline{D}_a$ for each $a$. 

\begin{definition}
We make a choice of standard "twist" gluing functions as above, and we denote them by $f_{D_a,\overline{D}_a}:D_a\to \overline{D}_a$. We denote the corresponding orientation reversed versions by $f_{\overline{D}_a,D_a}:\overline{D}_a\to D_a$. We require these functions to be mutually inverse. We moreover require compatibility with $f_{C,\overline{C}}$ in the sense that the latter is the restriction of $f_{\overline{D}_a,D_a}:\overline{D}_a\to D_a$ to the boundary. Also, $f_{D_0,\overline{D}_0}$ coincides with the previously defined $f_{D,\overline{D}}$.
\end{definition}

The generating assignments of the special homeomorphisms are still as summarized in Table~\ref{tab:elemhomgen}, except for the disk twists now being parametrized by area.

We turn to consider how the relations (Section~\ref{sec:2rel}) are adapted to the area-dependent theory. Relations that do not involve regions are unchanged and relations that do involve regions continue to hold unchanged for regions of area $a=0$. The latter implies in particular the inner products in $\cH_I$ and in $\cH_C$. Also, many relations generalize straightforwardly to regions of positive area. In particular Lemas~\ref{lem:damplinvol} and \ref{lem:damplcomp} with relations (\ref{eq:amplconj}), (\ref{eq:amplflip}), (\ref{eq:amplcc}), (\ref{eq:amplrot}), (\ref{eq:damplgradcom}) for the standard disk amplitude generalize to the disk amplitude with arbitrary area. 

\begin{lem}\label{lem:aamplcomp}
	The following relations hold:
	\begin{align}
		\rho_{\overline{D}_a}(\iota_C(\psi)) & =\overline{\rho_{D_a}(\psi)},
		\label{eq:amplaconj}\\
		\rho_{\overline{D}_a}(\gamma_{C,\overline{C}}(\psi)) & =\rho_{D_a}(\psi) ,
		\label{eq:amplaflip} \\
		\rho_{D_a}(\alpha_C(\psi)) & =\overline{\rho_{D_a}(\psi)} ,
		\label{eq:amplacc} \\
		\rho_{D_a}\circ\gamma_\pi & = \rho_{D_a} ,
		\label{eq:amplarot} \\
		\rho_{D_a}(\gamma_{I,C}(\gamma_{I\sqcup I,I}(\psi\tens\eta)))
		& =(-1)^{|\psi| |\eta|}\rho_{D_a}(\gamma_{I,C}(\gamma_{I\sqcup I,I}(\eta\tens\psi))) .
		\label{eq:damplagradcom}
	\end{align}
	
\end{lem}
Proposition~\ref{prop:diskamplautoglue} arising from gluing a standard disk to itself along parts of its boundary generalizes to arbitrary area.

\begin{prop}
	For every $v\in\cH_C$ and $\{\zeta_k\}_{k\in I}$ orthonormal basis of $\cH_I$, the following relation holds:
	\begin{equation}
		\rho_{D_a}(\gamma_{I,C}(v))=c_{a}\sum_{k\in I}(-1)^{[\zeta_k]}\rho_{D_a}\circ\gamma_{I,C}\circ\gamma_{I\sqcup I\sqcup I,I}(\zeta_k\tens\alpha_I(\zeta_k)\tens v) .
		\label{eq:diskamplaautoglue}
	\end{equation}
	Note that the gluing anomaly factor $c$ is replaced by a gluing anomaly factor $c_a$ that may depend on the area $a$, with $c_0=c$.
\end{prop}

Another relation for disk amplitudes arises from gluing two disks together to form a single disk, see Example~\ref{ex:twodiskgluing}. If disks have no area this just reduces to the completeness of the inner product. Only in the present area-dependent context does this yield a non-trivial additional relation. In this case the areas of the disks sum.

\begin{lem}\label{lem:2diskgrel}
	Apply the relative gluing Axiom~\ref{AmplitudeRelativeGluingAxiom} to Example~\ref{ex:twodiskgluing} with $\Lambda=I_3\sqcup I_4$, $\Sigma=I$, $X=D_a\sqcup D_b$ and $X'=D_{a+b}$. Moreover, $\beta=f_{I,C}\circ f_{I\sqcup I,I}$ and,
\begin{equation}
	\alpha=(f_{I,C}\sqcup f_{I,C})\circ (f_{I_1\sqcup I_3,I}\sqcup f_{I_2\sqcup I_4,I})
	\circ (\id_{I_1}\sqcup \sigma_{2 3} \sqcup \id_{I_4})
	\circ(\id_{I_1}\sqcup f_{\overline{I_2},I_2}\sqcup\id_{I_3}\sqcup\id_{I_4}) .
\end{equation}
The resulting relation for disk amplitudes is,
\begin{multline}
	\rho_{D_{a+b}}(\gamma_{I,C}\circ\gamma_{I\sqcup I,I}(v\tens v')) \\
	=c_{a,b}\sum_{k\in I}(-1)^{[\zeta_k]}\rho_{D_a}\circ\gamma_{I,C}\circ\gamma_{I\sqcup I,I}(v\tens \zeta_k) \,
	\rho_{D_b}\circ\gamma_{I,C}\circ\gamma_{I\sqcup I,I}(\alpha_I(\zeta_k)\tens v') .
	\label{eq:2diskgrel}
\end{multline}
Here we have applied (\ref{eq:damplagradcom}) in the second step. If $a=0$ or $b=0$ the corresponding disk amplitude can be interpreted as an inner product. This implies, $c_{0,b}=c_{a,0}=1$.
\end{lem}
\begin{proof}
	We have,
	\begin{multline}
		\rho_{D_{a+b}}(\gamma_{I,C}\circ\gamma_{I\sqcup I,I}(v\tens v')) \\
		=c_{a,b}\sum_{k\in I}(-1)^{[\zeta_k]+|\zeta_k| |v|}\rho_{D_a}\circ\gamma_{I,C}\circ\gamma_{I\sqcup I,I}(\zeta_k\tens v) \,
		\rho_{D_b}\circ\gamma_{I,C}\circ\gamma_{I\sqcup I,I}(\alpha_I(\zeta_k)\tens v') .
	\end{multline}
	Applying relation (\ref{eq:damplagradcom}) yields the desired result (\ref{eq:2diskgrel}).
\end{proof}

We proceed to adapt the setting of Section~\ref{sec:assocalg}. That is, we require strictness and anomaly-freedom for the remainder of this section. As a first remark, all results of Section~\ref{sec:assocalg} remain true if we identify the regions considered there with the corresponding regions of vanishing area in the present setting. We shall thus adopt in the following the notations and conventions of that section.

\begin{definition}
	We denote by $\tr_a$ the composition $\rho_{D_a}\circ\gamma_{I,C}$. The maps $\tr_a$ form a one-parameter family of (complete) traces $\tr_a$ on the algebra $\cH_I$. The previously defined trace is the special case $\tr=\tr_0$.
\end{definition}
\begin{obs}
	The trace maps $\tr_a$ inherit the cyclic graded commutativity property (\ref{eq:ptrcyclic}) from the partial trace, see (\ref{eq:damplagradcom}). Also, with equation (\ref{eq:amplacc}) they satisfy compatibility with the $\star$-structure,
	\begin{equation}
		\tr_a(\psi^\star)=\overline{\tr_a(\psi)} .
		\label{eq:trastar}
	\end{equation}
	Relation (\ref{eq:2diskgrel}) of Lemma~\ref{lem:2diskgrel} arising from the gluing of two disks to one may then be rewritten as,
	\begin{equation}
		\tr_{a+b}(v\bullet v')
		=\sum_{k\in I}(-1)^{[\zeta_k]}\tr_a(v\bullet \zeta_k) \,
		\tr_b(\zeta_k^\star\bullet v') .
		\label{eq:assoc2diskgrel}
	\end{equation}
	This in turn is suggestive of generalizing the inner product formula (\ref{eq:hsip}) for arbitrary area. We write this as,
	\begin{equation}
		\langle \psi,\eta\rangle_a=\tr_a(\psi^\star\bullet \eta) .
		\label{eq:ahsip}
	\end{equation}
	Then, $\langle\cdot,\cdot\rangle_a$ is a $1$-parameter family of sesquilinear forms on $\cH_I$, where $\langle\cdot,\cdot\rangle_0=\langle\cdot,\cdot\rangle_I$. These are hermitian as can be seen either by adapting the relevant proof of Theorem~\ref{thm:propip} or by using the $\star$-algebra properties and property (\ref{eq:trastar}).
\end{obs}
\begin{notation}
	We rewrite relation~(\ref{eq:assoc2diskgrel}) as the generalized completeness relation,
	\begin{equation}
		\langle v, v'\rangle_{a+b}
		=\sum_{k\in I}(-1)^{[\zeta_k]}\langle v,\zeta_k\rangle_a
		\langle \zeta_k, v'\rangle_b .
	\end{equation}
\end{notation} 

We consider the amplitudes for Riemann surfaces with area. These are derived in complete analogy to the area-less setting. We exhibit first the generalization of Propositions~\ref{prop:genus0} and Proposition~\ref{prop:sphere}.
\begin{prop}\label{prop:genus0a}
	For the Riemann surface $S_{a;0,k}$ with $k\ge 1$ holes, vanishing genus and area $a\ge 0$, we obtain in analogy to relation~(\ref{eq:assocgen0ampl}),
	\begin{equation}
		\rho_{a;0,k}(\eta_1\tens\cdots\tens\eta_k)=\rho_{D_a}(\eta_1\bullet\cdots\bullet\eta_k) .
		\label{eq:assocgen0ampla}
	\end{equation}
	For the sphere we get in analogy to relation~(\ref{eq:assocsphereampl}),
	\begin{equation}
		\rho_{a;0,0} =\sum_{k\in I} (-1)^{[\zeta_k]+|\zeta_k|}\tr_a(\zeta_k^\star \bullet\zeta_k)
		=\sum_{k\in I} (-1)^{[\zeta_k]+|\zeta_k|}\langle\zeta_k,\zeta_k\rangle_a .
		\label{eq:assocsphereampla}
	\end{equation}
\end{prop}

As in the area-less case the expression (\ref{eq:assocsphereampla}) might or might not exist, compare Observation~\ref{obs:excludesphere}.

Finally, we consider the amplitude for a Riemann surface $S_{a;g,k}$ with genus $g\ge 1$. This generalizes Theorem~\ref{thm:ArbitraryGenus}.
\begin{thm}\label{thm:arbitrarygenusa}
	Consider a Riemann surface $S_{g,k}$ of genus $g$ with $k$ holes. Assign states $\eta_1,\ldots,\eta_k\in\cH_C$ to the holes. If $k\ge 1$, then its amplitude is given by, compare (\ref{eq:assocampl}),
	\begin{equation}
		\rho_{a;g,k}(\eta_1\tens\cdots\tens\eta_k)
		=\sum_{j_1,\ldots,j_g\in I'} (-1)^{[\xi_{j_1}]+\cdots+[\xi_{j_g}]}
		\rho_{D_a}(\xi_{j_1}\bullet\xi_{j_1}^\star\bullet\cdots\bullet\xi_{j_g}\bullet\xi_{j_g}^\star\bullet\eta_1\bullet\cdots\bullet\eta_k) .
	\end{equation}
	Here $\{\xi_j\}_{j\in I'}$ is an orthonormal basis of the subspace $\cH_C$.
	This formula is also valid if $k=0$ and $g\ge 1$ and the sum converges.
	If $x$ as given by expression (\ref{eq:assoccbase}) exists, this is, compare (\ref{eq:assocxampl}),
	\begin{equation}
		\rho_{a;g,k}(\eta_1\tens\cdots\tens\eta_k)
		=\rho_{D_a}(x^{\bullet g}\bullet\eta_1\bullet\cdots\bullet\eta_k) .
		\label{eq:assocxampla}
	\end{equation}
	
\end{thm}


\section{Pure quantum Yang-Mills theory in 2 dimensions}
\label{sec:2dpqym}

In the present section we consider a first example of a symmetric area-dependent CQFT. This is 2-dimensional pure quantum Yang-Mills theory with corners. We take advantage of our findings in Section~\ref{sec:2dim} concerning the specific structure of 2-dimensional CQFTs and particularly the area-dependent setting of Section~\ref{sec:2area}. Our presentation here can be seen as essentially corroborating and making more rigorous the previous treatment of the theory within GBQFT \cite{Oe:2dqym}. A treatment in TQFT was given already around 1990, see in particular Witten's paper \cite{Wit:qgauge2d}. The latter was brought into a rigorous and categorical setting by Runkel and Szegedy as an example of area-dependent quantum field theory \cite{RuSz:areaqft}.

Let $G$ be a group. We introduce convenient notation concerning the rigid symmetric monoidal category of finite-dimensional complex representations of $G$. Given two representations $V$ and $W$, the symmetric structure $s_{V,W}:V\tens W\to W\tens V$ is given by the trivial transposition $v\tens w\mapsto w\tens v$ inherited from the underlying category of finite-dimensional vector spaces. Given a representation $V$ we denote the action $G\times V\to V$ as $(g, v)\mapsto g\act v$. We recall that the action of $G$ on the dual representation $V^*$ is given by, $(g\act w)(v)=w(g^{-1}\act v)$, where $v\in V$, $w\in V^*$, $g\in G$. Functions $G\to\C$ that can be written in the form $g\mapsto w(g\act v)$ are called \emph{representative functions}. Note that in this way $V^*\tens V$ acquires the structure of a vector space of functions on $G$. We denote the vector space of all representative functions on $G$ by $\Calg(G)$. Also, a function $f:G\to\C$ that is invariant under conjugation, i.e., satisfies $f(hgh^{-1})=f(g)$, is called a \emph{class function}.

Let $G$ be a compact Lie group. We consider its category of unitary finite-dimensional representations. This is a semisimple rigid symmetric monoidal category. $G$ admits a bi-invariant normalized Haar measure. This allows to define an invariant positive-definite inner product on functions on $G$ via
\begin{equation}
	\langle \psi,\eta\rangle = \int \overline{\psi(g)} \eta(g)\, \xd g .
\end{equation}
We denote the measurable functions on $G$ that are square-integrable with respect to this inner product by $\Csq(G)$. Crucially, $\Calg(G)$ is a dense subspace of $\Csq(G)$. We also denote the subspace of class functions in $\Csq(G)$ by $\Cclass(G)$.
By the Peter-Weyl Theorem, there is an isomorphism of unitary representations of $G\times G$,
\begin{equation}
	\Calg(G)= \bigoplus_V V^*\tens V .
\end{equation}
$G$ acts on $\Calg(G)$ by multiplication from the left and from the right which gives rise to an action of $G\times G$. The direct sum runs over one irreducible unitary representation $V$ of $G$ per equivalence class. Each copy of $G$ acts on one tensor factor.

We choose an orthonormal basis $\{v_i\}_{i\in I_V}$ for each representation $V$ and introduce a matrix element notation for representative functions,
\begin{equation}
	t^V_{i j}(g)\defeq \langle v_i,g\act v_j\rangle_V .
\end{equation}
This provides an orthogonal basis of $\Csq(G)$ as follows,
\begin{equation}
	\langle t^V_{i j}, t^W_{k l}\rangle=\delta_{V,W}\delta_{i,k}\delta_{j,l}\frac{1}{\dim V} .
	\label{eq:ipcg}
\end{equation}
We also denote $\chi^V\defeq \sum_k t^V_{k k}$. These form an orthonormal basis of $\Cclass(G)$,
\begin{equation}
	\langle \chi^V,\chi^W\rangle=\delta_{V,W} .
	\label{eq:ipccg}
\end{equation}

\begin{table}
	\caption{Generating assignments and 2QPYM model.}
	\label{tab:gen2qpym}
	\begin{center}
			\begin{tabular}{|l|l|}
				\hline
				Generating assignment & 2PQYM \\
				\hline
                $\cH_I$, $\cH_I^\ds$ & $\Csq(G)$, $\Calg(G)$ \\
				$\cH_C$, $\cH_C^\ds$ & $\Cclass(G)$, $\Ccalg(G)$ \\
                $\iota_I:\cH_I\to\cH_{\overline{I}}$ & $(\iota_I(\psi))(g)=\overline{\psi(g)}$ \\
                $\iota_C:\cH_C\to\cH_{\overline{C}}$ & $(\iota_C(\psi))(g)=\overline{\psi(g)}$ \\
				$\gamma_{I\sqcup I,I}:\cH_I\tens\cH_I\to\cH_I$ & $(\gamma_{I\sqcup I,I}(\psi\tens\eta))(g)=\int \psi(g h)\eta(h^{-1})\,\xd h$ \\
				$\gamma_{I,C}:\cH_I\to\cH_C$ & $(\gamma_{I,C}(\psi))(g)=\int \psi(h g h^{-1})\,\xd h$ \\
				$\rho_{D_a}:\cH_C^{\ds}\to\C$ & $\rho_{D_a}(\chi^V)=\dim V \exp(\beta_V a)$ \\
				$\gamma_{I,\overline{I}}:\cH_{I}\to\cH_{\overline{I}}$ & $(\gamma_{I,\overline{I}}(\psi))(g)=\psi(g^{-1})$ \\
				$\gamma_{C,\overline{C}}:\cH_{C}\to\cH_{\overline{C}}$ & $(\gamma_{C,\overline{C}}(\psi))(g)=\psi(g^{-1})$ \\
				$\alpha_I:\cH_I\to\cH_I$ & $(\alpha_I(\psi))(g)=\overline{\psi(g^{-1})}$ \\
				$\alpha_C:\cH_C\to\cH_C$ & $(\alpha_C(\psi))(g)=\overline{\psi(g^{-1})}$ \\
				\hline
			\end{tabular}
	\end{center}
\end{table}

We proceed to consider pure 2-dimensional quantum Yang-Mills (2QPYM) theory with compact gauge group $G$ as a model in the sense of the area-dependent and symmetric version of our axiomatic system. To this end we rely on the work \cite{Oe:2dqym}, but adapt its results to the present setting. We start by specifying the generating assignments listed in Tables~\ref{tab:elemgen} and \ref{tab:elemhomgen}, adapted to the area-dependent setting as explained in Section~\ref{sec:2area}. We add to these the maps $\alpha_I$, $\alpha_C$. The model assignments for 2QPYM are indicated in Table~\ref{tab:gen2qpym}. Note that in this model the gluing automorphisms associated to hypersurface homeomorphisms are identity maps and gluing anomalies are trivial. In particular, we are in the situation of Section~\ref{sec:assocalg} of a strict and anomaly-free CQFT and shall use corresponding notation as we see fit. Note that the inclusion $\cH_C\subseteq \cH_I$ found in that Section is precisely the inclusion $\Cclass(G)\subseteq \Csq(G)$ here. 

The 2-interval gluing map $\gamma_{I\sqcup I,I}$, the interval self-gluing map $\gamma_{I,C}=P$, and the real structures admit explicit expressions in terms of matrix elements,
\begin{align}
	t^V_{i j}\bullet t^W_{k l} & =\delta_{V,W}\delta_{j,k} \frac{t^V_{i l}}{\dim V} , \qquad \chi^V\bullet\chi^W=\delta_{V,W} \frac{\chi^V}{\dim V},\\
	P\, t^V_{i j} & =\delta_{i,j} \frac{\chi^V}{\dim V} , \\
	(t^V_{i j})^\star & =t^V_{j i},\qquad (\chi^V)^\star=\chi^V .
	\label{eq:tijstar}
\end{align}
In particular, we obtain the $\star$-algebra structure of a direct sum of matrix algebras. We have one matrix algebra per unitary irreducible representation $V$. Moreover, $P$ is an orthogonal projector as required. It projects in each matrix algebra on a  multiple of the identity element $\one_V\defeq \dim V \chi^V$.

Compared to the work \cite{Oe:2dqym}, there are some important differences. In particular, the maps $\gamma_{I,\overline{I}}$ and $\gamma_{C,\overline{C}}$ were implicit in that work. Therefore, the maps $\iota_O$ and $\iota_C$ from that work correspond to the maps $\alpha_I$ and $\alpha_C$ here. Moreover, the constants $\beta_V$ were taken to be imaginary numbers in that work. Here, however, we have from relation~(\ref{eq:amplacc}) of Lemma~\ref{lem:aamplcomp} and from relation~(\ref{eq:tijstar}),
\begin{equation}
	\overline{\rho_{D_a}(\chi^V)}=\rho_{D_a}((\chi^V)^\star)
	=\rho_{D_a}(\chi^V) .
\end{equation}
With Table~\ref{tab:gen2qpym} we then get,
\begin{equation}
	\overline{\exp(\beta_V a)}=\exp(\beta_V a)
\end{equation}
for any $a\ge 0$. Thus, we require the constants $\beta_V$ to be real.

We may verify the relations that we have presented in Section~\ref{sec:2rel}, either in the form given there, or in the form as given in Section~\ref{sec:assocalg}. Specifically, it is straightforward to verify relations (\ref{eq:starprod}), (\ref{eq:ptrstar}), (\ref{eq:ptrcyclic}) of Observations~\ref{obs:ProductStarStrict} and \ref{obs:sproduct}. It is also easy to see that relation (\ref{eq:hsip}) of Corollary~\ref{cor:InnerProdKreinAlg} precisely recovers the inner product (\ref{eq:ipcg}) on $\cH_I$. As for the identity (\ref{eq:associd}) in the algebra, described in Proposition~\ref{prop:approxid}, we would have,
\begin{equation}
	\one_I=\sum_{k\in I} \zeta_k\bullet \zeta_k^\star=\sum_{V,i,j} \dim V\, t^V_{i j}\bullet t^V_{j i}=\sum_{V,i,j} t^V_{i i}=\sum_V \dim V\,\chi^V=\sum_V \one_V .
\end{equation}
However, $\|\one_V\|=\dim V$ and so $\|\one_I\|^2=\sum_V (\dim V)^2$. This does not converge, so $\one_I$ does not exist in $\cH_I$. Given an enumeration $\{V_l\}_{l\in \N}$ of one unitary irreducible representation per equivalence class, the \emph{sequential approximate identity} $\{\one_{I,n}\}_{n\in\N}$ given by expression~(\ref{eq:assocsid}) does exist,
\begin{equation}
	\one_{I,n}=\sum_{l=1}^n \one_{V_l} .
\end{equation}
As for the subalgebra $\cH_C\subseteq\cH_I$, the formula (\ref{eq:assocproj}) for the projector $P$ of Proposition~\ref{prop:subalg} is straightforward to verify,
\begin{equation}
	\sum_{V,i,j} \dim V\, t^V_{i j}\bullet t^W_{k l}\bullet t^V_{j i}=
	\delta_{k l}\frac{\chi^W}{\dim W} .
	\label{eq:baseproj}
\end{equation}

We proceed to evaluate the formulas for amplitudes of Riemann surface as described in Proposition~\ref{prop:genus0a} of Section~\ref{sec:2area}. Firstly, for the case of vanishing genus and with at least one hole we have from relation~(\ref{eq:assocgen0ampla}),
\begin{align}
	\rho_{a;0,k}(\chi^{V_1}\tens\cdots\tens\chi^{V_k})
	& =\rho_{D_a}(\chi^{V_1}\bullet\cdots\bullet \chi^{V_k}) 
	=\delta_{V_1,\ldots,V_k}(\dim V_1)^{1-k}\rho_{D_a}(\chi^{V_1}) \nonumber \\
	& =\delta_{V_1,\ldots,V_k}(\dim V_1)^{2-k}\exp(\beta_{V_1} a) .
\end{align}
This is clearly well-defined on $\cH_C^\ds\tens\cdots\tens\cH_C^\ds$.
For the sphere we find with expression~(\ref{eq:assocsphereampla}),
\begin{equation}
	\rho_{a;0,0} =\sum_{k\in I} (-1)^{[\zeta_k]+|\zeta_k|}\tr_a(\zeta_k^\star \bullet\zeta_k)=\sum_V \tr_a(\one_V)=\sum_V \rho_{D_a}(\one_V)
	=\sum_V (\dim V)^2 \exp(\beta_V a).
\end{equation}
For $a=0$ this expression is not well-defined if there are infinitely many non-equivalent irreducible representations of $G$. This is the case if $G$ is any non-trivial compact Lie group. So, we have to exclude the sphere of vanishing area from the admissible regions, as well as any gluing that results in this sphere. For $a>0$ the expression is well-defined if the sequence $\{\beta_V\}$ converges to $-\infty$, and fast enough.

For Riemann surfaces of non-zero genus (Theorem~\ref{thm:arbitrarygenusa}) we first remark that the element $x$ given by expression (\ref{eq:assoccbase}) is well-defined,
\begin{equation}
	x= \sum_{j\in I'} (-1)^{[\xi_j]} \xi_j\bullet\xi_j^\star
	=\sum_V \chi^V\bullet\chi^V=\sum_V \frac{\chi^V}{\dim V} .
\end{equation}
Note $\|x\|^2=\sum_V (\dim V)^{-2}$. Moreover, as is easy to see,
\begin{equation}
	x^{\bullet g}=\sum_V \frac{\chi^V}{(\dim V)^{2g-1}} .
\end{equation}
With this we obtain as the amplitude for a Riemann sphere of genus $g\ge 1$ with $k\ge 1$ holes and area $a$, from expression (\ref{eq:assocxampla}),
\begin{equation}
	\rho_{a;g,k}(\chi^{V_1}\tens\cdots\tens\chi^{V_k})
	=\rho_{D_a}(x^{\bullet g}\bullet\chi^{V_1}\bullet\cdots\bullet\chi^{V_k})
	=\delta_{V_1,\ldots,V_k} (\dim V)^{2-2g-k}\exp(\beta_{V_1} a) .
\end{equation}
This is clearly well-defined on $\cH_C^\ds\tens\cdots\tens\cH_C^\ds$. For genus $g\ge 1$ and $k=0$ holes with area $a$ we obtain,
\begin{equation}
\rho_{a;g,k}
=\rho_{D_a}(x^{\bullet g})=\sum_V  (\dim V)^{2-2g} \exp(\beta_V a) .
\end{equation}
We have already discussed the case of the sphere. The case of the torus ($g=1$) is similar, we must exclude at least the torus of vanishing area. For area $a>0$ the torus amplitude is well-defined if the sequence $\{\beta_V\}$ converges to $-\infty$, and fast enough. For surfaces of higher genus, i.e., $g\ge 2$ it is sufficient for the amplitude to be well-defined that the dimensions of irreducible representations increase stronger than the square-root and that the parameters $\{\beta_V\}$ are bounded from above.


\section{Conclusions and Outlook}

In this work we introduce the notion of \emph{Compositional Quantum Field Theory (CQFT)} in terms of an axiomatic system (Section~\ref{sec:axiomatics}). This is motivated from physics as an attempt towards axiomatizing realistic \emph{Quantum Field Theory (QFT)} and theories that go beyond it, in particular \emph{Quantum Gravity}. As such CQFT is meant as an improvement of \emph{General Boundary Quantum Field Theory (GBQFT)}, which in turn is inspired by \emph{Topological Quantum Field Theory (TQFT)}, see the Introduction. Mathematically, CQGT is motivated by recent developments in Applied Category Theory. In particular, while also alluding to a notion of \emph{locality} in GBQFT and QFT, the attribute \emph{compositional} is to be understood in a categorical sense. The version of CQFT presented in this paper should be understood as the first step towards a functorial formulation of GBQFT. In the accompanying paper \cite{RobertJuan2} we prove that the state space correspondence of a CQFT is precisely the data of an involutive symmetric monoidal functor \cite{Egger,Jacobs}, defined on the category of gluing functions, and thus has the structure of an algebra over an involutive operad \cite{BeniniSchenkelWoike}. Apart from these categorical aspects, the key innovations in CQFT compared to GBQFT (and also TQFT) are a new framework for encoding the notion of gluing, through \emph{gluing functions} and \emph{relative gluing diagrams}, see Section~\ref{usualgluings}. Compared to TQFT (but not to GQBFT) also the \emph{slice regions} are a novelty. As a consequence of this new framework for gluing, the axioms of CQFT automatically have a notion of \emph{equivariance} built-in (Section~\ref{sec:axiomatics}).
In the following we provide a few additional comments and an outlook concerning different aspects of the present work.

In Section~\ref{sec:2dim}, the case of CQFT of dimension $2$ is investigated in some depth. We stop short of a full classification result, although we suspect this might be achieved with not too much additional effort. Remarkably (under mild additional assumptions) the emerging structure is that of a Krein $\star$-superalgebra with a special central commutative subalgebra. The former is associated to the interval or open string and the latter to the circle or closed string. If the symmetric structure of the target category is trivial (i.e., the grading of the state spaces is trivial), this is a Hilbert $\star$-algebra with a special central commutative subalgebra. It is suggestive to try to recover the classifying results of $2$-dimensional TQFT \cite{Kock}, i.e, equivalence with the category of commutative Frobenius algebras. An initial remark in this direction is Observation~\ref{obs:rec2tqft}. While it is not yet completely clear what the algebraic objects parametrizing 2d CQFTs are, these are expected to contain the corresponding classifying objects of 2d TQFT while satisfying higher dimensional relations; as oriented surfaces with corners can be generated by a single disk with multi-directional compositions, whereas the category of 2-dimensional bordisms needs at least two generators and only considers in-out composition and relations. These ideas, as the other ideas presented in this subsection, will be further explored elsewhere. 

In the present article, we consider \emph{topological} manifolds as models for pieces of spacetime. This is only meant as a first and most basic setting. In Sections~\ref{sec:areaaxioms} and \ref{sec:2area} we add a notion of area. In general, however, we are interested in equipping the manifolds with additional structure, such as differentiable structure, conformal structure or a metric. The latter is essential if we are to describe physically realistic QFTs. Of further particular interest are different types of fiber bundles, for example to model gauge theories. By comparison, differential structures have also played an important role in TQFT, but richer structures not so much, in part related to the restriction of TQFT to finite-dimensional state spaces. The traditional way to implement additional structure would be to revise the axiomatic system and modify the axioms one-by-one to make them compatible with the new structure. This has been followed both in TQFT and GBQFT. The categorical point of view suggests a different and more powerful route, however: \emph{enrichments}. Here the structure is added at the category level and the whole axiomatic system of CQFT (encoded as a functor) is lifted at once. This will be detailed in future work by the authors.

An additional structure of importance in QFT is that of \emph{observable}. In textbook accounts of QFT these appear most often in the guise of spacetime point-localized field operators. Spacetime region-localized observables play a fundamental role in Algebraic Quantum Field Theory (AQFT) \cite{HaKa:aqft}. In GBQFT, observables have been implemented as spaces of \emph{observable functions} associated to spacetime regions \cite{Oe:feynobs}. The amplitude then plays the role of a kind of unit object under gluing in the space of observable functions for the respective spacetime region. Extending CQFT to include such a notion of observable should be relatively straightforward. Such a CQFT with observables promises to have interesting relations not only to AQFT, but also to the factorization algebra approach to QFT in the sense of Costello and Gwilliam \cite{CoGw:facalgqft1}.

CQFT as presented here, as well as GBQFT and TQFT are modeled on QFT. In particular, the spaces associated to hypersurfaces are Krein or Hilbert or at least vector spaces that are meant to play the role of \emph{state spaces} of a \emph{quantum theory}. One might, on the other hand, envisage a similar axiomatic system for \emph{classical field theory}. Indeed, a simple axiomatic system, broadly analogous to that of GBQFT for classical field theory was introduced in \cite{Oe:holomorphic} based on earlier work by Kijowski and Tulczyjew \cite{KiTu:symplectic}. This allows to express a notion of \emph{geometric quantization} in a way that should be a functor (or natural transformation) once the categorical underpinnings are properly formalized.

As explained in the introduction, GBQFT is the axiomatic system of the \emph{General Boundary Formulation of Quantum Theory (GBF)}. The latter has seen considerable development inspired by operational approaches to the foundations of quantum theory. This has culminated in the \emph{Positive Formalism (PF)} \cite{Oe:dmf,Oe:locqft,Oe:posfound}, a compositional framework for describing physical theories, based on the concepts of operationalism and locality. The axiomatic system of the PF is very similar to GBQFT, but the state spaces are now \emph{partially ordered vector spaces}. Remarkably, this axiomatic system may encode both classical and quantum theories (plus theories that are neither). In the classical case this is really classical \emph{statistical} field theory and in the quantum case this is a \emph{mixed state} formalism. What is more, a notion mathematically analogous to that of observable of QFT is present, but with a physical interpretation of \emph{probe} or \emph{process}. It is highly desirable and appears very feasible to generalize CQFT to such a setting. Eventually, one would hope that the analog of moving from GBQFT to the axiomatic system of the PF would be as easy as replacing the target category of a functor. This will be fleshed out in future work by the authors.

A final aspect we would like to mention here concerns the physically correct treatment of fermionic field theories. This has been a difficult subject for TQFT. In GBQFT the decisive step in understanding this was made in \cite{Oe:freefermi}. In particular, it was recognized that state spaces in this case are generally Krein spaces. Only with a restriction to hypersurfaces that are exclusively spacelike is the traditional treatment in terms of Hilbert spaces recovered. Recognizing this has required a convergence of different physical and mathematical considerations in \cite{Oe:freefermi}. In contrast, we find here that in CQFT this same treatment of fermions acquires an extremely simple and natural form, making it even more compelling. In the axioms of Definition~\ref{QFTAFfin} the only trace of this is that all state spaces and maps between them are $\Z_2$-graded. Moreover, in Axiom~\ref{TensorIsometryandGradedSymmetryAxiom} this grading is linked to a factor $-1$ when interchanging fermionic particles. As the categorically minded reader immediately recognizes, this axiom just specifies the \emph{symmetric structure} of the target category.

\subsection*{Acknowledgments}

This work was partially supported by UNAM-PAPIIT project grant IN106422. This publication was made possible through the support of the ID\# 61466 grant from the John Templeton Foundation, as part of the “The Quantum Information Structure of Spacetime (QISS)” Project (qiss.fr). The opinions expressed in this publication are those of the authors and do not necessarily reflect the views of the John Templeton Foundation.


\newcommand{\eprint}[1]{\href{https://arxiv.org/abs/#1}{#1}}
\bibliographystyle{stdnodoi}
\bibliography{refs,stdrefsb}
\end{document}